\keywords{
Bisimulation relations;
Spatial bisimilarity;
Spatial logics;
Logical equivalence;
Spatial model checking;
Polyhedral models;
Model minimisation.}
\DeclareRobustCommand{\cev}[1]{%
  \mathpalette\do@cev{#1}%
}
\newcommand{\do@cev}[2]{%
  \fix@cev{#1}{+}%
  \reflectbox{$\m@th#1\vec{\reflectbox{$\fix@cev{#1}{-}\m@th#1#2\fix@cev{#1}{+}$}}$}%
  \fix@cev{#1}{-}%
}
\newcommand{\fix@cev}[2]{%
  \ifx#1\displaystyle
    \mkern#23mu
  \else
    \ifx#1\textstyle
      \mkern#23mu
    \else
      \ifx#1\scriptstyle
        \mkern#22mu
      \else
        \mkern#22mu
      \fi
    \fi
  \fi
}
\newcommand{\cm}{CM}
\newcommand{\cmc}{CMC}
\newcommand{\copa}{CoPa}
\newcommand{\slcs}{{\tt SLCS}}
\newcommand{\slcsG}{{\tt SLCS}$_{\mkern1mu \gamma}$}
\newcommand{\slcsE}{{\tt SLCS}$_{\mkern1mu \eta}$}
\newcommand{\voxlogica}{{\tt VoxLogicA}}
\newcommand{\polylogica}{{\tt PolyLogicA}}
\newcommand{\polyvisualizer}{{\tt PolyVisualizer}}
\newcommand{\mcrl}{\texttt{mCRL2}}
\newcommand{\imgql}{\texttt{ImgQL}}
\newcommand{\SET}[1]{\{#1\}}
\newcommand{\ZET}[2]{\SET{#1 \,|\, #2}}
\newcommand{\pws}[1]{\mathbf{2}^{#1}}
\newcommand{\nats}{\mathbb{N}}
\newcommand{\reals}{\mathbb{R}}
\newcommand{\cnv}[1]{#1^{-}}
\newcommand{\dircnv}[1]{#1^{\pm}}
\newcommand{\plm}{$\pm$}
\newcommand{\upd}{$\uparrow\!\downarrow$}
\newcommand{\dwn}{$\downarrow$}
\newcommand{\relint}[1]{\widetilde{#1}}
\newcommand{\ap}{{\tt PL}}
\newcommand{\map}{\mathbb{F}}
\newcommand{\peval}[1]{\calV_{#1}}
\newcommand{\invpeval}[1]{{\calV}^{-1}_{#1}}
\newcommand{\sibis}{\sim_{\resizebox{0.2cm}{!}{$\triangle$}}}
\newcommand{\wsibis}{\approx_{\resizebox{0.2cm}{!}{$\triangle$}}}
\newcommand{\plmbis}{\sim_{\pm}}
\newcommand{\wplmbis}{\approx_{\pm}}
\newcommand{\lcceq}{\eqsign}
\newcommand{\form}{\Phi}
\newcommand{\ltrue}{{\tt true}}
\newcommand{\lneg}{\neg}
\newcommand{\slcsGeq}{\equiv_{\gamma}}
\newcommand{\slcsEeq}{\equiv_{\eta}}
\newcommand{\etga}{\calE}
\newcommand{\closure}{\calC}
\newcommand{\eqsign}{\rightleftharpoons}
\newcommand{\posTolts}{\LTS}
\newcommand{\posToltsA}{\posTolts_A}
\newcommand{\posToltsC}{\posTolts_{\mkern1mu C}}
\newcommand{\lts}{LTS}
\newcommand{\LTS}{\mathbb{S}}
\newcommand{\sact}{\mathbf{s}}
\newcommand{\dact}{\mathbf{d}}
\newcommand{\ftndact}{{\footnotesize\bf d}}
\newcommand{\Hudact}{{\Huge\bf d}}
\newcommand{\cact}{\mathbf{c}}
\newcommand{\ftncact}{{\footnotesize\bf c}}
\newcommand{\Hucact}{{\Huge\bf c}}
\newcommand{\ftncorridor}{\footnotesize\bf corridor}
\newcommand{\Hucorridor}{\Huge\bf corridor}
\newcommand{\ftngreen}{\footnotesize\bf green}
\newcommand{\Hugreen}{\Huge\bf green}
\newcommand{\ftnwhite}{\footnotesize\bf white}
\newcommand{\Huwhite}{\huge\bf white}
\newcommand{\sosrule}[2]{\displaystyle{\frac{\,\,\,\,#1\,\,\,\,}{\,\,\,\,#2\,\,\,\,}}}
\newcommand{\trans}[1]{\,\stackrel{#1}{\longrightarrow}\,}
\newcommand{\seq}{\sim}
\newcommand{\auxbeq}{\mathrel{\,%
  \raisebox{.3ex}{$\underline{\makebox[.7em]{$\leftrightarrow$}}$}\,}}
\newcommand{\beq}{\auxbeq_b}
\newcommand{\mcrltwo}{\texttt{mCRL2}}
\newcommand{\calC}{\mathcal{C}}
\newcommand{\calE}{\mathcal{E}}
\newcommand{\calF}{\mathcal{F}}
\newcommand{\calK}{\mathcal{K}}
\newcommand{\calP}{\mathcal{P}}
\newcommand{\calV}{\mathcal{V}}
\newcommand{\CYAN}[1]{\textcolor{cyan}{#1}}
\newcommand{\RED}[1]{\textcolor{red}{#1}}
\newcommand{\closedefi}{\hfill$\bullet$}
\newcommand{\closeex}{\hfill$\clubsuit$}
\newcommand{\closerem}{\hfill$\divideontimes$}
\newcommand{\sep}{\vert}
\newcounter{dgnot} 
\newenvironment{dgnot}[1][]{\refstepcounter{dgnot}\par\medskip
   \noindent \textbf{\RED{NfDiego~\thedgnot.}  #1} \rmfamily}{\medskip}
\newcounter{mknot} 
\newenvironment{mknot}[1][]{\refstepcounter{mknot}\par\medskip
   \noindent \textbf{\CYAN{NfMieke~\themknot.}  #1} \rmfamily}{\medskip}
\begin{document}

\title[Weak Simplicial Bisimilarity and Minimisation]{Weak Simplicial Bisimilarity and Minimisation for Polyhedral Model Checking
}

\titlecomment{{\lsuper*}This paper is an extended version of~\cite{Be+24}.}

\thanks{The authors are listed in alphabetical order, as they equally contributed to the work presented in this paper.}	


\author[Bezhanishvili]{Nick Bezhanishvili\lmcsorcid{0009-0005-6692-5051}}[a]
\author[Bussi]{Laura Bussi\lmcsorcid{0000-0003-1292-4086}}[b]
\author[Ciancia]{Vincenzo Ciancia\lmcsorcid{0000-0003-1314-0574}}[b]
\author[Gabelaia]{David Gabelaia\lmcsorcid{0000-0002-8317-7949}}[c]
\author[Jibladze]{Mamuka Jibladze\lmcsorcid{0000-0002-9434-9523}}[c]
\author[Latella]{Diego Latella\lmcsorcid{0000-0002-3257-9059}}[d]
\author[Massink]{Mieke Massink\lmcsorcid{0000-0001-5089-002X}}[b]
\author[de~Vink]{Erik P. de Vink\lmcsorcid{0000-0001-9514-2260}}[e]

\address{Institute for Logic, Language and Computation, University of Amsterdam, The Netherlands}	
\email{n.bezhanishvili@uva.nl}  

\address{Istituto di Scienza e Tecnologie dell'Informazione ``A. Faedo'', Consiglio Nazionale delle Ricerche, Pisa, Italy}	
\email{Laura.Bussi@cnr.it, Vincenzo.Ciancia@cnr.it, Mieke.Massink@cnr.it}  


\address{Andrea Razmadze Mathematical Institute, I. Javakhishvili Tbilisi State University, Georgia}	
\email{gabelaia@gmail.com, mamuka.jibladze@gmail.com}

\address{Formerly with Istituto di Scienza e Tecnologie dell'Informazione ``A. Faedo'', Consiglio Nazionale delle Ricerche, Pisa, Italy. Retired}	
\email{diego.latella@actiones.eu}

\address{Eindhoven University of Technology, The Netherlands}	
\email{evink@win.tue.nl}




\begin{abstract}
The work described in this paper builds on the polyhedral semantics of the \emph{Spatial Logic for Closure Spaces} (\slcs) and the geometric spatial model checker PolyLogicA. Polyhedral models are central in domains that exploit mesh processing, such as 3D computer graphics. A discrete representation of polyhedral models is given by cell poset models, which are amenable to geometric spatial model checking using \slcsE, 
a weaker version of \slcs.
In this work we  show that the mapping from polyhedral models to cell poset models preserves and reflects \slcsE. 
We also propose weak simplicial bisimilarity on polyhedral models and weak \plm-bisimilarity on cell poset models, 
where by ``weak'' we mean that the relevant equivalence is coarser than the corresponding
one for \slcs{,} leading to a greater reduction of the size of models and thus to more efficient model checking.

We show that the proposed bisimilarities enjoy the Hennessy-Milner property, i.e.\ two points are weakly simplicial bisimilar iff they are logically equivalent for~\slcsE. Similarly, two cells are weakly \plm-bisimilar iff they are logically equivalent in the poset-model interpretation of~\slcsE.
Furthermore we present a model minimisation procedure and prove that it correctly computes the minimal model with respect to weak \plm-bisimilarity, i.e.
with respect to logical equivalence of ~\slcsE. The procedure works via an encoding into \lts{s} and then exploits branching bisimilarity on those LTSs, exploiting  the minimisation capabilities as included in the \mcrl{} toolset. Various examples show the effectiveness of the approach.
\end{abstract}

\maketitle



\section{Introduction and Related Work}\label{sec:Introduction}

Spatial and spatio-temporal model checking have recently been
successfully employed in a variety of application areas, including
Collective Adaptive Systems~\cite{Ci+16a,Ci+18,AguzziAV24,AudritoDT24}, signal analysis~\cite{Ne+18},
image analysis~\cite{Ci+16,Ha+15,Ba+20}, and
polyhedral modelling~\cite{Be+22,Ci+23c,Be+24,Be+24a}. Interest in
these methods for spatial analysis is increasing in Computer
Science and in other domains, including initially unanticipated
ones, such as medical imaging~\cite{Be+19,Be+21}.

Spatial model checking is a global technique: it comprises the
automatic verification of properties, expressed in a suitable spatial
logic, such as the Spatial Logic for Closure Spaces (\slcs)~\cite{Ci+14,Ci+16}, for each point of a 
spatial model. The logic \slcs{} has been defined originally for   closure models, i.e. models based on
\v{C}ech closure spaces~\cite{Cec66}, a generalisation of topological
spaces, and model checking algorithms have been developed for finite
closure models also in combination with discrete time, leading to
spatio-temporal model checking~\cite{Ci+18}.  The spatial model
checker \voxlogica, proposed in~\cite{Be+19}, is very
efficient in checking properties of large images -- represented as
symmetric finite closure models -- expressed in~\slcs{}
\cite{Be+19,Be+19a,Be+21}. For example, the automatic segmentation via
a suitable \slcs{} formula characterising the white matter of the
brain in a 3D MRI image consisting of circa 12M voxels (i.e.\
$256 \times 256 \times 181$), requires approximately 10~seconds, using
\voxlogica{} on a desktop computer~\cite{Be+19a}.\footnote{Intel Core
  i9-9900K processor (with 8 cores and 16 threads) and 32GB of RAM\@. Note that
  \voxlogica{} checks such logical specifications for \emph{every}
  point in the model exploiting parallel execution, memoization, and
  state-of-the-art imaging libraries~\cite{Be+19}.}
 
In~\cite{Ci+22,Ci+23} several bisimulations for finite closure spaces have been studied, with the aim to improve the efficiency of model checking via model minimisation. These notions cover a spectrum from \cm-bisimilarity, an equivalence based on {\em proximity} --- similar to and inspired by topo-bisimilarity for topological models~\cite{vBB07} --- to \cmc-bisimilarity, \cm-bisimilarity specialisation for quasi-discrete closure models, and \copa-bisimilarity, an equivalence based on {\em conditional reachability}. Each of these bisimilarities has been equipped with its logical characterisation.

\begin{figure}
\centering
	\subfloat[]{\label{subfig:green_rooms}
		\includegraphics[valign=c,height=6em]{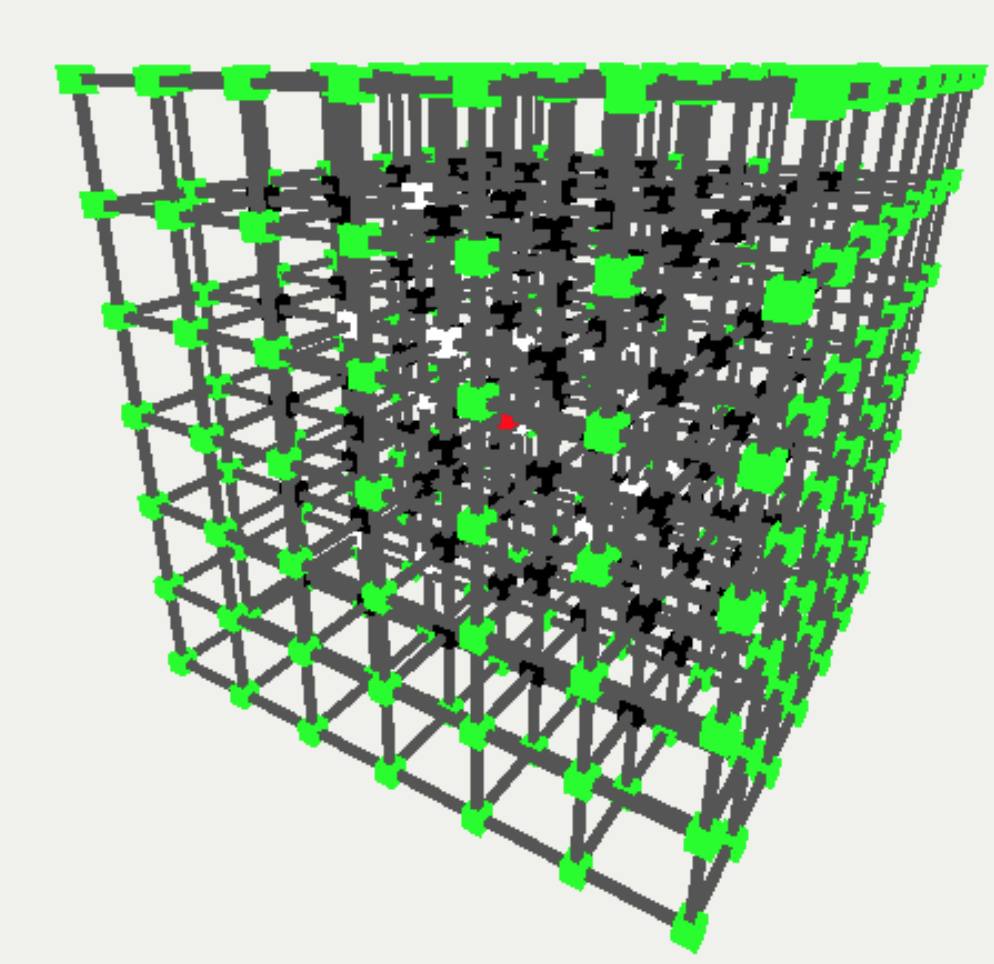}
	}
	\subfloat[]{\label{subfig:black_or_white_rooms}
		\includegraphics[valign=c,height=6em]{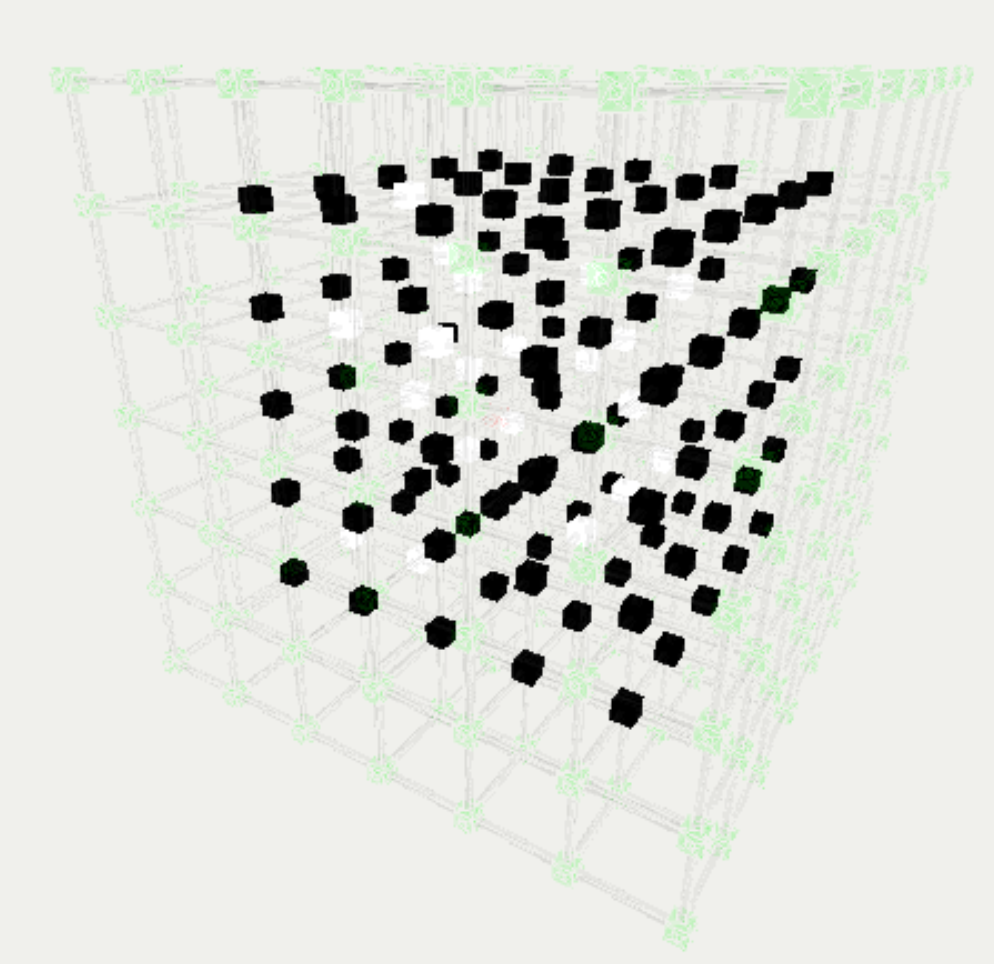}
	}
	\subfloat[]{\label{subfig:red_rooms}
		\includegraphics[valign=c,height=6em]{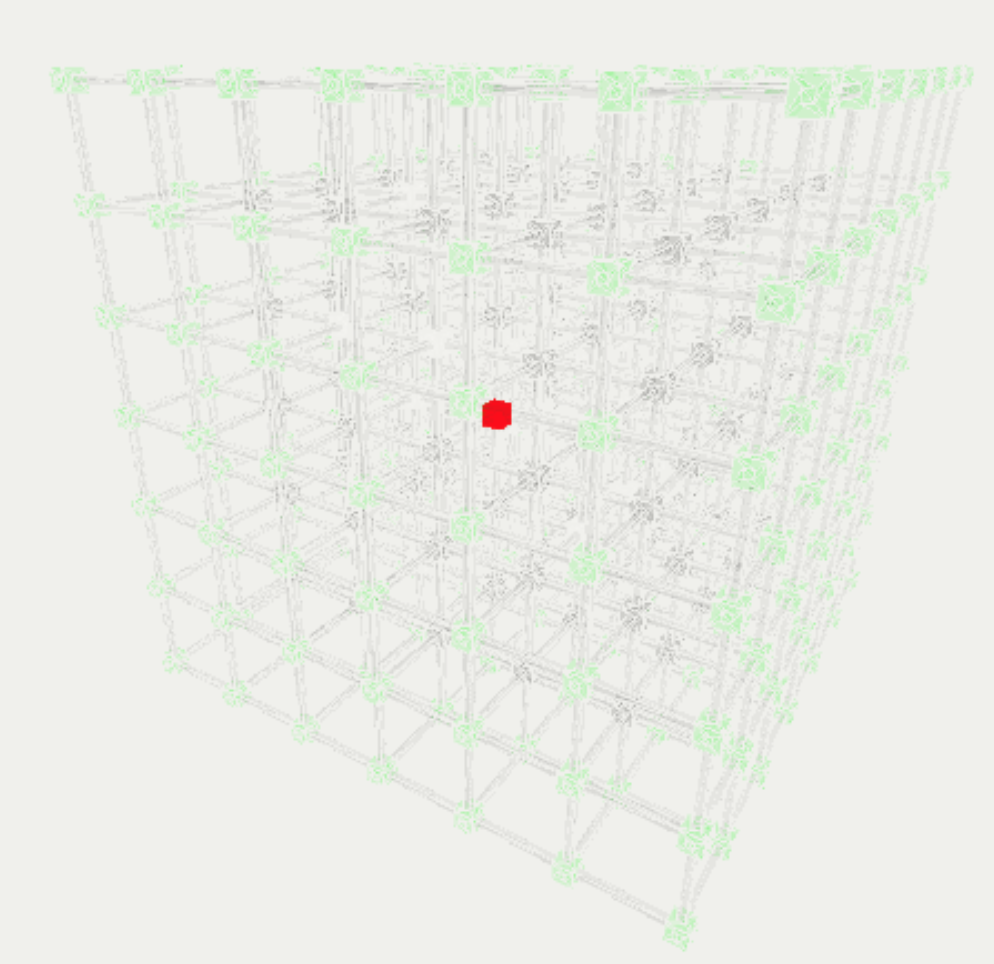}
	}
\caption{\label{fig:3Dmaze} 3D maze~(\ref{subfig:green_rooms}), black and white rooms~(\ref{subfig:black_or_white_rooms}) and red rooms~(\ref{subfig:red_rooms}) in the 3D maze (source~\cite{Be+22}).}
\end{figure}

The spatial model checking techniques mentioned above, targeting
grid-based structures, have been extended to \emph{polyhedral
  models}~\cite{Be+22,LoQ23}. 
  Polyhedra are subsets
in~$\reals^n$ generated by simplicial complexes, i.e.\ finite
collections of simplexes satisfying certain conditions. A simplex is
the convex hull of a set of affinely independent points. Given a set~$\ap$ of proposition letters, a polyhedral model is obtained from a
polyhedron by assigning a polyhedral subset to each proposition letter
$p \in \ap$, namely those points that ``satisfy'' proposition $p$.  Polyhedral
models in~$\reals^3$ can be used for (approximately) representing
objects in continuous 3D~space. This is typical of many 3D~visual
computing techniques, where an object is split into suitable geometric parts of
different size. Such ways of splitting of an object are known as mesh
techniques and include triangular surface meshes or tetrahedral volume
meshes (see~\cite{LevinePRZ2012}).
Interestingly, polyhedral models can conveniently be represented by
discrete structures, the so-called \emph{cell poset models}: each
point of the polyhedron is mapped to a (unique) ``cell'', i.e.\ an element of
the associated cell poset model. Cell poset models, being a particular case of Kripke models,  are amenable to discrete model checking.

 In~\cite{Be+22}, a variant of \slcs{} for polyhedral models, called \slcsG{} in the sequel, as well as a geometric model checking algorithm have been proposed. The latter has been implemented in the \polylogica{} model checker, together with \polyvisualizer, a tool for visualising and inspecting polyhedral models (see~\cite{Be+22} for details).
 Example~\ref{ex:3Dcube} below gives an idea of the framework of spatial model checking using
 \polylogica.

 \begin{exa}\label{ex:3Dcube}
Figure~\ref{subfig:green_rooms} shows a ``3D maze''  example originating from~\cite{Be+22}.
The maze
consists of ``rooms'' that are connected by ``corridors''. The rooms come in four colours:
white, black, green, and red for only one room.
The cells of  white, black, green, red  rooms satisfy (only) predicate letter $\mathbf{white}$, $\mathbf{black}$,
$\mathbf{green}$, $\mathbf{red}$, respectively. 
Predicate letter $\mathbf{corridor}$ is satisfied by (all and only the cells of) corridors.
The green rooms are all situated at the outer
boundary of the maze and represent the surroundings of the maze that can be reached via
an exit. The white, black, and red rooms and related corridors are situated inside the maze
and form the maze itself. Figure~\ref{subfig:black_or_white_rooms} shows all the white and black rooms. Figure~\ref{subfig:red_rooms} shows
the red room. 
The corridors between rooms are dark grey.
Valid paths through the maze should only pass by white/red rooms  and related corridors
to reach a green room without passing by black rooms or corridors that connect to black
rooms. 
All the images shown in Figure~\ref{fig:3Dmaze} are generated by \polylogica{}
and can be visualised (and inspected by) \polyvisualizer:
the result of a model checking session is presented by showing an image where
 the cells that satisfy the formula of interest are shown opaque, while the rest of the image 
 is shown transparent in the background. For instance, in Figure~\ref{subfig:black_or_white_rooms}
 the result of model checking  
 the simple \slcsG{} formula 
 $\mathbf{black}  \, \lor \, \mathbf{white}$ by \polylogica{} is shown, and similarly for 
 Figure~\ref{subfig:red_rooms} and formula $\mathbf{red}$. \closeex
 \end{exa}

\slcsG{} can express spatial properties of points lying in polyhedral models, and, in
particular, \emph{conditional reachability} properties. Besides
negation and conjunction, \slcsG{} provides the $\gamma$~reachability operator. Informally, a
point~$x$ in a polyhedral model satisfies the conditional reachability
formula $\gamma(\form_1,\form_2)$ if there is a topological path
starting from~$x$, ending in a point~$y$ satisfying $\form_2$, and
such that all the intermediate points of the path between $x$ and~$y$
satisfy~$\form_1$. Note that neither $x$ nor~$y$ is required to
satisfy~$\form_1$.  Many  interesting properties, such as
proximity (in the topological sense, i.e. ``being in the topological
closure of'') or ``being surrounded by'' can be expressed using
reachability (see~\cite{Be+22}). 

Moreover, in~\cite{Be+22} {\em simplicial bisimilarity} 
(denoted by $\sibis$ in the sequel)
has been proposed for polyhedral models, and it has been shown that it enjoys the Hennessy-Milner Property (HMP) with respect to \slcsG. 
In~\cite{Ci+23c} {\em \plm-bisimilarity} 
(denoted by $\plmbis$ in the sequel)
has been proposed for cell poset models, that also enjoys the HMP for \slcsG. 

In this paper we introduce a weaker version of conditional
reachability, denoted by~$\eta$.
A point~$x$ in a polyhedral model
satisfies the conditional reachability formula $\eta(\form_1,\form_2)$
if there is a topological path starting
from~$x$, ending in a point~$y$ satisfying~$\form_2$, and $x$~and all
the intermediate points of the path between $x$ and~$y$
satisfy~$\form_1$. Thus now $x$~is required to satisfy~$\form_1$. 
The operator $\eta$ can  be expressed
using~$\gamma$ and we will show that 
the logic where $\gamma$~has been replaced by~$\eta$ --- \slcsE, in the sequel ---
is strictly weaker than \slcsG{} in the sense
that it distinguishes fewer points than~\slcsG. Furthermore, as mentioned above,
\slcsG~can express proximity --- that boils down to the standard \emph{possibility}
modality~$\Diamond$ in the poset model interpretation --- whereas
\slcsE{}~cannot.
We show that the mapping from a
polyhedral model to its cell poset model preserves and reflects
\slcsE: a point satisfies a formula of~\slcsE{} if and only if the cell which it
is mapped to satisfies the formula\footnote{A similar feature was shown to hold for \slcsG\ in~\cite{Be+22}.}. This result paves the way to
the definition and implementation of model checking techniques for
\slcsE{} on polyhedral models, by working on their discrete
representations.

{\em Model reduction} for cell poset models, as a means for
{\em improving model checking efficiency} is our 
main concern in the present work. 
In particular, we are interested in techniques based on
{\em spatial} bisimilarity.
For that purpose we 
introduce {\em weak simplicial bisimilarity} on polyhedral models ($\wsibis$)  
showing that it enjoys the HMP with respect
to \slcsE{} --- $\wsibis$ coincides with the logical equivalence~$\slcsEeq$ as induced by \slcsE{} ---
and a notion of bisimulation equivalence for cell poset models, 
namely \emph{weak \plm-bisimilarity}
($\wplmbis$, to be read as \lq{}weak plus-minus\rq{} bisimilarity) such that
two points in the polyhedral model are weakly simplicial bisimilar if and only if their
cells are weakly \plm-bisimilar. We show that also on cell poset
models the HMP holds: $\wplmbis$ coincides with $\slcsEeq$.

The reason why we are interested in \slcsE{} 
is that  it characterises bisimilarities --- in the polyhedral model and the associated poset model --- that are coarser than simplicial
bisimilarity and \plm-bisimilarity, respectively (thence the adjective ``weak'' in the names of the two bisimilarities). This allows for greater model reduction, 
as we will see, for instance, in Example~\ref{ex:min} and Figure~\ref{fig:exa:MinRunExaE}. 
At the same time,   interesting reachability properties can be expressed in \slcsE,
as shown, for instance, by the following example.

\begin{exa}\label{ex:3DcubeFormulas}
Let us consider again the polyhedral model of Figure~\ref{subfig:green_rooms}.
Suppose we are interested in all those white rooms from which an
exit (i.e. green room) can be reached without passing by black rooms or
corridors connected to black rooms. Moreover, we want to know which route --- in the sense of rooms and corridors --- one can follow from each such white room for reaching an exit. We start by defining some auxiliary
formulas: a cell satisfies formula  
$
\eta(\mathbf{corridor},\mathbf{white}) \land 
\lneg\eta(\mathbf{corridor},\mathbf{green} \lor \mathbf{black} \lor \mathbf{red})
$ 
if it belongs to a corridor and from such a cell only (cells of) white rooms --- i.e. neither green, nor black, nor red --- can be reached
via the corridor.
For the sake of readability, we name such a formula $\mathtt{CorWW}$.
Formula $\mathtt{CorWG}$, defined as
$
\eta(\mathbf{corridor},\mathbf{white}) \land \eta(\mathbf{corridor},\mathbf{green}),
$
is satisfied by those cells of corridors between white and green rooms. Next, we
define formula $\mathtt{WtG}$ that characterises the cells of 
white rooms, corridors between white rooms, and corridors between white and green rooms, by which one can reach a green room, i.e. without passing by black rooms or corridors connected to black rooms:
$
\mathtt{WtG} = \eta((\mathbf{white} \lor \mathtt{CorWW} \lor \mathtt{CorWG}), \mathbf{green}).
$
Keeping in mind that in the answer to our model checking query we want to see
the green exits as well, we define the complete query $\mathtt{Q1}$ by
$
\mathtt{WtG} \lor \eta(\mathbf{green},\mathtt{WtG}).
$
The result of \polylogica{} applied on $\mathtt{Q1}$ and the ``maze'' is shown in Figure~\ref{subfig:connectionWhiteGreen}.

Suppose now we are interested in showing the
white rooms, and connecting corridors, from which both a green  room and the red room can be reached, without having to pass by black rooms (and related corridors), i.e. 
we want to show if and how one can reach an exit from the red room.
The  relevant query $\mathtt{Q2}$ is given by the formula
$
\eta((\mathtt{Q1} \lor \mathtt{CorWR}), \mathbf{red}) \lor
\eta((\mathbf{red} \lor \mathtt{CorWR}), \mathtt{Q1})
$
where  $\mathtt{CorWR}$ stands for
$
\eta(\mathbf{corridor}, \mathbf{white}) \land \eta(\mathbf{corridor}, \mathbf{red})
$.
The result of the model checking session is shown in Figure~\ref{subfig:whiteConnectsRedGreen}.

Finally, Figure~\ref{subfig:no_exit_rooms} shows the white rooms, and related corridors, from which it is {\em not} possible to reach a green room without having to pass by a black room and is the result of
model checking the formula $\mathtt{Q3}$ defined as 
$
(\mathbf{white} \lor \mathtt{CorWW}) \land \lneg\mathtt{WtG}.
$
\closeex
\end{exa}

\begin{figure}
\centering
	\subfloat[]{\label{subfig:connectionWhiteGreen}
		\includegraphics[valign=c,height=6em]{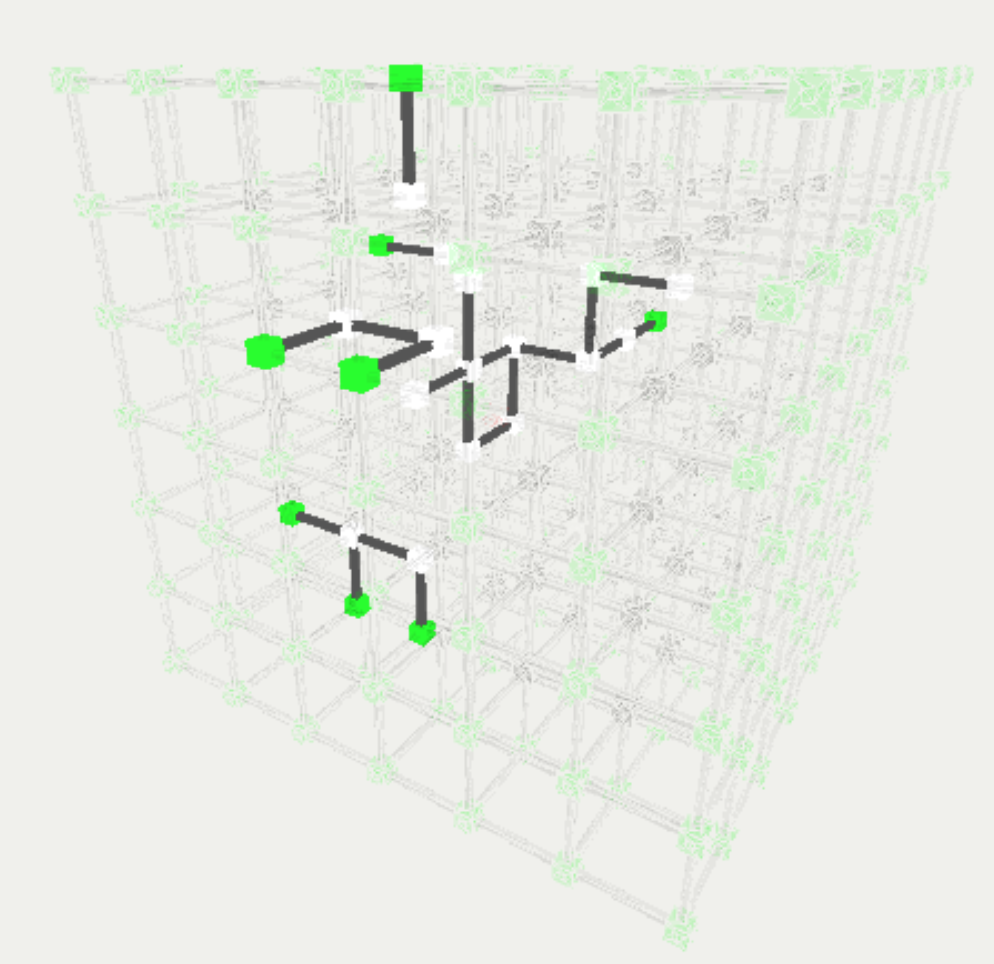}
	}
	\subfloat[]{\label{subfig:whiteConnectsRedGreen}
		\includegraphics[valign=c,height=6em]{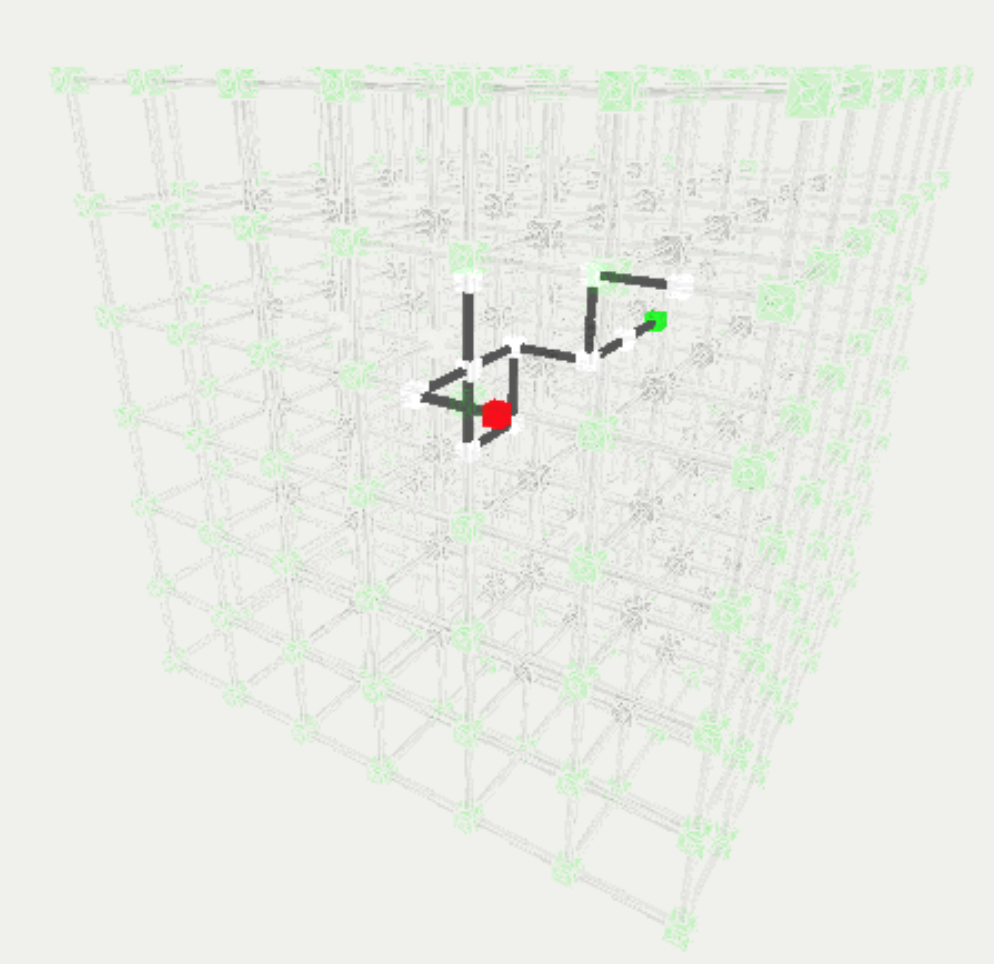}
	}
	\subfloat[]{\label{subfig:no_exit_rooms}
		\includegraphics[valign=c,height=6em]{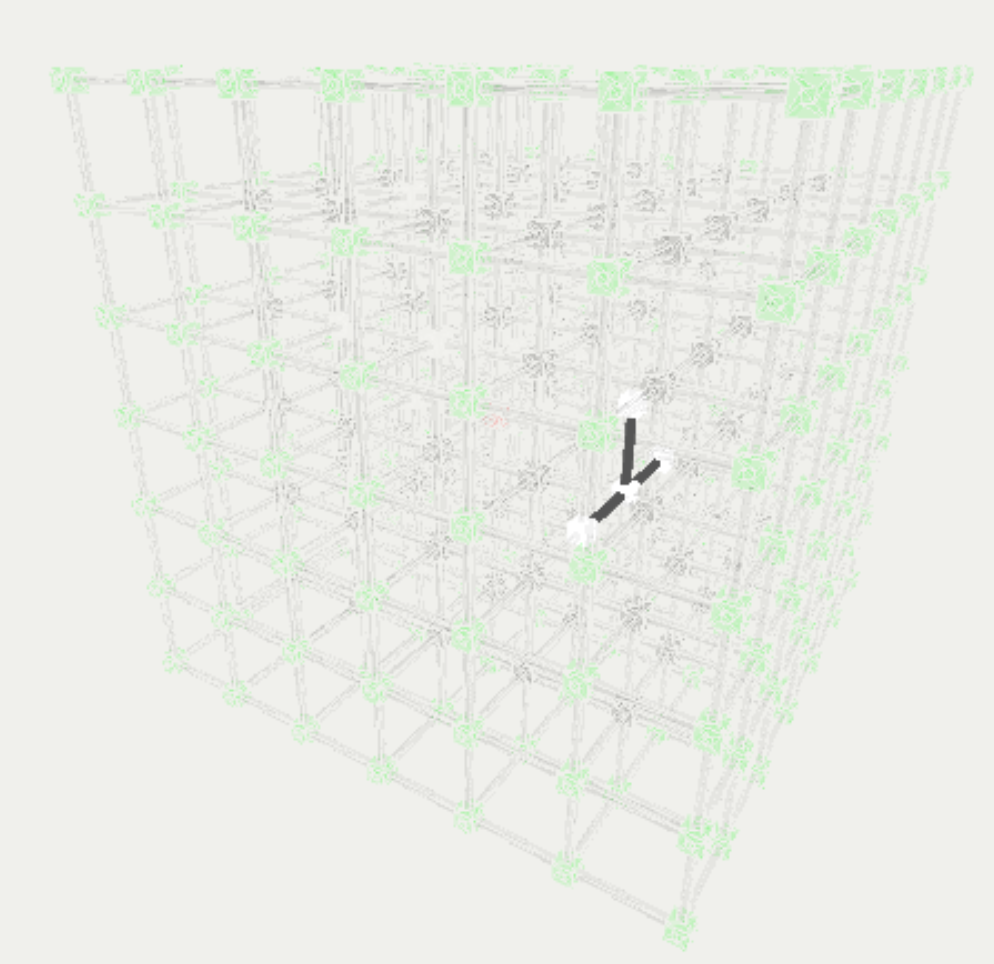} 
	}
\caption{\label{fig:3DmazeMC} 
Spatial model checking results of the properties $\mathtt{Q1}$ (\ref{subfig:connectionWhiteGreen}), $\mathtt{Q2}$ (\ref{subfig:whiteConnectsRedGreen}) and $\mathtt{Q3}$ (\ref{subfig:no_exit_rooms}) for the 3D maze of Figure~\ref{fig:3Dmaze}. 
(source:~\cite{Be+22}).}
\end{figure}

Building upon the theoretical results for \slcsE, weak simplicial bisimilarity and weak \plm-bisimilarity, 
we introduce a minimisation procedure based on weak \plm-bisimilarity, namely {\em weak \plm-minimisation}.  The
procedure uses an encoding of cell poset models into labelled
transition systems (\lts{s}) following an approach that is similar to that presented
in~\cite{Ci+23a} for finite closure models. More precisely, in the
case of cell poset models, there is a one-to-one correspondence
between the states of the \lts{} and the cells of the poset model.  It
is shown that two cells are weakly \plm-bisimilar in the poset model
if and only if they --- as states of the encoded \lts{} --- are branching bisimulation
equivalent.  This provides an effective way for computing the
equivalence classes for the set of cells, from which the minimal model
is built, on which \slcsE{} model checking can be safely performed.
In fact, efficient \lts{} minimisation tools are available for branching bisimulation, such as the one provided by the \mcrl\ toolset~\cite{Gr+17}. As we will see in Section~\ref{sec:Experiments}, this can lead to a drastic reduction of 
the size of the spatial model, thus increasing the practical efficiency of spatial model checking.
Figure~\ref{subfig:c3x5x4} shows an example of a maze, composed of 6,145 cells of three colours: white, green, and grey --- for corridors. This model is reduced to an LTS consisting of only 38 states, which is a reduction of two orders of magnitude. The different white, green and grey states of the minimised LTS represent the various equivalence classes of cells in the original polyhedral model. Even if this is a synthetic example, chosen on purpose for its symmetry properties, it illustrates the potential of the approach. Figure~\ref{subfig:c3x5x4min} only gives a first visual impression of spatial minimisation for polyhedra. We postpone the discussion of the details to Section~\ref{sec:Experiments}. 

\begin{figure}
  \begin{center}
  \subfloat[\label{subfig:c3x5x4}Maze]{\includegraphics[width=0.45\textwidth]{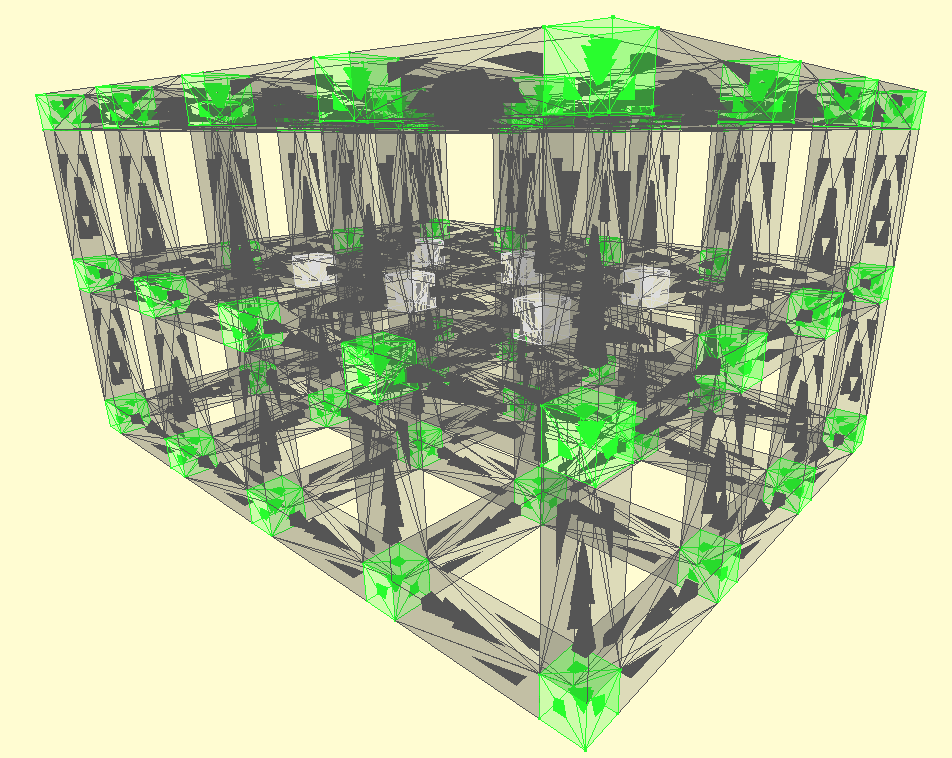}}\quad
  \subfloat[\label{subfig:c3x5x4min}Minimal LTS]{
  \resizebox{3in}{!}{

  }
  }
  \end{center}
  \caption{A maze (\ref{subfig:c3x5x4}) and its respective minimal LTS (\ref{subfig:c3x5x4min}).}\label{fig:LcubeLTS}
\end{figure}

In conclusion, in the present paper, we  focus on {\em model reduction} --- as a  way of improving model checking efficiency --- and spatial {\em reachability} --- rather than {\em proximity}.
In particular, we are interested in a framework for model reduction with the following features:
\begin{enumerate}
\item It should be  {\em sound} 
and {\em complete}, i.e. be based on a notion of bisimilarity that enjoys the Hennessy-Milner Property (HMP) so that completeness and soundness
 of the optimised model checking procedure --- via model reduction --- are guaranteed.
\item It should be {\em optimal} 
with respect to the logic of interest, in the  sense of yielding 
the minimal model with respect to the equivalence induced by the logic of interest,
but also a {\em useful} one. In this respect we have been inspired by the use of branching bisimilarity in the context of LTSs:
branching bisimilarity --- that is weaker than strong bisimilarity --- enjoys the HMP with respect to CTL$^*$ without $X$ (next) --- that is weaker than full CTL$^*$ --- and both the equivalence and its logical characterisation are widely used in  concurrency theory and its applications. In essence, weak simplicial bisimilarity in the context of spatial logic is a re-interpretation in space of branching bisimilarity in the context of temporal logic. Similarly, \slcsE{} can be seen as the spatial counterpart of CTL$^* \setminus X$.
\item It should exploit {\em existing tools} for minimisation via bisimulation, since at present powerful and efficient model minimisation techniques and tools are available for branching bisimilarity minimisation. 
\end{enumerate}
As we mentioned above, the fact that
logical equivalence $\slcsEeq$ is {\em coarser} than $\slcsGeq$ implies
 that poset model minimisation modulo $\slcsEeq$ results in models that can be 
smaller than those obtained modulo $\slcsGeq$, and this is one reason why we  focus on \slcsE{} in the present paper. 
As is to be expected, we do not have a general measure of the ``gain'',
in terms of percentage of reduction in the number of cells of the input models, when using
$\slcsEeq$ instead of $\slcsGeq$, because this depends on the specific model.

Furthermore,  we show that \slcsE{} is of interest for reasoning about reachability, which is an essential feature in topological structures, as illustrated by the 
examples presented in this paper.
There are also additional notions that can easily be expressed using the $\eta$ modality
such as  ``double reachability'' and  ``being surrounded''.
The former are properties like ``there is a path (from the point of interest) reaching --- while passing only through points satisfying~$\form_1$ --- a point satisfying 
$\form_2$ that can (also) be reached from a point satisfying $\form_3$ via a path passing through
points satisfying $\form_2$''. By exploiting the non-directionality of topological paths, this can be expressed by the following \slcsE{} formula:
$$
\eta(\form_1,\eta(\form_2,\form_3)).
$$
A formula like the above can be used for modelling an emergency egress situation --- e.g. in a building modelled as a polyhedral model --- in which, for instance, 
$\form_1$ characterises points in a building (such as the one schematised by the polyhedral model shown in Figure~\ref{fig:3Dmaze}) that are accessible to somebody to be rescued in that building (including the place where the person is located), but are not accessible 
to a rescue team;
$\form_3$ characterises the place where the rescue team is located while
$\form_2$ characterises points that are accessible to the rescue team (here we assume 
that $\form_3$ implies $\form_2$ --- if not, just replace $\form_3$ with $\form_3 \land \form_2$). 
The team and those to be rescued can thus meet in a point satisfying the nested $\eta$-formula $\eta(\form_2,\form_3)$.\\[0.5em]
The notion of ``being surrounded'' can be expressed using the $\eta$ modality as described below. 
We say that starting from a point $x$ that satisfies $\form_1$ one 
cannot ``escape'' from $\form_1$ without ``passing through'' $\form_2$ --- i.e. is ``surrounded'' by $\form_2$ --- if 
any path starting from $x$ and reaching a point that does not satisfy $\form_1$
must first pass through $\form_2$. More precisely, $x$ must satisfy $\form_1$ and there is
no path from $x$ leading to a point satisfying neither $\form_1$ nor~$\form_2$ without first passing 
through a point satisfying $\form_2$. In \slcsE{} this is captured by the following formula:
$$
\form_1 \land \lnot\eta(\lnot \form_2,\lneg(\form_1 \lor \form_2)).
$$
Note that if $x$ itself satisfies $\form_2$, then starting from $x$ one cannot escape from $\form_1$ without passing through $\form_2$.\footnote{
As we will see in Section~\ref{sec:slcsE}, the spatial properties discussed above can be expressed also in \slcsG{} (see Lemma~\ref{lem:etgaCorrectG}).
}

Below, we summarise the main  contributions of this paper:
\begin{itemize}
\item presentation of \slcsE{,} a spatial logic for polyhedral models which is weaker than \slcsG{;}
\item introduction of {\em weak simplicial bisimilarity} on polyhedral models ($\wsibis$) and 
showing that it enjoys the HMP with respect to \slcsE{;}
\item introduction of {\em weak \plm-bisimilarity} on cell poset models ($\wplmbis$) with the corresponding HMP result;
\item introduction of a novel cell poset model minimisation procedure based on 
weak \plm-bisimilarity --- and exploiting an encoding to \lts{s} and branching bisimilarity --- including the formal proof of its correctness;
\item proof-of-concept of the practical potential and effectiveness of this approach through a prototype toolchain and spatial model checking examples. It is shown that the cell poset models can be drastically reduced by several orders of magnitude.
\end{itemize}

The first three items above have been presented originally in~\cite{Be+24} where only some of the proofs of the relevant results where shown: in the present paper, all  proofs are presented in detail. The last
two items above are original contributions.

The paper is structured as follows. We provide a summary of
necessary background information in Section~\ref{sec:BackAndNotat}.
Section~\ref{sec:slcsE} introduces~\slcsE{} and addresses its
relationship with~\slcsG. It is also shown that \slcsE{} is preserved
and reflected by the mapping~$\map$ from polyhedral models to finite cell
poset models. Weak simplicial bisimilarity and weak \plm-bisimilarity
are defined in Section~\ref{sec:WeakBis} where it is also shown that
they enjoy the HMP with respect to the interpretation of \slcsE{} on polyhedral
models and on finite poset models, respectively. 
The minimisation procedure, based
on weak \plm-bisimilarity and exploiting its relationship with
branching bisimulation equivalence, is defined in
Section~\ref{sec:EtaMinimisation} where its correctness is also addressed. 
The procedure is currently implemented by means of an
experimental toolchain using \mcrltwo{} and is introduced in
Section~\ref{sec:toolchain}. Examples of use of the toolchain are
presented in Section~\ref{sec:Experiments}. Conclusions and a
discussion on future work are reported in
Section~\ref{sec:ConclusionsFW}. 

Finally, in 
Appendix~\ref{apx:DetailedProofs} detailed proofs are provided
and, in Appendix~\ref{apx:AdditionalExamples}, an additional minimisation
example is shown.

\section{Background and Notation}\label{sec:BackAndNotat}

In this section we introduce notation and recall necessary background information, the relevant
details of the language \slcsG, its polyhedral and poset models,
the truth-preserving map $\map$ between these models, simplicial bisimilarity and \plm-bisimilarity.

For sets $X$ and~$Y$, a function $f:X \to Y$, and subsets
$A \subseteq X$ and $B \subseteq Y$ we define $f(A)$ and~$f^{-1}(B)$
as $\ZET{f(a)}{a \in A}$ and $\ZET{a}{f(a) \in B}$, respectively.
The  \emph{restriction} of~$f$ on~$A$ is denoted by~$f|A$.
The powerset of~$X$ is denoted by~$\pws{X}$.
For a binary relation $R \subseteq {X \times X}$ we let
$\cnv{R} = \ZET{(y,x)}{(x,y)\in R}$ denote its converse and let
$\dircnv{R}$ denote $R \, \cup \cnv{R}$. For partial orders~$\preceq$
we will use the standard notation~$\succeq$ for~$\cnv{\preceq}$ and
$x \prec y$ whenever $x \preceq y$ and $x \neq y$ (and similarly
for~$x \succ y$). If $R$ is an equivalence relation on $A$, we let
$A{/R}$ denote the {\em quotient} of $A$ via $R$.
In the remainder of the paper we assume that a set~$\ap$ of
\emph{proposition letters} is fixed. The sets of natural numbers and
of real numbers are denoted by $\nats$ and~$\reals$, respectively. We
use the standard interval notation: for $x,y \in \reals$ we let
$[x,y]$ be the set $\ZET{r \in \reals}{x \leq r \leq y}$,
$[x,y) = \ZET{r\in \reals}{x \leq r < y}$, and so on. Intervals
of~$\reals$ are equipped with the Euclidean topology inherited
from~$\reals$. We use a similar notation for intervals over~$\nats$:
for $n,m \in \nats$, $[m;n]$ denotes the set
$\ZET{i \in \nats}{m \leq i \leq n}$,
$[m;n) = \ZET{i \in \nats}{m \leq i < n}$, and so on.
Finally, for topological space $(X,\tau)$ and $A\subseteq X$ we let $\closure_T(A)$ denote the  topological closure of $A$.

Below we recall some basic notions, assuming that the
reader is familiar with topological spaces, Kripke models, and
posets.

\subsection{Polyhedral Models and Cell Poset Models}
A \emph{simplex} $\sigma$ of dimension $d$ is the convex hull of a set 
$\SET{\mathbf{v_0},\ldots, \mathbf{v_d}}$ of
$d+1$~affinely independent points in~$\reals^m$, with $d \leq m$,
i.e.\
$\sigma = \ZET{ \lambda_0\mathbf{v_0} + \ldots +
  \lambda_d\mathbf{v_d}}{\lambda_0,\ldots,\lambda_d \in [0,1]\mbox{
    and } \sum_{i=0}^{d} \lambda_i = 1}$. For instance, a segment~$AB$
together with its end-points $A$ and~$B$
is a simplex in~$\reals^m$, for$~m \geq 1$.  Any subset of the set $\SET{\mathbf{v_0},\ldots, \mathbf{v_d}}$ of
points characterising a simplex~$\sigma$ induces a simplex~$\sigma'$ in turn,
and we write $\sigma' \sqsubseteq \sigma$, noting that
$\sqsubseteq$~is a partial order, e.g.\ 
$A \sqsubseteq A \sqsubseteq AB$, $B \sqsubseteq B \sqsubseteq AB$ and $AB \sqsubseteq AB$. 
The {\em barycentre} $b_{\sigma}$ of $\sigma$ is defined as  follows:
$
b_{\sigma} =
\sum_{i=0}^d \frac{1}{d+1}\mathbf{v_i}
$.

The \emph{relative interior} $\relint{\sigma}$ of a simplex~$\sigma$ is the
same as $\sigma$ ``without its borders'', i.e.\ the set
$\ZET{ \lambda_0\mathbf{v_0} + \ldots +
  \lambda_d\mathbf{v_d}}{\lambda_0,\ldots,\lambda_d \in (0,1]\mbox{
    and } \sum_{i=0}^{d} \lambda_i = 1}$. For instance, the open
segment~$\relint{AB}$, without the end-points $A$ and~$B$ is the
relative interior of segment~$AB$. The relative interior of a simplex
is often called a~\emph{cell} and is equal to the topological interior
taken inside the affine hull of the simplex.\footnote{But note that
  the relative interior of a simplex composed of just a single point
  is the point itself and not the empty set.}  
A  partial order is defined on cells:  we say that 
$\relint{\sigma_1} \preccurlyeq \relint{\sigma_2}$ if and only if
$\relint{\sigma_1} \subseteq \closure_T(\relint{\sigma_2})$ 
where, we recall, $\closure_T$ denotes the 
topological closure operator.
It is easy to see that 
$\preccurlyeq$ is indeed a partial order.
Note furthermore that $\sqsubseteq$ and $\preccurlyeq$ are compatible, in the sense that
$\relint{\sigma_1} \preccurlyeq \relint{\sigma_2}$ if and only if
$\sigma_1 \sqsubseteq \sigma_2$.
In the above
example, we have
$\relint{A}\preccurlyeq \relint{A} \preccurlyeq \relint{AB}, \relint{B}\preccurlyeq \relint{B} \preccurlyeq \relint{AB},$ and
$\relint{AB}\preccurlyeq \relint{AB}$.

A \emph{simplicial complex}~$K$ is a finite collection of simplexes
of~$\reals^m$ such that: (i) if $\sigma \in K$ and
$\sigma' \sqsubseteq \sigma$ then also $\sigma' \in K$; (ii) if
$\sigma, \sigma' \in K$ and $\sigma \cap \sigma' \not=\emptyset$, then
$\sigma \cap \sigma' \sqsubseteq \sigma$ and
$\sigma \cap \sigma' \sqsubseteq \sigma'$.
The \emph{cell poset} of simplicial complex~$K$ is
$(\relint{K},\preccurlyeq)$ where $\relint{K}$ is the set
$\ZET{\, \relint{\sigma}}{\sigma \in K}$, and $\preccurlyeq$ is the union of the partial orders on the cells of the simplexes of $K$. 

The polyhedron~$|K|$ of~$K$
is the set-theoretic union of the simplexes in~$K$. Note that
$|K|$~inherits the topology of~$\reals^m$ and that $\relint{K}$ forms a partition of polyhedron $|K|$.
Note furthermore that different simplicial complexes can give rise to the same polyhedron.

A \emph{polyhedral model} is a pair $\calP = (P,\peval{\calP})$
where $P=|K|$ for some simplicial complex $K$ and
$\peval{\calP}: \ap \to \pws{P}$ maps every proposition letter
$p \in \ap$ to the set of points of $P$ satisfying~$p$. It is
required that, for all $p \in \ap$, $\peval{\calP}(p)$ is always a
union of cells in~$\relint{K}$.  A  poset model is a triple
$\calF = (W,\preccurlyeq,\peval{\calF})$ where $(W,\preccurlyeq)$ is a poset that is
equipped with a valuation function
$\peval{\calF} : \ap \to \pws{W}\!$. Given a polyhedral model
$\calP = (P,\peval{\calP})$ with $P=|K|$, for some simplicial complex $K$, we say that
$\calF = (W,\preccurlyeq,\peval{\calF})$ is the \emph{cell
  poset model} of~$\calP$ relative to $K$ if and only if $W=\relint{K}$, $(\relint{K},\preccurlyeq)$
is the cell poset of~$K$, and, for all $\relint{\sigma}\in \relint{K}$,
we have: $\relint{\sigma} \in \peval{\calF}(p)$ if and only if
$\relint{\sigma} \subseteq \peval{\calP}(p)$. 
We will omit to specify ``relative to $K$'' if this is clear from the context.
  For all $x \in P$, we let $\map(x)$ denote the unique
  cell~$\relint{\sigma}\in \relint{K}$ such that $x \in \relint{\sigma}$. Note that
  $\map(x)$ is well defined, since $\relint{K}$ is a partition of $|K|$, and that
  $\map: P \to \relint{K}$ is a continuous function~\cite[Corollary
  3.4]{BMMP2018}. With slight overloading, we let
  $\map(\calP)$~denote the cell poset model of~$\calP$. 
  In the following,
  when we say that $\calF$ is a cell poset model, we mean that there
  exist a simplicial complex $K$ and a polyhedral model 
  $\calP=(|K|,\peval{\calP})$ such that $\calF = \map(\calP)$. 
  Finally, note
  that poset models are a subclass of Kripke models. 

Figure~\ref{fig:PolyhedronNoPathCompressed} shows a polyhedral
model. There are three proposition letters, $\mathbf{red}$,
$\mathbf{green}$, and $\mathbf{grey}$, shown by different colours
(\ref{subfig:PolyhedronNoPathCompressed}). The model is ``unpacked''
into its cells in
Figure~\ref{subfig:PolyhedronNoPathCellsCompressed}. The latter are
collected in the cell poset model, whose Hasse diagram is shown in
Figure~\ref{subfig:PolyhedronNoPathPosetCompressed}.

\begin{figure}[h]
\subfloat[]{\label{subfig:PolyhedronNoPathCompressed}
\resizebox{0.9in}{!}
{
\begin{tikzpicture}[scale=1.4,label distance=-2pt]
	    \tikzstyle{point}=[circle,draw=black,fill=white,inner sep=0pt,minimum width=4pt,minimum height=4pt]
	    \node (p0)[point,draw=red,label={270:$B$}] at (0,0) {};
	    	\filldraw [red] (p0) circle (1.25pt);
	    \node (p1)[point,draw=gray,label={ 90:$A$}] at (0,1) {};
	    	\filldraw [gray] (p1) circle (1.25pt);
	    \node (p2)[point,draw=gray,label={270:$D$}] at (1,0) {};
	    \node (p3)[point,draw=red,label={ 90:$C$}] at (1,1) {};
	    	\filldraw [red] (p3) circle (1.25pt);
	    \node (p4)[point,draw=gray,label={270:$F$}] at (2,0) {};
	    \node (p5)[point,draw=gray,label={ 90:$E$}] at (2,1) {};

	    \draw [red   ,thick](p0) -- (p1);
	    \draw [red   ,thick](p0) -- (p2);
	    \draw [red   ,thick](p0) -- (p3);
	    \draw [red   ,thick](p1) -- (p3);
	    \draw [red   ,thick](p2) -- (p3);	    
    \draw [dashed      ](p2) -- (p4);
    \draw [dashed      ](p2) -- (p5);
    \draw [dashed      ](p3) -- (p5);
    \draw [dashed      ](p4) -- (p5);
    \draw [gray,thick](p2) -- (p4);
    \draw [gray,thick](p4) -- (p5);
    \draw [gray,thick](p2) -- (p5);
    \draw [gray,thick](p3) -- (p5);
	        
	    \begin{scope}[on background layer]
	    \fill [fill=red!50  ](p0.center) -- (p1.center) -- (p3.center);
	    \fill [fill=red!50  ](p0.center) -- (p3.center) -- (p2.center);
	    \fill [fill=green!50](p2.center) -- (p3.center) -- (p5.center);
            \fill [fill=gray!50](p2.center) -- (p4.center) -- (p5.center);    	    
            \end{scope}

    \filldraw [gray] (p2) circle (1.25pt);
    \filldraw [gray] (p4) circle (1.25pt);
    \filldraw [gray] (p5) circle (1.25pt); 
	\end{tikzpicture}
	}
}
\subfloat[]{\label{subfig:PolyhedronNoPathCellsCompressed}
\resizebox{1.7in}{!}
{
\begin{tikzpicture}[scale=1.3,label distance=-2pt]
	    \tikzstyle{point}=[circle,fill=white,inner sep=0pt,minimum width=4pt,minimum height=4pt]
	    \node (p0S0d)[point,draw=red,fill=red,label={270:$B$}] at (0,0) {};
	    \node (p0S1d)[point] at (0.33,0) {};
	    \node (p0S2d)[point] at (0.66,0) {};
	    \node (p0S3d)[point] at (0.99,0) {};
	    \node (p0S0u)[point] at (0,0.33) {};
	    \node (p0S1u)[point] at (0.33,0.33) {};
	    \node (p0S2u)[point] at (0.66,0.33) {};
	    \node (p0S3u)[point] at (0.99,0.33) {};
	    
	    \node (p1S0d)[point] at (0,1.33) {};
	    \node (p1S1d)[point] at (0.33,1.33) {};
	    \node (p1S0u)[point,fill=gray,label={90:$A$}] at (0,1.66) {};
	    \node (p1S1u)[point] at (0.33,1.66) {};
	    
	    \node (p2S0u)[point] at (1.99,0.33) {};
	    \node (p2S0d)[point] at (1.99,0) {};
	    \node (p2S1u)[point] at (2.32,0.33) {};
	    \node (p2S1d)[point,fill=gray,label={270:$D$}] at (2.32,0.0) {};
	    \node (p2S2u)[point] at (2.65,0.33) {};
	    \node (p2S3u)[point] at (2.98,0.33) {};
	    \node (p2S4u)[point] at (3.31,0.33) {};
	    \node (p2S4d)[point] at (3.31,0.0) {};
	    
	    \node (p3S0u)[point] at (1.33,1.66) {};
	    \node (p3S1u)[point,fill=red,label={90:$C$}] at (2.32,1.66) {};
	    \node (p3S0d)[point] at (1.33,1.33) {};
	    \node (p3S1d)[point] at (1.66,1.33) {};
	    \node (p3S2u)[point] at (2.65,1.66) {};
	    \node (p3S3u)[point] at (2.98,1.66) {};
	    \node (p3S2d)[point] at (1.99,1.33) {};
	    \node (p3S3d)[point] at (2.32,1.33) {};
	    \node (p3S4d)[point] at (2.65,1.33) {};
	    
	    \node (p4S0u)[point] at (4.31,0.33) {};
	    \node (p4S0d)[point] at (4.31,0.0) {};
	    \node (p4S1u)[point] at (4.64,0.33) {};
	    \node (p4S1d)[point,fill=gray,label={270:$F$}] at (4.64,0.0) {};	    
	    
	    \node (p5S0u)[point] at (3.65,1.66) {};
	    \node (p5S1u)[point,fill=gray,label={90:$E$}] at (4.64,1.66) {};
	    \node (p5S0d)[point] at (3.65,1.33) {};
	    \node (p5S1d)[point] at (3.98,1.33) {};
	    \node (p5S2d)[point] at (4.31,1.33) {};
	    \node (p5S3d)[point] at (4.64,1.33) {};
	    
	    \draw [red,thick](p0S0u) -- (p1S0d);
	    \draw [red,thick](p1S1u) -- (p3S0u);
	    \draw [red,thick](p0S2u) -- (p3S1d);
	    \draw [red,thick](p0S3d) -- (p2S0d);
	    \draw [red,thick](p3S3d) -- (p2S1u);
	    \draw [gray,thick](p2S3u) -- (p5S1d);
	    \draw [gray,thick](p3S2u) -- (p5S0u);
	    \draw [gray,thick](p2S4d) -- (p4S0d);
	    \draw [gray,thick](p5S3d) -- (p4S1u);
	    	    
	    \begin{scope}[on background layer]
	    \fill [fill=red!50  ](p0S1u.center) -- (p1S1d.center) -- (p3S0d.center);
 	    \fill [fill=red!50  ](p0S3u.center) -- (p3S2d.center) -- (p2S0u.center);
	    \fill [fill=green!50  ](p2S2u.center) -- (p3S4d.center) -- (p5S0d.center);
	    \fill [fill=gray!50  ](p2S4u.center) -- (p5S2d.center) -- (p4S0u.center);    
            \end{scope}
	\end{tikzpicture}
	}
}
\subfloat[]{\label{subfig:PolyhedronNoPathPosetCompressed}
\resizebox{2.5in}{!}
{
\begin{tikzpicture}[scale=20, every node/.style={transform shape}]
    \tikzstyle{kstate}=[rectangle,draw=black,fill=white]
    \tikzset{->-/.style={decoration={
		markings,
		mark=at position #1 with {\arrow{>}}},postaction={decorate}}}
    
    \node[kstate,fill=red!50  ] (P0) at (  1,0) {$\relint{B}$};
    \node[kstate,fill=lightgray!50  ] (P1) at (  0,0) {$\relint{A}$};
    \node[kstate,fill=lightgray!50] (P2) at (3.5,0) {$\relint{D}$};
    \node[kstate,fill=red!50  ] (P3) at (2.5,0) {$\relint{C}$};
    \node[kstate,fill=lightgray!50] (P4) at (  6,0) {$\relint{F}$};
    \node[kstate,fill=lightgray!50] (P5) at (  5,0) {$\relint{E}$};

    \node[kstate,fill=red!50] (E0) at (-1,1) {$\relint{AB}$};
    \node[kstate,fill=red!50] (E1) at ( 2,1) {$\relint{BD}$};
    \node[kstate,fill=red!50] (E2) at ( 1,1) {$\relint{BC}$};
    \node[kstate,fill=red!50  ] (E3) at ( 0,1) {$\relint{AC}$};
    \node[kstate,fill=red!50  ] (E4) at ( 3,1) {$\relint{CD}$};
	\node[kstate,fill=lightgray!50] (E5) at ( 6,1) {$\relint{DF}$};
	\node[kstate,fill=lightgray!50] (E6) at ( 5,1) {$\relint{DE}$};
	\node[kstate,fill=lightgray!50] (E7) at ( 4,1) {$\relint{CE}$};
	\node[kstate,fill=lightgray!50] (E8) at ( 7,1) {$\relint{EF}$};

    \node[kstate,fill=red!50] (T0) at ( 2,2) {$\relint{BCD}$};
    \node[kstate,fill=red!50] (T1) at ( 0,2) {$\relint{ABC}$};
    \node[kstate,fill=lightgray!50] (T2) at ( 6,2) {$\relint{DEF}$};
    \node[kstate,fill=green!50] (T3) at ( 4,2) {$\relint{CDE}$};

    \draw (P0) to (E0);
    \draw (P0) to (E1);
    \draw (P0) to (E2);

    \draw (P1) to (E0);
    \draw (P1) to (E3);

    \draw (P2) to (E1);
    \draw (P2) to (E4);
    \draw (P2) to (E5);
    \draw (P2) to (E6);

    \draw (P3) to (E2);
    \draw (P3) to (E3);
    \draw (P3) to (E4);
    \draw (P3) to (E7);

    \draw (P4) to (E5);
    \draw (P4) to (E8);

    \draw (P5) to (E6);
    \draw (P5) to (E7);
    \draw (P5) to (E8);
    \draw (E0) to (T1);
    \draw (E2) to (T0);
    \draw (T1) to (E2);
    
    \draw (E1) to (T0);   
    \draw (E3) to (T1);    
    \draw (E4) to (T0);
    \draw (E4) to (T3);
    \draw (E5) to (T2);
    \draw (E6) to (T2);
	\draw (E6) to (T3);
	\draw (E7) to (T3);
	\draw (E8) to (T2);

\end{tikzpicture}
}
}
\caption{A polyhedral model $\calP_{\ref{fig:PolyhedronNoPathCompressed}}$
  (\ref{subfig:PolyhedronNoPathCompressed}) with its cells
  (\ref{subfig:PolyhedronNoPathCellsCompressed}) and the Hasse diagram
  of the related cell poset
  (\ref{subfig:PolyhedronNoPathPosetCompressed}).} 
\label{fig:PolyhedronNoPathCompressed}
\end{figure}

\subsection{Paths}
In a topological space $(X,\tau)$, a \emph{topological path} from
$x\in X$ is a total, continuous function $\pi : [0,1] \to X$ such that
$\pi(0)=x$.  We call $\pi(0)$ and~$\pi(1)$ the \emph{starting point}
and \emph{ending point} of~$\pi$, respectively, while $\pi(r)$~is an
\emph{intermediate point} of~$\pi$, for all $r \in
(0,1)$. Figure~\ref{subfig:PolyhedronWithPath} shows a path from a
point~$x$ in the open segment~$\relint{AB}$ to point $D$ in the polyhedral model of
Figure~\ref{subfig:PolyhedronNoPathCompressed}.

Topological paths relevant for our work 
are represented in cell posets by so-called \plm-paths, a subclass of
undirected paths~\cite{Be+22}. For technical reasons\footnote{We are
  interested in model checking structures resulting from the
  minimisation, via bisimilarity, of cell poset models, and such
  structures are often just (reflexive) Kripke models rather than
  poset models.}
in this paper we extend the definition given in~\cite{Be+22} to
general Kripke frames.

Given a Kripke frame $(W,R)$, an \emph{undirected path} of length
$\ell \in \nats$ from~$w$ is a total function $\pi : [0;\ell] \to W$
such that $\pi(0) = w$
and, for all $i \in [0;\ell)$,
$\dircnv{R}(\pi(i),\pi(i+1))$. The \emph{starting point} and
\emph{ending point} are $\pi(0)$ and~$\pi(\ell)$, respectively, while
$\pi(i)$ is an intermediate point, for all $i \in (0;\ell)$. For an
undirected path~$\pi$ of length~$\ell$ we often use the sequence
notation $(w_i)_{i=0}^{\ell}$ where $w_i = \pi(i)$ for $i \in [0;\ell]$.  

Given paths $\pi' = (w'_i)_{i=0}^{\ell'}$ and $\pi'' = (w''_i)_{i=0}^{\ell''}$,
with $w'_{\ell'} = w''_0$, the \emph{sequentialisation} $\pi' \cdot \pi'' : [0;\ell' + \ell''] \to W$
of~$\pi'$ with~$\pi''$ is the path 
from~$w'_0$ defined as follows:
$$
(\pi' \cdot \pi'')(i) = 
\left\{
\begin{array}{l}
\pi'(i), \text{ if } i \in [0;\ell'],\\
\pi''(i-\ell'), \text{ if } i \in [\ell'; \ell'+ \ell''].
\end{array}
\right.
$$

For a path $\pi = (w_i)_{i=0}^{\ell}$ and $k \in [0;\ell]$ we define the
$k$-shift of~$\pi$, denoted by $\pi{\uparrow} k$, as follows:
$\pi{\uparrow} k = (w_{j+k})_{j=0}^{\ell-k}$ and, for
$0 < m \leq \ell$, we let $\pi {\leftarrow} m$ denote the path
obtained from~$\pi$ by inserting a copy of $\pi(m)$ immediately
before~$\pi(m)$ itself. In other words, we have:
$\pi {\leftarrow} m = (\pi |[0;m]) \cdot ((\pi(m),\pi(m)) \cdot (\pi
{\uparrow} m))$.  Finally, any path $\pi |[0;k]$, for some
$k \in [0;\ell]$, is a \emph{(non-empty) prefix} of~$\pi$.

An undirected path $\pi : [0;\ell] \to W$ is a \emph{\plm-path} if and
only if $\ell\geq 2$, $R(\pi(0),\pi(1))$ and
$\cnv{R}(\pi(\ell-1),\pi(\ell))$.

\begin{figure}
\begin{center}
\subfloat[]{\label{subfig:PolyhedronWithPath}
\resizebox{1.2in}{!}{
\begin{tikzpicture}[scale=1.3,label distance=-2pt]
	    \tikzstyle{point}=[circle,draw=black,fill=white,inner sep=0pt,minimum width=4pt,minimum height=4pt]
	    \node (p0)[point,draw=red,label={270:$B$}] at (0,0) {};
	    	\filldraw [red] (p0) circle (1.25pt);
	    \node (p1)[point,draw=gray,label={ 90:$A$}] at (0,1) {};
	    	\filldraw [gray] (p1) circle (1.25pt);
	    \node (p2)[point,draw=gray,label={270:$D$}] at (1,0) {};
	    \node (p3)[point,draw=red,label={ 90:$C$}] at (1,1) {};
	    	\filldraw [red] (p3) circle (1.25pt);
	    \node (p4)[point,draw=gray,label={270:$F$}] at (2,0) {};
	    \node (p5)[point,draw=gray,label={ 90:$E$}] at (2,1) {};

	    \draw [red   ,thick](p0) -- (p1);
	    \draw [red   ,thick](p0) -- (p2);
	    \draw [red   ,thick](p0) -- (p3);
	    \draw [red   ,thick](p1) -- (p3);
	    \draw [red   ,thick](p2) -- (p3);	    
    \draw [dashed      ](p2) -- (p4);
    \draw [dashed      ](p2) -- (p5);
    \draw [dashed      ](p3) -- (p5);
    \draw [dashed      ](p4) -- (p5);
    \draw [gray,thick](p2) -- (p4);
    \draw [gray,thick](p4) -- (p5);
    \draw [gray,thick](p2) -- (p5);
    \draw [gray,thick](p3) -- (p5);
	        
	    \begin{scope}[on background layer]
	    \fill [fill=red!50  ](p0.center) -- (p1.center) -- (p3.center);
	    \fill [fill=red!50  ](p0.center) -- (p3.center) -- (p2.center);
	    \fill [fill=green!50](p2.center) -- (p3.center) -- (p5.center);
            \fill [fill=gray!50](p2.center) -- (p4.center) -- (p5.center);    	    
            \end{scope}

	    \node at (-.15,.5) {$x$};
	    \fill[blue] (0,.5) circle (.7pt);
	    \draw [line width= 0.5mm, blue] plot [smooth,tension=1] coordinates { (0,.5) (.6,.3) (1,0)};
	    \fill[blue] (1,0) circle (.7pt);
    \filldraw [gray] (p2) circle (1.25pt);
    \filldraw [gray] (p4) circle (1.25pt);
    \filldraw [gray] (p5) circle (1.25pt); 
	\end{tikzpicture}
}
	}\quad\quad\quad
\subfloat[]{\label{subfig:PosetWithPath}
\resizebox{2.5in}{!}{
\begin{tikzpicture}[scale=0.8, every node/.style={transform shape}]
    \tikzstyle{kstate}=[rectangle,draw=black,fill=white]
    \tikzset{->-/.style={decoration={
		markings,
		mark=at position #1 with {\arrow{>}}},postaction={decorate}}}
    
    \node[kstate,fill=red!50  ] (P0) at (  1,0) {$\relint{B}$};
    \node[kstate,fill=lightgray!50  ] (P1) at (  0,0) {$\relint{A}$};
    \node[kstate,fill=lightgray!50,draw=blue,thick] (P2) at (3.5,0) {$\relint{D}$};
    \node[kstate,fill=red!50  ] (P3) at (2.5,0) {$\relint{C}$};
    \node[kstate,fill=lightgray!50] (P4) at (  6,0) {$\relint{F}$};
    \node[kstate,fill=lightgray!50] (P5) at (  5,0) {$\relint{E}$};

    \node[kstate,fill=red!50  ,draw=blue,thick] (E0) at (-1,1) {$\relint{AB}$};
    \node[kstate,fill=red!50  ] (E1) at ( 2,1) {$\relint{BD}$};
    \node[kstate,fill=red!50  ,draw=blue,thick] (E2) at ( 1,1) {$\relint{BC}$};
    \node[kstate,fill=red!50  ] (E3) at ( 0,1) {$\relint{AC}$};
    \node[kstate,fill=red!50  ] (E4) at ( 3,1) {$\relint{CD}$};
	\node[kstate,fill=lightgray!50] (E5) at ( 6,1) {$\relint{DF}$};
	\node[kstate,fill=lightgray!50] (E6) at ( 5,1) {$\relint{DE}$};
	\node[kstate,fill=lightgray!50] (E7) at ( 4,1) {$\relint{CE}$};
	\node[kstate,fill=lightgray!50] (E8) at ( 7,1) {$\relint{EF}$};

    \node[kstate,fill=red!50  ,draw=blue,thick] (T0) at ( 2,2) {$\relint{BCD}$};
    \node[kstate,fill=red!50  ,draw=blue,thick] (T1) at ( 0,2) {$\relint{ABC}$};
    \node[kstate,fill=lightgray!50] (T2) at ( 6,2) {$\relint{DEF}$};
    \node[kstate,fill=green!50] (T3) at ( 4,2) {$\relint{CDE}$};

    \draw (P0) to (E0);
    \draw (P0) to (E1);
    \draw (P0) to (E2);

    \draw (P1) to (E0);
    \draw (P1) to (E3);

    \draw (P2) to (E1);
    \draw (P2) to (E4);
    \draw (P2) to (E5);
    \draw (P2) to (E6);

    \draw (P3) to (E2);
    \draw (P3) to (E3);
    \draw (P3) to (E4);
    \draw (P3) to (E7);

    \draw (P4) to (E5);
    \draw (P4) to (E8);

    \draw (P5) to (E6);
    \draw (P5) to (E7);
    \draw (P5) to (E8);

    \draw[blue,line width= 0.6mm,->=.5] (E0) to (T1);

    \draw (E1) to (T0);
    
    \draw[blue,line width= 0.6mm,->=.5] (E2) to (T0);
    
    \draw[blue,line width= 0.6mm,->=.5] (T1) to (E2);

    \draw (E3) to (T1);
    
    \draw (E4) to (T0);
    \draw (E4) to (T3);

	\draw (E5) to (T2);

	\draw (E6) to (T2);
	\draw (E6) to (T3);

	\draw (E7) to (T3);

	\draw (E8) to (T2);

	\begin{scope}[on background layer]
		\draw[blue,line width= 0.6mm,->=.2] (T0) to (P2);
		\draw[blue,line width= 0.6mm,->=.8] (T0) to (P2);
	\end{scope}
\end{tikzpicture}
}
}
\end{center}
\caption{(\ref{subfig:PolyhedronWithPath}) A topological path $\pi$ from a
  point $x$ to vertex $D$ in the polyhedral model $\calP_{\ref{fig:PolyhedronNoPathCompressed}}$ of
  Figure~\ref{subfig:PolyhedronNoPathCompressed}. (\ref{subfig:PosetWithPath})
  The corresponding \plm-path
  $(\relint{A},\relint{ABC},\relint{BC},\relint{BCD},\relint{D})$, in blue, in the Hasse diagram of the
  cell poset model $\map(\calP)$. Note that the \plm-path does not pass through $\relint{CD}$ but it goes directly from $\relint{BCD}$ to $\relint{D}$. This reflects the fact that, for small $\epsilon>0$  we have $\pi(1-\epsilon) \in \relint{BCD}$ while $\pi(1) = D$ and $\pi([0,1]) \cap \relint{CD} = \emptyset$.
  }
\label{fig:poset}
\end{figure}

\begin{exa}\label{ex:plmPath}
The \plm-path
$(\relint{AB},\relint{ABC},\relint{BC},\relint{BCD},\relint{D})$,
drawn in blue in Figure~\ref{subfig:PosetWithPath}, 
passes through the same cells, and in the same order, as the topological path from $x$  in the polyhedral model
$\calP_{\ref{fig:PolyhedronNoPathCompressed}}$ of Figure~\ref{fig:PolyhedronNoPathCompressed}
shown in
Figure~\ref{subfig:PolyhedronWithPath}  (source~\cite{Ci+23c}).\closeex
\end{exa}

Note that a topological path could, in principle, pass through some cells infinitely often. Such paths are not relevant for our theory since they play no role in the semantics of the logic and have no impact on weak simplicial bisimilarity, neither on the proofs of related results and, consequently, we are not interested in representing them. We will come back to this issue
in Section~\ref{sec:WeakBis}.

In the context of this paper it is often convenient to use a
generalisation of \plm-paths, so-called ``down paths'', \dwn-paths for
short: a \dwn-path from~$w$, of length $\ell \geq 1$, is an undirected
path~$\pi$ from~$w$ of length~$\ell$ such that
$\cnv{R}(\pi(\ell-1),\pi(\ell))$. Finally, it is also convenient to
use a subclass of \plm-paths, namely \upd-paths (to be read ``up-down
paths''): an \emph{\upd-path} from~$w$, of length~$2 \ell$, for
$\ell \geq 1$, is a \plm-path~$\pi$ of length~$2 \ell$ such that
$R(\pi(2i),\pi(2i+1))$ and $\cnv{R}(\pi(2i+1),\pi(2i+2))$, for all
$i \in [0;\ell)$.

Clearly, every \upd-path is also a \plm-path and every \plm-path is
also a \dwn-path. The following lemmas ensure that in \emph{reflexive}
Kripke frames \upd-, \plm-, and \dwn-paths can be safely used
interchangeably since for every \plm-path there is an \upd-path with
the same starting and ending points and with the same set of
intermediate points, occurring in the same order
(Lemma~\ref{lem:pm2ud} below, 
proven in Appendix~\ref{apx:prf:lem:pm2ud}).
Furthermore, for every \dwn-path there
is a \upd-path with the same starting and ending points and with the
same set of intermediate points, occurring in the same order
(Lemma~\ref{lem:d2ud} below, 
proven in Appendix~\ref{apx:prf:lem:d2ud}). 
Finally, for every \dwn-path
there is a \plm-path with the same starting and ending points and with
the same set of intermediate points, occurring in the same order
(Lemma~\ref{lem:d2plm} below, 
proven in Appendix~\ref{apx:prf:lem:d2plm}).

\begin{lem}\label{lem:pm2ud}
  Given a reflexive Kripke frame $(W,R)$ and a \plm-path
  $\pi : [0;\ell] \to W$, there is a \upd-path
  $\pi' : [0;\ell']\to W$, for some~$\ell'$, and a total, surjective,
  monotonic non-decreasing function $f : [0;\ell'] \to [0;\ell]$ such
  that $\pi'(j) = \pi(f(j))$ for all $j \in [0;\ell']$. \qed
\end{lem}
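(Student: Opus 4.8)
The plan is to rewrite the step sequence of the \plm-path \(\pi\) into a strictly alternating ``up''/``down'' sequence, inserting reflexive stutter steps whenever two consecutive steps of \(\pi\) would have the same type. Call a step \((\pi(i),\pi(i+1))\) \emph{up} if \(R(\pi(i),\pi(i+1))\) and \emph{down} if \(\cnv{R}(\pi(i),\pi(i+1))\). Since \(\pi\) is an undirected path, \(\dircnv{R}(\pi(i),\pi(i+1))\) holds for every \(i\in[0;\ell)\), so every step is up or down (possibly both); and since \(\pi\) is a \plm-path, step \(0\) may be taken up and step \(\ell-1\) may be taken down. I would begin by fixing, for each \(i\in[0;\ell)\), one admissible direction for step \(i\), choosing up for \(i=0\) and down for \(i=\ell-1\).

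Next I would build \(\pi'\) and \(f\) by a single left-to-right scan over \(i=0,\dots,\ell-1\), maintaining the invariant that the edge of \(\pi'\) about to be emitted must be up exactly when the current length of \(\pi'\) is even --- that is, even-indexed edges of the finished path will be up-steps and odd-indexed edges will be down-steps. For a step \(i<\ell-1\): if its fixed direction agrees with the currently required one, emit the edge \((\pi(i),\pi(i+1))\), recording \(i+1\) as the \(f\)-value of the new vertex; otherwise first emit the reflexive edge \((\pi(i),\pi(i))\) --- admissible in either direction by reflexivity of \(R\) --- recording \(i\), and then emit \((\pi(i),\pi(i+1))\), recording \(i+1\). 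In the mismatch case two edges are emitted, flipping the parity twice and so restoring the invariant; note that no insertion can occur at \(i=0\), so the first emitted edge is really the up-step \((\pi(0),\pi(1))\). For the last step \(i=\ell-1\) I would force termination on a down-step: if a down-step is currently required, emit \((\pi(\ell-1),\pi(\ell))\); otherwise emit the reflexive up-step \((\pi(\ell-1),\pi(\ell-1))\) first and then \((\pi(\ell-1),\pi(\ell))\) --- both admissible, the former by reflexivity, the latter because \(\pi\) is a \plm-path.

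It then remains to verify the two conclusions. First, the resulting \(\pi'\) is an undirected path (each emitted edge satisfies \(\dircnv{R}\), the reflexive ones because \(R\) is reflexive); it has even length at least \(2\) (steps \(0\) and \(\ell-1\) differ as \(\ell\geq 2\), so at least two edges are emitted, and the final-step analysis yields even parity); and by the invariant its even-indexed edges are up-steps while its odd-indexed edges are down-steps. Hence \(\pi'\) is a \upd-path. Second, \(\pi'(j)=\pi(f(j))\) holds for every index \(j\) of \(\pi'\) by construction; the \(f\)-values form a non-decreasing sequence, since the scan reaches step \(i\) with last recorded value \(i\) and records \(i+1\), possibly after re-recording \(i\); and every value of \([0;\ell]\) is recorded --- the value \(0\) at the start and the value \(i+1\) while processing step \(i\). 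Thus \(f:[0;\ell']\to[0;\ell]\) is total, surjective, and monotonic non-decreasing.

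The delicate point is the interplay at the two endpoints: one must make the greedy up/down alternation, the length-parity invariant, and the definitional requirements that a \upd-path begin with an up-step and end with a down-step (hence have even length) all cohere in a single pass. This is exactly where the fact that \(\pi\) is a \plm-path (its first step can be up, its last step can be down) and the reflexivity of \(R\) (to stutter) are needed; the remainder is routine checking that the edges of \(\pi'\) do point in the claimed directions.
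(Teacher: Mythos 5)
Your proof is correct. The paper establishes the lemma by induction on the length~$\ell$: the base case $\ell=2$ is immediate, and the induction step peels off the last edge of~$\pi$, case-splitting on whether the penultimate edge can be taken as a down-step (then the prefix $\pi|[0;\ell]$ is itself a \plm-path, the inductive hypothesis applies, and a reflexive stutter at $\pi(\ell)$ restores alternation before the final down-step) or only as an up-step (then the prefix is first repaired by duplicating $\pi(\ell-1)$ before invoking the hypothesis). The underlying mechanism --- inserting reflexive stutter edges to force strict up/down alternation --- is exactly yours, but the organisation differs: you run a single left-to-right greedy scan governed by an explicit parity invariant (even-indexed edges up, odd-indexed edges down), which treats every position uniformly and makes the monotonicity and surjectivity of~$f$ visible in one pass, whereas the paper's induction only ever reasons about the last one or two edges and leaves the global alternation implicit in the inductive hypothesis. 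Your version gives a cleaner, directly implementable construction and avoids the case analysis on the penultimate edge; the paper's is shorter to verify locally, since each step of the induction touches only the tail. Both arguments use the same two facts in the same two places: reflexivity of~$R$ to stutter, and the \plm-property of~$\pi$ to guarantee that the first step may be taken as an up-step and the last as a down-step.
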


\begin{lem}\label{lem:d2ud}
  Given a reflexive Kripke frame $(W,R)$ and a \dwn-path
  $\pi : [0;\ell] \to W$, there is a \upd-path
  $\pi' : [0;\ell''] \to W$, for some~$\ell'$, and a total,
  surjective, monotonic non-decreasing function
  $f : [0;\ell'] \to [0;\ell]$ such that $\pi'(j) = \pi(f(j))$ for all
  $j \in [0;\ell']$. \qed
\end{lem}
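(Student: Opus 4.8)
The plan is to reduce Lemma~\ref{lem:d2ud} directly to Lemma~\ref{lem:pm2ud}: first turn the given \dwn-path into a \plm-path at the cost of prepending a single extra vertex (made legitimate by reflexivity of $R$), then invoke the already-established conversion from \plm-paths to \upd-paths and compose the two index reparametrisations.

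First I would observe that a \dwn-path $\pi = (w_i)_{i=0}^{\ell}$ satisfies every clause of the definition of a \plm-path \emph{except} possibly the requirement $R(\pi(0),\pi(1))$ on the first step (and, when $\ell=1$, the length condition $\ell \geq 2$). To repair this, define $\pi'' : [0;\ell+1] \to W$ by $\pi''(0) = w_0$ and $\pi''(j) = w_{j-1}$ for $j \in [1;\ell+1]$, i.e.\ $\pi''$ is $\pi$ with a copy of its starting point prepended. Then $\pi''(0) = w_0 = \pi(0)$; the step $R(\pi''(0),\pi''(1)) = R(w_0,w_0)$ holds by reflexivity; $\dircnv{R}(\pi''(i),\pi''(i+1))$ holds for every $i \in [0;\ell+1)$ (it is $R(w_0,w_0)$ for $i=0$ and a step of $\pi$ otherwise); the last step $\cnv{R}(\pi''(\ell),\pi''(\ell+1)) = \cnv{R}(w_{\ell-1},w_\ell)$ holds because $\pi$ is a \dwn-path; and $\ell+1 \geq 2$ because $\ell \geq 1$ for \dwn-paths. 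Hence $\pi''$ is a \plm-path. Moreover, letting $g : [0;\ell+1] \to [0;\ell]$ be given by $g(0)=0$ and $g(j)=j-1$ for $j \geq 1$, the map $g$ is total, surjective and monotonic non-decreasing, and $\pi'' = \pi \circ g$.

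Next I would apply Lemma~\ref{lem:pm2ud} to $\pi''$, obtaining a \upd-path $\pi'$ (of even length, by definition of \upd-path) together with a total, surjective, monotonic non-decreasing function $f_1$ with $\pi'(j) = \pi''(f_1(j))$ for all $j$ in its domain. Setting $f := g \circ f_1$ finishes the argument: $f$ is total, surjective and monotonic non-decreasing as a composition of maps with these properties, and $\pi'(j) = \pi''(f_1(j)) = \pi(g(f_1(j))) = \pi(f(j))$ for all $j$. In particular $\pi'$ has the same starting point $w_0$ and ending point $w_\ell$ as $\pi$, and its intermediate vertices are exactly those of $\pi$ in the same order — the only vertex added in passing to $\pi''$ being a harmless repetition of $w_0$, which is absorbed by $f$ taking the value $0$ twice.

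I do not expect a genuine obstacle here, as the statement is essentially a corollary of Lemma~\ref{lem:pm2ud}. The only points needing care are checking that $\pi''$ really meets \emph{all} clauses of the \plm-path definition — in particular that its length is at least $2$, which is precisely where the hypothesis $\ell \geq 1$ on \dwn-paths is used — and verifying that composition preserves surjectivity and the monotonic-non-decreasing property of the index maps. (I note the displayed statement contains a minor typo, writing $\pi' : [0;\ell''] \to W$ while the index map is $f : [0;\ell'] \to [0;\ell]$; the intended reading is that $\pi'$ and $f$ share the domain $[0;\ell']$.)
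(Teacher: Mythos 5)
Your proof is correct, but it takes a genuinely different route from the paper's. The paper proves Lemma~\ref{lem:d2ud} by a self-contained induction on the length $\ell$ of the \dwn-path: the base case $\ell=1$ duplicates the starting point to obtain the \upd-path $(\pi(0),\pi(0),\pi(1))$, and the induction step applies the hypothesis to the shifted path $\pi\uparrow 1$ and then prepends either $(\pi(0),\pi(0),\pi(1))$ or $(\pi(0),\pi(1),\pi(1))$ according to whether $R(\pi(0),\pi(1))$ holds, so that the up--down alternation of the resulting \upd-path is preserved. You instead reduce the statement to Lemma~\ref{lem:pm2ud}: prepending one copy of the starting point (licensed by reflexivity) turns the \dwn-path into a \plm-path, the already-proven \plm-to-\upd conversion then applies, and the two index reparametrisations compose into the required total, surjective, monotonic non-decreasing map. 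Your reduction is shorter and avoids redoing the induction, at the price of importing the full strength of Lemma~\ref{lem:pm2ud}; the paper's version is self-contained and makes the case analysis on the first step explicit. Both arguments hinge on the same checks --- that the prepended step is covered by reflexivity, that the length condition $\ell+1\geq 2$ follows from \dwn-paths having length at least $1$, and that composition preserves the properties of the index maps --- and your remark about the typo $\ell''$ versus $\ell'$ in the statement is accurate (the same slip occurs in Lemma~\ref{lem:d2plm}).
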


\begin{lem}\label{lem:d2plm}
  Given a reflexive Kripke frame $(W,R)$ and a \dwn-path
  $\pi : [0;\ell] \to W$, there is a \plm-path
  $\pi' : [0;\ell''] \to W$, for some~$\ell'$, and a total,
  surjective, monotonic, non-decreasing function
  $f : [0;\ell'] \to [0;\ell]$ with $\pi'(j) = \pi(f(j))$ for all
  $j \in [0;\ell']$. \qed
\end{lem}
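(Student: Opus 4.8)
The plan is to obtain Lemma~\ref{lem:d2plm} essentially for free from Lemma~\ref{lem:d2ud}. By the very definition given above, an \upd-path is a \plm-path (of even length) carrying an additional alternation property; hence every \upd-path is in particular a \plm-path. So, given a reflexive Kripke frame $(W,R)$ and a \dwn-path $\pi : [0;\ell] \to W$, I would first invoke Lemma~\ref{lem:d2ud} to produce a \upd-path $\pi' : [0;\ell''] \to W$ together with a total, surjective, monotonic non-decreasing function $f : [0;\ell''] \to [0;\ell]$ with $\pi'(j) = \pi(f(j))$ for all $j \in [0;\ell'']$. Since $\pi'$ is a \upd-path it is a \plm-path, so the pair $(\pi', f)$ is exactly what Lemma~\ref{lem:d2plm} asks for. (Note that surjectivity and monotonicity of $f$ force $f(0) = 0$ and $f(\ell'') = \ell$, so $\pi'$ and $\pi$ share starting and ending point, and $f$ records that $\pi'$ visits the same points in the same order up to repetitions --- the same mild slack already present in Lemmas~\ref{lem:pm2ud} and~\ref{lem:d2ud}.)

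If instead one wants a self-contained argument not passing through \upd-paths, I would argue directly by prepending a reflexive ``stutter'' to $\pi$. Concretely, put $\ell'' = \ell + 1$ and define $\pi' : [0;\ell''] \to W$ by $\pi'(0) = \pi(0)$ and $\pi'(j) = \pi(j-1)$ for $j \in [1;\ell'']$, together with $f : [0;\ell''] \to [0;\ell]$ given by $f(0) = 0$ and $f(j) = j-1$ for $j \in [1;\ell'']$. One then checks the four required facts: $\pi'$ is an undirected path (the fresh initial step $\pi'(0) = \pi'(1) = \pi(0)$ is admissible because $R$ is reflexive, and every later step of $\pi'$ is a step of $\pi$); $\ell'' = \ell + 1 \geq 2$ because a \dwn-path has length $\ell \geq 1$; $R(\pi'(0),\pi'(1))$ holds by reflexivity, so $\pi'$ starts with an ``up'' step; and $\cnv{R}(\pi'(\ell''-1),\pi'(\ell''))$ holds because this last step equals the last step $\cnv{R}(\pi(\ell-1),\pi(\ell))$ of the \dwn-path $\pi$. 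Hence $\pi'$ is a \plm-path, $f$ is visibly total, surjective and monotonic non-decreasing, and $\pi'(j) = \pi(f(j))$ holds by construction.

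I do not anticipate a genuine obstacle: once the definitions of \dwn-path, \upd-path and \plm-path are unfolded, the lemma is bookkeeping. The only points deserving a moment's attention are the length requirement $\ell'' \geq 2$ of a \plm-path --- satisfied automatically by the even length $2\ell \geq 2$ of the \upd-path coming from Lemma~\ref{lem:d2ud}, or by the stutter together with $\ell \geq 1$ in the direct construction --- and the degenerate case $\ell = 1$, in which $\pi$ itself is not a \plm-path, so the stutter insertion (equivalently, the appeal to Lemma~\ref{lem:d2ud}) is genuinely needed rather than cosmetic.
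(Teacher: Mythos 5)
Your first argument is exactly the paper's proof: the appendix derives Lemma~\ref{lem:d2plm} directly from Lemma~\ref{lem:d2ud} by observing that every \upd-path is a \plm-path. The self-contained stuttering construction you add is also correct, but it is not needed — your primary route already matches the paper.
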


\subsection{The Logic \slcsG{} and Related Bisimilarities}\label{subsec:slcsG}
In~\cite{Be+22}, \slcsG, a version of \slcs{} for polyhedral models,
has been presented that consists of predicate letters, negation,
conjunction, and the single modal operator~$\gamma$, expressing
conditional reachability. The satisfaction relation for
$\gamma(\form_1, \form_2)$, for a polyhedral model
$\calP = (P,\peval{\calP})$, with $P=|K|$ for some simplicial complex $K$,
and $x\in P$,
 as defined in~\cite{Be+22}, is recalled below:\\[0.5em]
$
\begin{array}{l c l}
\calP, x \models \gamma(\form_1,\form_2) & \Leftrightarrow &
\mbox{a } \mbox{topological path } \pi: [0,1] \to |K| \mbox{ exists such that } \pi(0)=x,\\&&
\calP, \pi(1) \models \form_2, \mbox{and }
\calP, \pi(r) \models \form_1 \mbox{ for all } r\in \!\!(0,\!1).
\end{array}
$\\[0.5em]
We also recall the interpretation of \slcsG{} on poset models. The
satisfaction relation for $\gamma(\form_1, \form_2)$, for a poset model
$\calF=(W,\preccurlyeq,\peval{\calF})$ and
$w\in W$, is as follows:\\[0.5em]
$
\begin{array}{l c l}
\calF, w \models \gamma(\form_1,\form_2) & \Leftrightarrow &
\mbox{a } \mbox{\plm-path } \pi: [0;\ell] \to W \mbox{ exists such that } \pi(0)=w,\\&&
\calF, \pi(\ell) \models \form_2, \mbox{and }
\calF, \pi(i) \models \form_1 \mbox{ for all } i\in \!\!(0;\!\ell).
\end{array}
$\\[0.5em]
In~\cite{Be+22} it has also been shown that, for all 
$x \in P$
and \slcsG{} formulas~$\form$, we have: $\calP,x \models \form$ if and
only if $\map(\calP),\map(x) \models \form$.  In addition,
\emph{simplicial bisimilarity}, a novel notion of bisimilarity for
polyhedral models, has been  defined. It is based on the notion of {\em simplicial path}: 
given a polyhedral model $\calP=(P,\peval{\calP})$, with $P=|K|$ for some simplicial complex $K$, a topological path $\pi$ in $P$ is {\em simplicial} if and only if there is a finite sequence
$r_0 = 0 < \ldots < r_k = 1$ of values in [0,1] and cells 
$\relint{\sigma}_1, \ldots, \relint{\sigma}_k \in \relint{K}$ such that, for all $i\in [1;k]$ we have
that $\pi((r_{i-1},r_i)) \subseteq \relint{\sigma}_i$.\footnote{Essentially, simplicial paths have been
introduced for  avoiding to have to deal with ``bad'' paths, e.g. paths that 
can oscillate infinitely often between a set of cells.}

\begin{defi}\label{def:sibis}
Given a polyhedral model $\calP=(P,\peval{\calP})$, with $P=|K|$ for some simplicial complex $K$, a symmetric binary relation  $Z \subseteq P \times P$ is a {\em simplicial bisimulation}
if, for all $x_1, x_2 \in P$, whenever $Z(x_1,x_2)$ holds, we have that:
\begin{enumerate}
\item
$\invpeval{\calP}(x_1)=\invpeval{\calP}(x_2)$ and
\item 
for each simplicial path
$\pi_1$ from $x_1$ there is a simplicial path $\pi_2$ from $x_2$, such that
$Z(\pi_1(r),\pi_2(r))$ for all $r\in[0,1]$.
\end{enumerate}
Two points $x_1,x_2 \in P$
are {\em simplicial  bisimilar}, written $x_1 \sibis^{\calP} x_2$, if there exists a simplicial bisimulation $Z$ such
that $Z(x_1,x_2)$.\closedefi
\end{defi}

It has been shown that simplicial bisimilarity enjoys the classical
Hennessy-Milner property:
two points 
$x_1,x_2 \in P$
are simplicial
bisimilar if and only if they
satisfy the same \slcsG{} formulas, i.e. they are equivalent with
respect to the logic \slcsG{}, written $x_1 \slcsGeq^{\calP} x_2$.

The result has been extended to \emph{\plm-bisimilarity} on finite
poset models, a notion of bisimilarity based on \plm-paths: given
finite poset model $\calF=(W,\preccurlyeq,\peval{\calF})$, 
$w_1,w_2 \in W$ are \plm-bisimilar, written $x_1 \plmbis^{\calF} x_2$,
if and only if they satisfy the same \slcsG{} formulas, i.e.
$x_1 \slcsGeq^{\calF} x_2$ (see~\cite{Ci+23c} for details).  In
summary, we have:
\[
x_1 \sibis^{\calP} x_2 \mbox{ iff }
x_1 \slcsGeq^{\calP} x_2 \mbox{ iff }
\map(x_1) \slcsGeq^{\map(\calP)} \map(x_2) \mbox{ iff }
\map(x_1)\plmbis^{\map(\calP)} \map(x_2).
\]

 In Section~\ref{sec:WeakBis} we show a similar result for a {\em weaker}
  logic introduced in the next section, and originally presented
  in~\cite{Be+24}.
 Finally, in~\cite{Be+22} it has been shown that the classical modality $\Diamond$
 can be expressed using $\gamma$.  
 We recall  that
 for polyhedral model $\calP=(P,\peval{\calP})$
  and  for  poset model $\calF=(W,\preccurlyeq, \peval{\calF})$, the semantics of $\Diamond \form$ is defined as follows:
  \begin{displaymath}
    \begin{array}{l c l}
      \calP, x \models \Diamond \,\form
      & \Leftrightarrow
      & x \in \closure_T(\ZET{\,x'\in P}{\calP,x' \models \form\,})
    \end{array}
  \end{displaymath}
  \begin{displaymath}
    \begin{array}{l c l}
      \calF, w\models \Diamond \form
      & \Leftrightarrow
      & \text{$w' \in W$ exists such that $w \preccurlyeq w'$ and
        $\calF, w' \models \form$}.
    \end{array}
  \end{displaymath}
It turns out that $\Diamond \form$ is equivalent to $\gamma(\form,\ltrue)$, 
 for all \slcsG{} formulas $\form$.

We close this section with a small example. 
\begin{exa}
With reference to
Figure~\ref{subfig:PolyhedronNoPathCompressed}, we have that no red
point, call it~$y$, in the open segment~$CD$ is simplicial bisimilar
to the red point~$C$. In fact, although both $y$ and~$C$ satisfy
$\gamma(\mathbf{green}, \ltrue)$, we have that $C$~satisfies also
$\gamma(\mathbf{grey}, \ltrue)$, which is not the case for~$y$.
Similarly, with reference to
Figure~\ref{subfig:PolyhedronNoPathPosetCompressed}, cell~$\relint{C}$
satisfies $\gamma(\mathbf{grey}, \ltrue)$, which is not satisfied
by~$\relint{CD}$. 
\closeex
\end{exa}

\subsection{Labelled Transition Systems and Related Bisimilarities}
\begin{defi}\label{defi:lts}
A {\em labelled transition system}, \lts{} for short,
is a tuple $(S,L,\longrightarrow~\!\!)$ where $S$ is a non-empty set of {\em states},
$L$ is a non-empty set of {\em transition labels} and $\longrightarrow \subseteq S \times L \times S$ is the transition relation.\closedefi
\end{defi}

For $\tau \in L$ denoting the ``silent'' action we let $t \trans{\tau^*} t'$ whenever
$t=t'$ or there are $t_0, \ldots, t_n$, for $n>0$ such that 
$t_0 = t$,  $t_n = t'$ and $t_i \trans{\tau} t_{i+1}$ for $i\in [0;n)$.

\begin{defi}[Strong Bisimulation and Strong Equivalence]\label{def:StrongBisEq}
Given an \lts{} $\LTS= (S,L,\longrightarrow)$ a binary relation $B\subseteq S \times S$ is a {\em strong bisimulation} if, for all $s_1, s_2 \in S$, if $B(s_1,s_2)$ then the following holds:
\begin{enumerate}
\item if $s_1 \trans{\lambda} s'_1$ for some $\lambda$ and $s_1$, then $s'_2$ exists such that 
$s_2 \trans{\lambda} s'_2$  and $B(s'_1,s'_2)$, and
\item if $s_2 \trans{\lambda} s'_2$ for some $\lambda$ and $s_2$, then $s'_1$ exists such that 
$s_1 \trans{\lambda} s'_1$  and $B(s'_1,s'_2)$.
\end{enumerate}
We say that $s_1$ and $s_2$ are {\em strongly equivalent} in $\LTS$, written $s_1 \, \seq^{\LTS} s_2$ if a strong bisimulation $B$ exists such that $B(s_1,s_2)$.\closedefi
\end{defi}

It has been shown that $\sim^{\LTS}$ is the union of all strong bisimulations in $\LTS$, it is the largest strong bisimulation and it is an equivalence relation~\cite{Mil89}.

\begin{defi}[Branching Bisimulation and Equivalence]\label{def:BranBisEq}
Given an \lts{} $\LTS= (S,L,\longrightarrow)$ such that $\tau \in L$ a binary relation $B\subseteq S \times S$ is a {\em branching bisimulation} iff, for all $s, t,s' \in S$, and $\lambda\in L$, 
whenever $B(s,t)$ and  $s \trans{\lambda} s'$, it holds that:
(i) $B(s',t)$ and $\lambda=\tau$, or
(ii) $B(s,\bar{t}), B(s',t')$ and $t \trans{\tau^*} \bar{t}$, $\bar{t} \trans{\lambda} t'$, for some
$\bar{t},t' \in S$.\\[0.5em]
Two states $s,t \in S$ are called {\em branching bisimilar} in $\LTS$,  written 
$s \beq^{\LTS} t$ if $B(s,t)$ for some branching bisimulation $B$ for $S$. 
\closedefi
\end{defi}

It has been shown that $\beq^{\LTS}$ is the union of all branching bisimulations in $\LTS$, it is the largest branching bisimulation and it is an equivalence relation~\cite{GlW96}. 

We will omit the superscript $\LTS$ in $\seq^{\LTS}$ and $\beq^{\LTS}$ when this will not cause confusion.

 \section{Weak \slcs{} on Polyhedral Models}\label{sec:slcsE}
 
 In this section we introduce \slcsE, a logic for polyhedral models
 that is weaker than \slcsG, yet is still capable of expressing
 interesting conditional reachability properties. We present also an
 interpretation of the logic on finite poset models.

 \begin{defi}[Weak \slcs{} on polyhedral models - \slcsE{}]
   \label{def:SlcsEPolMod}
   The abstract language of \slcsE{} is the following:
   \begin{displaymath}
     \form ::= p \; \sep \; \lneg \form \; \sep \;
     \form_1 \land \form_2 \; \sep \; \eta(\form_1,\form_2). 
   \end{displaymath}
   The satisfaction relation of \slcsE{} with respect to a given
   polyhedral model 
$\calP = (P,\peval{\calP})$, with $P=|K|$ for some simplicial complex $K$,
\slcsE{}
   formula~$\form$, and point $x \in P$ is defined recursively on
   the structure of~$\form$ as follows:
   \begin{displaymath}
     \begin{array}{l c l}
       \calP, x \models p
       & \Leftrightarrow
       & x \in \peval{\calP}(p);
       \\
       \calP, x \models \lneg \form
       & \Leftrightarrow
       & \text{$\calP , x \models \form$ does not hold};
       \\
       \calP , x \models \form_1 \land \form_2
       & \Leftrightarrow
       & \text{$\calP, x \models \form_1$ and $\calP, x \models
         \form_2$};
       \\
       \calP, x \models \eta(\form_1,\form_2)
       & \Leftrightarrow
       & \text{a topological path $\pi : [0,1] \to P$ exists such
         that}
       \\
       &
       & \text{$\pi(0) = x$, $\calP , \pi(1) \models \form_2$, and
         $\calP , \pi(r) \models \form_1$ for all $r \in \!\![0,\!1)$}.
     \end{array}
   \end{displaymath}
   \vspace{-0.3in}\\\mbox{ }\hfill\closedefi
 \end{defi}

\begin{rem}
It is worth pointing out that the definition of the satisfaction relation of \slcsE{} does {\em not}
depend on the specific simplicial complex $K$ that generates the polyhedron $P=|K|$.
In other words: given polyhedral models $\calP'=(P,\peval{\calP'})$ with $P=|K'|$ and
$\calP''=(P,\peval{\calP''})$ with $P=|K''|=|K'|$ and
$\peval{\calP'} = \peval{\calP''}$, for all \slcsE{} formulas $\form$ and $x \in P$
 the following holds: $\calP',x \models \form$ iff $\calP'',x \models \form$.\closerem
\end{rem}

 As usual, disjunction~($\lor$) is derived as the dual of~$\land$.
 Note that the only difference between $\eta(\form_1,\form_2)$ and
 $\gamma(\form_1,\form_2)$ is that the former requires that \emph{also
   the first element} of a path witnessing the formula
 satisfies~$\form_1$, hence the use of the left closed interval
 $[0,1)$ here. 
 Although this might seem at first sight only a very
 minor difference,  it has considerable consequences:  
 $\eta$ cannot express $\Diamond$, which, instead, can be expressed in terms of $\gamma$ (see Remark~\ref{rem:NoDiamondInEtaCont} and Remark~\ref{rem:noDiamondInEta} below).
  
 \begin{defi}[\slcsE{} Logical Equivalence]
   \label{def:SlcsEeq}
   Given a polyhedral model 
$\calP = (P,\peval{\calP})$, with $P=|K|$ for some simplicial complex $K$,
   and $x_1, x_2 \in P$, we say that $x_1$ and~$x_2$ are \emph{logically
     equivalent}  with respect to \slcsE, written
   $x_1 \slcsEeq^{\calP} x_2$, if and only if, for all \slcsE{}
   formulas~$\form$, it holds that $\calP,x_1 \models \form$ if and
   only if $\calP,x_2 \models \form$.\closedefi
\end{defi}

In the following, we will refrain from indicating the model~$\calP$
explicitly as a superscript of $\slcsEeq^{\calP}$ when it is clear
from the context.
Below, we show that \slcsE{} can be encoded into \slcsG{} so that the
latter is at least as expressive as the former.

\begin{defi}
  \label{def:etga}
  We define the encoding~$\etga$ of \slcsE{} into \slcsG{} as follows:
  \begin{displaymath}
    \begin{array}{l c l}
      \etga(p) & = & p \\
      \etga(\lneg \form) & = & \lneg\etga(\form)
    \end{array}
    \quad\quad\quad\quad
    \begin{array}{l c l}
      \etga(\form_1 \land \form_2) & = &\etga(\form_1) \land \etga(\form_2) \\
      \etga(\eta(\form_1,\form_2)) & = & \etga(\form_1) \land \gamma(\etga(\form_1),\etga(\form_2)) 
    \end{array}
  \end{displaymath}
  \vspace{-0.3in}\\\mbox{ }\hfill\closedefi
\end{defi}

The following lemma is easily proven by structural induction on~$\form$ 
(see Appendix~\ref{apx:Prf:lem:etgaCorrectG}).

\begin{lem}
  \label{lem:etgaCorrectG}
  Let 
  $\calP = (P,\peval{\calP})$, with $P=|K|$ for some simplicial complex $K$,
  be a polyhedral model, $x \in P$,
  and $\form$ a \slcsE{} formula.  Then $\calP,x \models \form$ if and
  only if $\calP,x \models \etga(\form)$. \qed
\end{lem}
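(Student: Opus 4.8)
The plan is to prove the statement by structural induction on the \slcsE{} formula~$\form$, with induction hypothesis that $\calP, y \models \psi$ if and only if $\calP, y \models \etga(\psi)$ for every point $y \in P$ and every proper subformula~$\psi$ of~$\form$. First I would dispose of the base case $\form = p$, which is immediate because $\etga(p) = p$, and then the cases $\form = \lneg\psi$ and $\form = \form_1 \land \form_2$, which follow directly from the induction hypothesis together with the recursive clauses of the satisfaction relation, using that $\etga$ commutes with negation and conjunction.

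The only case requiring genuine attention is $\form = \eta(\form_1, \form_2)$, where by definition $\etga(\form) = \etga(\form_1) \land \gamma(\etga(\form_1), \etga(\form_2))$. The key observation is that the semantic clauses for $\eta(\form_1,\form_2)$ and $\gamma(\form_1,\form_2)$ differ only in that the former additionally requires the witnessing topological path to satisfy $\form_1$ at its starting point (the left-closed interval $[0,1)$ in the clause for $\eta$ versus the open interval $(0,1)$ in the clause for $\gamma$), and that the extra conjunct $\etga(\form_1)$ in the encoding is precisely what restores this requirement.

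Concretely, for the forward direction, if $\calP, x \models \eta(\form_1,\form_2)$ is witnessed by a topological path $\pi$, then evaluating the path condition at $r = 0$ gives $\calP, x \models \form_1$, and restricting it to $r \in (0,1)$ shows that the same $\pi$ witnesses $\calP, x \models \gamma(\form_1, \form_2)$; applying the induction hypothesis to $\form_1$ (at~$x$) and to $\form_1$ and $\form_2$ (at $\pi(1)$ and at all intermediate points of $\pi$) then yields $\calP, x \models \etga(\form_1) \land \gamma(\etga(\form_1), \etga(\form_2))$. For the converse, from $\calP, x \models \etga(\form_1)$ and $\calP, x \models \gamma(\etga(\form_1), \etga(\form_2))$ the induction hypothesis gives $\calP, x \models \form_1$ together with a topological path $\pi$ such that $\pi(0) = x$, $\calP, \pi(1) \models \form_2$, and $\calP, \pi(r) \models \form_1$ for all $r \in (0,1)$; combining the first fact, which is $\calP, \pi(0) \models \form_1$, with the path condition extends the latter to all $r \in [0,1)$, so $\pi$ witnesses $\calP, x \models \eta(\form_1, \form_2)$.

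I do not expect a real obstacle here: the argument is a direct unfolding of the definitions, and the only subtlety is the careful bookkeeping of the interval endpoint at $r = 0$, which is exactly the role played by the $\etga(\form_1) \land {\cdot}$ prefix in the encoding of~$\eta$.
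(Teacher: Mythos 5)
Your proof is correct and follows essentially the same route as the paper's: structural induction with the only nontrivial case being $\eta(\form_1,\form_2)$, where the equivalence is obtained by splitting off the $r=0$ endpoint of the witnessing path and observing that the extra conjunct $\etga(\form_1)$ in the encoding accounts for exactly that endpoint. The bookkeeping of $[0,1)$ versus $(0,1)$ is handled the same way in the paper's Appendix proof.
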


A direct consequence of Lemma~\ref{lem:etgaCorrectG} is that \slcsE{}
is weaker than \slcsG{.} 

\begin{prop}
  \label{prop:SLCSGimplWSLCSG}
  Let 
  $\calP = (P,\peval{\calP})$, with $P=|K|$ for some simplicial complex $K$,
   be a polyhedral model.
  For all $x_1,x_2 \in P$  the following holds:
  if $x_1 \slcsGeq{} x_2$ then $x_1 \slcsEeq x_2$. \qed
\end{prop}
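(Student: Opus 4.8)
The plan is to derive this directly from Lemma~\ref{lem:etgaCorrectG}, which states that satisfaction of an \slcsE{} formula~$\form$ coincides with satisfaction of its \slcsG{}-encoding~$\etga(\form)$. First I would observe that, by inspection of Definition~\ref{def:etga}, the encoding~$\etga(\form)$ of any \slcsE{} formula is a genuine \slcsG{} formula: the clauses for $p$, $\lneg$, and~$\land$ stay within the syntax, and the clause for~$\eta$ produces a formula built only from~$\land$ and the \slcsG{} modality~$\gamma$. Hence $\etga$ maps \slcsE{} formulas to \slcsG{} formulas.

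Next I would fix $x_1, x_2 \in P$ with $x_1 \slcsGeq{} x_2$, i.e.\ $x_1$ and $x_2$ satisfy exactly the same \slcsG{} formulas, and take an arbitrary \slcsE{} formula~$\form$. The argument is then a short chain of equivalences: $\calP, x_1 \models \form$ iff $\calP, x_1 \models \etga(\form)$ by Lemma~\ref{lem:etgaCorrectG}; iff $\calP, x_2 \models \etga(\form)$ because $\etga(\form)$ is an \slcsG{} formula and $x_1 \slcsGeq{} x_2$; iff $\calP, x_2 \models \form$ again by Lemma~\ref{lem:etgaCorrectG}. Since $\form$ was arbitrary, $x_1$ and $x_2$ satisfy the same \slcsE{} formulas, that is, $x_1 \slcsEeq x_2$.

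There is essentially no obstacle here: the whole content has already been packaged into Lemma~\ref{lem:etgaCorrectG} (itself proven by structural induction on~$\form$), and the present statement is just the observation that logical equivalence is preserved along a satisfaction-preserving syntactic embedding of one logic into another. The only point requiring any care is the remark that $\etga$ indeed lands in the \slcsG{} fragment, which is immediate from the defining equations.
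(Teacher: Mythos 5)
Your argument is correct and is exactly the route the paper intends: Proposition~\ref{prop:SLCSGimplWSLCSG} is stated there as a direct consequence of Lemma~\ref{lem:etgaCorrectG}, obtained by the same chain $\calP,x_1\models\form$ iff $\calP,x_1\models\etga(\form)$ iff $\calP,x_2\models\etga(\form)$ iff $\calP,x_2\models\form$, using that $\etga$ lands in \slcsG. Nothing is missing.
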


\begin{rem}
  \label{rem:weakGweakerG}
  The converse of Proposition~\ref{prop:SLCSGimplWSLCSG} does
  \emph{not} hold, as shown by the  polyhedral model
  $\calP_{\ref{fig:AltTriAndPoset}}=(P_{\ref{fig:AltTriAndPoset}},\peval{\calP_{\ref{fig:AltTriAndPoset}}})$
  in Figure~\ref{fig:AlternatingTriangle}, 
  where $P_{\ref{fig:AltTriAndPoset}}$ is the simplex  $K_{\ref{fig:AltTriAndPoset}}$ generated by points $A$, $B$, and $C$, i.e. the triangle $ABC$,
  and $\peval{\calP_{\ref{fig:AltTriAndPoset}}}$ is specified by the colours in the figure.
  It is easy to see that, for all $x \in \relint{ABC}$, we have
  $A \not\slcsGeq{} x$ and $A \slcsEeq{} x$. Let, in fact,
  $x \in \relint{ABC}$. Clearly, $A \not\slcsGeq{} x$ since
  $\calP_{\ref{fig:AltTriAndPoset}}, A \models \gamma(\mathbf{red},\ltrue)$ whereas
  $\calP_{\ref{fig:AltTriAndPoset}}, x \not\models \gamma(\mathbf{red},\ltrue)$. It can 
  easily be shown, by induction on the structure of formulas, that
  $A \slcsEeq x$ for all $x \in \relint{ABC}$ 
  (see Appendix~\ref{apx:prf:rem:weakGweakerG}).
  As an additional, a bit more complex, example, let us consider the polyhedral model 
  $\calP_{\ref{fig:PolyhedronNoPathCompressed}}$ of Figure~\ref{fig:PolyhedronNoPathCompressed}. It is easy to see that
  every $x \in \relint{CE}$ satisfies $\gamma(\mathbf{green},\ltrue)$, 
  while for no $y \in \relint{DEF}$ we have
  $\calP_{\ref{fig:PolyhedronNoPathCompressed}}, y \models \gamma(\mathbf{green},\ltrue)$. 
  So, for  all such $x$ and $y$, we have $x \, \not\slcsGeq \, y$.
  On the other hand,
  as we will see in Example~\ref{ex:min} of Section~\ref{sec:WeakBis} (on page~\pageref{ex:min}), 
  cells $\relint{CE}$ and $\relint{DEF}$ will fall in the same equivalence class of
  $\slcsEeq$ on $\map(\calP_{\ref{fig:PolyhedronNoPathCompressed}})$ and so, by Theorem~\ref{theo:calMPresForm} below --- guaranteeing that 
  \slcsE{} is preserved and reflected by mapping $\map$ --- and Theorem~\ref{theo:MinE} of Section~\ref{sec:EtaMinimisation} --- stating correctness of \plm-minimisation --- we get that 
  $x \, \slcsEeq \, y$.
  The above reasoning can be generalised to any pair of points 
  $x \in \relint{D} \, \cup \, \relint{E} \, \cup \, \relint{CE} \, \cup \relint{DE}$ and 
  $y \in \relint{F} \, \cup \, \relint{DF} \, \cup \, \relint{EF} \, \cup \, \relint{DEF}$: we have
  $x \, \slcsEeq \, y$ but $x \, \not\slcsGeq \, y$.
  \closerem
\end{rem}

\begin{rem}\label{rem:NoDiamondInEtaCont}
  The example of Figure~\ref{fig:AlternatingTriangle} is useful also
  for showing that the classical topological interpretation of the
  modal logic operator $\Diamond$ cannot be expressed in \slcsE.
 Clearly, in the model of the figure, we have
  $\calP_{\ref{fig:AltTriAndPoset}},A \models \Diamond \mkern1mu \mathbf{red}$ while
  $\calP_{\ref{fig:AltTriAndPoset}}, x \models \Diamond \mkern1mu \mathbf{red}$ for
  no~$x \in \relint{ABC}$. On the other hand, $A \slcsEeq x$ holds for
  all $x \in \relint{ABC}$, as we have just seen in
  Remark~\ref{rem:weakGweakerG}. So, if $\Diamond$ were expressible in
  \slcsE, then $A$ and~$x$ should have agreed on
  $\Diamond \mkern1mu \mathbf{red}$ for each $x \in
  \relint{ABC}$. 
  \closerem
\end{rem}

\begin{figure}
\begin{center}
\subfloat[]{\label{fig:AlternatingTriangle}
\resizebox{0.9in}{!}
{
\begin{tikzpicture}[scale=2]%
    \tikzset{node distance=1.5cm}
    \tikzstyle{point}=[circle,fill=white,inner sep=0pt,minimum width=4pt,minimum height=4pt]	  

	    \node (tC)[point, draw=blue!35, fill=blue!35, label={90:$C$}] at (0,0){};
	    \node (tA)[point, draw=blue!35, fill=blue!35, below of = tC, xshift=-1cm,label={270:$A$}]{};
	    \node (tB)[point, draw=blue!35, fill=blue!35,below of = tC, xshift=1cm, label={270:$B$}]{};
	   
	    \fill [fill=blue!35](tA.center) -- (tB.center) -- (tC.center);
	    \draw [line width=0.6mm, red!60] (tA) -- (tB);	
	    \draw [line width=0.6mm, red!60] (tA) -- (tC);	
	    \draw [line width=0.6mm, red!60] (tB) -- (tC);
 \end{tikzpicture}
 }
 }\quad\quad\quad\quad\quad\quad
 \subfloat[]{\label{fig:PosetAlternatingTriangle}
 \resizebox{1.5in}{!}
 {
 \begin{tikzpicture}[scale=2, every node/.style={transform shape}]
    \tikzstyle{kstate}=[rectangle,draw=black,fill=white]
    \tikzset{->-/.style={decoration={
        markings,
        mark=at position #1 with {\arrow{>}}},postaction={decorate}}}
    \node[kstate,fill=blue!35] (A) at (  0,0) {$\relint{A}$};
    \node[kstate,fill=blue!35, right of= A, xshift=1cm] (C) {$\relint{C}$};
    \node[kstate,fill=blue!35, right of= C, xshift=1cm] (B) {$\relint{B}$};
    \node[kstate,fill=red!60, above of= A] (AC) {$\relint{AC}$};
    \node[kstate,fill=red!60, right of= AC, xshift=1cm] (AB) {$\relint{AB}$};
    \node[kstate,fill=red!60, right of= AB, xshift=1cm] (BC) {$\relint{BC}$};
    \node[kstate,fill=blue!35, above of= AB] (ABC) {$\relint{ABC}$};

    \draw(A) edge[thick](AC);
    \draw(A) edge[thick](AB);
    \draw(C) edge[thick](AC);
    \draw(C) edge[thick](BC);
    \draw(B) edge[thick](AB);
    \draw(B) edge[thick](BC);
    \draw(AC) edge[thick](ABC);
    \draw(AB) edge[thick](ABC);
    \draw(BC) edge[thick](ABC);
\end{tikzpicture}
}
}
\end{center}
\caption{A polyhedral model (\ref{fig:AlternatingTriangle}) $\calP_{\ref{fig:AltTriAndPoset}}$,
and the Hasse diagram of its cell poset model (\ref{fig:PosetAlternatingTriangle}).\label{fig:AltTriAndPoset}} 
\end{figure}

Below, we re-interpret \slcsE{} on finite Kripke models instead of
polyhedral models. The only difference from
Definition~\ref{def:SlcsEPolMod} is, of course, the fact that
$\eta$-formulas are defined using \plm-paths instead of topological
ones.

\begin{defi}[\slcsE{} on finite Kripke models]
  \label{def:WSlcsFinPos}
  The satisfaction relation of \slcsE{} with respect to a given finite
  Kripke model $\calK = (W,R,\peval{\calK})$,
  an \slcsE{} formula~$\form$, and an element $w \in W$, is defined
  recursively on the structure of $\form$: \\
  \begin{displaymath}
    \begin{array}{l c l}
      \calK, w \models p
      & \Leftrightarrow
      & w \in \peval{\calK}(p);
      \\
      \calK, w \models \lneg \form
      & \Leftrightarrow
      & \calK, w \not\models \form;
      \\
      \calK, w \models \form_1 \land \form_2
      & \Leftrightarrow
      & \text{$\calK, w \models \form_1$ and $\calK, w \models
        \form_2$};
      \\
      \calK, w \models \eta(\form_1,\form_2)
      & \Leftrightarrow
      & \text{a \plm-path $\pi : [0;\ell] \to W$ exists such that}
      \\
      &
      & \text{$\pi(0) = w$, $\calK, \pi(\ell) \models \form_2$,
        and $\calK, \pi(i) \models \form_1$ for all $i \in [0;\ell)$}.
    \end{array}
  \end{displaymath}
\vspace{-0.3in}\\\mbox{ }\hfill\closedefi		
\end{defi}

\begin{rem}\label{rem:KripkeNotPoset}
We recall here that \plm-paths are defined on general Kripke frames, of which 
finite posets are a subclass. 
The reason why in Definition~\ref{def:WSlcsFinPos} we use finite Kripke models, instead 
of restricting it to finite poset models, stems from the fact that the result of minimisation 
of a finite poset model, modulo weak \plm-bisimilarity,  
is, in general, not guaranteed to be again a poset model, whereas it is 
guaranteed to be a (reflexive) finite Kripke model.
As we will see in Section~\ref{sec:EtaMinimisation}, the fact that the minimal model is not necessarily a poset model does not affect correctness of the minimisation procedure, 
and so it does not constitute a problem for the optimised model checking method
presented in this paper. In the rest of this section, as well as in Section~\ref{sec:WeakBis}, we will 
anyway be interested in poset models, so that we will restrict the relevant results to the latter.
\closerem
\end{rem}

The following result,
proven in Appendix~\ref{apx:prf:prop:interchangeableE},
states that to evaluate an \slcsE{} formula
$\eta(\form_1,\form_2)$ in a poset model, it does not matter whether
one considers \plm-paths or \dwn-paths.

\begin{prop}
  \label{prop:interchangeableE}
  Given a finite poset model $\calF = (W,\preccurlyeq,\peval{\calF})$,
  $w \in W$, and  \slcsE{} formulas $\form_1$ and $\form_2$, the
  following statements are equivalent:
  \begin{enumerate}
  \item\label{enu:plmE}
    There exists a \plm-path
    $\pi : [0;\ell] \to W$ for some~$\ell$ with $\pi(0) = w$,
    $\calF, \pi(\ell)\models \form_2$, and
    $\calF, \pi(i)\models \form_1$ for all $i \in [0;\ell)$.
  \item\label{enu:dwnE}
    There exists a \dwn-path $\pi : [0;\ell'] \to W$ for some~$\ell'$ with
    $\pi(0) = w$, 
    $\calF, \pi(\ell')\models \form_2$, and 
    $\calF, \pi(i)\models \form_1$ for all $i \in [0;\ell')$.
    \qed
  \end{enumerate}
\end{prop}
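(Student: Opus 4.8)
The plan is to prove the two implications separately. The direction \ref{enu:plmE} $\Rightarrow$ \ref{enu:dwnE} is immediate: every \plm-path is in particular a \dwn-path (this is noted explicitly in the excerpt, just after the definition of \upd-, \plm- and \dwn-paths), so the witnessing \plm-path of \ref{enu:plmE} serves verbatim as the witnessing \dwn-path of \ref{enu:dwnE}, with the same starting point, the same ending point, and the same intermediate points, so the three conditions on $\form_1$ and $\form_2$ carry over unchanged.

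For the converse \ref{enu:dwnE} $\Rightarrow$ \ref{enu:plmE}, I would invoke Lemma~\ref{lem:d2plm}. A finite poset model, viewed as a Kripke model, is reflexive (the order $\preccurlyeq$ is reflexive), so the hypotheses of that lemma are met. Given the \dwn-path $\pi : [0;\ell'] \to W$ from \ref{enu:dwnE}, Lemma~\ref{lem:d2plm} yields a \plm-path $\pi' : [0;\ell''] \to W$ together with a total, surjective, monotonic non-decreasing function $f : [0;\ell''] \to [0;\ell']$ with $\pi'(j) = \pi(f(j))$ for all $j$. Since $f$ is non-decreasing, total and surjective, we have $f(0) = 0$ and $f(\ell'') = \ell'$; hence $\pi'(0) = \pi(0) = w$ and $\pi'(\ell'') = \pi(\ell')$, so $\calF, \pi'(\ell'') \models \form_2$ by the hypothesis on $\pi$. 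For the intermediate points: I must check $\calF, \pi'(j) \models \form_1$ for all $j \in [0;\ell'')$, i.e.\ for $j$ with $j < \ell''$. For such $j$, monotonicity gives $f(j) \le f(\ell'') = \ell'$; I need $f(j) < \ell'$, i.e.\ $f(j) \in [0;\ell')$, so that $\calF, \pi(f(j)) \models \form_1$ by the hypothesis on $\pi$. This holds unless $f(j) = \ell'$, which would force $f$ to be constant equal to $\ell'$ on $[j;\ell'']$; the only potential worry is that $f$ ``jumps'' to $\ell'$ before the last index. I expect this to be handled by the specific construction behind Lemma~\ref{lem:d2plm} (the conversion prepends or duplicates points near the front of the path, not at the tail), so that $f^{-1}(\ell') = \{\ell''\}$; if the lemma as stated does not guarantee this directly, a one-line argument finishes it: should $f(j) = \ell'$ for some $j < \ell''$, then $\pi'(j) = \pi'(\ell'') = \pi(\ell')$, and since $\calF, \pi(\ell'-1) \models \form_1$ (as $\ell' - 1 \in [0;\ell')$; note $\ell' \ge 1$ for a \dwn-path) one can instead use the shorter \plm-path obtained by truncating $\pi'$ at the first index mapping to $\ell'$ — or, more simply, observe that along a \plm-path the last point is the only one excluded from the ``all intermediate points satisfy $\form_1$'' requirement anyway, and re-apply the truncation argument to land on a genuine \plm-path whose final point still satisfies $\form_2$.

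The main obstacle, then, is purely bookkeeping around the endpoints: ensuring that the index-reparametrisation $f$ supplied by Lemma~\ref{lem:d2plm} does not collapse the $\form_1$-region at the very end, i.e.\ that the penultimate block of $\pi'$ still lies strictly before $\ell'$ under $f$. Everything else — preservation of start point, end point, and the valuations along the way — is a direct transcription through the equation $\pi'(j) = \pi(f(j))$ together with surjectivity and monotonicity of $f$. I would present the argument as: (i) the trivial inclusion of \plm-paths into \dwn-paths for one direction; (ii) a single application of Lemma~\ref{lem:d2plm} plus the endpoint/intermediate-point check for the other, spelling out $f(0)=0$, $f(\ell'')=\ell'$, and the case distinction on whether $f(j) < \ell'$.
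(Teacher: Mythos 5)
Your proposal is correct and takes essentially the same route as the paper, whose proof of this proposition is a one-liner: the forward direction is the trivial inclusion of \plm-paths into \dwn-paths, and the converse is a direct appeal to Lemma~\ref{lem:d2plm}. The endpoint subtlety you flag (whether $f$ could send some $j<\ell''$ to $\ell'$) is left implicit in the paper but is indeed settled by the construction behind Lemmas~\ref{lem:d2ud} and~\ref{lem:d2plm}, which only duplicates points towards the front of the path, so that the final index of $\pi$ has a unique preimage under~$f$.
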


\begin{defi}[Logical Equivalence]
  \label{def:WFPSLCSeq}
  Given a finite poset model
  $\calF = (W,{\preccurlyeq},\peval{\calF})$ and elements
  $w_1, w_2 \in W$ we say that $w_1$ and~$w_2$ are \emph{logically
    equivalent} with respect to \slcsE, written
  $w_1 \slcsEeq^{\calF} w_2$, if and only if, for all \slcsE{}
  formulas~$\form$, it holds that $\calF,w_1 \models \form$ if
  and only if $\calF,w_2 \models \form$.
  \closedefi
\end{defi}

Again, in the following, we will refrain from indicating the model
$\calF$ explicitly in~$ \slcsEeq^{\calF}$ when it is clear from the
context.
It is useful to define a ``characteristic'' \slcsE{} formula~$\chi(w)$
that is satisfied by all and only those elements~$w'$ with
$w' \slcsEeq w$,
as shown in Appendix~\ref{apx:prf:prop:chiE}.

\begin{defi}
  \label{def:chiE}
  Given a finite poset model $\calF = (W,\preccurlyeq,\peval{\calF})$,
  $w_1,w_2 \in W$, define \slcsE{} formula~$\delta_{w_1,w_2}$ as
  follows: if $w_1 \slcsEeq w_2$, then set $\delta_{w_1,w_2}= \ltrue$,
  otherwise pick some \slcsE{} formula~$\psi$ such that
  $\calF,w_1 \models \psi$ and $\calF,w_2 \models \lneg \psi$, and set
  $\delta_{w_1,w_2}= \psi$.  For $w \in W$ define
  $\chi(w) = \bigwedge_{w' \in W} \: \delta_{w,w'}$.  \closedefi
\end{defi}

\begin{prop}\label{prop:chiE} 
  Given a finite poset model $\calF = (W,\preccurlyeq,\peval{\calF})$,
  for $w_1,w_2 \in W$, it holds that $ \calF,w_2 \models \chi(w_1)$ if
  and only if $w_1 \slcsEeq w_2$. \qed
\end{prop}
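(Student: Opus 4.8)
Here $W$ is finite, so $\chi(w_1) = \bigwedge_{w' \in W} \delta_{w_1,w'}$ is a genuine \emph{finite} conjunction of \slcsE{} formulas, hence itself an \slcsE{} formula; this finiteness is the only essential ingredient. Before the two implications I would record the well-definedness of Definition~\ref{def:chiE}: for each $w' \in W$, if $w_1 \slcsEeq w'$ then $\delta_{w_1,w'} = \ltrue$; otherwise, by Definition~\ref{def:WFPSLCSeq} there is an \slcsE{} formula on which $w_1$ and $w'$ disagree, and replacing it by its negation if necessary yields a $\psi$ with $\calF, w_1 \models \psi$ and $\calF, w' \models \lneg\psi$, so the choice demanded by Definition~\ref{def:chiE} can indeed be made. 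In particular, in all cases $\calF, w_1 \models \delta_{w_1,w'}$, and therefore $\calF, w_1 \models \chi(w_1)$.

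For the implication $w_1 \slcsEeq w_2 \Rightarrow \calF, w_2 \models \chi(w_1)$, I would simply combine $\calF, w_1 \models \chi(w_1)$ (just observed) with the fact that $\chi(w_1)$ is an \slcsE{} formula: by Definition~\ref{def:WFPSLCSeq}, $w_1 \slcsEeq w_2$ gives $\calF, w_1 \models \chi(w_1)$ iff $\calF, w_2 \models \chi(w_1)$, whence $\calF, w_2 \models \chi(w_1)$.

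For the converse $\calF, w_2 \models \chi(w_1) \Rightarrow w_1 \slcsEeq w_2$, I would argue contrapositively. Suppose $w_1 \not\slcsEeq w_2$. Then the conjunct of $\chi(w_1)$ indexed by $w' = w_2$ is $\delta_{w_1,w_2} = \psi$ with $\calF, w_1 \models \psi$ and $\calF, w_2 \models \lneg\psi$, i.e.\ $\calF, w_2 \not\models \psi$. Since $\chi(w_1)$ entails each of its conjuncts, $\calF, w_2 \models \chi(w_1)$ would force $\calF, w_2 \models \psi$, a contradiction; hence $\calF, w_2 \not\models \chi(w_1)$, as required.

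There is no real obstacle: the whole proof is elementary propositional bookkeeping once finiteness of $W$ is in place. The only points deserving a sentence of care are (i) that finiteness of $W$ is what makes $\chi(w_1)$ a legitimate \slcsE{} formula — an infinite conjunction would not be — and (ii) the orientation of the separating formula $\psi$, which is handled by possibly passing to its negation. It is worth noting that the analogous claim stated directly over a polyhedral model $\calP$ would be considerably more delicate, the carrier being an infinite point set; the value of Proposition~\ref{prop:chiE} is precisely that on the finite cell poset model $\map(\calP)$ a characteristic \slcsE{} formula does exist, which is what will later be combined with the preservation and reflection of \slcsE{} along $\map$.
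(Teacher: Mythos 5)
Your proof is correct and follows essentially the same route as the paper's: the paper likewise argues the converse by observing that $w_1 \not\slcsEeq w_2$ forces $\calF,w_2 \not\models \delta_{w_1,w_2}$ and hence $\calF,w_2 \not\models \chi(w_1)$, and the forward direction by noting that $\calF,w_1 \models \delta_{w_1,w}$ for every $w$ and transferring satisfaction to $w_2$ via $w_1 \slcsEeq w_2$. Your additional remarks on well-definedness of $\delta_{w_1,w'}$ and on finiteness of $W$ being what licenses the conjunction are sensible but do not change the argument.
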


The following lemma is the poset model counterpart of
Lemma~\ref{lem:etgaCorrectG} 
(see Appendix~\ref{apx:prf:lem:etgaCorrectE}):

\begin{lem}
  \label{lem:etgaCorrectE}
  Let $\calF = (W,\preccurlyeq,\peval{\calF})$ be a finite poset
  model, $w \in W$, and $\form$ an \slcsE{} formula. Then
  $\calF,w \models \form$ if and only if
  $\calF,w \models \etga(\form)$.
  \qed
\end{lem}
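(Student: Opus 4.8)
The plan is to prove the statement by structural induction on the \slcsE{} formula~$\form$, closely following the proof of Lemma~\ref{lem:etgaCorrectG} but now using the poset-model semantics: Definition~\ref{def:WSlcsFinPos} for $\eta$ and the clause for $\gamma$ recalled in Section~\ref{subsec:slcsG}, both of which are phrased in terms of \plm-paths. Fix a finite poset model $\calF = (W,\preccurlyeq,\peval{\calF})$, and take as induction hypothesis that for every proper subformula $\psi$ of $\form$ and every $v \in W$ we have $\calF,v \models \psi$ iff $\calF,v \models \etga(\psi)$.

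For $\form = p$ there is nothing to prove, since $\etga(p) = p$. For $\form = \lneg\psi$ and $\form = \form_1 \land \form_2$ the claim is immediate from the induction hypothesis together with the clauses $\etga(\lneg\psi) = \lneg\etga(\psi)$ and $\etga(\form_1 \land \form_2) = \etga(\form_1) \land \etga(\form_2)$ and the matching semantic clauses for negation and conjunction.

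The only interesting case is $\form = \eta(\form_1,\form_2)$, where $\etga(\form) = \etga(\form_1) \land \gamma(\etga(\form_1),\etga(\form_2))$. By the induction hypothesis, $\calF,v \models \form_i$ iff $\calF,v \models \etga(\form_i)$ for every $v \in W$ and every $i \in \{1,2\}$, so it suffices to establish
$$
\calF, w \models \eta(\form_1,\form_2)
\quad\Longleftrightarrow\quad
\calF, w \models \form_1 \ \text{ and }\ \calF, w \models \gamma(\form_1,\form_2).
$$
For the left-to-right direction, let $\pi : [0;\ell] \to W$ be a \plm-path witnessing $\calF,w \models \eta(\form_1,\form_2)$, i.e.\ $\pi(0) = w$, $\calF,\pi(\ell) \models \form_2$, and $\calF,\pi(i) \models \form_1$ for all $i \in [0;\ell)$; taking $i = 0$ gives $\calF,w \models \form_1$, and since $(0;\ell) \subseteq [0;\ell)$ the same \plm-path $\pi$ also witnesses $\calF,w \models \gamma(\form_1,\form_2)$. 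For the converse, assume $\calF,w \models \form_1$ and let $\pi : [0;\ell] \to W$ be a \plm-path witnessing $\calF,w \models \gamma(\form_1,\form_2)$, so $\pi(0) = w$, $\calF,\pi(\ell) \models \form_2$, and $\calF,\pi(i) \models \form_1$ for all $i \in (0;\ell)$; together with $\pi(0) = w$ and $\calF,w \models \form_1$ this yields $\calF,\pi(i) \models \form_1$ for all $i \in [0;\ell)$, so $\pi$ witnesses $\calF,w \models \eta(\form_1,\form_2)$. This closes the induction.

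There is no genuine obstacle here; the only point requiring a little care is the bookkeeping between the index ranges $(0;\ell)$ and $[0;\ell)$ — which is precisely what the extra conjunct $\etga(\form_1)$ in the encoding of $\eta$ is designed to bridge — and making sure the induction hypothesis is invoked at all elements of $W$, not just at $w$.
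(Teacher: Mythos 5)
Your proof is correct and takes essentially the same route as the paper: the paper's own proof of this lemma simply states that it is analogous to that of Lemma~\ref{lem:etgaCorrectG}, transposed to the finite-poset semantics with \plm-paths, which is exactly the structural induction you carry out. Your explicit handling of the index ranges $[0;\ell)$ versus $(0;\ell)$ spells out the detail the paper leaves implicit.
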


Thus we get, as for the interpretation on polyhedral models, that 
\slcsE{} on finite poset models is weaker than \slcsG:

\begin{prop}
  \label{prop:SLCSPMimplWSLCSPM}
  Let $\calF = (W,\preccurlyeq, \peval{\calF})$ be a finite poset model. 
  For all $w_1,w_2 \in W$ the following holds:
  if  $w_1 \slcsGeq{} w_2$ then $w_1 \slcsEeq w_2$.
  \qed
\end{prop}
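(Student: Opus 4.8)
The plan is to mirror exactly the argument that established Proposition~\ref{prop:SLCSGimplWSLCSG} in the polyhedral setting, but now working with the poset-model interpretation. The key ingredient is Lemma~\ref{lem:etgaCorrectE}, which tells us that for every \slcsE{} formula $\form$ and every element $w$ of a finite poset model $\calF$, we have $\calF,w \models \form$ if and only if $\calF,w \models \etga(\form)$, where $\etga(\form)$ is a bona fide \slcsG{} formula by Definition~\ref{def:etga}.

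First I would fix a finite poset model $\calF = (W,\preccurlyeq,\peval{\calF})$ and two elements $w_1,w_2 \in W$ with $w_1 \slcsGeq{} w_2$; the goal is $w_1 \slcsEeq{} w_2$. So let $\form$ be an arbitrary \slcsE{} formula; I must show $\calF,w_1 \models \form$ iff $\calF,w_2 \models \form$. By Lemma~\ref{lem:etgaCorrectE} applied at $w_1$, the left-hand side is equivalent to $\calF,w_1 \models \etga(\form)$; by the same lemma applied at $w_2$, the right-hand side is equivalent to $\calF,w_2 \models \etga(\form)$. Since $\etga(\form)$ is an \slcsG{} formula and $w_1 \slcsGeq{} w_2$ means $w_1$ and $w_2$ satisfy exactly the same \slcsG{} formulas, we have $\calF,w_1 \models \etga(\form)$ iff $\calF,w_2 \models \etga(\form)$. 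Chaining these three equivalences gives $\calF,w_1 \models \form$ iff $\calF,w_2 \models \form$, and since $\form$ was arbitrary, $w_1 \slcsEeq{} w_2$ follows by Definition~\ref{def:WFPSLCSeq}.

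There is essentially no obstacle here: the proposition is a one-line corollary of Lemma~\ref{lem:etgaCorrectE}, in the same way that Proposition~\ref{prop:SLCSGimplWSLCSG} is a one-line corollary of Lemma~\ref{lem:etgaCorrectG}. The only thing to double-check is that Definition~\ref{def:etga} indeed produces \slcsG{} formulas when fed \slcsE{} formulas (it does — the clause for $\eta$ introduces only the $\gamma$ operator together with conjunction), and that the logical-equivalence notions $\slcsGeq{}$ and $\slcsEeq{}$ on poset models are the obvious "same satisfied formulas" relations, which they are by Definition~\ref{def:WFPSLCSeq} and the definition of $\slcsGeq^{\calF}$ recalled in Section~\ref{subsec:slcsG}. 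Hence the statement is immediate.
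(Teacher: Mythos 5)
Your argument is exactly the paper's: Proposition~\ref{prop:SLCSPMimplWSLCSPM} is stated there as a direct consequence of Lemma~\ref{lem:etgaCorrectE}, obtained by translating an arbitrary \slcsE{} formula $\form$ into the \slcsG{} formula $\etga(\form)$ and using $\slcsGeq$ on the translated formula, just as you do. Your write-up merely spells out the chain of equivalences the paper leaves implicit; it is correct.
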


\begin{rem}\label{rem:weakPMeakerPM}
  As expected, the converse of
  Proposition~\ref{prop:SLCSPMimplWSLCSPM} does not hold, as shown by
  the poset model $\map(\calP_{\ref{fig:AltTriAndPoset}})$ of
  Figure~\ref{fig:PosetAlternatingTriangle}. Clearly,
  $\relint{A} \not\slcsGeq{} \relint{ABC}$. In fact
  $\map(\calP_{\ref{fig:AltTriAndPoset}}),\relint{A} \models \gamma(\mathbf{red},\ltrue)$ whereas
  $\map(\calP_{\ref{fig:AltTriAndPoset}}),\relint{ABC} \not\models \gamma(\mathbf{red},\ltrue)$. On the
  other hand, it can be easily shown, by induction on the structure of
  formulas, that $\relint{A} \slcsEeq \relint{ABC}$
  (see Appendix~\ref{apx:prf:rem:weakPMeakerPM}).
  With reference to the polyhedral model 
  $\calP_{\ref{fig:PolyhedronNoPathCompressed}}$ of Figure~\ref{fig:PolyhedronNoPathCompressed}, its poset model 
  $\map(\calP_{\ref{fig:PolyhedronNoPathCompressed}}) = 
  (W_{\ref{fig:PolyhedronNoPathCompressed}}, \preccurlyeq, \peval{\map(\calP_{\ref{fig:PolyhedronNoPathCompressed}})})$, and 
  Example~\ref{ex:min} of Section~\ref{sec:WeakBis}, 
  we have that 
  $\relint{D},\relint{E},\relint{F},\relint{CE},\relint{DE},\relint{DF},\relint{EF}$, and
$\relint{DEF}$ are all equivalent according to weak \plm-bisimilarity. We invite the reader to check that, letting $\phi_0, \phi_1, \phi_2, \psi_1, \psi_2, \psi_3$, and $\psi_4$ be defined as\\[0.5em]
\begin{minipage}{2in}
$\phi_0=\gamma(\mathbf{green},\ltrue)$\\
$\phi_1=\gamma(\lneg\phi_0,\ltrue)$\\
$\phi_2=\gamma(\phi_0 \land \lneg\phi_1,\ltrue)$\\
$ $ 
\end{minipage}
\begin{minipage}{2in}
$\psi_1= \lneg\phi_0$\\
$\psi_2= \phi_0 \land \lneg \phi_1$\\
$\psi_3=\phi_1 \land \phi_1 \land \lneg\phi_2$\\ 
$\psi_4=\phi_0 \land \phi_1 \land \phi_2$
\end{minipage}\\
we have\\[0.5em]
$\calP_{\ref{fig:PolyhedronNoPathCompressed}},\relint{DF} \models \lneg\phi_0$, and the same holds for $\relint{DEF},\relint{EF}$ and $\relint{F}$,\\
$\calP_{\ref{fig:PolyhedronNoPathCompressed}},\relint{CE}\models \lneg\phi_0 \land \lneg\phi_1$,\\
$\calP_{\ref{fig:PolyhedronNoPathCompressed}},\relint{DE}\models \lneg\phi_0 \land\phi_1 \land \lneg\phi_2$, and\\
$\calP_{\ref{fig:PolyhedronNoPathCompressed}},\relint{E}\models \lneg\phi_0 \land \phi_1 \land \phi_2$.\\[0.5em]
As a consequence,
each of $\psi_1, \psi_2, \psi_3$, and $\psi_4$ cannot be true in conjunction with any of the others
and so, the classes $\SET{\relint{DF}, \relint{DEF},\relint{EF},\relint{F}}$, 
$\SET{\relint{CE}}$, $\SET{\relint{DE}}$, and $\SET{\relint{E}}$ must definitely be distinct in 
the quotient of $W_{\ref{fig:PolyhedronNoPathCompressed}}$ modulo $\slcsGeq$.
\closerem
\end{rem}

\begin{rem}
  \label{rem:noDiamondInEta}
  As for the case of the continuous interpretation of \slcsE, the example
  of Figure~\ref{fig:PosetAlternatingTriangle} is useful also for
  showing that the classical modal logic operator $\Diamond$ cannot be
  expressed in \slcsE{.} 
  Clearly, in the model of the figure, we have
  $\map(\calP_{\ref{fig:AltTriAndPoset}}), \relint{A} \models \Diamond \mkern1mu \mathbf{red}$ while
  $\map(\calP_{\ref{fig:AltTriAndPoset}}), \relint{ABC} \not\models \Diamond \mkern1mu \mathbf{red}$. On the
  other hand $\relint{A} \slcsEeq \relint{ABC}$ holds, as we have just
  seen in Remark~\ref{rem:weakPMeakerPM}. So, if $\Diamond$ were
  expressible in \slcsE, then $\relint{A}$ and $\relint{ABC}$ should
  have agreed on $\Diamond \mkern1mu \mathbf{red}$.   
  \closerem
\end{rem}

The following result, 
proven in Appendix~\ref{apx:prf:lem:VB},
is useful to set up a bridge between the
continuous and the discrete interpretations of \slcsE{.}

\begin{lem}
  \label{lem:VB}
  Given a polyhedral model 
  $\calP = (P,\peval{\calP})$, with $P=|K|$ for some simplicial complex $K$,
  for all~$x \in P$
  and formulas~$\form$ of \slcsE{}
  the following holds: $\calP,x \models \form$ if and only if
  $\map(\calP),\map(x) \models \etga(\form)$.
  \qed
\end{lem}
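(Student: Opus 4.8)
The plan is to obtain Lemma~\ref{lem:VB} by chaining two facts that are already available: the translation-correctness Lemma~\ref{lem:etgaCorrectG}, which relates \slcsE{} and \slcsG{} on the polyhedral side, and the truth-preservation result of~\cite{Be+22} for \slcsG{} under the map~$\map$, recalled in Section~\ref{subsec:slcsG}. No new induction is needed; the argument is a two-step composition of biconditionals.

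First I would observe, by inspection of Definition~\ref{def:etga}, that $\etga(\form)$ is a genuine \slcsG{} formula for every \slcsE{} formula~$\form$: the clauses for $p$, $\lneg$ and $\land$ are homomorphic, and the only modal clause replaces each $\eta(\form_1,\form_2)$ by $\etga(\form_1) \land \gamma(\etga(\form_1),\etga(\form_2))$, which uses only the \slcsG{} connectives. Consequently both $\calP,x \models \etga(\form)$ and $\map(\calP),\map(x) \models \etga(\form)$ are to be read with the \slcsG{} satisfaction relation — on polyhedral models and on poset models respectively — exactly the two interpretations recalled in Section~\ref{subsec:slcsG}.

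Then I would apply Lemma~\ref{lem:etgaCorrectG} to get $\calP,x \models \form$ iff $\calP,x \models \etga(\form)$, and next the \slcsG{} preservation/reflection result of~\cite{Be+22} recalled above — that $\calP,y \models \psi$ iff $\map(\calP),\map(y) \models \psi$ for every \slcsG{} formula~$\psi$ and every point~$y$ — instantiated at $\psi = \etga(\form)$ and $y = x$, to get $\calP,x \models \etga(\form)$ iff $\map(\calP),\map(x) \models \etga(\form)$. Composing the two biconditionals yields $\calP,x \models \form$ iff $\map(\calP),\map(x) \models \etga(\form)$, which is the claim.

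There is essentially no obstacle beyond bookkeeping: the real content has already been isolated in Lemma~\ref{lem:etgaCorrectG} (a routine structural induction) and in the \slcsG{} preservation theorem of~\cite{Be+22}. The only point requiring a moment of care is checking that the two occurrences of $\etga(\form)$ in the chain are both interpreted by the standard \slcsG{} semantics, so that the cited theorem applies verbatim; this is immediate from the syntactic shape of~$\etga$ noted above.
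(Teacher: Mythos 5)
Your proposal is correct, and it is actually more economical than the proof the paper gives. The paper proves Lemma~\ref{lem:VB} by structural induction on~$\form$, handling the $\eta(\form_1,\form_2)$ case by unfolding $\etga(\eta(\form_1,\form_2))$ into $\etga(\form_1) \land \gamma(\etga(\form_1),\etga(\form_2))$, treating the first conjunct via Lemma~\ref{lem:etgaCorrectG} plus the induction hypothesis and the second conjunct via Theorem~4.4 of~\cite{Be+22}, and then reassembling. You use exactly the same two ingredients — Lemma~\ref{lem:etgaCorrectG} and the \slcsG{} preservation/reflection theorem of~\cite{Be+22} — but compose them once, globally, at the level of the whole formula: since $\etga(\form)$ is by construction a pure \slcsG{} formula, the cited theorem applies to it directly, and no induction is needed beyond the ones already buried inside the two cited results. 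Your observation that both occurrences of $\etga(\form)$ are read under the standard \slcsG{} semantics is precisely the point that licenses this shortcut, and it holds by the syntactic shape of~$\etga$ as you note. What the paper's route buys is an argument that exposes, clause by clause, how the $\eta$-witnessing paths transfer; what yours buys is brevity and a cleaner separation of concerns, with no loss of rigour.
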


As a direct consequence of Lemma~\ref{lem:etgaCorrectE} and
Lemma~\ref{lem:VB} 
we get, 
by Theorem~\ref{theo:calMPresForm} below, 
proven in~\ref{apx:prf:theo:calMPresForm},
the bridge between the continuous and the
discrete interpretations of \slcsE:

\begin{thm}
  \label{theo:calMPresForm}
  Given a polyhedral model 
  $\calP = (P,\peval{\calP})$, with $P=|K|$ for some simplicial complex $K$,
for all~$x \in P$
 and formulas $\form$ of \slcsE{} it holds that:
  $\calP,x \models \form$ if and only if
  $\map(\calP),\map(x) \models \form$.
  \qed
\end{thm}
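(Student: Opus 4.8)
The plan is to obtain the statement by simply chaining the two translation lemmas that precede it, so that no fresh induction is needed at this point. Concretely, fix a polyhedral model $\calP = (P,\peval{\calP})$ with $P = |K|$, a point $x \in P$, and an \slcsE{} formula $\form$. By Lemma~\ref{lem:VB} we have $\calP, x \models \form$ if and only if $\map(\calP), \map(x) \models \etga(\form)$. Now $\map(\calP)$ is, by construction, a finite poset model (its carrier is $\relint{K}$, which is finite because $K$ is a finite collection of simplexes), and $\map(x) \in \relint{K}$. Hence Lemma~\ref{lem:etgaCorrectE}, instantiated at the finite poset model $\map(\calP)$ and the element $\map(x)$, gives $\map(\calP), \map(x) \models \etga(\form)$ if and only if $\map(\calP), \map(x) \models \form$. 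Combining the two equivalences yields $\calP, x \models \form$ iff $\map(\calP), \map(x) \models \form$, which is the claim.

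Since $\form$ is arbitrary, this equivalence simultaneously establishes that $\map$ \emph{preserves} ($\Rightarrow$) and \emph{reflects} ($\Leftarrow$) satisfaction of \slcsE{}. In particular it immediately follows that $x_1 \slcsEeq^{\calP} x_2$ if and only if $\map(x_1) \slcsEeq^{\map(\calP)} \map(x_2)$, which is the form in which the result will be used in Section~\ref{sec:WeakBis}.

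The substance behind the theorem therefore lies entirely in Lemmas~\ref{lem:VB} and~\ref{lem:etgaCorrectE}, which I would treat as follows. For Lemma~\ref{lem:VB} the idea is to route through \slcsG: by Lemma~\ref{lem:etgaCorrectG}, $\calP, x \models \form$ iff $\calP, x \models \etga(\form)$, and since $\etga(\form)$ is a \slcsG{} formula one may invoke the known truth-preservation of $\map$ for \slcsG{} from~\cite{Be+22} to get $\calP, x \models \etga(\form)$ iff $\map(\calP), \map(x) \models \etga(\form)$. For Lemma~\ref{lem:etgaCorrectE} one proceeds by structural induction on the \slcsE{} formula; the only nontrivial case is $\eta(\form_1,\form_2)$, where $\etga(\eta(\form_1,\form_2)) = \etga(\form_1) \land \gamma(\etga(\form_1),\etga(\form_2))$, so one must reconcile a \plm-path witnessing the $\eta$-formula (which constrains \emph{all} indices in $[0;\ell)$, the index $0$ included) with the conjunct $\etga(\form_1)$ holding at $w$ together with a \plm-path witnessing the $\gamma$-formula (which constrains only the strictly intermediate indices). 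The main obstacle is precisely this bookkeeping about which points of a path must satisfy $\form_1$ — the half-open versus open interval distinction — together with the freedom to move between \plm-paths and \dwn-paths; the latter is exactly what Proposition~\ref{prop:interchangeableE} supplies, and it is what makes the modal case go through cleanly.
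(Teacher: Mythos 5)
Your proof is correct and is essentially identical to the paper's own argument, which likewise derives the theorem by chaining Lemma~\ref{lem:VB} with Lemma~\ref{lem:etgaCorrectE}. Your sketch of how those two lemmas are established (routing through Lemma~\ref{lem:etgaCorrectG} and the truth-preservation of $\map$ for \slcsG{} from~\cite{Be+22}, and a structural induction whose only delicate case is $\eta$) also matches the paper's treatment.
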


This theorem allows one to go back and forth between the polyhedral
model and the corresponding poset model without losing anything
expressible in \slcsE.

\section{Weak Simplicial Bisimilarity}
\label{sec:WeakBis}

In this section, we introduce weak versions of simplicial bisimilarity
and \plm-bisimilarity and we show that they coincide with logical
equivalence induced by \slcsE{} in polyhedral and poset models,
respectively.
We are looking for a notion of bisimilarity that enjoys the HMP with respect to 
\slcsE, i.e. that coincides with  $\slcsEeq$. 
We already know that simplicial bisimilarity $\sibis$ enjoys the HMP
with respect \slcsG, i.e. $\sibis \, = \, \slcsGeq$ and, moreover, that 
$\slcsEeq$ is weaker than $\slcsGeq$. 
Here, by ``weaker'' we mean coarser, i.e. one that includes simplicial bisimilarity, in the sense of set inclusion, $\slcsGeq \, \subset \, \slcsEeq$.

A natural step in the search for such a notion of bisimilarity is to reconsider the 
definition of simplicial bisimilarity, recalled in Section~\ref{subsec:slcsG} 
(see Definition~\ref{def:sibis}), and 
seek to weaken its conditions.
Of course, the first condition cannot be relaxed in any meaningful way:
equivalent points must at least satisfy the same predicate letters. 
Let us thus focus on the second
condition, namely the one concerning topological paths. 
The condition requires 
that as ``one moves on'' $\pi_2$ using cursor $r$,
the corresponding point on $\pi_1$, i.e. $\pi_1(r)$, must be related by $Z$ to the current point in $\pi_2$, namely $\pi_2(r)$. The points in $\pi_2$ and $\pi_1$, while one moves the cursor $r$, must go ``hand in hand'' in $Z$. 

One way of relaxing the above condition is to require only that (2.a) the ending points of 
$\pi_1$ and $\pi_2$ are related --- i.e. $Z(\pi_1(1),\pi_2(1))$ --- and (2.b) for each
other point $y_2$ of $\pi_2$, there is a  point $y_1$ of $\pi_1$, different from $\pi_1(1)$, such that 
$y_1$ and $y_2$ are related --- i.e. for each $r_2 \in [0,1)$ there is $r_1 \in [0,1)$ such that
$Z(\pi_1(r_1), \pi_2(r_2))$.

Interestingly, it turns out that the bisimilarity induced by a definition of bisimulation relation
where condition (2) is relaxed as above, coincides exactly with $\slcsEeq$, the
logical equivalence induced by \slcsE! 
In practice, we do not even need the notion of simplicial path, in the sense that the actual
definition, given below, is based on general topological paths and characterises an 
equivalence relation --- which we call  {\em weak simplicial bisimilarity},   
$\wsibis$ --- that  coincides with  $\slcsEeq$, as guaranteed by Theorem~\ref{thm:HMPpoly}. The proof of this theorem, as well as those of all  results related to $\wsibis$, does not require the use of simplicial paths.

\begin{defi}[Weak Simplicial Bisimulation]
  \label{def:WSimBis}
  Given a polyhedral model 
  $\calP = (P,\peval{\calP})$, with $P=|K|$ for some simplicial complex $K$,
  a symmetric relation $Z \subseteq {|K| \mathord\times |K|}$ is a \emph{weak simplicial
    bisimulation} if, for all $x_1,x_2 \in |K|$, whenever
  $Z(x_1,x_2)$, it holds that:
  \begin{enumerate}
  \item $\invpeval{\calP}(\SET{x_1}) = \invpeval{\calP}(\SET{x_2})$;
  \item for each topological path~$\pi_1$ from~$x_1$, there is a
    topological path~$\pi_2$ from~$x_2$ such that
    $Z(\pi_1(1),\pi_2(1))$ and for all $r_2 \in [0,1)$ there is
    $r_1 \in [0,1)$ such that $Z(\pi_1(r_1),\pi_2(r_2))$.
  \end{enumerate}
  Two points 
  $x_1,x_2 \in P$
  are weakly simplicial bisimilar,
  written $x_1 \wsibis^{\calP} x_2$, if there is a weak simplicial
  bisimulation $Z$ such that $B(x_1,x_2)$.  \closedefi
\end{defi}

\begin{exa}\label{ex:wbis}
With reference to Figure~\ref{fig:AlternatingTriangle},
the binary relation $Z$ composed of all those pairs of points that have the same colour, i.e.
$$
Z= \left(\relint{AB} \, \cup \, \relint{BC}  \, \cup \, \relint{AC}\right)^2
\cup
\left(ABC^2 \, \setminus \, \left(\relint{AB} \, \cup \, \relint{BC}  \, \cup \, \relint{AC}\right)^2\right)
$$
is a weak simplicial bisimulation.
Take, for example, any pair $(x,y) \in \relint{AB} \times \relint{BC}$: both $x$ and~$y$ satisfy  only one predicate letter, namely $\mathbf{red}$. In addition, let $\pi_x$ be any topological path starting from $x$ and such that $\pi_x(1)$ is red. Then it is easy to see, just by visual inspection, that one can find a path $\pi_y$ from $y$ such that $\pi_y(1)$ is red and, for each
intermediate point of $\pi_y$ there is in $\pi_x$ an intermediate point of the same colour.
The reasoning for the case in which $\pi_x(1)$ is blue is similar.
Thus $x \, \wsibis y$. The reasoning can be extended to all pairs in~$Z$: actually $\wsibis$ coincides with $Z$ for the polyhedral model of Figure~\ref{fig:AlternatingTriangle}.

As an additional example, let us consider the polyhedral model $\calP_{\ref{fig:PolyhedronNoPathCompressed}}$ of Figure~\ref{subfig:PolyhedronNoPathCompressed} and points $A$ and $D$ therein.
It is easy to see that there is no weak simplicial bisimulation $Z$ such that $Z(A,D)$.
Suppose such a $Z$ exists. Take $\pi_1$ from $D$ such that, 
$\pi_1(r)=D$ for all $r\in [0,\bar{r}]$, and 
$\emptyset \subset \pi_1((\bar{r},1]) \subset \relint{CDE}$, for some $\bar{r} \in (0,1)$.
Clearly, any $\pi_2$ from $A$ should be such that $\pi_2(1) \in \relint{CDE}$,
otherwise $Z(\pi_1(1),\pi_2(1))$ would not hold. But any topological path starting from $A$ and ending in
$\relint{CDE}$ would necessarily pass by red points, and for any such red point 
$\pi_2(r_2)$ for some $r_2 \in (0,1)$ there would be no $r_1 \in (0,1)$ such that 
$Z(\pi_1(r_1),\pi_2(r_2))$, since no point of $\pi_1$ is red.
As one would expect, we have also that
$\calP_{\ref{fig:PolyhedronNoPathCompressed}}, D \models \eta(\mathbf{green}\, \lor \, \mathbf{grey}, \mathbf{green})$
whereas 
$\calP_{\ref{fig:PolyhedronNoPathCompressed}}, A \not\models \eta(\mathbf{green}\, \lor \, \mathbf{grey}, \mathbf{green})$. \closeex
\end{exa}

Definition~\ref{def:WPLMBis} below rephrases Definition~\ref{def:WSimBis} for finite posets and discrete paths and it settles the finite poset counterpart of weak simplicial bisimilarity, 
namely weak \plm-bisimilarity, a weaker version of \plm-bisimilarity introduced in~\cite{Ci+23c}.
The second condition in the definition deals with \dwn-paths. In particular, 
for a weak \plm-bisimulation $Z$ on a poset model, it is required that,
for all nodes $w_1,w_2$ of the  poset, whenever $Z(w_1,w_2)$, 
for each \dwn-path $\pi_1 = (w_1,u_1,d_1)$, 
there is a \plm-path\footnote{Recall that \plm-paths are a subclass of \dwn-paths.} $\pi_2$ from $w_2$ of some length $\ell_2\geq 2$ such that
(ii.a) the ending elements of $\pi_1$ and $\pi_2$ are related --- i.e. $Z(d_1,\pi_2(\ell_2))$ ---
and (ii.b) for each other element $v_2$ of $\pi_2$ there is an element $v_1$ of $\pi_1$, different from
$\pi_1(2)$, such that  $v_1$ and $v_2$ are related --- i.e. for all $j\in [0;\ell_2)$, there is
$i\in [0;2)$ such that $Z(\pi_1(i), \pi_2(j))$. In other words, since 
$\pi_1(0)=w_1$ and $\pi_1(1)=u_1$, it is required that 
$Z(w_1,\pi_2(j))$ or $Z(u_1,\pi_2(j))$ holds for all $j\in [0;\ell_2)$. Note that it is sufficient to consider \dwn-paths of length $2$ starting from $w_1$. As shown by Theorem~\ref{theo:PMbisEqSLCSEeq}, the resulting relation $\wsibis$ coincides with
$\slcsEeq$.

\begin{defi}[Weak \plm-bisimulation]
  \label{def:WPLMBis}
  Given a finite poset model $\calF = (W,\preccurlyeq,\peval{\calF})$,
  a symmetric binary relation $Z \subseteq {W \mathord\times W}$ is a
  weak \plm-bisimulation if, for all $w_1,w_2 \in W$, whenever
  $Z(w_1,w_2)$, it holds that:
  \begin{enumerate}
  \item $\invpeval{\calF}(\SET{w_1}) = \invpeval{\calF}(\SET{w_2})$;
  \item for each $u_1,d_1 \in W$ such that
    $w_1 \dircnv{\preccurlyeq} u_1 \succcurlyeq d_1$ there is a
    \plm-path $\pi_2:[0;\ell_2] \to W$ from~$w_2$ such that
    $Z(d_1,\pi_2(\ell_2))$ and, for all $j \in [0;\ell_2)$, it holds
    that $Z(w_1,\pi_2(j))$ or $Z(u_1,\pi_2(j))$.
  \end{enumerate}
  We say that $w_1$ is weakly \plm-bisimilar to $w_2$, written
  $w_1 \wplmbis^{\calF} w_2$ if there is a weak \plm-bisimulation $Z$
  such that $Z(w_1,w_2)$. \closedefi
\end{defi}

For example, all red cells in the Hasse diagram of
Figure~\ref{fig:PosetAlternatingTriangle} are weakly \plm-bisimilar
and all blue cells are weakly \plm-bisimilar.

The following lemma shows that, in a polyhedral model $\calP$, weak
simplicial bisimilarity~$\wsibis^{\calP}$, as given by
Definition~\ref{def:WSimBis}, is stronger than $\slcsEeq$ -- logical
equivalence with respect to\ 
\slcsE: 

\begin{lem}
  \label{lem:WSBisIMPLEtaLog}
  Given a polyhedral model
  $\calP = (P,\peval{\calP})$, with $P=|K|$ for some simplicial complex $K$,  for all
  $x_1,x_2 \in P$,
  the following holds: if $x_1 \wsibis^{\calP} x_2$
  then $x_1 \slcsEeq x_2$. \qed
\end{lem}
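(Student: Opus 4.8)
The plan is to prove the slightly stronger statement that whenever $Z$ is a weak simplicial bisimulation on $\calP$ and $Z(x_1,x_2)$ holds, then $\calP, x_1 \models \form$ if and only if $\calP, x_2 \models \form$ for every \slcsE{} formula~$\form$; the lemma then follows by fixing, for points $x_1 \wsibis^{\calP} x_2$, a witnessing weak simplicial bisimulation~$Z$ as in Definition~\ref{def:WSimBis}. The argument proceeds by structural induction on~$\form$, with the induction hypothesis quantified over all pairs in~$Z$. For $\form = p$, condition~(1) of Definition~\ref{def:WSimBis} gives $\invpeval{\calP}(\SET{x_1}) = \invpeval{\calP}(\SET{x_2})$, i.e.\ $x_1$ and $x_2$ satisfy exactly the same proposition letters, so the base case holds. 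The Boolean cases $\lneg\form$ and $\form_1 \land \form_2$ are immediate from the induction hypothesis; note that, since $Z$ is symmetric, the induction hypothesis already yields a biconditional for each proper subformula, which is what these cases need.

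The only substantial case is $\form = \eta(\form_1,\form_2)$. Assume $\calP, x_1 \models \eta(\form_1,\form_2)$, witnessed by a topological path $\pi_1 : [0,1] \to P$ with $\pi_1(0) = x_1$, $\calP,\pi_1(1) \models \form_2$, and $\calP, \pi_1(r) \models \form_1$ for all $r \in [0,1)$. Applying condition~(2) of Definition~\ref{def:WSimBis} to $Z(x_1,x_2)$ and the path $\pi_1$, we obtain a topological path $\pi_2$ from~$x_2$ with $Z(\pi_1(1),\pi_2(1))$ and such that for every $r_2 \in [0,1)$ there is $r_1 \in [0,1)$ with $Z(\pi_1(r_1),\pi_2(r_2))$. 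From $Z(\pi_1(1),\pi_2(1))$ and $\calP,\pi_1(1)\models\form_2$, the induction hypothesis for $\form_2$ gives $\calP,\pi_2(1)\models\form_2$. For $r_2 \in [0,1)$, pick the associated $r_1 \in [0,1)$; since $r_1 \in [0,1)$ we have $\calP,\pi_1(r_1)\models\form_1$, hence by the induction hypothesis for $\form_1$ and $Z(\pi_1(r_1),\pi_2(r_2))$ we get $\calP,\pi_2(r_2)\models\form_1$. As $\pi_2(0)=x_2$ and $r_2$ was arbitrary in $[0,1)$, this shows $\calP, x_2 \models \eta(\form_1,\form_2)$. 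The converse implication follows identically with the roles of $x_1$ and $x_2$ exchanged, which is legitimate because $Z$ is symmetric.

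The proof has no real obstacle; the point to get right is the bookkeeping between the half-open interval $[0,1)$ in the semantics of~$\eta$ (Definition~\ref{def:SlcsEPolMod}) and the way condition~(2) of Definition~\ref{def:WSimBis} treats the endpoint parameter $r=1$ separately from the intermediate parameters $r \in [0,1)$: the endpoint of $\pi_1$ is matched only to the endpoint of $\pi_2$, which handles~$\form_2$, whereas each intermediate parameter of $\pi_2$ is matched to \emph{some} intermediate parameter of $\pi_1$, which handles~$\form_1$; it is essential that the matching parameter $r_1$ lies in $[0,1)$, so that $\calP,\pi_1(r_1)\models\form_1$ is guaranteed by the hypothesis on~$\pi_1$ rather than being undetermined. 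This is exactly the mismatch that makes weak simplicial bisimilarity the appropriate notion for \slcsE{} and not for \slcsG{}.
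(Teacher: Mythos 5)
Your proof is correct and follows essentially the same route as the paper's: structural induction on the formula, with the $\eta$-case handled by invoking condition~(2) of Definition~\ref{def:WSimBis} to obtain a matching path $\pi_2$, transferring $\form_2$ along the matched endpoints and $\form_1$ along the matched intermediate parameters via the induction hypothesis. Your framing in terms of a fixed witnessing bisimulation $Z$ (rather than $\wsibis$ itself) and your explicit remark on the $[0,1)$ bookkeeping are slightly more careful than the paper's exposition, but they do not change the argument.
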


\begin{proof}
  By induction on the structure of the formulas. We consider only the
  case $\eta(\form_1, \form_2)$. Suppose $x_1 \wsibis x_2$ and
  $\calP,x_1 \models \eta(\form_1, \form_2)$. Then there is a
  topological path~$\pi_1$ from~$x_1$ such that
  $\calP,\pi_1(1)\models \form_2$ and
  $\calP,\pi_1(r_1)\models \form_1$ for all $r_1\in [0,1)$. Since
  $x_1 \wsibis x_2$, then there is a topological path~$\pi_2$
  from~$x_2$ such that $\pi_1(1) \wsibis \pi_2(1)$ and for each
  $r_2 \in [0,1)$ there is $r'_1 \in [0,1)$ such that
  $\pi_1(r'_1) \wsibis \pi_2(r_2)$. By the Induction Hypothesis, we
  get $\calP, \pi_2(1)\models \form_2$
  and, for each $r_2 \in [0,1)$ $\calP, \pi_2(r_2) \models
  \form_1$. Thus $\calP,x_2 \models \eta(\form_1, \form_2)$.
\end{proof}

Furthermore, logical equivalence induced by \slcsE{} is stronger than
weak simplicial-bisimilarity, 
as implied by Lemma~\ref{lem:EtaLogIsWSB} below, which uses the
following auxiliary lemmas,
proven in Appendix~\ref{apx:prf:lem:dPExists}, Appendix~\ref{apx:prf:lem:dPtoTP},
 and Appendix~\ref{apx:prf:lem:ExisDwnPath} respectively.

\begin{lem}
  \label{lem:dPExists}
  Given a finite poset model $\calF = (W,\preccurlyeq,
  \peval{\calF})$ and weak \plm-bisimulation $Z \subseteq {W
    \mathord\times W}$, for all $w_1,w_2$ such that
  $Z(w_1,w_2)$, the following holds: for each \dwn-path $\pi_1 :
  [0;k_1] \to W$ from~$w_1$ there is a \dwn-path $\pi_2:[0;k_2] \to
  W$ from~$w_2$ such that $Z(\pi_1(k_1),\pi_2(k_2))$ and, for each $j
  \in [0;k_2)$, exists $i \in
  [0;k_1)$ such that $Z(\pi_1(i),\pi_2(j))$. \qed
\end{lem}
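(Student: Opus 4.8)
The plan is to prove the statement by strong induction on the length $k_1$ of the \dwn-path $\pi_1$, with condition~(2) of Definition~\ref{def:WPLMBis} playing the role of the single "one-step'' engine. Write $\pi_1 = (v_0,\dots,v_{k_1})$ with $v_0 = w_1$. Since $\pi_1$ is a \dwn-path its last step goes down, $v_{k_1-1}\succcurlyeq v_{k_1}$, so the set of indices $i\in[1;k_1]$ with $v_{i-1}\succcurlyeq v_i$ is non-empty; let $p$ be its minimum. By minimality of $p$, each step before the $p$-th is not a "$\succcurlyeq$''-step and hence (using $\dircnv{\preccurlyeq}(v_{i-1},v_i)$) is strictly up, so $v_0\preccurlyeq v_1\preccurlyeq\dots\preccurlyeq v_{p-1}$ and therefore $w_1 = v_0 \mathrel{\dircnv{\preccurlyeq}} v_{p-1} \succcurlyeq v_p$.

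First I would apply condition~(2) of Definition~\ref{def:WPLMBis} to the triple $(w_1, v_{p-1}, v_p)$, obtaining a \plm-path $\rho:[0;\ell]\to W$ from $w_2$ with $Z(v_p,\rho(\ell))$ and, for every $j\in[0;\ell)$, $Z(v_0,\rho(j))$ or $Z(v_{p-1},\rho(j))$. If $p = k_1$, this already finishes the case: $\rho$ is a \plm-path, hence a \dwn-path, from $w_2$; its ending point satisfies $Z(\pi_1(k_1),\rho(\ell))$; and every non-final index of $\rho$ is covered by $v_0$ or $v_{p-1}$, i.e.\ by indices $0$ and $p-1$, both of which lie in $[0;k_1)$.

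Otherwise $p < k_1$, and the tail $\pi_1' = (v_p,\dots,v_{k_1})$ is a \dwn-path from $v_p$ of length $k_1 - p < k_1$. Since $Z(v_p,\rho(\ell))$, the induction hypothesis applied to $\pi_1'$ yields a \dwn-path $\pi_2':[0;k_2']\to W$ from $\rho(\ell)$ with $Z(v_{k_1},\pi_2'(k_2'))$ and, for each $j'\in[0;k_2')$, some $i'\in[0;k_1-p)$ with $Z(v_{p+i'},\pi_2'(j'))$. I would then set $\pi_2 = \rho\cdot\pi_2'$; this is well defined because $\rho(\ell)=\pi_2'(0)$, it is a path from $w_2$, and it is a \dwn-path since its final step is the final step of $\pi_2'$. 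Its ending point is that of $\pi_2'$, giving $Z(\pi_1(k_1),\pi_2(\ell+k_2'))$. For the coverage condition, the non-final indices of $\pi_2$ split (via the definition of sequentialisation recalled in Section~\ref{sec:BackAndNotat}) into those in the $\rho$-part, which are covered by $v_0$ or $v_{p-1}$ (indices $0,p-1\in[0;k_1)$), and those in the $\pi_2'$-part, which are covered by some $v_{p+i'}$ (index in $[p;k_1)$); in both cases the witnessing index lies in $[0;k_1)$, as required.

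The main technical point — essentially the only thing to be careful about — is the index bookkeeping across the sequentialisation, together with the observation that the top index $k_1$, which must be excluded since it names the ending point $\pi_1(k_1)$, is never used as a coverage witness: choosing $p$ as the \emph{first} descent guarantees that the prefix contributes only indices $0$ and $p-1<k_1$, while the recursive call contributes only indices in $[p;k_1)$. An alternative would be to first normalise $\pi_1$ to a \upd-path via Lemma~\ref{lem:d2ud} and induct on the number of up--down blocks, decomposing block by block; this is the same argument, but it also forces one to push coverage indices back through the reparametrisation function of Lemma~\ref{lem:d2ud}, so the direct decomposition above seems preferable.
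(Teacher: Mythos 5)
Your proof is correct. The key engine is the same as in the paper --- clause~(2) of Definition~\ref{def:WPLMBis} applied once per ``ascending run followed by a descent'', with the resulting \plm-paths sequentialised and the coverage indices tracked --- but the route differs. The paper first normalises $\pi_1$ to a \upd-path via Lemma~\ref{lem:d2ud} and then proves an auxiliary lemma for \upd-paths by induction on the number $h$ of up--down blocks, peeling off the \emph{last} block (induction hypothesis on the prefix, bisimulation clause on the final block) and finally transporting the coverage indices back through the reparametrisation function $f$ of Lemma~\ref{lem:d2ud}. You instead induct directly on the length $k_1$ of the \dwn-path and peel off the \emph{first} maximal ascending run together with its descent, using transitivity of $\preccurlyeq$ to collapse the run to a single triple $w_1 \dircnv{\preccurlyeq} v_{p-1} \succcurlyeq v_p$ to which the bisimulation clause applies, then recursing on the strictly shorter tail from $v_p$ paired with the endpoint $\rho(\ell)$. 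Your version is somewhat more self-contained: it avoids both the auxiliary lemma and the index-pushing through $f$, and your bookkeeping of why the witnessing indices stay in $[0;k_1)$ (only $0$, $p-1$, and $[p;k_1)$ ever occur) is clean and complete, including the degenerate cases $p=1$ and $p=k_1$. What the paper's route buys in exchange is reuse: Lemma~\ref{lem:d2ud} is already needed elsewhere, and the \upd-path normal form makes each inductive step a single literal instance of the definition rather than requiring the small transitivity argument you use to justify $w_1 \dircnv{\preccurlyeq} v_{p-1}$.
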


\begin{lem}
  \label{lem:dPtoTP}
  Given a polyhedral model 
  $\calP = (P,\peval{\calP})$, with $P=|K|$ for some simplicial complex $K$,
  and associated cell poset model
  $\map(\calP) = (W,\preccurlyeq,\peval{\map(\calP)})$, for
  any 
  \dwn-path $\pi : [0;\ell] \to W$, 
  there is a topological path $\pi' : [0,1] \to |K|$ such that:
  (i)~$\map(\pi'(0)) = \pi(0)$, (ii)~$\map(\pi'(1)) = \pi(\ell)$, and
  (iii)~for all $r \in (0,1)$ exists $i < \ell$ such that
  $\map(\pi'(r)) = \pi(i)$.\qed
\end{lem}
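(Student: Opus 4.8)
The plan is to build $\pi'$ step by step along the \dwn-path, replacing each discrete step by a single line segment in $|K|$ and using barycentric coordinates to control which cells the relative interior of each segment meets.

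First I would fix notation. Write $c_i = \pi(i) \in W = \relint{K}$ and let $\sigma_i \in K$ be the unique simplex with $c_i = \relint{\sigma_i}$; for each $i \in [0;\ell]$ take as representative the barycentre $x_i = b_{\sigma_i}$, whose barycentric coordinates with respect to $\sigma_i$ are all equal to $1/(\dim\sigma_i + 1) > 0$, so that indeed $x_i \in \relint{\sigma_i} = c_i$. Since $\pi$ is an undirected path, $c_i$ and $c_{i+1}$ are $\preccurlyeq$-comparable for every $i \in [0;\ell)$, and by the compatibility of $\preccurlyeq$ with $\sqsubseteq$ recalled in Section~\ref{sec:BackAndNotat} this means $\sigma_i \sqsubseteq \sigma_{i+1}$ or $\sigma_{i+1} \sqsubseteq \sigma_i$; since moreover $\pi$ is a \dwn-path, the final step additionally satisfies $\sigma_\ell \sqsubseteq \sigma_{\ell-1}$.

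The one geometric fact I would need is the following, proved by a short computation with barycentric coordinates: if $\tau' \sqsubseteq \tau$ are simplexes, $a \in \relint{\tau'}$ and $b \in \relint{\tau}$, then $s \mapsto (1-s)a + sb$ maps $[0,1]$ into the closed simplex $\tau \subseteq |K|$, sends $0$ to $a$ and $1$ to $b$, and sends $(0,1]$ into $\relint{\tau}$ — indeed, expressing $a$ and $b$ in barycentric coordinates $\alpha_j \ge 0$, $\beta_j > 0$ with respect to $\tau$, the $j$-th coordinate of $(1-s)a+sb$ is $(1-s)\alpha_j + s\beta_j \ge s\beta_j > 0$ for $s > 0$, while at $s = 0$ the point is $a \in \relint{\tau'}$. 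Applying this to each $i \in [0;\ell)$, I would define $\pi_i' : [0,1] \to |K|$ by $\pi_i'(s) = (1-s)x_i + s x_{i+1}$. If $\sigma_i \sqsubseteq \sigma_{i+1}$, the fact (with $\tau' = \sigma_i$, $\tau = \sigma_{i+1}$, $a = x_i$, $b = x_{i+1}$) gives $\pi_i'(0) = x_i \in c_i$, $\pi_i'(1) = x_{i+1} \in c_{i+1}$, and $\pi_i'(s) \in c_{i+1}$ for $s \in (0,1]$; symmetrically, if $\sigma_{i+1} \sqsubseteq \sigma_i$ (with $a = x_{i+1}$, $b = x_i$, and the reversed parameter) it gives $\pi_i'(s) \in c_i$ for $s \in [0,1)$ and $\pi_i'(1) = x_{i+1} \in c_{i+1}$; when $\sigma_i = \sigma_{i+1}$ both descriptions hold, the segment staying inside the single cell $c_i = c_{i+1}$. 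In every case $\pi_i'$ is a continuous path in $|K|$ from $x_i$ to $x_{i+1}$ with $\pi_i'([0,1]) \subseteq c_i \cup c_{i+1}$, and for the last step $i = \ell-1$, which falls under the second alternative by the \dwn-path condition, one has $\pi_{\ell-1}'(s) \in c_{\ell-1}$ for all $s \in [0,1)$.

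Finally I would glue: set $\pi'(r) = \pi_i'(\ell r - i)$ for $r \in [i/\ell,(i+1)/\ell]$. This is well defined at the breakpoints (both neighbours evaluate to $x_{i+1}$), continuous since it is piecewise linear with matching pieces, and has image in $|K|$. Clauses (i) and (ii) are then immediate: $\pi'(0) = x_0 \in c_0 = \pi(0)$ and $\pi'(1) = x_\ell \in c_\ell = \pi(\ell)$, whence $\map(\pi'(0)) = \pi(0)$ and $\map(\pi'(1)) = \pi(\ell)$. For clause (iii), fix $r \in (0,1)$ lying in the $i$-th subinterval: if $i \le \ell - 2$ then $\pi'(r) \in c_i \cup c_{i+1}$ with $i, i+1 < \ell$; if $i = \ell-1$ then $r < 1$ forces $\ell r - i < 1$, so $\pi'(r) \in c_{\ell-1}$ with $\ell-1 < \ell$; in either case $\map(\pi'(r)) = c_j = \pi(j)$ for some $j < \ell$. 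The step I expect to need the most care is precisely clause (iii) for the final segment — one must keep the interior of $\pi'$ out of the cell $\pi(\ell)$ entirely, and this works only because the \dwn-path hypothesis makes the last step downward, so $\pi'$ travels inside the larger cell $c_{\ell-1}$ and meets the face $c_\ell$ only at $r = 1$; the rest is routine bookkeeping.
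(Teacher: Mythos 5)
Your proposal is correct and follows essentially the same route as the paper's own proof: a piecewise-linear path through the barycentres of the cells, with each discrete step realised as a line segment whose relative interior stays inside the larger of the two cells, and the \dwn-path condition on the final step guaranteeing clause (iii). The only difference is that you spell out, via the barycentric-coordinate computation, the geometric fact that the paper merely asserts, which makes your version somewhat more self-contained.
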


  \begin{lem}
    \label{lem:ExisDwnPath}
    Given a polyhedral model 
    $\calP = (P,\peval{\calP})$, with $P=|K|$ for some simplicial complex $K$,
    and associated cell poset model
    $\map(\calP) = (W,\preccurlyeq,\peval{\map(\calP)})$, for any
    topological path $\pi : [0,1] \to |K|$ the following holds:
    $\map(\pi([0,1]))$ is a connected subposet of~$W$ and there are
    $k>0$ and a \dwn-path $\hat\pi: [0;k] \to W$ from~$\map(\pi(0))$
    to $\map(\pi(1))$ such that, for all $i \in [0;k)$, $r \in [0,1)$
    exists with $\hat\pi(i) = \map(\pi(r))$. \qed
\end{lem}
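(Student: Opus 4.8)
The plan is to isolate one local, topological fact about $\map \circ \pi$ and then to push everything else through a compactness argument that produces the required finite path. Abbreviate $c(t) = \map(\pi(t))$ for $t \in [0,1]$, so that $\map(\pi([0,1])) = \SET{c(t) \mid t \in [0,1]}$. Recall (Section~\ref{sec:BackAndNotat}) that $\map : |K| \to W$ is continuous when $W$ carries the Alexandrov topology whose open sets are exactly the $\preccurlyeq$-up-sets; hence for every $\relint{\sigma} \in W$ the open star $\ZET{x \in |K|}{\map(x) \succcurlyeq \relint{\sigma}} = \map^{-1}(\ZET{\relint{\tau} \in W}{\relint{\tau} \succcurlyeq \relint{\sigma}})$ is open in $|K|$. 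Applying this with $\relint{\sigma} = c(t)$ and pulling the open star back along the continuous path $\pi$, we obtain, for each $t \in [0,1]$, an open interval $N_t \ni t$ in $[0,1]$ with $c(s) \succcurlyeq c(t)$ for all $s \in N_t$. This \emph{local step-up} property is the only topological ingredient; the rest is finite combinatorics on $W$.

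\emph{Connectedness.} Fix $a,b \in [0,1]$ with $a \le b$. The family $\SET{N_t \mid t \in [a,b]}$ is an open cover of the compact interval $[a,b]$, so by the Lebesgue number lemma a fine enough partition $a = s_0 < s_1 < \cdots < s_M = b$ has each subinterval $[s_j, s_{j+1}]$ contained in some $N_{t_j}$ with $t_j \in [a,b]$. Then $c(s_j) \succcurlyeq c(t_j)$ and $c(s_{j+1}) \succcurlyeq c(t_j)$, so interleaving the witness cells yields an undirected path
\[
  c(s_0),\ c(t_0),\ c(s_1),\ c(t_1),\ \ldots,\ c(t_{M-1}),\ c(s_M)
\]
in the comparability graph of $\SET{c(r) \mid r \in [a,b]} \subseteq \map(\pi([0,1]))$ joining $c(a)$ to $c(b)$. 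Since every element of $\map(\pi([0,1]))$ is $c(a)$ for some $a$, it follows that $\map(\pi([0,1]))$ is a connected subposet. (Conceptually this is just that the continuous image of the connected space $[0,1]$ is connected and that, in an Alexandrov space, connectedness coincides with connectedness of the comparability graph.)

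\emph{The \dwn-path.} Now take $a = 0$, but we must additionally (i) make the \emph{last} step of $\hat\pi$ a down-step and (ii) keep every cell of $\hat\pi$ except the last of the form $c(r)$ with $r < 1$. The idea is to stop short of $1$. By the local step-up property at $t = 1$ there is $\epsilon \in (0,1)$ with $c(s) \succcurlyeq c(1)$ for all $s \in (1-\epsilon,1]$; put $r_0 = 1 - \epsilon/2$, so $r_0 < 1$ and $c(r_0) \succcurlyeq c(1)$. Run the construction of the previous paragraph on $[0, r_0]$, obtaining $0 = s_0 < \cdots < s_N = r_0$ and witnesses $t_0, \ldots, t_{N-1} \in [0,r_0]$ (with $N \ge 1$, as $r_0 > 0$), and set $\hat\pi : [0; 2N+1] \to W$,
\[
  \hat\pi = \bigl(\, c(s_0),\ c(t_0),\ c(s_1),\ c(t_1),\ \ldots,\ c(t_{N-1}),\ c(s_N),\ c(1) \,\bigr).
\]
As before, consecutive entries are $\preccurlyeq$-comparable, so $\hat\pi$ is an undirected path from $c(0) = \map(\pi(0))$ to $c(1) = \map(\pi(1))$; its last step is $c(r_0) \mapsto c(1)$ with $c(1) \preccurlyeq c(r_0)$, hence $\hat\pi$ is a \dwn-path of length $k = 2N+1 \ge 1$. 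Finally, every entry $\hat\pi(i)$ with $i \in [0;k)$ is some $c(s_j)$ or $c(t_j)$ whose argument lies in $[0,r_0] \subseteq [0,1)$, i.e.\ $\hat\pi(i) = \map(\pi(r))$ for some $r \in [0,1)$, as required.

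\emph{Main obstacle.} The only genuinely delicate point is achieving (i)--(ii) simultaneously. If one runs the cover argument directly on all of $[0,1]$, the path reaches $c(1)$ by an \emph{up}-step from a witness cell, and there need be no cell $c(r)$ with $r < 1$ lying \emph{above} $c(1)$ elsewhere on the path, so one cannot repair the last step by appending a reflexive loop at $c(1)$. Exploiting the local step-up property \emph{precisely at $t = 1$} to secure a cell $c(r_0)$ with $r_0 < 1$ and $c(r_0) \succcurlyeq c(1)$ is exactly what legitimises the final down-step while keeping the ``$r < 1$'' condition on all other cells. The remaining care points — interleaving the witness cells $c(t_j)$ so consecutive cells are comparable, and the degenerate cases (a constant $\pi$, a one-interval partition) — are routine.
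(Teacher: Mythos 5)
Your proof is correct, and it is in essence a rigorous, constructive elaboration of the same underlying idea as the paper's proof: both arguments ultimately rest on the fact that, by continuity of $\map\circ\pi$ (equivalently, openness of open stars), the cells $\map(\pi(s))$ for $s$ close to $1$ all lie above $\map(\pi(1))$, which is what legitimises a final down-step. The paper's own proof is much terser: it invokes connectedness of the continuous image to obtain \emph{some} undirected path $\hat\pi$ from $\map(\pi(0))$ to $\map(\pi(1))$, then argues by contradiction (using continuity at $t=1$) that its last step must be downward, and finally asserts the condition ``$\hat\pi(i)=\map(\pi(r))$ for some $r\in[0,1)$'' directly from connectedness. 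As stated, that argument claims properties of an \emph{arbitrary} path through the image that in fact only hold for a suitably chosen one --- e.g.\ an intermediate entry of $\hat\pi$ could equal $\map(\pi(1))$ even when $\relint{\sigma}=\map(\pi(1))$ is visited by $\pi$ only at $r=1$ (as for the vertex $D$ in Figure~\ref{fig:poset}) --- so the paper is implicitly relying on exactly the kind of explicit construction you carry out. Your route via the Alexandrov ``local step-up'' property, the Lebesgue number lemma on $[0,r_0]$ with $r_0<1$, and the interleaved path $c(s_0),c(t_0),\ldots,c(s_N),c(1)$ buys a single uniform mechanism that yields connectedness, the terminal down-step, and the ``$r\in[0,1)$'' condition on all non-final entries simultaneously; the cost is a slightly heavier topological setup, but the result is a complete proof where the paper's is only a sketch.
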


\begin{lem}\label{lem:EtaLogIsWSB}
  In a given polyhedral model 
  $\calP = (P,\peval{\calP})$, with $P=|K|$ for some simplicial complex $K$,
  it holds that $\slcsEeq$ is a weak simplicial bisimulation.
\end{lem}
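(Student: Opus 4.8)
The plan is to check directly that the equivalence relation $\slcsEeq$ on $P = |K|$ is itself a weak simplicial bisimulation, i.e.\ that it satisfies the two clauses of Definition~\ref{def:WSimBis}; symmetry is immediate since $\slcsEeq$ is an equivalence relation. Clause~(1) is trivial: every proposition letter is an \slcsE{} formula, so $x_1 \slcsEeq x_2$ forces $\invpeval{\calP}(\SET{x_1}) = \invpeval{\calP}(\SET{x_2})$.

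For clause~(2) the key preliminary observation, which I would establish first, is that for all $x,x' \in P$ one has $x \slcsEeq^{\calP} x'$ if and only if $\map(x) \slcsEeq^{\map(\calP)} \map(x')$. This is a direct consequence of Theorem~\ref{theo:calMPresForm}: a point and the cell containing it satisfy exactly the same \slcsE{} formulas, so two points are \slcsE{}-equivalent in $\calP$ precisely when their cells are \slcsE{}-equivalent in the poset model $\map(\calP)$. In particular, all points lying in a single cell are $\slcsEeq$-equivalent.

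The heart of the argument is then to encode an arbitrary witnessing path into a single \slcsE{} formula, and for this I would use the characteristic formulas of Definition~\ref{def:chiE}. Fix $x_1 \slcsEeq x_2$ and a topological path $\pi_1$ from $x_1$. Let $c = \map(\pi_1(1))$ be the cell in which $\pi_1$ ends, and let $S = \ZET{\map(\pi_1(r))}{r \in [0,1)}$; the set $S$ is finite and non-empty because $W = \relint{K}$ is finite and $\map(x_1) \in S$. Put $\form_2 = \chi(c)$ and $\form_1 = \bigvee_{w \in S} \chi(w)$, both genuine \slcsE{} formulas. One then checks $\calP, x_1 \models \eta(\form_1,\form_2)$, with $\pi_1$ as witness: $\pi_1(1)$ lies in $c$, so by Proposition~\ref{prop:chiE} (applied in $\map(\calP)$) together with Theorem~\ref{theo:calMPresForm} it satisfies $\chi(c) = \form_2$; and for every $r \in [0,1)$ the point $\pi_1(r)$ lies in the cell $\map(\pi_1(r)) \in S$, hence satisfies $\chi(\map(\pi_1(r)))$ and therefore $\form_1$. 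Since $x_1 \slcsEeq x_2$, also $\calP, x_2 \models \eta(\form_1,\form_2)$, witnessed by some topological path $\pi_2$ from $x_2$ with $\calP, \pi_2(1) \models \form_2$ and $\calP, \pi_2(r) \models \form_1$ for all $r \in [0,1)$. From $\calP, \pi_2(1) \models \chi(c)$, Proposition~\ref{prop:chiE} and the preliminary observation give $\pi_1(1) \slcsEeq \pi_2(1)$; and for each $r_2 \in [0,1)$ one picks $w \in S$ with $\calP, \pi_2(r_2) \models \chi(w)$, writes $w = \map(\pi_1(r_1))$ for a suitable $r_1 \in [0,1)$, and concludes $\pi_1(r_1) \slcsEeq \pi_2(r_2)$. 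This is exactly clause~(2), so $\slcsEeq$ is a weak simplicial bisimulation.

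The only delicate point is the bookkeeping around the translation between satisfaction of a characteristic formula $\chi(\cdot)$ in the polyhedral model and $\slcsEeq$-equivalence of cells in $\map(\calP)$, which is handled entirely by Theorem~\ref{theo:calMPresForm} and Proposition~\ref{prop:chiE}; finiteness of $S$, needed so that $\form_1$ is a well-formed \slcsE{} formula, is immediate from finiteness of $W$. Note that no explicit path construction in $\map(\calP)$ is required --- the discrete model enters only through the characteristic formulas --- and, in accordance with the discussion preceding the lemma, simplicial paths are not used anywhere.
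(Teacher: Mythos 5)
Your proof is correct, but it takes a genuinely different route from the paper's. The paper argues entirely through the discrete model: it converts the given topological path $\pi_1$ into a \dwn-path in $\map(\calP)$ (Lemma~\ref{lem:ExisDwnPath}), invokes the fact that $\slcsEeq$ is a weak \plm-bisimulation on the poset model (Lemma~\ref{lem:EtaLogIsWFpBis}) together with the path-transfer Lemma~\ref{lem:dPExists} to obtain a matching \dwn-path from $\map(x_2)$, and then lifts that back to a topological path from $x_2$ via Lemma~\ref{lem:dPtoTP}. You instead package the entire trace of $\pi_1$ into a single formula $\eta\bigl(\bigvee_{w \in S}\chi(w),\,\chi(\map(\pi_1(1)))\bigr)$, let logical equivalence transport its satisfaction from $x_1$ to $x_2$, and read the witnessing path $\pi_2$ directly off the polyhedral semantics of $\eta$; this is essentially the characteristic-formula trick the paper uses to prove the \emph{poset} counterpart (Lemma~\ref{lem:EtaLogIsWFpBis}), transplanted to the continuous setting via Theorem~\ref{theo:calMPresForm} and Proposition~\ref{prop:chiE}. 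Your argument is shorter, avoids the forward reference to Lemma~\ref{lem:EtaLogIsWFpBis} and the two geometric path-conversion lemmas altogether, and all the steps check out: $S$ is finite because $\relint{K}$ is, so $\form_1$ is a legitimate \slcsE{} formula, and the equivalence $x \slcsEeq^{\calP} x' \Leftrightarrow \map(x) \slcsEeq^{\map(\calP)} \map(x')$ that you use to decode $\chi$ in the polyhedral model is indeed immediate from Theorem~\ref{theo:calMPresForm}. What the paper's longer route buys is the explicit topological-to-discrete path correspondence, which it needs anyway for other results; what yours buys is a more self-contained and uniform argument that makes the continuous and discrete halves of the Hennessy--Milner proof visibly parallel.
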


\begin{proof}
  Let $x_1,x_2 \in |K|$ be such that $x_1 \slcsEeq x_2$. The first
  condition of Definition~\ref{def:WSimBis} is clearly satisfied since
  $x_1 \slcsEeq x_2$. Suppose $\pi_1$~is a topological path
  from~$x_1$. By Lemma~\ref{lem:ExisDwnPath},
  $\map(\pi_1([0,1]))$~is a connected subposet of~$\relint{K}$ and
  a \dwn-path $\hat\pi_1:[0;k_1] \to \relint{K}$ 
  from~$\map(\pi_1(0))$
  to~$\map(\pi_1(1))$ exists such that, for all $i \in [0;k_1)$, $r_1
  \in [0,1)$ exists with $\hat\pi_1(i) =
  \map(\pi_1(r_1))$. We also know that $\map(x_1) \slcsEeq
  \map(x_2)$, as a consequence of Theorem~\ref{theo:calMPresForm},
  since $x_1 \slcsEeq
  x_2$. In addition, due to Lemma~\ref{lem:EtaLogIsWFpBis} below, we
  also know that $\map(x_1) \wplmbis
  \map(x_2)$. By Lemma~\ref{lem:dPExists}, we get that there is a
  \dwn-path $\hat\pi_2:[0;k_2] \to \relint{K}$
  such that $\hat\pi_1(k_1) \slcsEeq \hat\pi_2(k_2)$ and, for each
  $j \in [0;k_2)$, $i \in [0;k_1)$ exists such that
  $\hat\pi_1(i) \slcsEeq \hat\pi_2(j)$. By Lemma~\ref{lem:dPtoTP}, it
  follows that there is topological path~$\pi_2$ from~$x_2$ satisfying
  the three conditions of the lemma and, again by
  Theorem~\ref{theo:calMPresForm}, we have that
  $\pi_2(1) \slcsEeq \pi_1(1)$. In addition, for any $r_2 \in [0,1)$,
  since $\map(\pi_2(r_2)) = \hat\pi_2(j)$ for $j \in [0;k_2)$
  (condition~(ii) of Lemma~\ref{lem:dPtoTP}) there is $i \in [0;k_1)$
  such that $\hat\pi_1(i) \slcsEeq \hat\pi_2(j)$.  Finally, by
  construction, there is $r_1 \in [0,1)$ such that
  $\map(\pi_1(r_1)) = \hat\pi_1(i)$. By
  Theorem~\ref{theo:calMPresForm}, we arrive at
  $\pi_1(r_1) \slcsEeq \pi_2(r_2)$.
\end{proof}

On the basis of Lemma~\ref{lem:WSBisIMPLEtaLog} and
Lemma~\ref{lem:EtaLogIsWSB}, we have that the largest weak simplicial
bisimulation exists, it is a weak simplicial bisimilarity, it is an
equivalence relation, and it coincides with logical equivalence in the
polyhedral model induced by \slcsE{,} thus establishing the HMP for
$\wsibis^{\calP}$ with respect to~\slcsE:

\begin{thm}
  \label{thm:HMPpoly}
  Given a polyhedral model
  $\calP = (P,\peval{\calP})$, with $P=|K|$ for some simplicial complex $K$, and  $x_1,x_2 \in P$,
  the following holds: $x_1 \slcsEeq^{\calP} x_2$ if and only if $x_1 \wsibis^{\calP}
  w_2$. \qed
\end{thm}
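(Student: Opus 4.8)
The plan is to derive the theorem directly from the two lemmas already in hand, Lemma~\ref{lem:WSBisIMPLEtaLog} and Lemma~\ref{lem:EtaLogIsWSB}, which together squeeze $\wsibis^{\calP}$ and $\slcsEeq^{\calP}$ onto each other. No new machinery is needed at this point.

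First I would handle the direction $x_1 \wsibis^{\calP} x_2 \Rightarrow x_1 \slcsEeq^{\calP} x_2$, which is exactly Lemma~\ref{lem:WSBisIMPLEtaLog}: picking a weak simplicial bisimulation $Z$ with $Z(x_1,x_2)$, a structural induction on \slcsE{} formulas (the $\eta$-case matched by transporting witnessing topological paths along $Z$ and applying the induction hypothesis to the two subformulas) yields $\calP, x_1 \models \form \Leftrightarrow \calP, x_2 \models \form$ for all $\form$, i.e.\ $x_1 \slcsEeq^{\calP} x_2$.

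For the converse, $x_1 \slcsEeq^{\calP} x_2 \Rightarrow x_1 \wsibis^{\calP} x_2$, I would invoke Lemma~\ref{lem:EtaLogIsWSB}, which states that $\slcsEeq$ is \emph{itself} a weak simplicial bisimulation on $\calP$. Then $x_1 \slcsEeq^{\calP} x_2$ exhibits the pair $(x_1,x_2)$ inside a weak simplicial bisimulation, which is precisely the definition of $x_1 \wsibis^{\calP} x_2$ (Definition~\ref{def:WSimBis}). Combining the two directions gives the biconditional. As a complement I would also record the structural consequences mentioned in the surrounding text: since $\slcsEeq$ is an equivalence relation and, by the first direction, contains every weak simplicial bisimulation while itself being one, it is the largest weak simplicial bisimulation; hence $\wsibis^{\calP}$ — the union of all of them — equals $\slcsEeq^{\calP}$ and inherits being an equivalence relation.

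At the level of this theorem there is essentially no obstacle: the argument is a two-line assembly of the two lemmas. The real effort is already spent inside Lemma~\ref{lem:EtaLogIsWSB}, whose proof must move topological paths in $|K|$ to \dwn-paths in the cell poset model and back (Lemmas~\ref{lem:dPtoTP} and~\ref{lem:ExisDwnPath}), and must borrow the poset-side fact that $\slcsEeq$ is a weak \plm-bisimulation (Lemma~\ref{lem:EtaLogIsWFpBis}) together with the preservation/reflection bridge Theorem~\ref{theo:calMPresForm}. If any point deserves care, it is checking that the path-translation lemmas line up intermediate points so that condition~(2) of Definition~\ref{def:WSimBis} is met — but that verification is entirely subsumed by the cited lemmas, so nothing further is required here.
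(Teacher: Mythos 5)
Your proposal is correct and follows exactly the paper's route: the theorem is obtained by combining Lemma~\ref{lem:WSBisIMPLEtaLog} (weak simplicial bisimilarity implies $\slcsEeq$) with Lemma~\ref{lem:EtaLogIsWSB} ($\slcsEeq$ is itself a weak simplicial bisimulation, hence is contained in $\wsibis^{\calP}$), and the remarks on largest bisimulation and equivalence-relation structure match the paper's surrounding discussion.
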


Similar results hold for poset models. The following lemma
shows that, in every finite poset model $\calF$, weak
\plm-bisimilarity (Definition~\ref{def:WPLMBis}) is stronger than
logical equivalence with respect to \slcsE{,} i.e.\
$\wplmbis^{\calF} \: \subseteq \: \slcsEeq^{\calF}$:

\begin{lem}
  \label{lem:WFpBisIMPLEtaLog}
  Given a finite poset model $\calF = (W,\preccurlyeq, \peval{\calF})$,
  for all $w_1,w_2 \in W$, if $w_1 \wplmbis^{\calF} w_2$ then
  $w_1 \slcsEeq^{\calF} w_2$.
\end{lem}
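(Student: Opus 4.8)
The plan is to fix a weak \plm-bisimulation $Z$ with $Z(w_1,w_2)$ and prove, by structural induction on the \slcsE{} formula $\form$, that $\calF,v_1\models\form$ if and only if $\calF,v_2\models\form$ for every pair with $Z(v_1,v_2)$; specialising to $v_1=w_1$, $v_2=w_2$ then yields the lemma, since $w_1\wplmbis^{\calF}w_2$ precisely means that such a $Z$ exists. For $\form=p$ the equivalence is exactly the first clause of Definition~\ref{def:WPLMBis}, $\invpeval{\calF}(\SET{v_1})=\invpeval{\calF}(\SET{v_2})$, and the cases $\lneg\form$ and $\form_1\land\form_2$ are immediate from the induction hypothesis.

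The only case requiring work is $\form=\eta(\form_1,\form_2)$, and since $Z$ is symmetric it suffices to show that $\calF,v_1\models\eta(\form_1,\form_2)$ implies $\calF,v_2\models\eta(\form_1,\form_2)$. Unfolding Definition~\ref{def:WSlcsFinPos}, there is a \plm-path $\pi_1:[0;\ell]\to W$ with $\pi_1(0)=v_1$, $\calF,\pi_1(\ell)\models\form_2$ and $\calF,\pi_1(i)\models\form_1$ for all $i\in[0;\ell)$. A \plm-path is in particular a \dwn-path, so I would apply Lemma~\ref{lem:dPExists} to $Z$, $v_1$, $v_2$ and $\pi_1$, obtaining a \dwn-path $\pi_2:[0;k_2]\to W$ from $v_2$ with $Z(\pi_1(\ell),\pi_2(k_2))$ and such that for every $j\in[0;k_2)$ there is $i\in[0;\ell)$ with $Z(\pi_1(i),\pi_2(j))$.

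Now the induction hypothesis applies directly to these $Z$-related pairs: from $\calF,\pi_1(\ell)\models\form_2$ I get $\calF,\pi_2(k_2)\models\form_2$, and from $\calF,\pi_1(i)\models\form_1$ (valid for every $i\in[0;\ell)$) I get $\calF,\pi_2(j)\models\form_1$ for every $j\in[0;k_2)$. Hence $\pi_2$ is a \dwn-path from $v_2$ witnessing clause~\ref{enu:dwnE} of Proposition~\ref{prop:interchangeableE}, so clause~\ref{enu:plmE} holds as well, i.e.\ $\calF,v_2\models\eta(\form_1,\form_2)$, which closes the induction.

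The genuine content of the argument lives in the auxiliary results rather than in this proof: Lemma~\ref{lem:dPExists} does the real work of lifting the single length-two ``down move'' of Definition~\ref{def:WPLMBis} to a matching along an arbitrarily long \dwn-path, and Proposition~\ref{prop:interchangeableE} lets us switch freely between \dwn-paths and \plm-paths when evaluating $\eta$-formulas on a poset model. Once those are available I expect no real obstacle; the only fine points are to keep track that the index range $[0;\ell)$ includes the starting index (matching the half-open convention in the semantics of $\eta$, which is exactly the ``weak'' feature of \slcsE{}) and to use the symmetry of $Z$ so that proving one implication in the $\eta$ case is enough.
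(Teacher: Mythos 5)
Your proof is correct and follows essentially the same route as the paper's: reduce the $\eta$-case to Lemma~\ref{lem:dPExists} on \dwn-paths, apply the induction hypothesis pointwise to the $Z$-related pairs, and conclude via the \dwn-/\plm-path interchangeability of Proposition~\ref{prop:interchangeableE}. If anything you are slightly more explicit than the paper in fixing a concrete bisimulation $Z$ and in citing Proposition~\ref{prop:interchangeableE} for the final step, but the argument is the same.
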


\begin{proof}
  By induction on formulas. We consider only the case
  $\eta(\form_1,\form_2)$.  Suppose $w_1 \wplmbis w_2$ and
  $\calF, w_1 \models \eta(\form_1,\form_2)$. Then, there is (a
  \plm-path and so) a \dwn-path~$\pi_1$ from~$w_1$ of some
  length~$k_1$ such that $\calF,\pi_1(k_1) \models \form_2$ and for
  all $i \in [0;k_1)$ it holds that $\calF,\pi_1(i) \models \form_1$.
  By Lemma~\ref{lem:dPExists}, we know that a \dwn-path~$\pi_2$
  from~$w_2$ exists of some length~$k_2$ such that
  $\pi_1(k_1) \wplmbis \pi_2(k_2)$ and for all $j \in [0;k_2)$ exists
  $i \in [0;k_1)$ such that $\pi_1(i) \wplmbis \pi_2(j)$. By the
  Induction Hypothesis, we then get that
  $\calF,\pi_2(k_2) \models \form_2$ and for all $j \in [0;k_2)$ we
  have $\calF,\pi_2(j) \models \form_1$. This implies that
  $\calF, w_2 \models \eta(\form_1,\form_2)$.
\end{proof}

Furthermore, logical equivalence induced by \slcsE{} is stronger than
weak \plm-bisimilarity, i.e.\
$\slcsEeq^{\calF}\: \subseteq\: \wplmbis^{\calF}$, as implied by the
following:

\begin{lem}
  \label{lem:EtaLogIsWFpBis}
  In a finite poset model $\calF = (W,\preccurlyeq, \peval{\calF})$,
  $\slcsEeq^{\calF}$ is a weak \plm-bisimulation.
\end{lem}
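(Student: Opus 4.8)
The plan is to verify that the logical equivalence $\slcsEeq^{\calF}$ satisfies the two conditions of Definition~\ref{def:WPLMBis}. Take $w_1,w_2 \in W$ with $w_1 \slcsEeq^{\calF} w_2$. Symmetry of $\slcsEeq^{\calF}$ is immediate. Condition~(1) is trivial: since $w_1$ and $w_2$ satisfy the same \slcsE{} formulas, in particular they satisfy the same proposition letters, so $\invpeval{\calF}(\SET{w_1}) = \invpeval{\calF}(\SET{w_2})$.

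The substance is condition~(2). Suppose $u_1,d_1 \in W$ with $w_1 \dircnv{\preccurlyeq} u_1 \succcurlyeq d_1$; I must produce a \plm-path $\pi_2 : [0;\ell_2] \to W$ from $w_2$ with $Z(d_1,\pi_2(\ell_2))$ and $Z(w_1,\pi_2(j))$ or $Z(u_1,\pi_2(j))$ for every $j \in [0;\ell_2)$, where $Z = \slcsEeq^{\calF}$. The key idea is to express the existence of such a path as satisfaction of a single \slcsE{} formula at $w_1$, and then transport it to $w_2$ via $w_1 \slcsEeq w_2$. Concretely, using the characteristic formulas from Definition~\ref{def:chiE} and Proposition~\ref{prop:chiE}, consider the formula
$$
\eta\bigl(\chi(w_1) \lor \chi(u_1),\; \chi(d_1)\bigr).
$$
First I would check that $\calF, w_1 \models \eta(\chi(w_1)\lor\chi(u_1),\chi(d_1))$. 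The witnessing \plm-path is built from the triple: if $w_1 \preccurlyeq u_1$ take the \plm-path $(w_1, u_1, d_1)$, and if $u_1 \preccurlyeq w_1$ one still needs a genuine \plm-path, which by reflexivity of $\preccurlyeq$ can be taken as $(w_1, w_1, u_1, d_1)$ or handled via Lemma~\ref{lem:d2plm} applied to the \dwn-path $(w_1,u_1,d_1)$; in either case every intermediate node is $w_1$ or $u_1$ (so it satisfies $\chi(w_1)\lor\chi(u_1)$ by Proposition~\ref{prop:chiE}) and the endpoint is $d_1$ (so it satisfies $\chi(d_1)$). Hence $\calF,w_1 \models \eta(\chi(w_1)\lor\chi(u_1),\chi(d_1))$, and since $w_1 \slcsEeq w_2$ we obtain $\calF,w_2 \models \eta(\chi(w_1)\lor\chi(u_1),\chi(d_1))$. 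Unfolding the semantics (Definition~\ref{def:WSlcsFinPos}), there is a \plm-path $\pi_2:[0;\ell_2]\to W$ from $w_2$ with $\calF,\pi_2(\ell_2)\models\chi(d_1)$, hence $d_1 \slcsEeq \pi_2(\ell_2)$, i.e.\ $Z(d_1,\pi_2(\ell_2))$; and for all $j\in[0;\ell_2)$, $\calF,\pi_2(j)\models\chi(w_1)\lor\chi(u_1)$, which by Proposition~\ref{prop:chiE} gives $w_1 \slcsEeq \pi_2(j)$ or $u_1 \slcsEeq \pi_2(j)$, i.e.\ $Z(w_1,\pi_2(j))$ or $Z(u_1,\pi_2(j))$. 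This is exactly condition~(2).

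The main obstacle I anticipate is the bookkeeping around the direction of the edge $w_1 \dircnv{\preccurlyeq} u_1$: Definition~\ref{def:WPLMBis} allows $u_1$ to be either above or below $w_1$, whereas a \plm-path must, by definition, start with an $R$-step and end with an $\cnv{R}$-step. The cleanest way to dispose of this is to first note that $(w_1,u_1,d_1)$ is always a \dwn-path (its last step is $\cnv{\preccurlyeq}$ since $u_1 \succcurlyeq d_1$), then invoke Lemma~\ref{lem:d2plm} to convert it into a \plm-path with the same starting and ending points and the same intermediate nodes in the same order — so that the set of nodes appearing is still contained in $\SET{w_1,u_1,d_1}$ with $d_1$ only as the endpoint — and use this \plm-path as the witness for $\calF,w_1\models\eta(\chi(w_1)\lor\chi(u_1),\chi(d_1))$. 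One subtlety to watch: Lemma~\ref{lem:d2plm} may let the converted path revisit $d_1$ as an intermediate node; if so, $d_1$ also satisfies $\chi(w_1)\lor\chi(u_1)$ only when $d_1 \slcsEeq w_1$ or $d_1 \slcsEeq u_1$, which need not hold. To avoid this I would instead argue directly: for $w_1\preccurlyeq u_1$ use $(w_1,u_1,d_1)$; for $u_1\preccurlyeq w_1$ use $(w_1,w_1,u_1,d_1)$, which is a \plm-path by reflexivity (first step $\preccurlyeq$ via $w_1\preccurlyeq w_1$, last step $\succcurlyeq$ via $u_1\succcurlyeq d_1$) whose only intermediate nodes are $w_1$ and $u_1$. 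Everything else is a routine unwinding of definitions together with Proposition~\ref{prop:chiE}.
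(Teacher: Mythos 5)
Your proof is correct and follows essentially the same route as the paper: both reduce condition~(2) to satisfaction of $\eta(\chi(w_1)\lor\chi(u_1),\chi(d_1))$ at $w_1$, transfer it to $w_2$ via $w_1\slcsEeq w_2$, and read off the required \plm-path using Proposition~\ref{prop:chiE}. The only difference is that you spell out the witnessing path for $\calF,w_1\models\eta(\chi(w_1)\lor\chi(u_1),\chi(d_1))$ (including the case split on the direction of $w_1\dircnv{\preccurlyeq}u_1$), a step the paper's proof simply asserts.
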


\begin{proof}
  If $w_1 \slcsEeq w_2$, then the first requirement of
  Definition~\ref{def:WPLMBis} is trivially satisfied. We prove that
  $\slcsEeq$ satisfies the second requirement of
  Definition~\ref{def:WPLMBis}. Suppose $w_1 \slcsEeq w_2$ and let
  $u_1, d_1$ be as in the above-mentioned requirement. This implies that
  $\calF, w_1 \models \eta(\chi(w_1) \lor \chi(u_1), \chi(d_1))$,
  where, we recall, $\chi(w)$ is the `characteristic formula' for~$w$
  as in Definition~\ref{def:chiE}. Since $w_1 \slcsEeq w_2$, we also
  have that
  $\calF, w_2 \models \eta(\chi(w_1) \lor \chi(u_1), \chi(d_1))$
  holds. This in turn means that a \dwn-path~$\pi_2$ of some
  length~$k_2$ from~$w_2$ exists such that
  $\calF, \pi_2(k_2) \models \chi(d_1)$ and for all $j \in [0;k_2)$ we
  have $\calF, \pi_2(j) \models \chi(w_1) \lor \chi(u_1)$,
  i.e.\ $\calF, \pi_2(j) \models \chi(w_1)$ or
  $\calF, \pi_2(j) \models \chi(u_1)$.  Consequently, by
  Proposition~\ref{prop:chiE}, we have: $\pi_2(k_2) \slcsEeq d_1$ and,
  for all $j\in [0;k_2)$, $\pi_2(j) \slcsEeq w_1$ or
  $\pi_2(j) \slcsEeq u_1$, so that the second condition of the
  definition is fulfilled.
\end{proof}

On the basis of Lemma~\ref{lem:WFpBisIMPLEtaLog} and
Lemma~\ref{lem:EtaLogIsWFpBis}, we have that the largest weak
\plm-bisimulation exists, it is a weak \plm-bisimilarity, it is an
equivalence relation, and it coincides with logical equivalence in the
finite poset induced by \slcsE:

\begin{thm}
  \label{theo:PMbisEqSLCSEeq}
  For every finite poset model
  $\calF = (W,{\preccurlyeq},\peval{\calF}), w_1,w_2 \in W$, the
  following holds: $w_1 \slcsEeq^{\calF} w_2$ if and only if
  $w_1 \wplmbis^{\calF} w_2$. \qed
\end{thm}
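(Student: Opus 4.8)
The plan is to read the theorem off directly from the two lemmas just established, in exact analogy with the proof of Theorem~\ref{thm:HMPpoly} for the polyhedral case: it is a pure ``sandwich'' argument between Lemma~\ref{lem:WFpBisIMPLEtaLog} and Lemma~\ref{lem:EtaLogIsWFpBis}.

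For the direction $w_1 \wplmbis^{\calF} w_2 \Rightarrow w_1 \slcsEeq^{\calF} w_2$, I would simply cite Lemma~\ref{lem:WFpBisIMPLEtaLog}, which asserts precisely this: any pair related by some weak \plm-bisimulation satisfies the same \slcsE{} formulas (the proof there proceeds by induction on the formula, the only interesting case being $\eta(\form_1,\form_2)$, dispatched via Lemma~\ref{lem:dPExists}).

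For the converse, $w_1 \slcsEeq^{\calF} w_2 \Rightarrow w_1 \wplmbis^{\calF} w_2$, I would invoke Lemma~\ref{lem:EtaLogIsWFpBis}: the relation $\slcsEeq^{\calF}$ is itself a weak \plm-bisimulation on $\calF$. Hence, if $w_1 \slcsEeq^{\calF} w_2$, then taking $Z = \slcsEeq^{\calF}$ in Definition~\ref{def:WPLMBis} exhibits a weak \plm-bisimulation relating $w_1$ and $w_2$, so $w_1 \wplmbis^{\calF} w_2$ by definition of $\wplmbis^{\calF}$. Combining the two implications gives $\wplmbis^{\calF} = \slcsEeq^{\calF}$; since by Lemma~\ref{lem:WFpBisIMPLEtaLog} every weak \plm-bisimulation is contained in $\slcsEeq^{\calF}$, which is itself one (Lemma~\ref{lem:EtaLogIsWFpBis}), the relation $\slcsEeq^{\calF}$ is the largest weak \plm-bisimulation, i.e.\ it is the weak \plm-bisimilarity, and being a logical equivalence it is automatically reflexive, symmetric and transitive.

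No genuine obstacle remains at this stage: all the work has been absorbed into the two supporting lemmas (and, behind them, Lemma~\ref{lem:dPExists}, Definition~\ref{def:chiE} and Proposition~\ref{prop:chiE}). The only subtlety worth keeping in mind is that Definition~\ref{def:WPLMBis} quantifies over \emph{all} $u_1,d_1$ with $w_1 \dircnv{\preccurlyeq} u_1 \succcurlyeq d_1$, i.e.\ over all \dwn-paths of length exactly $2$ out of $w_1$; the characteristic-formula argument in Lemma~\ref{lem:EtaLogIsWFpBis} already accommodates this quantification, so nothing extra need be done here.
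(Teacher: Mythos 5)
Your proposal matches the paper's own argument exactly: the paper derives Theorem~\ref{theo:PMbisEqSLCSEeq} immediately from Lemma~\ref{lem:WFpBisIMPLEtaLog} (weak \plm-bisimilarity implies $\slcsEeq$) and Lemma~\ref{lem:EtaLogIsWFpBis} ($\slcsEeq$ is itself a weak \plm-bisimulation), noting in passing, as you do, that this makes $\slcsEeq$ the largest weak \plm-bisimulation and an equivalence relation. Nothing is missing.
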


By this we have established the HMP for $\wplmbis$ with respect to~\slcsE.

Recalling that, by Theorem~\ref{theo:calMPresForm}, given
polyhedral model $\calP = (|K|,\peval{\calP})$ for all $x \in |K|$ and
\slcsE{} formula~$\form$, we have that $\calP,x \models \form$ if and
only if $\map(\calP),\map(x) \models \form$, we get the following
final result:

\begin{cor}
  \label{cor:coincidenceE}
  Given a polyhedral model
  $\calP = (P,\peval{\calP})$, with $P=|K|$ for some simplicial complex $K$,
  for all $x_1,x_2 \in P$ the following holds:
  \begin{displaymath}
    \phantom{\qedsymbol \qquad \quad}
    x_1 \wsibis^{\calP} x_2 \,\mbox{ iff } \,
    x_1  \slcsEeq^{\calP}  x_2 \,\mbox{ iff } \,
    \map(x_1)  \slcsEeq^{\map(\calP)} \map(x_2) \,\mbox{ iff } \,
    \map(x_1)  \wplmbis^{\map(\calP)}  \map(x_2).
    \qquad \quad \qed
  \end{displaymath}
\end{cor}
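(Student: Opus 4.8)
The plan is to assemble the three Hennessy--Milner-type results already established in the excerpt; the corollary is essentially their concatenation. First, applying Theorem~\ref{thm:HMPpoly} directly to $\calP$ gives the leftmost equivalence: $x_1 \wsibis^{\calP} x_2$ if and only if $x_1 \slcsEeq^{\calP} x_2$.

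For the middle equivalence I would invoke Theorem~\ref{theo:calMPresForm}, which states that for every \slcsE{} formula $\form$ and every point $x \in P$ one has $\calP, x \models \form$ if and only if $\map(\calP), \map(x) \models \form$. Quantifying over all formulas, $\calP, x_1 \models \form \Leftrightarrow \calP, x_2 \models \form$ holds for all $\form$ precisely when $\map(\calP), \map(x_1) \models \form \Leftrightarrow \map(\calP), \map(x_2) \models \form$ holds for all $\form$; that is, $x_1 \slcsEeq^{\calP} x_2$ if and only if $\map(x_1) \slcsEeq^{\map(\calP)} \map(x_2)$. Here $\map(\calP)$ is a finite poset model, so $\slcsEeq^{\map(\calP)}$ is the equivalence of Definition~\ref{def:WFPSLCSeq}.

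Finally, applying Theorem~\ref{theo:PMbisEqSLCSEeq} to the finite poset model $\map(\calP)$ gives the rightmost equivalence: $\map(x_1) \slcsEeq^{\map(\calP)} \map(x_2)$ if and only if $\map(x_1) \wplmbis^{\map(\calP)} \map(x_2)$. Chaining the three biconditionals yields the displayed chain.

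Since every ingredient is already available, there is no genuine obstacle here; the only point that deserves a line of care is the bookkeeping in the middle step, namely that truth being both preserved and reflected along $\map$ for each individual formula upgrades to preservation and reflection of the \emph{induced} logical equivalence --- which is immediate once one quantifies over all \slcsE{} formulas, as above.
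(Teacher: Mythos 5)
Your proposal is correct and follows exactly the paper's (implicit) argument: the corollary is obtained by chaining Theorem~\ref{thm:HMPpoly}, Theorem~\ref{theo:calMPresForm} (quantified over all \slcsE{} formulas to transfer logical equivalence along $\map$), and Theorem~\ref{theo:PMbisEqSLCSEeq}. Nothing further is needed.
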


\begin{figure}[t!]

\begin{center}
\subfloat[]{\label{subfig:MinEmodel}
    \begin{tikzpicture}[scale=0.9, every node/.style={transform shape}]
    \tikzstyle{kstate}=[rectangle,draw=black,fill=white]
    \tikzset{->-/.style={decoration={
		markings,
		mark=at position #1 with {\arrow{>}}},postaction={decorate}}}
   
    \node[kstate,line width=1mm,draw=violet!50,fill=green!50,label={270:$$}   ] (P1) at (  1,2) {$C_4$};
    
    \node[kstate,line width=1mm,draw=cyan!50,fill=red!50,label={90:$$} ] (E0) at (0,1) {$C_2$};
    
    \node[kstate,line width=1mm,draw=brown!50,fill=gray!50,label={90:$$} ] (T0) at ( 1,1) {$C_3$};
    
    \node[kstate,line width=1mm,draw=orange!50, fill=gray!50,label={90:$$} ] (P0) at ( 0,0) {$C_1$};
    
    \path (T0) edge[->,thick] (P1);
    
     \path (P0) edge[->,thick] (E0);
     
     \path (E0) [->, bend left,thick] edge[->] (P1);
    
    \draw [->, bend right,thick] (E0) to (T0);
    \draw [->, bend right,thick] (T0) to (E0); 
    \path (P0) edge[->, loop left,thick] (P0); 
    \path (P1) edge[->, loop right,thick] (P1);
    \path (E0) edge[->, loop left,thick] (E0); 
    \path (T0) edge[->, loop right,thick] (T0);     
    \end{tikzpicture} 
}\quad\quad\quad
\subfloat[]{\label{subfig:MinGmodel}
\begin{tikzpicture}[scale=0.9, every node/.style={transform shape}]
    \tikzstyle{gstate}=[rectangle,draw=black,fill=white]
    \tikzset{->-/.style={decoration={
        markings,
        mark=at position #1 with {\arrow{>}}},postaction={decorate}}}
    \node[gstate,line width=1mm,draw=magenta!25,fill=red!50  ] (C4) at (  0,0) {$C'_3$};
    \node[gstate,line width=1mm,draw=blue!15,fill=gray!50,above of=C4,xshift=2cm] (C6){$C'_5$};
    \node[gstate,line width=1mm,draw=orange!50,fill=gray!50,above of=C4,xshift=-2cm] (C3){$C'_2$};
    \node[gstate,line width=1mm,draw=blue!75,fill=gray!50,above of=C6,xshift=2cm] (C7){$C'_6$};
    \node[gstate,line width=1mm,draw=red!50,fill=gray!50,above of=C6,xshift=-2cm] (C8){$C'_7$};
    \node[gstate,line width=1mm,draw=yellow!75,fill=red!50,above of=C3,xshift=-2cm] (C2){$C'_1$};
    \node[gstate,line width=1mm,draw=brown!25,fill=gray!50,above of=C7] (C9){$C'_8$};
    \node[gstate,line width=1mm,draw=cyan!50,fill=red!50,above of=C2,xshift=-0.5cm] (C1){$C'_0$};
    \node[gstate,line width=1mm,draw=violet!50,fill=green!50,above of=C8] (C5){$C'_4$};
    \node[gstate,line width=1mm,draw=magenta,fill=gray!50,left of=C2] (C10){$C'_9$};
    \draw(C4) edge[thick, bend right](C7);
    \draw(C4) edge[thick, bend left](C2);
    \draw(C6) edge[thick](C7);
    \draw(C6) edge[thick](C8);
    \draw(C3) edge[thick](C2);
    \draw(C3) edge[thick](C8);
    \draw(C7) edge[thick](C5);
    \draw(C8) edge[thick](C9);
    \draw(C8) edge[thick](C5);
    \draw(C2) edge[thick](C5);
    \draw(C2) edge[thick](C1);
    \draw(C10) edge[thick](C1);
\end{tikzpicture}
}
\end{center}
\caption{
The minimal model 
$\map(\calP_{\ref{fig:PolyhedronNoPathCompressed}})_{\min}$, modulo weak \plm-bisimilarity 
(\ref{subfig:MinEmodel}), and modulo \plm-bisimilarity (\ref{subfig:MinGmodel}), of the cell poset model 
$\map(\calP_{\ref{fig:PolyhedronNoPathCompressed}})$
of Figure~\ref{subfig:PolyhedronNoPathPosetCompressed}.
Note that the minimal model modulo \plm-bisimilarity is a poset model and so it is represented by its Hasse diagram.
}\label{fig:exa:MinRunExaE}
\end{figure}
\noindent This says that \slcsE-equivalence in a polyhedral model is the
same as weak simplicial bisimilarity, which maps by~$\map$ to the weak
\plm-bisimilarity in the corresponding poset model, where the latter
coincides with the \slcsE-equivalence.

In the example below, and in the sequel, whenever we show a graphical representation of a minimal model in a figure, 
we use the following convention: each node of the Kripke model is coloured according to the
predicate letter satisfied by the cells belonging to the equivalence class represented by the node ---
obviously, since all such cells are weakly \plm-bisimilar, they all satisfy the same predicate letters\footnote{In the examples, for the sake of readability, each cell satisfies a single predicate letter, namely its ``colour''.} --- whereas the colour of the {\em border} of the node identifies the equivalence class itself,
and is, therefore, unique within the model.
Note that the colour of the borders of the nodes have only an illustrative purpose. In particular, they are not related to the colours expressing the evaluation of proposition letters.

\begin{exa}\label{ex:min}
Figure~\ref{subfig:MinEmodel} shows the minimal model 
$\map(\calP_{\ref{fig:PolyhedronNoPathCompressed}})_{\min}$, modulo $\wplmbis$, of the poset model $\map(\calP_{\ref{fig:PolyhedronNoPathCompressed}})$ shown in Figure~\ref{subfig:PolyhedronNoPathPosetCompressed}.
$\map(\calP_{\ref{fig:PolyhedronNoPathCompressed}})_{\min}$ is built using the procedure that will be described in detail in Section~\ref{sec:EtaMinimisation}. 
Note that $\map(\calP_{\ref{fig:PolyhedronNoPathCompressed}})_{\min}$ is not a poset model, but it is a reflexive Kripke model.
As we can see in the figure, we have four equivalence classes. More specifically, the classes are:
$C_1= \SET{\relint{A}}$, represented by the grey node with orange border,
$C_2= \SET{\relint{B}, \relint{C}, \relint{AB}, \relint{AC}, \relint{BC}, \relint{BD}, \relint{CD},
\relint{ABC}, \relint{BCD}}$, represented by the red node with cyan border,
$C_3 =\SET{\relint{D},\relint{E},\relint{F},\relint{CE},\relint{DE},\relint{DF},\relint{EF},
\relint{DEF}}$, represented by the grey node with brown border, and, finally,
$C_4 = \SET{\relint{CDE}}$, represented by the green node with violet border.

As we will see in Section~\ref{sec:EtaMinimisation}, the fact that
$\relint{D} \preccurlyeq \relint{CD}$ holds, with 
$\relint{D} \in C_3$ and $\relint{CD} \in C_2$, implies 
that $(C_3,C_2)$ belongs to the accessibility relation $R_{\min}$ of 
the Kripke model $\map(\calP_{\ref{fig:PolyhedronNoPathCompressed}})_{\min}$.
Similarly, we have that the fact that
$\relint{C} \preccurlyeq \relint{CE}$ holds, with
$\relint{C} \in C_2$ and $\relint{CE} \in C_3$, implies
that $(C_2,C_3)\in R_{\min}$.
With the same rationale, since $\relint{D} \preccurlyeq \relint{D}$ holds, we have
that $(C_3,C_3)\in R_{\min}$.
Finally, since $\relint{A} \preccurlyeq \relint{AB}$ and 
$\relint{CD} \preccurlyeq \relint{CDE}$, we get 
that $\SET{(C_1,C_2),(C_2,C_4)} \subseteq R_{\min}$ whereas we can see
from Figure~\ref{subfig:MinEmodel} that $(C_1,C_4)\not\in R_{\min}$.
The presence of cycles as the above, as well as the fact that transitivity of the accessibility relation is not guaranteed, imply that the minimal model of
a poset model, modulo $\wsibis$, is not necessarily a poset model. Anyway, it is guaranteed, by construction, to be a reflexive Kripke model.

Note that cell $\relint{A}$ of the poset model of Figure~\ref{subfig:PolyhedronNoPathPosetCompressed} is in a different equivalence class, namely $C_1$, than any other grey cell of the poset model: the latter
cells belong to $C_3$. In fact, it is easy to see that there is no weak \plm-bisimulation $Z$ such that $Z(\relint{A},w)$ for any $w\in C_3$. This is because condition (2) of Definition~\ref{def:WPLMBis} cannot be satisfied, as shown in the sequel. Suppose for instance $Z(\relint{A},\relint{D})$ for some weak bisimulation relation $Z$. Then, with reference to  Definition~\ref{def:WPLMBis}, 
take $w_1=\relint{D}$ and $u_1= d_1=\relint{CDE}$: clearly  
$w_1 \dircnv{\preccurlyeq} u_1 \succcurlyeq  d_1$.
Any $\pi_2$ from $\relint{A}$ should end in $\relint{CDE}$, otherwise $B(d_1,\pi_2(\ell_2))$ would not hold, since $\invpeval{\map(\calP_{\ref{fig:PolyhedronNoPathCompressed}})}(d_1)=\mathbf{green}$ and $\peval{\map(\calP_{\ref{fig:PolyhedronNoPathCompressed}})}(\mathbf{green})=\SET{\relint{CDE}}$.
But any path from $\relint{A}$ and ending in  $\relint{CDE}$ would necessarily pass by a
cell, say $\pi_2(j)$, for some $j\in (0;\ell_2)$ such that $\pi_2(j)\in \peval{\map(\calP_{\ref{fig:PolyhedronNoPathCompressed}})}(\mathbf{red})$.
For such a $j$ we would have that neither $Z(w_1,\pi_2(j))$ would hold, since $w_1 = \relint{D}\not\in \peval{\map(\calP_{\ref{fig:PolyhedronNoPathCompressed}})}(\mathbf{red})$, nor $Z(u_1,\pi_2(j))$, for the same reason. So, there exists no
weak \plm-bisimulation containing $(\relint{A},\relint{D})$.
And, in fact, we also have that 
$\map(\calP_{\ref{fig:PolyhedronNoPathCompressed}}), \relint{D} \models \eta(\mathbf{green}\, \lor \, \mathbf{grey}, \mathbf{green})$
whereas 
$\map(\calP_{\ref{fig:PolyhedronNoPathCompressed}}), \relint{A} \not\models \eta(\mathbf{green}\, \lor \, \mathbf{grey}, \mathbf{green})$.

As another example, suppose $Z(\relint{A},\relint{DEF})$ for some weak bisimulation relation $Z$ and
let $w_1=u_1=\relint{DEF}$ and $d_1=\relint{D}$. Any $\pi_2$ from from $\relint{A}$ should necessarily
end in a grey cell. But such a cell cannot be $\relint{A}$, since we already know that no
 \plm-bisimulation can contain $(\relint{A},\relint{D})$. And, on the other hand, if 
 $\pi_2(\ell_2)\in C_3$, then we would have a similar problem as above, with the unavoidable red elements of $\pi_2$. From the logical perspective, we see that
$\map(\calP_{\ref{fig:PolyhedronNoPathCompressed}}), \relint{DEF} \models \eta(\mathbf{grey},\eta(\mathbf{green}\, \lor \, \mathbf{grey}, \mathbf{green}))$
whereas
$\map(\calP_{\ref{fig:PolyhedronNoPathCompressed}}), \relint{A} \not\models \eta(\mathbf{grey},\eta(\mathbf{green}\, \lor \, \mathbf{grey}, \mathbf{green}))$.
The reasoning for all  the other cases is similar.
Finally, the reader can easily check that both 
$
\map(\calP_{\ref{fig:PolyhedronNoPathCompressed}}),\relint{D} \models \eta(\mathbf{grey} \lor \mathbf{red},
\mathbf{red})
$
and
$
\map(\calP_{\ref{fig:PolyhedronNoPathCompressed}}),\relint{E} \models \eta(\mathbf{grey} \lor \mathbf{red},
\mathbf{red}).
$
Actually, any grey point satisfies the above formula.

Weak \plm-bisimilarity ensures that, for  each \plm-path in the poset model, there is a corresponding 
\plm-path in the minimal model and vice-versa. For instance the \plm-path 
$
(C_2,C_2,C_2, C_2,C_3)
$
in the minimal model corresponds to \plm-path
$
(\relint{AB}, \relint{ABC},\relint{BC},\relint{BCD},\relint{D})
$ 
in the poset model --- witnessing, in both cases, $\map(\calP_{\ref{fig:PolyhedronNoPathCompressed}}),\relint{AB} \models \eta(\mathbf{red}, \mathbf{grey})$.
The correspondence, of course, is not unique: for instance, the above \plm-path in the minimal model corresponds also to the \plm-path
$
(\relint{AB}, \relint{ABC},\relint{C},\relint{CD},\relint{D})
$.

Finally, in Figure~\ref{subfig:MinGmodel}  the minimal model of
$\map(\calP_{\ref{fig:PolyhedronNoPathCompressed}})$ with respect to~$\slcsGeq$ is shown. Note that the minimal model is a poset model and, in fact, in the figure its  Hasse diagram is shown. We have 10 equivalence classes, namely
$C'_0=\SET{\relint{B}, \relint{AB}, \relint{AC},\relint{BC},\relint{BD},\relint{ABC},\relint{BCD}}$,
$C'_1=\SET{\relint{CD}}$,
$C'_2=\SET{\relint{D}}$,
$C'_3=\SET{\relint{C}}$,
$C'_4=\SET{\relint{CDE}}$,
$C'_5=\SET{\relint{E}}$,
$C'_6=\SET{\relint{CE}}$,
$C'_7=\SET{\relint{DE}}$,
$C'_8=\SET{\relint{F},\relint{DF},\relint{EF},\relint{DEF}}$, and
$C'_9=\SET{\relint{A}}$.
\closeex
\end{exa}

\section{Building the Minimal Model Modulo Logical Equivalence}\label{sec:EtaMinimisation}

In this section we present a minimisation procedure for finite poset
models modulo weak \plm-bisimilarity or, equivalently, modulo
$\slcsEeq$.
Given a finite poset model $\calF = (W, \preccurlyeq, \peval{\calF})$,
the procedure consists of three steps: \\[0.5em] 
\textbf{Step 1:} The poset model~$\calF$ is encoded as an
\lts{} denoted~$\posToltsC(\calF)$. The set of states
of~$\posToltsC(\calF)$ is~$W$ itself.
The encoding is such that it is ensured that logically equivalent
elements of $\calF$ are mapped into branching bisimilar states of $\posToltsC(\calF)$. 
Thus, for 
$w_1, w_2 \in W$ that are logically equivalent with respect to
\slcsE{} in the poset model~$\calF$, i.e.\ $w_1 \slcsEeq^{\calF} w_2$,
we have that they are branching bisimilar as states in the
\lts~$\posToltsC(\calF)$, i.e.\ $w_1 \beq^{\posToltsC(\calF)} w_2$.
\\[1em]
\textbf{Step 2:} The \lts~$\posToltsC(\calF)$ is reduced modulo
branching bisimilarity using available software tools, such as
\mcrltwo~\cite{Gr+17}. This step yields the set of equivalence classes
of~$W$ for~$\beq^{\posToltsC(\calF)}$. Because of the correspondence
of logical equivalence and branching bisimilarity,
we obtain $W / {\slcsEeq^{\calF}}$. \\[0.5em]
\textbf{Step 3:} The minimal model 
$\calF_{\min} = (W_{\min}, R_{\mkern1mu \min},\peval{\calF_{\min}})$
is built. It turns out that this
model is not necessarily a poset model (see the example in
Figure~\ref{subfig:MinEmodel}). 
However, it is a reflexive Kripke model 
where $W_{\min}=W / {\slcsEeq^{\calF}}$, $R_{\mkern1mu \min}$~is a
relation induced by the ordering~$\preccurlyeq$ of~$\calF$, and, most importantly,
\slcsE{} is preserved and reflected, i.e.
 for each $w \in W$ and \slcsE{} formula $\form$ the following
  holds:
  $ \calF,w \models \form \mbox{ if and only if } \calF_{\min},
  [w]_{\slcsEeq} \models \form.  $\\[0.5em]
In the remainder of this section we focus on Step~1 and Step~3.

\subsection{The Encoding of~$\calF$ as $\posToltsC(\calF)$}

We obtain the \lts~$\posToltsC(\calF) = (S,L,{\rightarrow})$ from the poset~$\calF$ as
specified in Definition~\ref{def:LTSetaConc} below.
$\posToltsC(\calF)$ is an \lts{}
representing each node $w\in W$ of $\calF$ as a distinct state. So, we put $S=W$.
For example, the set of states of the \lts{} $\posToltsC(\calF_{\ref{fig:encodeB}})$ of Figure~\ref{subfig:concreteLTS}  is 
$
\SET{\relint{D}, \relint{E},\relint{F},\relint{DE},\relint{EF}}
$,
i.e. the same as that of the nodes  of $\calF_{\ref{fig:encodeB}}= \map(\calP_{\ref{fig:encodeB}})$.

The set $L$ of transition labels includes all predicate letters in $\ap$, plus the 
``silent move''~$\tau$, typical of \lts{s} in concurrency theory, and the two special labels $\cact$ and $\dact$,
the meaning of which will be discussed later. In our example of Figure~\ref{fig:encodeB}, we have
$L=\SET{\mathbf{blue}, \mathbf{red},\tau,\cact, \dact}$.
We use transitions in $\posToltsC(\calF)$ for several purposes, as follows.
For each state $w$, the fact that $w$ (represents a node of $\calF$ that) satisfies a 
predicate letter $p$ is represented by a self-loop:
each predicate letter $p \in \ap$ such that  $w \in \peval{\calF}(p)$ is represented 
in $\posToltsC(\calF)$ by a transition from $w$ to itself, labelled by $p$ (Rule (PLC)). 
The transitions labelled by $\tau$ relate those states in $\posToltsC(\calF)$ representing
nodes in $\calF$ that are related by $\preccurlyeq$ or by $\succcurlyeq$ 
and satisfy the same set of predicate letters (Rule (TAU)).
Intuitively, this represents in the \lts{} the fact that ``nothing changes'' when moving from one such node $w$ to another one, $w'$ (including $w$ itself).

On the contrary, the fact that two states $w$ and $w'$ represent ``adjacent'' nodes of $\calF$ --- i.e. nodes related by $\dircnv{\preccurlyeq}$ --- which do {\em not} satisfy the same set of predicate letters, is modelled by  transitions $w \trans{\cact} w'$ and 
$w' \trans{\cact} w$, where $\cact$ stands for ``change'', with the obvious meaning (see Rule (CNG)).

Finally, Rule (DWN) makes sure that whenever $w  \succcurlyeq w'$ in $\calF$, a transition labelled
$\dact$ goes from (the state representing) $w$ to (that representing) $w'$. The label $\dact$ stands for ``down''. ``Marking'' the pair $(w,w')$ with the transition $w \trans{\dact}w'$ is relevant for identifying (the end of) \dwn-paths. Recall that such paths are the most fundamental ones for the semantics and the properties of  \slcsE.
We invite the reader to check that all the transitions in the \lts{} of Figure~\ref{subfig:concreteLTS}
are generated according to the above mentioned rules.

\begin{defi}
  \label{def:LTSetaConc}
  For a finite poset model $\calF = (W,\preccurlyeq,\peval{\calF})$ and
  symbols $\tau, \cact, \dact \notin \ap$, the \lts{}
  $\posToltsC(\calF)$ is defined by
  $\posToltsC(\calF) = (S,L,{\rightarrow})$ where
  \begin{itemize}
  \item the set of states~$S$ is the set~$W$;
  \item the set of labels~$L$ consists of $\ap \cup \SET{\tau, \cact,
      \dact}$;   
  \item the transition relation~$\rightarrow$ is the smallest relation
    on $S \times L \times S$ induced by the following transition
    rules.
    \begin{displaymath}
      \begin{array}{l}
        \mathrm{(PLC)} \ 
        \sosrule{w \in \peval{\calF}(p)}{w\trans{p} w}
        \qquad \quad
        \mathrm{(TAU)} \ 
          \sosrule{w \dircnv{\preccurlyeq} w'
          \quad
          \invpeval{\calF}(\SET{w}) = \invpeval{\calF}(\SET{w'})}{w \trans{\tau} w'}
          \medskip \\\\\\
        \mathrm{(CNG)} \ 
        \sosrule{w \dircnv{\preccurlyeq} w'
        \quad
        \invpeval{\calF}(\SET{w}) \neq \invpeval{\calF}(\SET{w'})}{w\trans{\cact} w'}
        \qquad 
        \mathrm{(DWN)} \ 
        \sosrule{w \succcurlyeq w'}{w \trans{\dact} w'}
        \quad \! {\bullet}
      \end{array}
    \end{displaymath}
  \end{itemize}
\end{defi}

\begin{figure}[t!]
\centering
\subfloat[\label{subfig:polyM}]{
  \begin{tikzpicture}[baseline={(1.5,-1.125)}]
    \tikzset{node distance=1cm}
    \tikzstyle{point}=[circle,inner sep=0pt,minimum width=4pt,minimum height=4pt]
    \node (A)[point,draw=red,fill=red] at (0,0){};
    \node (Aname) [above of = A,yshift = -1.5cm] {D};
    \node (B)[point,draw=blue,fill=blue, right of= A] {};
    \node (Bname) [above of = B,yshift = -1.5cm] {E};
    \node (C)[point,draw=blue,fill=blue, right of= B] {};
    \node (Cname) [above of = C,yshift = -1.5cm] {F};

    \draw [thick,red] (A) edge (B);
    \draw [thick,blue] (B) edge (C);
  \end{tikzpicture}
}
\qquad \qquad
\subfloat[\label{subfig:posetM}]{
  \begin{tikzpicture}[%
    scale=0.85, baseline={(1.5,-0.5)},
    every edge/.style={draw, thick}]
    
    \tikzstyle{kstate}=[rectangle,draw=black,fill=white]
    \node [kstate,fill=red!50] (P0) at (0,0) {$\relint{D}$};
    \node [kstate,fill=blue!50] (P1) at (1.5,0) {$\relint{E}$};
    \node [kstate,fill=blue!50] (P2) at (3,0) {$\relint{F}$};
    \node [kstate,fill=red!50] (E0) at (0.75,1.25) {$\relint{DE}$};
    \node [kstate,fill=blue!50] (E1) at (2.25,1.25) {$\relint{EF}$};
    \draw (P0) to (E0);
    \draw (P1) to (E0);
    \draw (P1) to (E1);
    \draw (P2) to (E1);
  \end{tikzpicture}
}
\qquad
\subfloat[\label{subfig:minM}]{%
  \begin{tikzpicture}[scale=0.8,
    every edge/.style={draw, ->, >=Stealth, shorten >=0pt, shorten <=0pt, thick}]

    \tikzstyle{kstate}=[rectangle, line width=1.75pt, draw]
    
    \node [kstate,line width=1mm,draw=brown,fill=blue!50] (P0) at (1,0) {$\phantom{B}$};
    \node [kstate,line width=1mm,draw=green,fill=red!50] (E0) at (1,1.5) {$\phantom{B}$};

    \draw (P0) edge (E0);
    \draw (P0) edge [->, >=Stealth, loop right, min distance=10mm, thick] (P0);
    \draw (E0) edge [->, >=Stealth, loop right, min distance=10mm, thick] (E0);

    \node [draw=none] (dummy) at (-0.625,-0.5) {\ } ;
  \end{tikzpicture}
} \\
\subfloat[\label{subfig:concreteLTS}]{
  \begin{tikzpicture}[scale=0.8, baseline={(0,-1.5)},
    every edge/.style={draw, ->, >=Stealth, thick},
    every loop/.style={min distance=8mm}]
    
    \tikzstyle{kstate}=[rectangle,draw=black,fill=white]
    
    \node[kstate,fill=red!50] (P00) at (0,2.5) {$\relint{D}$};
    \node[kstate,fill=red!50] (P0) at (2.5,2.5) {$\relint{DE}$};
    \node[kstate,fill=blue!50] (P1) at (2.5,0) {$\relint{E}$};
    \node[kstate,fill=blue!50] (P2) at (0,0) {$\relint{EF}$};
    \node[kstate,fill=blue!50] (P3) at (-2.5,0) {$\relint{F}$};
    \path (P00.south east) edge [<->, bend right] node [below]{$\tau$}
    (P0.south west);
    \path (P0.north west) edge [bend right] node [above] {$\dact$}
    (P00.north east);
    \path (P00) edge [loop above] node {$\tau,\dact$} (P00);
    \path (P00) edge [loop left] node [left, yshift=2pt]
    {\color{red}{{\bf red}}} (P00); 
    \path (P0) edge [loop above] node {$\tau,\dact$} (P0);
    \path (P0) edge [loop right] node [right, yshift=2pt]
    {\color{red}{{\bf red}}} (P0); 
    \path (P0) edge [<->,bend right] node [left] {$\cact$} (P1);
    \path (P0) edge [bend left] node [right] {$\dact$} (P1);
    \path (P1) edge [loop below] node {$\tau,\dact$} (P1);
    \path (P1) edge [loop right] node {\color{blue}{{\bf blue}}} (P1);
    \path (P2.south east) edge [<->, bend right] node [below, pos=0.35] {$\tau$}
    (P1.south west);
    \path (P2.north east) edge [bend left] node [above, pos=0.35] {$\dact$} (P1.north west);
    \path (P2.north west) edge [->, bend right] node [above] {$\dact$}
    (P3.north east);
    \path (P2) edge [loop below] node [below] {\color{blue}{{\bf blue}}} (P2);
    \path (P2) edge [loop above] node [above] {$\tau,\dact$} (P2);
    \path (P3.south east) edge [<->, bend right] node [below] {$\tau$}
    (P2.south west);
    \path (P3) edge [loop above] node [above] {$\tau,\dact$} (P3);
    \path (P3) edge [loop left] node [left, yshift=2pt] {\color{blue}{{\bf blue}}} (P3);
  \end{tikzpicture}
}
\hspace{-1cm}
\subfloat[\label{subfig:abstractLTS}]{
  \begin{tikzpicture}[scale=0.8, baseline={(0,-1.5)},
    every edge/.style={draw, ->, >=Stealth, thick},
    every loop/.style={min distance=8mm}]
    \tikzstyle{kstate}=[rectangle,draw=black,fill=white]

    \node [kstate,fill=red!50] (P0) at (0,2.5) {$\SET{\relint{D},\relint{DE}}$};
    \node [kstate,fill=blue!50] (P1) at (0,0) {$\SET{\relint{E},\relint{EF},\relint{F}}$};

    \path (P0) edge [loop above] node {$\dact,\sact$} (P0);
    \path (P1) edge [loop below] node {$\dact,\sact$} (P1);
    \path (P0) edge [<->, bend right] node [left]{$\sact$} (P1);
    \path (P0) edge [->, bend left] node [right, pos=0.45]{$\dact$} (P1);
    \node [kstate,fill=none, draw=none] (Q0) at ((0.7,2.5) {$\quad$} ;
    \path (Q0) edge [loop right, looseness=10] node [right] {\color{red}{$\SET{\mathbf{red}}$}} (Q0);
    \node [kstate,fill=none, draw=none] (Q1) at ((0.95,0) {$\quad$} ;Q
    \path (Q1) edge [loop right,looseness=10] node [right]
    {\color{blue}{$\SET{\mathbf{blue}}$}} (Q1);

    \node [draw=none] (dummy) at (-3.5,-0.5) {\ };
\end{tikzpicture}
}

\caption{
  (\ref{subfig:polyM}) A polyhedral model
  $\calP_{\ref{fig:encodeB}}$; (\ref{subfig:posetM})~Hasse diagram of the poset model $\calF_{\ref{fig:encodeB}}=
  \map(\calP_{\ref{fig:encodeB}})$; (\ref{subfig:minM})~minimal Kripke model
  ${\calF_{\ref{fig:encodeB}}}_{\min}$; (\ref{subfig:concreteLTS})~the
  \lts~$\posToltsC(\calF_{\ref{fig:encodeB}})$ obtained
  from~$\calF_{\ref{fig:encodeB}}$ by the encoding of Definition~\ref{def:LTSetaConc};
  (\ref{subfig:abstractLTS}) The
  \lts~$\posToltsA(\calF_{\ref{fig:encodeB}})$ obtained
  from~$\calF_{\ref{fig:encodeB}}$ by the encoding of Definition~\ref{def:LTSetaAbs}. Note that
  whenever $w \trans{\ell} w'$ and $w' \trans{\ell}
  w$ a ``double transition'' $w \stackrel{\ell}{\longleftrightarrow}
  w'$ is drawn in the figure between $w$ and~$w'$.}\label{fig:encodeB}
\end{figure}

\smallskip

In order to show that the above definition establishes that
$w_1 \slcsEeq^{\calF} w_2$ if and only if $w_1 \beq^{\posToltsC(\calF)} w_2$, it
is convenient to consider an intermediate structure, that is an \lts{}
too.
We denote this second \lts{} by~$\posToltsA(\calF)$. This structure
helps in the proofs to separate concerns related to the various
equivalences that are involved.
Suppose that 
nodes $w_1$ and~$w_2$ of~$\calF$
are encoded by the
states $s_1$ and~$s_2$ in~$\posToltsA(\calF)$, respectively. We will
have that  $w_1$ and~$w_2$ are logically equivalent in~$\calF$
with respect to \slcsE{} if and only if states $s_1$ and~$s_2$ are strongly
bisimilar (in the classical sense~\cite{Mil89}) in~$\posToltsA(\calF)$,
written $s_1 \seq^{\mkern1mu \posToltsA(\calF)} \!\!
s_2$. Furthermore, it will hold that $s_1$ and~$s_2$ are strongly
bisimilar in~$\posToltsA(\calF)$ if and only if $w_1$ and~$w_2$ are branching
bisimilar
in~$\posToltsC(\calF)$, thus providing the correctness of
the construction.

\lts~$\posToltsA(\calF)$ is more abstract
than~$\posToltsC(\calF)$ 
in the sense that all the  nodes of $\calF$
that satisfy the same proposition letters and that are  connected via 
$\dircnv{\preccurlyeq}$ are mapped to the same state of $\posToltsA(\calF)$.
Thus, intuitively, a state of $\posToltsA(\calF)$ corresponds to a class of states of
$\posToltsC(\calF)$. This is a class of states representing  nodes $w$ and $w'$ in $\calF$
for which ``nothing changes'' when moving from $w$ to $w'$, 
as discussed above.
More precisely, define
$\Theta = \ZET{\, \invpeval{\calF}(\SET{w})\, }{\, w \in W\, }$ and
consider, for $\alpha \in \Theta$, the $\alpha$-connected components
of~$\calF$. Then, each state~$s$ of~$\posToltsA(\calF)$ is an
$\alpha$-connected component of~$\calF$, for some~$\alpha$ as
above. So, we group together all the
nodes in~$W$ that can reach one
another only via a path in~$\calF$ composed of elements all satisfying
exactly the same proposition letters.
The above intuition
is formalised by the following definition.

\begin{defi}
  \label{def:LCCeq}
  Given  a finite poset model $\calF = (W,\preccurlyeq,\peval{\calF})$, we
  define relation ${\lcceq} \subseteq {W \times W}$ as the set of pairs
  $(w_1,w_2)$ such that an undirected path~$\pi$ of some length~$\ell$
  exists with $\pi(0) = w_1, \pi(\ell) = w_2$, and
  $\invpeval{\calF}(\SET{\pi(i)}) = \invpeval{\calF}(\SET{\pi(j)})$,
  for all $i,j \in [0;\ell]$.  \closedefi
\end{defi}

\noindent
The relevant definitions lead straightforwardly to the following
observation.

\begin{prop}
  Let $\calF = (W,\preccurlyeq,\peval{\calF})$ be a finite poset
  model. Then $\lcceq$~is an equivalence relation on~$W$. \qed
\end{prop}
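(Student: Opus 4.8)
The plan is to check the three defining properties of an equivalence relation directly from Definition~\ref{def:LCCeq}, exploiting the elementary path constructions already introduced in the excerpt (the trivial path, path reversal, and sequentialisation) together with the observation that $\dircnv{\preccurlyeq}$ is reflexive and symmetric and that $\invpeval{\calF}(\SET{\cdot})$ is constant along any path witnessing $\lcceq$.

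For \emph{reflexivity}, given $w \in W$ I would exhibit the length-one undirected path $\pi = (w,w)$ from $w$ to $w$; this is a legitimate undirected path since $\preccurlyeq$, being a partial order, is reflexive, hence $\dircnv{\preccurlyeq}(w,w)$ holds. The condition $\invpeval{\calF}(\SET{\pi(i)}) = \invpeval{\calF}(\SET{\pi(j)})$ for all $i,j \in [0;1]$ is then immediate, so $(w,w) \in \lcceq$.

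For \emph{symmetry}, suppose $(w_1,w_2) \in \lcceq$ is witnessed by an undirected path $\pi : [0;\ell] \to W$. I would consider the reversed map $\pi' : [0;\ell] \to W$ defined by $\pi'(i) = \pi(\ell - i)$. Since $\dircnv{\preccurlyeq}$ is symmetric, $\pi'$ is again an undirected path, now from $\pi(\ell) = w_2$ to $\pi(0) = w_1$, and it visits precisely the same set of elements as $\pi$; hence it still satisfies $\invpeval{\calF}(\SET{\pi'(i)}) = \invpeval{\calF}(\SET{\pi'(j)})$ for all $i,j$, giving $(w_2,w_1) \in \lcceq$. For \emph{transitivity}, if $(w_1,w_2) \in \lcceq$ via $\pi'$ and $(w_2,w_3) \in \lcceq$ via $\pi''$, then since $\pi'$ ends in $w_2$ and $\pi''$ starts in $w_2$ the sequentialisation $\pi' \cdot \pi''$ is defined and is an undirected path from $w_1$ to $w_3$. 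Every element visited by $\pi'$ has valuation $\invpeval{\calF}(\SET{w_2})$ (because $w_2 = \pi'(\ell')$ is among them), and likewise every element visited by $\pi''$ has valuation $\invpeval{\calF}(\SET{w_2})$; therefore every element of $\pi' \cdot \pi''$ carries this same valuation, so $(w_1,w_3) \in \lcceq$.

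I do not expect a genuine obstacle here: the argument is entirely routine. The only points deserving a line of care are that a length-one (or, if one prefers, the constant length-zero) path is admissible as a witness for reflexivity, and that reversal and sequentialisation preserve the ``constant $\invpeval{\calF}(\SET{\cdot})$ along the path'' property — both being immediate from the definitions of undirected path, reversal, and sequentialisation recalled in Section~\ref{sec:BackAndNotat}.
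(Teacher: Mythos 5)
Your proof is correct; the paper itself omits the argument entirely (stating only that ``the relevant definitions lead straightforwardly'' to the claim), and your reflexivity/reversal/sequentialisation verification is precisely the routine argument being elided. The one detail worth the care you already gave it is that the length-one witness $(w,w)$ for reflexivity is legitimate only because $\preccurlyeq$ is reflexive, which it is by definition of a partial order.
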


\noindent
The encoding to the more ``abstract'' \lts{} is defined in Definition~\ref{def:LTSetaAbs} below.
The states of $\posToltsC(\calF)$ are the equivalence classes of $W$ modulo the equivalence relation 
$\lcceq$, i.e. $S=W{/}{\lcceq}$. With reference to Figure~\ref{fig:encodeB}, we obtain two states,
namely $\SET{\relint{D}, \relint{DE}}$ and $\SET{\relint{E},\relint{F},\relint{EF}}$, as shown in
Figure~\ref{subfig:abstractLTS}.
The set $L$ of transition labels includes the powerset  of the set of predicate letters in $\pws{\ap}$,
plus the two special labels $\sact, \dact$. 
In our example of Figure~\ref{fig:encodeB}, we have
$L_{\ref{fig:encodeB}}=\SET{\emptyset, \SET{\mathbf{blue}}, \SET{\mathbf{red}},\SET{\mathbf{blue},\mathbf{red}},\sact, \dact}$.

Similarly to Rule (PLC) for the definition of $\posToltsC(\calF)$, Rule (PL) induces a self-loop
in each state of $\posToltsA(\calF)$ (representing equivalence class) $[w]_{\lcceq}$.
This transition is labelled with the {\em set} of predicate letters $\invpeval{\calF}(w)$ satisfied by the elements
of the class. Note that, by definition of $\lcceq$, all the elements of such an equivalence class
satisfy the same set of predicate letters. 
Transitions labelled by $\dact$ (Rule (Down)) have the same interpretation as in the definition of $\posToltsC(\calF)$
while those labelled by $\sact$ (Rule (Step)) model a single step in $\dircnv{\preccurlyeq}$, regardless of 
there being ``a change'' or not.

\begin{defi}\label{def:LTSetaAbs}
  Given a finite poset model $\calF = (W,\preccurlyeq,\peval{\calF})$, and
  $\sact,\dact \notin \ap$, we define the \lts{}
  $\posToltsA(\calF) = (S,L,{\rightarrow})$ where
  \begin{itemize}
  \item the set $S$ of states is the quotient $W{/}{\lcceq}$ of~$W$
    modulo~$\lcceq$;
  \item the set~$L$ of labels is
    $\pws{\mkern1mu \ap} \cup \SET{\sact,\dact}$;
  \item the transition relation is the smallest relation on
    $W \times L \times W$ induced by the following transition rules:
       \[
      \begin{array}{c}
        \mathrm{(PL)}
        \ [w]_{\lcceq} \trans{\invpeval{\calF}(\SET{w})} [w]_{\lcceq}
        \medskip \\\\\\
        \mathrm{(Step)} \
        \sosrule{w \dircnv{\preccurlyeq} w'}{[w]_{\lcceq} \trans{\sact\phantom{\dact}}
        [w']_{\lcceq}}
        \qquad
        \mathrm{(Down)} \
        \sosrule{w \succcurlyeq w'}{[w]_{\lcceq} \trans{\dact} [w']_{\lcceq}}
      \end{array}
       \]
\end{itemize}
\closedefi
\end{defi}

The following theorem ensures that  any two elements $w_1$
and~$w_2$ of a finite poset model~$\calF$ are logically equivalent in
$\calF$ with respect to~\slcsE{} if and only if their equivalence
classes $[w_1]_{\lcceq}$ and~$[w_2]_{\lcceq}$ are strongly bisimilar
in~$\posToltsA(\calF)$.
The theorem uses the following lemma,
proven in Appendix~\ref{apx:prf:lem:LccImplSlcsEeq}:

\begin{lem}
  \label{lem:LccImplSlcsEeq}
  Given a finite poset model $\calF = (W,\preccurlyeq,\peval{\calF})$
  and $w_1, w_2 \in W$ the following holds: if $w_1 \lcceq w_2$, then
  $w_1 \slcsEeq w_2$.\qed
\end{lem}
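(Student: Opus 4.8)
The plan is to prove that $w_1 \lcceq w_2$ implies $w_1 \slcsEeq w_2$ by structural induction on \slcsE{} formulas, i.e.\ to show that $\calF, w_1 \models \form$ iff $\calF, w_2 \models \form$ whenever $w_1 \lcceq w_2$. The base case $\form = p$ is immediate: by Definition~\ref{def:LCCeq}, $w_1 \lcceq w_2$ entails $\invpeval{\calF}(\SET{w_1}) = \invpeval{\calF}(\SET{w_2})$ (take $i = 0$, $j = \ell$ in the definition), so $w_1 \in \peval{\calF}(p)$ iff $w_2 \in \peval{\calF}(p)$. The cases $\lneg \form$ and $\form_1 \land \form_2$ follow directly from the induction hypothesis, since logical connectives are evaluated pointwise.

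The real work is the modal case $\form = \eta(\form_1, \form_2)$, and this is where I expect the main obstacle. Assume $w_1 \lcceq w_2$ and $\calF, w_1 \models \eta(\form_1, \form_2)$; by symmetry of $\lcceq$ it suffices to derive $\calF, w_2 \models \eta(\form_1, \form_2)$. So there is a \plm-path $\pi_1 : [0;\ell] \to W$ with $\pi_1(0) = w_1$, $\calF, \pi_1(\ell) \models \form_2$, and $\calF, \pi_1(i) \models \form_1$ for all $i \in [0;\ell)$. The idea is to build a witnessing path from $w_2$ by \emph{prepending} a connecting path from $w_2$ to $w_1$ that stays entirely within the $\lcceq$-class of $w_1$, and then following $\pi_1$. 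Concretely, since $w_1 \lcceq w_2$, Definition~\ref{def:LCCeq} gives an undirected path $\rho$ from $w_2$ to $w_1$ all of whose elements share the valuation of $w_1$; in particular, by the induction hypothesis applied to $\form_1$, every element of $\rho$ satisfies $\form_1$ (note $w_1$ itself satisfies $\form_1$ because $\ell \geq 2$ forces $0 \in [0;\ell)$). Sequentialising, $\rho \cdot \pi_1$ is an undirected path from $w_2$ whose intermediate points all satisfy $\form_1$ and whose endpoint is $\pi_1(\ell)$, which satisfies $\form_2$.

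The subtlety is that $\rho \cdot \pi_1$ need not be a \plm-path: the very first step of $\rho$ need not be of the form $R(\pi(0),\pi(1))$, and more importantly concatenation can destroy the \plm-shape at the junction. Here I would invoke the reflexivity of $\calF$ (as a poset, $\preccurlyeq$ is reflexive, hence the associated Kripke frame is reflexive) together with the interchangeability lemmas. Since $\rho \cdot \pi_1$ is at least a \dwn-path from $w_2$ — one can arrange the final two steps to be $\rho \cdot \pi_1$ ending with $\cnv{\preccurlyeq}(\pi_1(\ell-1),\pi_1(\ell))$, which already holds because $\pi_1$ is a \plm-path — Lemma~\ref{lem:d2plm} yields a \plm-path $\pi_2$ from $w_2$ with the same starting and ending points and the same set of intermediate points, occurring in the same order. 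Hence the intermediate points of $\pi_2$ all satisfy $\form_1$ and $\pi_2$ ends at a point satisfying $\form_2$, so $\calF, w_2 \models \eta(\form_1, \form_2)$. Alternatively, and perhaps more cleanly, one can observe that Proposition~\ref{prop:interchangeableE} lets us work with \dwn-paths throughout when evaluating $\eta$-formulas, so it suffices to exhibit the \dwn-path $\rho \cdot \pi_1$ and verify its endpoint and intermediate-point conditions, bypassing any reshaping. I would present the proof using Proposition~\ref{prop:interchangeableE} as the cleanest route, with the induction hypothesis on $\form_1$ and $\form_2$ doing the rest.
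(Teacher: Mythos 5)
Your proof is correct and follows essentially the same route as the paper's: prepend the $\lcceq$-connecting path (all of whose elements satisfy $\form_1$ by the induction hypothesis, since $w_1$ itself does) to the witnessing \plm-path, observe the concatenation is a \dwn-path, and conclude either via Lemma~\ref{lem:d2plm} or Proposition~\ref{prop:interchangeableE}. The paper uses the former; your preferred shortcut through Proposition~\ref{prop:interchangeableE} is an equivalent cosmetic variation, as that proposition is itself a direct consequence of Lemma~\ref{lem:d2plm}.
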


\begin{thm}
  \label{thm:StrongEqLogeq}
  Let $\calF = (W,\preccurlyeq,\peval{\calF})$ be a finite poset
  model. For all $w_1,w_2 \in W$ it holds that
  $w_1 \slcsEeq^{\calF} w_2$ if and only if
  ${[w_1]_{\lcceq}} \seq^{\posToltsA(\calF)} {[w_2]_{\lcceq}}$.
\end{thm}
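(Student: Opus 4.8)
The plan is to prove the two directions separately, using the logical characterisation results already established for $\wplmbis$ together with Lemma~\ref{lem:LccImplSlcsEeq} and the shape of the encoding in Definition~\ref{def:LTSetaAbs}. Recall that by Theorem~\ref{theo:PMbisEqSLCSEeq} we have $w_1 \slcsEeq^{\calF} w_2$ iff $w_1 \wplmbis^{\calF} w_2$, so it suffices to relate $\wplmbis^{\calF}$ on $\calF$ with strong bisimilarity $\seq^{\posToltsA(\calF)}$ on the abstract \lts{}. The key observation driving the correspondence is that transitions out of a state $[w]_{\lcceq}$ of $\posToltsA(\calF)$ encode exactly the information that a weak \plm-bisimulation inspects: the $(\mathrm{PL})$ self-loop records the set $\invpeval{\calF}(\SET{w})$ of proposition letters (matching condition~(1) of Definition~\ref{def:WPLMBis}), the $(\mathrm{Step})$ transitions record single $\dircnv{\preccurlyeq}$-moves, and the $(\mathrm{Down})$ transitions mark the $\succcurlyeq$-endpoints of \dwn-paths. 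Since a \dwn-path of length~$2$ from $w_1$ is precisely a pair $u_1,d_1$ with $w_1 \dircnv{\preccurlyeq} u_1 \succcurlyeq d_1$, such a path corresponds to a two-step path $[w_1]_{\lcceq} \trans{\sact} [u_1]_{\lcceq} \trans{\dact} [d_1]_{\lcceq}$ in $\posToltsA(\calF)$, and conversely.

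For the direction ``$w_1 \slcsEeq w_2 \Rightarrow [w_1]_{\lcceq} \seq [w_2]_{\lcceq}$'': I would exhibit the relation $B = \ZET{([w_1]_{\lcceq},[w_2]_{\lcceq})}{w_1 \slcsEeq^{\calF} w_2}$ and show it is a strong bisimulation on $\posToltsA(\calF)$. This is well defined on classes because $w \lcceq w'$ implies $w \slcsEeq w'$ by Lemma~\ref{lem:LccImplSlcsEeq}. Given $[w_1]_{\lcceq} \seq [w_2]_{\lcceq}$ in $B$ and a transition from $[w_1]_{\lcceq}$, there are three cases by the transition rules. The $(\mathrm{PL})$ case is immediate: $w_1 \slcsEeq w_2$ forces $\invpeval{\calF}(\SET{w_1}) = \invpeval{\calF}(\SET{w_2})$, so both sides have the identically-labelled self-loop, and the target pair is again in $B$. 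For a $(\mathrm{Step})$ transition $[w_1]_{\lcceq} \trans{\sact} [u_1]_{\lcceq}$ one decomposes it: either $u_1$ satisfies the same letters as $w_1$ (then $w_1 \lcceq u_1$, so $[u_1]_{\lcceq} = [w_1]_{\lcceq}$ and we match with the self-$\sact$ obtained, e.g., from reflexivity via $(\mathrm{Step})$), or one uses the characteristic formula $\eta(\chi(w_1)\lor\chi(u_1),\chi(u_1))$ — in fact the cleanest route is to treat $(\mathrm{Step})$ and $(\mathrm{Down})$ together through \dwn-paths: use that $\calF,w_1 \models \eta(\chi(w_1)\lor\chi(u_1),\chi(d_1))$ whenever $w_1 \dircnv{\preccurlyeq} u_1 \succcurlyeq d_1$, transport this formula to $w_2$ by $w_1\slcsEeq w_2$, unfold to get a \dwn-path from $w_2$ whose every non-final node is $\slcsEeq$-equivalent to $w_1$ or $u_1$ and whose final node is $\slcsEeq d_1$, and then read this \dwn-path back as a sequence of $(\mathrm{Step})$ transitions ending in a $(\mathrm{Down})$, using Proposition~\ref{prop:chiE} to conclude the matching targets lie in $B$. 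The $(\mathrm{Down})$-only case $[w_1]_{\lcceq}\trans{\dact}[d_1]_{\lcceq}$ is the sub-case $u_1 = w_1$.

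For the converse ``$[w_1]_{\lcceq} \seq^{\posToltsA(\calF)} [w_2]_{\lcceq} \Rightarrow w_1 \slcsEeq w_2$'': I would show the relation $Z = \ZET{(w_1,w_2)}{[w_1]_{\lcceq} \seq^{\posToltsA(\calF)} [w_2]_{\lcceq}}$ is a weak \plm-bisimulation on $\calF$ and then invoke Theorem~\ref{theo:PMbisEqSLCSEeq} (via Lemma~\ref{lem:WFpBisIMPLEtaLog}). Condition~(1) follows because the $(\mathrm{PL})$ self-loop labelled $\invpeval{\calF}(\SET{w_1})$ must be matched on the $w_2$-side, and since that label is a set of proposition letters this forces $\invpeval{\calF}(\SET{w_1}) = \invpeval{\calF}(\SET{w_2})$. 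For condition~(2), take $u_1,d_1$ with $w_1 \dircnv{\preccurlyeq} u_1 \succcurlyeq d_1$; then $[w_1]_{\lcceq}\trans{\sact}[u_1]_{\lcceq}\trans{\dact}[d_1]_{\lcceq}$ in $\posToltsA(\calF)$. Strong bisimilarity gives $[u_2]_{\lcceq}$ with $[w_2]_{\lcceq}\trans{\sact}[u_2']_{\lcceq}$, $[u_1]_{\lcceq}\seq[u_2']_{\lcceq}$, and $[u_2']_{\lcceq}\trans{\dact}[d_2']_{\lcceq}$ with $[d_1]_{\lcceq}\seq[d_2']_{\lcceq}$; unfolding a $(\mathrm{Step})$ transition between classes into an actual undirected path inside each class (where every intermediate node satisfies the same letters, hence is $\lcceq$-, hence $\slcsEeq$-related to its endpoints) and then a single $\dircnv{\preccurlyeq}$-step, and finally a $(\mathrm{Down})$ step, produces a \dwn-path from $w_2$ — and by Proposition~\ref{prop:interchangeableE} / the definition one massages it into a \plm-path — along which every non-final node is $Z$-related to $w_1$ or to $u_1$ and the final node is $Z$-related to $d_1$. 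The main obstacle, and where care is needed, is this translation between $(\mathrm{Step})$-transitions on $\lcceq$-classes and genuine $\dircnv{\preccurlyeq}$-paths in $\calF$: one must show that a path of class-level $\sact$/$\dact$ moves can always be realised by a concrete \dwn-path (resp.\ \plm-path) in the poset whose intermediate nodes stay within the right classes, for which the key facts are that each $\lcceq$-class is $\dircnv{\preccurlyeq}$-connected by constant-label paths and that $\preccurlyeq$ being a partial order (with reflexivity) lets one insert copies and apply Lemmas~\ref{lem:pm2ud}--\ref{lem:d2plm} to normalise path shapes.
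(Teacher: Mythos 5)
Your proposal is correct, and one half of it follows the paper while the other half takes a genuinely different route. For the direction $w_1 \slcsEeq w_2 \Rightarrow [w_1]_{\lcceq} \seq [w_2]_{\lcceq}$ you do essentially what the paper does: exhibit the relation induced by $\slcsEeq$ on $\lcceq$-classes and verify the strong-bisimulation conditions case by case over the rules (PL), (Step), (Down), transporting characteristic $\eta$-formulas from $w_1$ to $w_2$ and using Proposition~\ref{prop:chiE} plus the ``first index where the class changes'' argument to extract a single matching transition. (Your gadget for the (Step) case, $\eta(\chi(w_1)\lor\chi(u_1),\chi(u_1))$ together with taking the first index $k$ at which $\chi(u_1)$ holds, is slightly simpler than the paper's nested formula $\eta(\chi(w_1),\eta(\chi(w_1)\lor\chi(w'_1),\chi(w'_1)))$ but does the same job: every node before index $k$ satisfies $\chi(w_1)$, hence lies in $[w_2]_{\lcceq}$, so $\pi(k-1)\dircnv{\preccurlyeq}\pi(k)$ yields the one required $\sact$-step, with the degenerate case $k=0$ handled by the reflexive self-loop.) For the converse direction the paper proves $[w_1]_{\lcceq}\seq[w_2]_{\lcceq}\Rightarrow w_1\slcsEeq w_2$ directly by structural induction on \slcsE{} formulas with a nested induction on the length of the witnessing \dwn-path; you instead show that $Z=\ZET{(w_1,w_2)}{[w_1]_{\lcceq}\seq[w_2]_{\lcceq}}$ is a weak \plm-bisimulation and invoke Lemma~\ref{lem:WFpBisIMPLEtaLog} (equivalently Theorem~\ref{theo:PMbisEqSLCSEeq}). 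This works and involves no circularity, since those results are established in Section~\ref{sec:WeakBis} independently of the encoding; your verification of condition~(2) --- unfolding $[w_2]_{\lcceq}\trans{\sact}[u'_2]_{\lcceq}\trans{\dact}[d'_2]_{\lcceq}$ into a concrete path $w_2\rightsquigarrow a\dircnv{\preccurlyeq}b\rightsquigarrow c\succcurlyeq e$ whose segments stay inside the respective classes, then normalising via Lemma~\ref{lem:d2plm} --- is exactly the right ``main obstacle'' and is handled correctly. Your route buys a shorter, more conceptual argument for this direction at the price of leaning on the Hennessy--Milner machinery; the paper's double induction is self-contained but longer.

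One caution: your remark that ``the cleanest route is to treat (Step) and (Down) together'' and to ``read this \dwn-path back as a sequence of (Step) transitions ending in a (Down)'' is misleading as stated, because strong bisimulation requires each single transition of $[w_1]_{\lcceq}$ to be matched by a single transition of $[w_2]_{\lcceq}$, not by a sequence. The argument is rescued only because the transported path's prefix up to the first class change collapses entirely into $[w_2]_{\lcceq}$, so that exactly one $\sact$- (or $\dact$-) step can be read off --- which is what your per-case first route, and the paper's Cases~B and~C2, actually do. In a written-out proof you should drop the ``sequence'' phrasing and keep only the first-index extraction.
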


\begin{proof}
  We first prove that if
  $[w_1]_{\lcceq} \seq^{\posToltsA(\calF)} [w_2]_{\lcceq}$ then
  $w_1 \slcsEeq^{\calF} w_2$. We proceed by induction on \slcsE{}
  formulas and consider only the case $\eta(\form_1,\form_2)$, since
  the other cases are straightforward. Suppose
  $[w_1]_{\lcceq} \seq^{\posToltsA(\calF)} [w_2]_{\lcceq}$ and
  $\calF,w_1 \models \eta(\form_1,\form_2)$. Since
  $\calF,w_1 \models \eta(\form_1,\form_2)$, there is (a \plm-path,
  and so, by Proposition~\ref{prop:interchangeableE}) a
  \dwn-path~$\pi_1$ from~$w_1$ of some length $\ell_1 \geqslant 1$
  such that $\calF,\pi_1(\ell_1) \models \form_2$ and
  $\calF,\pi_1(i) \models \form_1$ for all $i \in [0;\ell_1)$. At this
  point, we use induction on~$\ell_1$, together with structural
  induction on the formulas, for showing that also $\calF,w_2 \models
  \eta(\form_1,\form_2)$ holds.\\[1em]
  \textbf{Base case:} $\ell_1 = 1$.\\
  In this case we have $\calF,w_1 \models\form_1$ and
  $\calF,\pi_1(1) \models\form_2$, with $w_1 \succcurlyeq \pi_1(1)$.
  Moreover, by the Induction Hypothesis on formulas, we also have
  $\calF,w_2 \models \form_1$.  In addition, by Rule (Down), we get
  $[w_1]_{\lcceq}\trans{\dact} [\pi_1(1)]_{\lcceq}$. Since
  $[w_1]_{\lcceq} \seq [w_2]_{\lcceq}$ by hypothesis, we also get
  $[w_2]_{\lcceq}\trans{\dact} [w'_2]_{\lcceq}$, for some
  $[w'_2]_{\lcceq}$ with $[w'_2]_{\lcceq} \seq
  [\pi_1(1)]_{\lcceq}$. Note that, by definition of ${\lcceq}$ and
  since $[w_2]_{\lcceq}\trans{\dact} [w'_2]_{\lcceq}$, there is a
  path~$\pi'_2$ from~$w_2$ of some length~$\ell'_2$ such that
  $\pi'_2(j) \lcceq w_2$ for all $j \in [0;\ell'_2]$ and
  $\pi'_2(\ell'_2) \succcurlyeq w''_2$, with
  $w''_2 \in [w'_2]_{\lcceq}$.  Recalling that
  $\calF,w_2 \models\form_1$, by Lemma~\ref{lem:LccImplSlcsEeq}, we
  also get that $\calF,\pi'_2(j) \models\form_1$ for all
  $j \in [0;\ell'_2]$.  Recalling also that
  $\calF,\pi_1(1) \models\form_2$, again by the Induction Hypothesis
  on formulas, from $[w'_2]_{\lcceq} \seq [\pi_1(1)]_{\lcceq}$, we get
  $\calF,w'_2 \models\form_2$ and, by Lemma~\ref{lem:LccImplSlcsEeq},
  we also get $\calF,w''_2 \models\form_2$. Consider now path
  $\pi_2:[0;\ell'_2+1] \to W$ defined as follows:
\[
\pi_2(j) =
\left\{
\begin{array}{l l}
\pi'_2(j) & \text{if  $j \in [0;\ell'_2]$}, \smallskip \\
w''_2 & \text{if $j = \ell'_2+1$}.
\end{array}
\right.
\]
Clearly, $\pi_2$~is a \dwn-path from~$w_2$ since $\pi'_2$~is an
undirected path and $\pi_2(\ell'_2) \succcurlyeq
\pi_2(\ell'_2+1)$. Furthermore, we have shown above that
$\calF,\pi_2(\ell'_2+1) \models\form_2$ and
$\calF,\pi_2(j) \models \form_1$ for all $j \in [0;\ell'_2+1)$.

Thus, we have that
$\calF,w_2 \models \eta(\form_1,\form_2)$, witnessed by~$\pi_2$.\\[1em]
\textbf{Induction step:} We assume the assertion holds for $\ell_1 =
n$, for $n \geqslant 1$ and we show it holds for $\ell_1 = n+1$.\\
Since $w_1 \dircnv{\preccurlyeq} \pi_1(1)$, by Rule (Step), we have
that $[w_1]_{\lcceq} \trans{\sact} [\pi_1(1)]_{\lcceq}$, and since, by
hypothesis, $[w_1]_{\lcceq} \seq [w_2]_{\lcceq}$, we also know that
$[w_2]_{\lcceq} \trans{\sact} [w'_2]_{\lcceq}$ for some~$w'_2$ such
that $[w'_2]_{\lcceq} \seq [\pi_1(1)]_{\lcceq}$. Furthermore,
$\calF,\pi_1(1) \models \eta(\form_1,\form_2)$ since
$\ell_1\geqslant 2$ and that this is witnessed by $\pi_1 \uparrow 1$,
which is a \dwn-path of length~$n$.  Thus, by the Induction Hypothesis
on~$\ell_1$, we get that $\calF,w'_2 \models \eta(\form_1,\form_2)$
since $[w'_2]_{\lcceq} \seq [\pi_1(1)]_{\lcceq}$ (see above).
From $[w_2]_{\lcceq} \trans{\sact} [w'_2]_{\lcceq}$, by Rule (Step),
we know that $w \in [w_2]_{\lcceq}$ and $w' \in [w'_2]_{\lcceq}$ exist
such that $w \dircnv{\preccurlyeq} w'$. Since $w \in [w_2]_{\lcceq}$
an undirected path~$\pi'_2$ exists from~$w_2$ to~$w$, of some
length~$\ell'_2$, such that $\pi'_2(j) \lcceq w_2$ for all
$j\in [0;\ell'_2]$. By the Induction Hypothesis on formulas, we know
that $\calF,w_2 \models \form_1$, and so, by
Lemma~\ref{lem:LccImplSlcsEeq}, we get also
$\calF,\pi'_2(j) \models \form_1$ for all $j \in
[0;\ell'_2]$. Moreover, since
$\calF,w'_2 \models \eta(\form_1,\form_2)$ (see above) and
$w' \lcceq w'_2$, again by Lemma~\ref{lem:LccImplSlcsEeq}, we get
$\calF,w' \models \eta(\form_1,\form_2)$. This means that there is a
\plm-path~$\pi''_2$ from~$w'$ of some length~$\ell''_2$ witnessing
$\calF,w' \models \eta(\form_1,\form_2)$. Define~$\pi_2$ as follows:
$\pi'_2\cdot (w,w') \cdot \pi''_2$. It is easy to see that $\pi_2$ is
a \dwn-path witnessing $\calF,w_2 \models \eta(\form_1,\form_2)$.

Now we prove that if $w_1 \slcsEeq^{\calF} w_2$ then
$[w_1]_{\lcceq} \seq^{\posToltsA(\calF)} [w_2]_{\lcceq}$. We do this
by showing that the following binary relation~$B$ on~$W$ is a strong
bisimulation:
\begin{displaymath}
  B =
  \ZET{(s_1,s_2) \in S \times S}{%
    \text{there are $w_1 \in s_1$, $w_2 \in s_2$ such that $w_1
      \slcsEeq w_2$}
    }.
\end{displaymath}
Let, without loss of generality, $s_1 = [w_1]_{\lcceq}$ and
$s_2 = [w_2]_{\lcceq}$, for some $w_1,w_2 \in W$ with
$w_1 \slcsEeq w_2$ and suppose $B([w_1]_{\lcceq}, [w_2]_{\lcceq})$,
with $w_1 \slcsEeq w_2$.
We distinguish three cases:\\[1em]
\textbf{Case A:} $[w_1]_{\lcceq} \trans{\alpha} [w_1']_{\lcceq}$ with
$\alpha \in \pws{\ap}$.\\ 
If $[w_1]_{\lcceq} \trans{\alpha} [w_1']_{\lcceq}$ for some
$\alpha \in \pws{\ap}$ and $w'_1\in W$, then, by Rule (PL), we know
that $ [w_1']_{\lcceq} = [w_1]_{\lcceq}$. Furthermore, since
$w_1 \slcsEeq w_2$, we also know that
$\invpeval{\calF}(\SET{w_2})= \invpeval{\calF}(\SET{w_1}) = \alpha$.
In addition, again by Rule (PL), we get that $[w_2]_{\lcceq} \trans{\alpha} [w_2]_{\lcceq}$ and, by hypothesis $B([w_1]_{\lcceq}, [w_2]_{\lcceq})$. \\[1em]
\textbf{Case B:} $[w_1]_{\lcceq} \trans{\dact} [w_1']_{\lcceq}$ \\
If $[w_1]_{\lcceq} \trans{\dact} [w_1']_{\lcceq}$ for some
$w'_1\in W$, then, by Rule (Down) there are $w \in [w_1]_{\lcceq}$ and
$w' \in [w'_1]_{\lcceq}$ such that $w \succcurlyeq w'$. Note that
$(w,w')$ is a \dwn-path witnessing
$\calF,w\models \eta(\chi(w),\chi(w'))$, where $\chi$~is as in
Definition~\ref{def:chiE} on page~\pageref{def:chiE}.  Since
$w \lcceq w_1$, we have that $\calF,w_1\models \eta(\chi(w),\chi(w'))$
holds, by Lemma~\ref{lem:LccImplSlcsEeq}. Moreover, since, by
hypothesis, $w_1 \slcsEeq w_2$, we also have
$\calF,w_2\models \eta(\chi(w),\chi(w'))$. Then a \plm-path
$\pi:[0;\ell] \to W$ exists from~$w_2$ such that
$\calF,\pi(\ell) \models \chi(w')$ and $\calF,\pi(j) \models \chi(w)$
for all $j \in [0;\ell)$.  This in turn, by
Proposition~\ref{prop:chiE}, means that $\pi(\ell) \slcsEeq w'$ and
$\pi(j) \slcsEeq w$ for all $j \in [0;\ell)$.
By Lemma~\ref{lem:LccImplSlcsEeq}, since $w' \lcceq w'_1$, we get
$w' \slcsEeq w'_1$, and by transitivity, since $\pi(\ell) \slcsEeq w'$
(see above), we also have $\pi(\ell) \slcsEeq w'_1$.  Similarly, we
get $\pi(j) \slcsEeq w \slcsEeq w_1$, which implies
$\invpeval{\calF}(\SET{\pi(j)})= \invpeval{\calF}(\SET{w_1})$, for all
$j\in [0;\ell)$.  Recall that $w_1 \slcsEeq w_2$, which implies
$\invpeval{\calF}(w_2)= \invpeval{\calF}(\SET{w_1})$ and so we get
also $\invpeval{\calF}(\SET{\pi(j)})= \invpeval{\calF}(\SET{w_2})$,
for all $j\in [0;\ell)$. In addition, for all $j\in [0;\ell)$ we have
that $\pi|[0;j]$ connects $\pi(0) = w_2$ to $\pi(j)$. This means that,
for all $j \in [0;\ell)$,
$\pi(j) \in [w_2]_{\lcceq} = [\pi(\ell-1)]_{\lcceq}$ and since
$\pi(\ell-1) \succcurlyeq \pi(\ell)$, by Rule (Down) we deduce
$[\pi(\ell-1)]_{\lcceq} \trans{\dact} [\pi(\ell)]_{\lcceq}$, that is
$[w_2]_{\lcceq} \trans{\dact} [\pi(\ell)]_{\lcceq}$.  Recall that
$\pi(\ell) \slcsEeq w'_1$, so that, by definition of relation $B$, we
finally get $B([w'_1]_{\lcceq},[\pi(\ell)]_{\lcceq})$.\\[1em]
\textbf{Case C:}
$[w_1]_{\lcceq} \trans{\sact} [w_1']_{\lcceq}$ \\
Suppose, finally, that $[w_1]_{\lcceq} \trans{\sact} [w_1']_{\lcceq}$
for some $w'_1\in W$.
We distinguish two cases:\\
\textbf{Case C1:} $w'_1 \in [w_1]_{\lcceq}$. In this case, by
Lemma~\ref{lem:LccImplSlcsEeq}, we have also $w'_1 \slcsEeq
w_1$. Furthermore, $w_1 \slcsEeq w_2$ by hypothesis, thus we get
$w'_1 \slcsEeq w_2$. But then, since $w_2 \dircnv{\preccurlyeq} w_2$,
by Rule (Step), we know that
$[w_2]_{\lcceq} \trans{\sact} [w_2]_{\lcceq}$ and since
$w'_1 \slcsEeq w_2$, by definition of relation~$B$, we
finally get $B([w'_1]_{\lcceq},[w_2]_{\lcceq})$. \\
\textbf{Case C2:} $w'_1 \notin [w_1]_{\lcceq}$. We know there are
$w \in [w_1]_{\lcceq}$ and $w' \in [w'_1]_{\lcceq}$ such that
$w \dircnv{\preccurlyeq} w'$. Since $w \lcceq w_1$, then
$\invpeval{\calF}(\SET{w}) = \invpeval{\calF}(\SET{w_1})$ and since
$w' \lcceq w'_1$, then
$\invpeval{\calF}(\SET{w'}) = \invpeval{\calF}(\SET{w'_1})$.
Furthermore, since $w \dircnv{\preccurlyeq} w'$, there is path
$(w,w')$ connecting $w$ with $w'$.  So there is a path connecting
$w_1$ to $w'_1$ and if
$\invpeval{\calF}(\SET{w_1}) = \invpeval{\calF}(\SET{w'_1})$ would
hold, it could not be that $w'_1 \notin [w_1]_{\lcceq}$.
Consequently, it must be
$\invpeval{\calF}(\SET{w_1}) \neq \invpeval{\calF}(\SET{w'_1})$,
which in turn implies $w_1 \not\slcsEeq w'_1$. We note that the
following holds:
\[
\calF,w_1 \models \eta(\chi(w_1),\eta(\chi(w_1) \lor \chi(w'_1),\chi(w'_1)))
\]
and, since $w_1 \slcsEeq w_2$ we also have
\[
\calF,w_2 \models \eta(\chi(w_1),\eta(\chi(w_1) \lor \chi(w'_1),\chi(w'_1))).
\]
Let~$\pi$ be a \plm-path from~$w_2$ witnessing the above formula and
let~$k$ be the first index such that
$\calF,\pi(k) \models \chi(w'_1)$.  We have that, for all
$j \in [0;k)$, $\calF,\pi(j) \models \chi(w_1)$ and $\pi|[0;j]$
connects $\pi(0)=w_2$ to $\pi(j)$. Furthermore, for all such~$j$, we
have $\pi(j) \slcsEeq w_1$, by Proposition~\ref{prop:chiE}, which
entails
$\invpeval{\calF}(\SET{\pi(j)}) = \invpeval{\calF}(\SET{w_1})$. Thus
$\pi(j) \in [w_2]_{\lcceq}$ for all $j\in [0;k)$ and since
$\pi(k-1) \dircnv{\preccurlyeq} \pi(k)$ we have, by Rule (Step)
$ [w_2]_{\lcceq} \trans{\sact} [\pi(k)]_{\lcceq}$.  Finally, recalling
that, again by Proposition~\ref{prop:chiE}, $w'_1 \slcsEeq\pi(k)$, we
get $B([w'_1]_{\lcceq}, [\pi(k)]_{\lcceq})$.
\end{proof}

The following theorem ensures that $[w_1]_{\lcceq}$
and~$[w_2]_{\lcceq}$ are strongly bisimilar in~$\posToltsA(\calF)$ if
and only if $w_1$ and~$w_2$ are branching bisimilar
in~$\posToltsC(\calF)$.  The theorem uses the following
lemma, proven in Appendix~\ref{apx:prf:lem:StrongLab}:

\begin{lem}
  \label{lem:StrongLab}
  Consider a finite poset model
  $\calF = (W,\preccurlyeq,\peval{\calF})$. Then for all
  $w_1, w_2 \in W$ the following holds: if
  $[w_1]_{\lcceq} \seq^{\posToltsA(\calF)} [w_2]_{\lcceq}$, then
  $\invpeval{\calF}(\SET{w_1}) = \invpeval{\calF}(\SET{w_2})$. \qed
\end{lem}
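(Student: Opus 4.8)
The plan is to observe that in $\posToltsA(\calF)$ the proposition-letter information of a state is recorded by a single self-loop (coming from Rule (PL)) whose label is a set in $\pws{\ap}$, and that no other transition can carry such a label; a strong bisimulation is then forced to match these loops, which pins down the valuations.

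First I would note that the labelling is well defined on states: by Definition~\ref{def:LCCeq}, $w \lcceq w'$ implies $\invpeval{\calF}(\SET{w}) = \invpeval{\calF}(\SET{w'})$, so for a state $c = [w]_{\lcceq}$ of $\posToltsA(\calF)$ the set $\alpha_c := \invpeval{\calF}(\SET{w})$ is independent of the chosen representative $w \in c$. Inspecting the three transition rules of Definition~\ref{def:LTSetaAbs}, I would then record two facts: (i) the only transitions whose label lies in $\pws{\ap}$ are the loops $c \trans{\alpha_c} c$ produced by Rule (PL), since (Step) and (Down) only ever emit the symbols $\sact$ and $\dact$, which — by the disjointness built into the label set $\pws{\ap} \cup \SET{\sact,\dact}$ of Definition~\ref{def:LTSetaAbs} — are not elements of $\pws{\ap}$; and (ii) Rule (PL) contributes exactly one self-loop at each state $c$, namely the one labelled $\alpha_c$.

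Next I would take a strong bisimulation $B$ with $B([w_1]_{\lcceq},[w_2]_{\lcceq})$, write $c_i = [w_i]_{\lcceq}$, and apply the matching clause of Definition~\ref{def:StrongBisEq} to the transition $c_1 \trans{\alpha_{c_1}} c_1$: this yields a state $s'$ with $c_2 \trans{\alpha_{c_1}} s'$ and $B(c_1,s')$. Since $\alpha_{c_1} \in \pws{\ap}$, fact (i) forces $c_2 \trans{\alpha_{c_1}} s'$ to be a Rule (PL) loop at $c_2$, and fact (ii) then gives $\alpha_{c_1} = \alpha_{c_2}$ (and $s' = c_2$). Unfolding the definitions of $\alpha_{c_1}$ and $\alpha_{c_2}$ yields $\invpeval{\calF}(\SET{w_1}) = \invpeval{\calF}(\SET{w_2})$, which is the claim. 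There is no serious obstacle here; the only points needing (routine) care are the disjointness of $\SET{\sact,\dact}$ from $\pws{\ap}$ and the uniqueness of the (PL) loop per state, both immediate from Definition~\ref{def:LTSetaAbs}.
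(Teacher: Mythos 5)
Your proposal is correct and follows essentially the same route as the paper's proof: fire the Rule~(PL) self-loop at $[w_1]_{\lcceq}$, use the strong-bisimulation matching clause to obtain a transition from $[w_2]_{\lcceq}$ with the same label, and observe that only a (PL) loop can carry a label in $\pws{\ap}$, forcing $\invpeval{\calF}(\SET{w_1}) = \invpeval{\calF}(\SET{w_2})$. Your version merely spells out the well-definedness of the state labelling and the disjointness of $\SET{\sact,\dact}$ from $\pws{\ap}$, which the paper leaves implicit.
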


\begin{thm}
  \label{thm:StrongEqBranching}
  Let $\calF = (W,\preccurlyeq,\peval{\calF})$ be a finite poset
  model.  For all $w_1, w_2 \in W$ it holds that
  ${[w_1]_{\lcceq}} \seq^{\posToltsA(\calF)} {[w_2]_{\lcceq}}$ if and
  only if $w_1 \beq^{\posToltsC(\calF)} w_2$.
\end{thm}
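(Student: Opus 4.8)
The plan is to prove each implication by exhibiting an explicit bisimulation, leaning on one structural fact: in $\posToltsC(\calF)$ the reachability relation $\trans{\tau^*}$ coincides with $\lcceq$. Indeed every $\tau$-transition comes from rule (TAU), hence links $\dircnv{\preccurlyeq}$-adjacent states of equal valuation, so $w \trans{\tau^*} w'$ holds iff there is an undirected path from $w$ to $w'$ along which $\invpeval{\calF}$ is constant, i.e.\ iff $w \lcceq w'$. Consequently $\posToltsA(\calF)$ is essentially the $\tau$-quotient of $\posToltsC(\calF)$, up to the relabellings $p \rightsquigarrow \invpeval{\calF}(\SET{\cdot})$ and $\SET{\tau,\cact} \rightsquigarrow \sact$: a state of $\posToltsA(\calF)$ is a maximal $\tau$-connected region of $\posToltsC(\calF)$, the $\pws{\ap}$-labelled self-loop of rule (PL) bundles the $p$-loops of rule (PLC), an $\sact$-step either stays inside a region (an inert $\tau$-move) or mirrors a segment $w \trans{\tau^*} v \trans{\cact} v'$, and a $\dact$-step mirrors $w \trans{\tau^*} v \trans{\dact} v'$. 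So the theorem amounts to the familiar correspondence between branching bisimilarity and strong bisimilarity on the $\tau$-quotient, which I would establish directly.

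For the direction $w_1 \beq^{\posToltsC(\calF)} w_2 \Rightarrow {[w_1]_{\lcceq}} \seq^{\posToltsA(\calF)} {[w_2]_{\lcceq}}$ I would first note the auxiliary fact that branching bisimilar states of $\posToltsC(\calF)$ have the same valuation: matching a loop $w_1 \trans{p} w_1$ of rule (PLC) gives $w_2 \trans{\tau^*} \bar w_2 \trans{p} w_2'$, which by rule (PLC) forces $\bar w_2 \in \peval{\calF}(p)$, and since $\tau$-steps preserve valuation $p \in \invpeval{\calF}(\SET{w_2})$; symmetry gives equality. Then I would verify that $R = \ZET{({[v_1]_{\lcceq}},{[v_2]_{\lcceq}})}{v_1 \beq^{\posToltsC(\calF)} v_2}$ (well defined, since $\lcceq$-equivalent states are branching bisimilar via inert $\tau$-paths) is a strong bisimulation on $\posToltsA(\calF)$. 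The case of a $\pws{\ap}$-labelled transition is handled by rule (PL) and the auxiliary fact; for an $\sact$- or $\dact$-transition one unfolds it back into a segment $w_1 \trans{\tau^*} v \trans{\lambda} v'$ of $\posToltsC(\calF)$ with $\lambda \in \SET{\cact,\dact}$ (using $v \lcceq w_1$), invokes the stuttering property of branching bisimilarity~\cite{GlW96} to obtain $w_2 \trans{\tau^*} \bar v_2 \trans{\lambda} v_2'$ with $v \beq \bar v_2$ and $v' \beq v_2'$, and re-packages the endpoints through rule (Step)/(Down) into a matching $\posToltsA(\calF)$-transition ${[w_2]_{\lcceq}} \trans{\mu} {[v_2']_{\lcceq}}$ with $({[w_1']_{\lcceq}},{[v_2']_{\lcceq}}) \in R$; the degenerate case of an inert $\sact$-loop is matched by the reflexive $\sact$-loop at ${[w_2]_{\lcceq}}$. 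Symmetry closes the argument.

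For the converse I would show that $B = \ZET{(v_1,v_2)}{{[v_1]_{\lcceq}} \seq^{\posToltsA(\calF)} {[v_2]_{\lcceq}}}$ is a branching bisimulation on $\posToltsC(\calF)$. Given $(v_1,v_2) \in B$ and $v_1 \trans{\lambda} u_1$: a $\tau$-step (rule (TAU)) keeps $u_1$ in ${[v_1]_{\lcceq}}$, giving the inert clause; a $p$-loop (rule (PLC)) lifts to the (PL)-loop of ${[v_1]_{\lcceq}}$, which by $\seq^{\posToltsA(\calF)}$ forces $\invpeval{\calF}(\SET{v_2}) = \invpeval{\calF}(\SET{v_1})$ and hence $v_2 \trans{p} v_2$; a $\cact$- or $\dact$-step lifts to an $\sact$- resp.\ $\dact$-transition of ${[v_1]_{\lcceq}}$, matched by $\seq^{\posToltsA(\calF)}$ to ${[v_2]_{\lcceq}} \trans{\mu} {[u_2']_{\lcceq}}$ with ${[u_1]_{\lcceq}} \seq^{\posToltsA(\calF)} {[u_2']_{\lcceq}}$, which, unfolded via rule (Step)/(Down), provides $a \in {[v_2]_{\lcceq}}$, $a' \in {[u_2']_{\lcceq}}$ with $a \dircnv{\preccurlyeq} a'$ resp.\ $a \succcurlyeq a'$, so that $v_2 \trans{\tau^*} a$ (since $a \lcceq v_2$) and $a \trans{\lambda} a'$; for $\lambda = \cact$ the required change of valuation $\invpeval{\calF}(\SET{a}) \neq \invpeval{\calF}(\SET{a'})$ follows because otherwise ${[a]_{\lcceq}} = {[a']_{\lcceq}}$ would yield ${[u_1]_{\lcceq}} \seq^{\posToltsA(\calF)} {[v_1]_{\lcceq}}$ and thus $\invpeval{\calF}(\SET{u_1}) = \invpeval{\calF}(\SET{v_1})$ by Lemma~\ref{lem:StrongLab}, contradicting $v_1 \trans{\cact} u_1$. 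In the last case the non-inert clause then holds with intermediate state $a$ and final state $a'$, since $B(v_1,a)$ and $B(u_1,a')$. Symmetry closes this too, and $(w_1,w_2) \in B$ gives the claim.

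The main obstacle is the $\tau$-abstraction: a single move of $\posToltsA(\calF)$ hides an arbitrarily long $\tau$-segment of $\posToltsC(\calF)$, so faithfully translating transitions back and forth between the two levels is the delicate step, and the $(\Rightarrow)$ direction genuinely needs the stuttering property of branching bisimilarity (that $s \beq s'$ and $s \trans{\tau^*}{\cdot}\trans{\lambda}u$ with $\lambda \neq \tau$ entail $s' \trans{\tau^*}{\cdot}\trans{\lambda}u'$ with $u \beq u'$). A secondary nuisance is the bookkeeping forced by the relabellings $\ap \rightsquigarrow \pws{\ap}$ and $\SET{\tau,\cact}\rightsquigarrow\sact$: one must consistently argue that an $\sact$-edge witnesses a genuine change of valuation exactly when it leaves the $\lcceq$-class, which is precisely where Lemma~\ref{lem:StrongLab} and its concrete analogue for $\beq^{\posToltsC(\calF)}$ are used.
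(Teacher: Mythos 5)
Your proposal is correct and follows essentially the same route as the paper: both directions are established by exhibiting the same explicit witnessing relations (your $R$ is the paper's $B_A$, your $B$ is the paper's $B_C$), with the same case analysis over the labels, the same reliance on Lemma~\ref{lem:StrongLab}, and the same unfolding of $\sact$/$\dact$-moves into $\tau^*$-segments followed by a $\cact$/$\dact$-step, which is exactly the stuttering property the paper invokes implicitly. The only cosmetic differences are your up-front framing of $\posToltsA(\calF)$ as the $\tau$-quotient of $\posToltsC(\calF)$ and your by-contradiction argument for the valuation change in the $\cact$-case, where the paper computes the inequality directly.
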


\begin{proof}
  We first prove that if
  $[w_1]_{\lcceq} \seq^{\posToltsA(\calF)} [w_2]_{\lcceq}$ then
  $w_1 \beq^{\posToltsC(\calF)} w_2$. We show that the following
  relation is a branching bisimulation:
\[
B_C = \ZET{(w_1,w_2) \in {W \times W}}{[w_1]_{\lcceq}
  \seq^{\posToltsA(\calF)} [w_2]_{\lcceq}}. 
\]
Let us assume $B_C(w_1,w_2)$. We have to consider a few cases: \\[1em]
\textbf{Case A:} $w_1\trans{p}w_1$. \\
If $w_1\trans{p}w_1$, then, by Rule~(PLC), we have
$p \in \invpeval{\calF}(\SET{w_1})$. By definition of~$B_C$ and by
hypothesis we know that $[w_1]_{\lcceq} \seq [w_2]_{\lcceq}$ and so,
by Lemma~\ref{lem:StrongLab}, we get
$\invpeval{\calF}(\SET{w_1}) = \invpeval{\calF}(\SET{w_2})$.  It
follows then that $p \in \invpeval{\calF}(\SET{w_2})$ and, again by
Rule~(PLC), we finally get $w_2\trans{p}w_2$, which is the required
mimicking step since $B(w_1,w_2)$. \\[1em]
\textbf{Case B:} $w_1\trans{\tau}w'_1$. \\
If $w_1\trans{\tau}w'_1$ for some $w'_1 \in W$, then, by Rule~({TAU}),
we know that $w_1 \dircnv{\preccurlyeq} w'_1$, with
$\invpeval{\calF}(\SET{w_1}) = \invpeval{\calF}(\SET{w'_1})$, which,
by definition of $\lcceq$, means $[w'_1]_{\lcceq} = [w_1]_{\lcceq}$
and since $[w_1]_{\lcceq} \seq^{\posToltsA(\calF)} [w_2]_{\lcceq}$ by
definition of $B_C$, given that $B_C(w_1,w_2)$, we get
$[w'_1]_{\lcceq} \seq^{\posToltsA(\calF)} [w_2]_{\lcceq}$. This, in
turn, again by definition of $B_C$, means $B_C(w'_1,w_2)$.\\[1em]
\textbf{Case C:} $w_1\trans{\cact}w'_1$. \\
If $w_1\trans{\cact}w'_1$ for some $w'_1\in W$, then, by Rule (CNG),
we know that $w_1 \dircnv{\preccurlyeq} w'_1$, with
$\invpeval{\calF}(\SET{w_1}) \not= \invpeval{\calF}(\SET{w'_1})$, and,
by Rule (Step), we have $[w_1]_{\lcceq} \trans{\sact}[w'_1]_{\lcceq}$.
Since, by definition of $B_C$ and by hypothesis,
$[w_1]_{\lcceq} \seq^{\posToltsA(\calF)} [w_2]_{\lcceq}$, we also have
$[w_2]_{\lcceq} \trans{\sact}[w'_2]_{\lcceq}$ for some
$[w'_2]_{\lcceq} \seq^{\posToltsA(\calF)} [w'_1]_{\lcceq}$.  From
$[w_2]_{\lcceq} \trans{\sact}[w'_2]_{\lcceq}$, by Rule (Step), we know
there are $w_3 \in [w_2]_{\lcceq} $ and $w'_3 \in [w'_2]_{\lcceq} $
such that $w_3 \dircnv{\preccurlyeq} w'_3$.  By
Lemma~\ref{lem:StrongLab}, since
$[w_1]_{\lcceq} \seq^{\posToltsA(\calF)} [w_2]_{\lcceq}$ by hypothesis
and $[w'_1]_{\lcceq} \seq^{\posToltsA(\calF)} [w'_2]_{\lcceq}$ (see
above), we have
$\invpeval{\calF}(\SET{w_1}) = \invpeval{\calF}(\SET{w_2})$ and
$\invpeval{\calF}(\SET{w'_1}) = \invpeval{\calF}(\SET{w'_2})$ and
since $\invpeval{\calF}(\SET{w_1}) \not= \invpeval{\calF}(\SET{w'_1})$
(see above), we get
$\invpeval{\calF}(\SET{w_2})=\invpeval{\calF}(\SET{w_1})
\not=\invpeval{\calF}(\SET{w'_1}) = \invpeval{\calF}(\SET{w'_2})$.
Consequently, since $w_3 \in [w_2]_{\lcceq}$ and
$w'_3 \in [w'_2]_{\lcceq}$, we also finally get that
$\invpeval{\calF}(\SET{w_3}) \not= \invpeval{\calF}(\SET{w'_3})$.
Thus, by rule (CNG), we know that $w_3\trans{\cact}w'_3$.  Now, since
$w_3 \in [w_2]_{\lcceq} $, by definition of $\lcceq$ and by
construction of $\posToltsC(\calF)$ we know there are
$s_0, \ldots s_n \in W$ with $s_0=w_2$, $s_n = w_3$ such that
$s_i \trans{\tau} s_{i+1}$ and $s_{i+1} \trans{\tau} s_i$, for all
$i \in [0;n)$.  We note that $B_C(w_1,s_i)$ for all $i \in [0;n]$.  In
fact for each $i \in [0;n]$ we have that
$[s_i]_{\lcceq} = [w_2]_{\lcceq}$ by definition of $\lcceq$ and we
also know that
$[w_2]_{\lcceq} \seq^{\posToltsA(\calF)} [w_1]_{\lcceq}$, since
$B_C(w_1,w_2)$ by hypothesis.  Thus we get
$[s_i]_{\lcceq} \seq^{\posToltsA(\calF)} [w_1]_{\lcceq}$,
i.e. $B_C(w_1,s_i)$.  Furthermore, we also note that $B_C(w'_1,w'_3)$.
In fact $[w'_3]_{\lcceq} = [w'_2]_{\lcceq} $, since
$w'_3 \in [w'_2]_{\lcceq} $.  Furthermore,
$[w'_2]_{\lcceq} \seq^{\posToltsA(\calF)} [w'_1]_{\lcceq}$ (see
above).  So, we get
$[w'_3]_{\lcceq} \seq^{\posToltsA(\calF)} [w'_1]_{\lcceq}$,
i.e. $B_C(w'_1,w'_3)$. In conclusion, we have that if
$w_1\trans{\cact}w'_1$ for some $w'_1\in W$, then
$w_2=s_0 \trans{\tau} s_1 \trans{\tau} \ldots \trans{\tau} s_n = w_3
\trans{\cact} w'_3$ with
$B_C(w'_1,w'_3)$ and $B_C(w_1,s_i)$ for all $i \in [0;n]$.\\[1em]
\textbf{Case D:} $w_1\trans{\dact}w'_1$.\\
If $w_1\trans{\dact}w'_1$ for some $w'_1\in W$, then, by Rule (DWN),
we know that $w_1 \succcurlyeq w'_1$, and, by Rule ({Down}), we have
$[w_1]_{\lcceq} \trans{\dact}[w'_1]_{\lcceq}$.  Since, by definition
of~$B_C$ and by hypothesis,
$[w_1]_{\lcceq} \seq^{\posToltsA(\calF)} [w_2]_{\lcceq}$, we also have
$[w_2]_{\lcceq} \trans{\dact}[w'_2]_{\lcceq}$ for some
$[w'_2]_{\lcceq} \seq^{\posToltsA(\calF)} [w'_1]_{\lcceq}$.
From $[w_2]_{\lcceq} \trans{\dact}[w'_2]_{\lcceq}$, by Rule (Down), we
know there are $w_3 \in [w_2]_{\lcceq} $ and
$w'_3 \in [w'_2]_{\lcceq} $ such that $w_3 \succcurlyeq w'_3$ and, by
Rule (DWN) we know that $w_3 \trans{\dact}w'_3$.  Now, since
$w_3 \in [w_2]_{\lcceq} $, by definition of~$\lcceq$ and by
construction of~$\posToltsC(\calF)$ we know there are
$s_0, \ldots s_n \in W$ with $s_0 = w_2$, $s_n = w_3$ such that
$s_i \trans{\tau} s_{i+1}$ and $s_{i+1} \trans{\tau} s_i$, for all
$i \in [0;n)$.  We note that $B_C(w_1,s_i)$ for all $i \in [0;n]$. In
fact for each $i \in [0;n]$ we have that
$[s_i]_{\lcceq} = [w_2]_{\lcceq}$ by definition of~$\lcceq$ and we
also know that
$[w_2]_{\lcceq} \seq^{\posToltsA(\calF)} [w_1]_{\lcceq}$, since
$B_C(w_1,w_2)$ by hypothesis. Thus we get
$[s_i]_{\lcceq} \seq^{\posToltsA(\calF)} [w_1]_{\lcceq}$,
i.e. $B_C(w_1,s_i)$. Furthermore, we also note that $B_C(w'_1,w'_3)$.
In fact $[w'_3]_{\lcceq} = [w'_2]_{\lcceq} $, since
$w'_3 \in [w'_2]_{\lcceq} $. In addition,
$[w'_2]_{\lcceq} \seq^{\posToltsA(\calF)} [w'_1]_{\lcceq}$ (see
above). So, we get
$[w'_3]_{\lcceq} \seq^{\posToltsA(\calF)} [w'_1]_{\lcceq}$,
i.e. $B_C(w'_1,w'_3)$. In conclusion, we have that if
$w_1\trans{\dact}w'_1$ for some $w'_1 \in W$, then
$w_2 = s_0 \trans{\tau} s_1 \trans{\tau} \ldots \trans{\tau} s_n = w_3
\trans{\dact} w'_3$ with
$B_C(w'_1,w'_3)$ and $B_C(w_1,s_i)$ for all $i\in [0;n]$.\\[1em]

We now prove that if
$w_1 \beq^{\posToltsC(\calF)} w_2$, then
$[w_1]_{\lcceq} \seq^{\posToltsA(\calF)} [w_2]_{\lcceq}$.
We show that the following relation is a strong bisimulation:
\[
B_A = \ZET{(s_1,s_2) \in S \times S}{
\text{there are $w_1\in s_1$, $w_2 \in s_2$ such that $w_1
  \beq^{\posToltsC(\calF)} w_2$}}. 
\]
Let, without loss of generality, $s_1 = [w_1]_{\lcceq}$ and
$s_2 = [w_2]_{\lcceq}$ for some $w_1,w_2 \in W$ with
$w_1 \beq^{\posToltsC(\calF)} w_2$, and suppose
$B_A([w_1]_{\lcceq},[w_2]_{\lcceq})$.
We distinguish three cases:\\[1em]

\noindent
\textbf{Case A:} $[w_1]_{\lcceq} \trans{\alpha} [w_1']_{\lcceq}$ with
$\alpha \in \pws{\ap}$:\\
By Rule (PL), if $[w_1]_{\lcceq} \trans{\alpha}
[w_1']_{\lcceq}$ for $\alpha \in \pws{\ap}$ and $w'_1 \in
W$, then $[w_1']_{\lcceq} = [w_1]_{\lcceq}$ and $\alpha =
\invpeval{\calF}(\SET{w_1})$.
On the one hand, if $p \in \alpha$ then $w_1 \trans{p} w_1$ by
rule~(PLC).
Since $w_2 \beq^{\posToltsC(\calF)} w_1$ it follows that
$w_2 \trans{\tau} \ldots \trans{\tau}  \bar{w}_2 \trans{p} w'_2$ for
$\bar{w}_2, w'_2 \in W$ such that $p \in \invpeval{\calF}(\SET{\bar{w}_2})$,
$\bar{w}_2 \beq^{\posToltsC(\calF)} w_1$, and
$w'_2 \beq^{\posToltsC(\calF)} w_1$.
By rule~(TAU), $p \in \invpeval{\calF}(\SET{w_2})$.
Thus, $\alpha \subseteq \invpeval{\calF}(\SET{w_2})$.
On the other hand, if $p \in \invpeval{\calF}(\SET{w_2})$ then $w_2 \trans{p}
w_2$ by rule~(PLC).
Since $w_1 \beq^{\posToltsC(\calF)} w_2$ we have that $w_1
\trans{\tau} \ldots \trans{\tau} \bar{w}_1 \trans{p} w'_1$ for $\bar{w}_1, w'_1 \in W$
such that $p \in \invpeval{\calF}(\SET{\bar{w}_1})$, $\bar{w}_1
\beq^{\posToltsC(\calF)} w_2$, $w'_1 \beq^{\posToltsC(\calF)} w_2$.
By rule~(TAU) we obtain that $p \in \invpeval{\calF}(\SET{\bar{w}_1})$.
Thus,~$p \in \alpha$.
Hence, $\invpeval{\calF}(\SET{w_2}) \subseteq \alpha$.
So, $\invpeval{\calF}(\SET{w_2}) = \alpha$.
Therefore, $[w_2]_{\lcceq} \trans{\alpha} [w_2]_{\lcceq}$ by
rule~(PL).
By assumption, $B_A([w_1]_{\lcceq},[w_2]_{\lcceq})$ for target states
$[w_1]_{\lcceq}$ and $[w_2]_{\lcceq}$ as required.\\[1em]
\textbf{Case B:} $[w_1]_{\lcceq} \trans{\dact} [w_1']_{\lcceq}$\\
If $[w_1]_{\lcceq} \trans{\dact} [w_1']_{\lcceq}$ for some
$w'_1 \in W$, then, by Rule ({Down}), we know that there are
$w_3 \in [w_1]_{\lcceq}$ and $w'_3 \in [w'_1]_{\lcceq}$ such that
$w_3 \succcurlyeq w'_3$.  This implies, by Rule (DWN), that
$w_3 \trans{\dact} w'_3$. By definition of~$\lcceq$ and by
construction of~$\posToltsC(\calF)$ we know that there are
$m \geqslant 0$ and $t_0, \ldots, t_m \in W$ with $t_0 = w_1$,
$t_m = w_3$ such that $t_i \trans{\tau} t_{i+1}$ and
$t_{i+1} \trans{\tau} t_i$, for all $i \in [0;m)$.  This implies that
$w_1 \beq^{\posToltsC(\calF)} w_3$, and consequently
$w_2 \beq^{\posToltsC(\calF)} w_3$, since
$w_1 \beq^{\posToltsC(\calF)} w_2$ by hypothesis.  Furthermore, since
$w_3 \beq^{\posToltsC(\calF)} w_2$, there are $n \geqslant 0$ and
$v_0, \ldots, v_n, v_{n+1} \in W$ with
$w_2= v_0 \trans{\tau} \cdots \trans{\tau} v_{n} \trans{\dact}
v_{n+1}$, such that $w'_3 \beq^{\posToltsC(\calF)} v_{n+1}$ and
$w_3 \beq^{\posToltsC(\calF)} v_i$ for all $i\in [0;n]$.  Moreover, by
Rule (DWN), we have $v_{n} \succcurlyeq v_{n+1}$ which imples, by Rule
(Down), that $[v_n]_{\lcceq} \trans{\dact} [v_{n+1}]_{\lcceq}$.  Note
that, by construction of $\posToltsC(\calF)$ we also have
$\invpeval{\calF}(w_2) = \invpeval{\calF}(v_0)= \ldots =
\invpeval{\calF}(v_n)$ and so $[v_i]= [w_2]_{\lcceq}$ for all
$i\in [0;n]$.  Thus,
$[w_2]_{\lcceq} = [v_n]_{\lcceq} \trans{\dact} [v_{n+1}]_{\lcceq}$.
Furthermore, $B_A([w'_3]_{\lcceq}, [v_{n+1}]_{\lcceq})$ holds, since
$w'_3 \beq^{\posToltsC(\calF)} v_{n+1}$ (see above)
and, recalling that $[w'_3]_{\lcceq} = [w'_1]_{\lcceq}$, we also know
that $B_A([w'_1]_{\lcceq}, [v_{n+1}]_{\lcceq})$.\\[1em]
\textbf{Case C:} $[w_1]_{\lcceq} \trans{\sact} [w_1']_{\lcceq}$\\
If $[w_1]_{\lcceq} \trans{\sact} [w_1']_{\lcceq}$ for some
$w'_1 \in W$, then, by Rule ({Step}), we know that there are
$w_3 \in [w_1]_{\lcceq}$ and $w'_3 \in [w'_1]_{\lcceq}$ such that
$w_3 \dircnv{\preccurlyeq} w'_3$.
We distinguish two cases:\\
\textbf{Case C1:}
$\invpeval{\calF}(\SET{w_3}) = \invpeval{\calF}(\SET{w'_3})$.\\
If $\invpeval{\calF}(\SET{w_3}) = \invpeval{\calF}(\SET{w'_3})$, then,
by Rule (TAU), we know $w_3 \trans{\tau} w'_3$. But then, by
definition of~$\lcceq$, we get $[w_3]_{\lcceq} = [w'_3]_{\lcceq}$ and
since $[w_3]_{\lcceq} = [w_1]_{\lcceq}$ and
$[w'_3]_{\lcceq} = [w'_1]_{\lcceq}$ (see above), we get
$[w'_1]_{\lcceq} = [w_1]_{\lcceq}$. On the other hand, since,
trivially, $w_2 \dircnv{\preccurlyeq} w_2$, by Rule (Step), we also
get that $[w_2]_{\lcceq} \trans{\sact} [w_2]_{\lcceq}$.  Moreover,
since by hypothesis, we also have
$B_A([w_1]_{\lcceq},[w_2]_{\lcceq})$, we finally get
that also $B_A([w'_1]_{\lcceq},[w_2]_{\lcceq})$.\\
\textbf{Case C2:} $\invpeval{\calF}(\SET{w_3})\not=\invpeval{\calF}(\SET{w'_3})$.\\
If $\invpeval{\calF}(\SET{w_3})\not=\invpeval{\calF}(\SET{w'_3})$,
then, by Rule (CNG), we know $w_3 \trans{\cact} w'_3$.  By definition
of $\lcceq$ and by construction of~$\posToltsC(\calF)$ we know that
there are $m\geqslant 0$ and $t_0, \ldots, t_m \in W$ with $t_0 = w_1$,
$t_m = w_3$ such that $t_i \trans{\tau} t_{i+1}$ and
$t_{i+1} \trans{\tau} t_i$, for all $i \in [0;m)$. This implies that
$w_1 \beq^{\posToltsC(\calF)} w_3$, and consequently
$w_2 \beq^{\posToltsC(\calF)} w_3$, since
$w_1 \beq^{\posToltsC(\calF)} w_2$ by hypothesis.  Furthermore, since
$w_3 \beq^{\posToltsC(\calF)} w_2$, there are $n \geqslant 0$ and
$v_0, \ldots, v_n, v_{n+1} \in W$ with
$w_2 = v_0 \trans{\tau} \cdots \trans{\tau} v_{n} \trans{\cact}
v_{n+1}$, such that $w'_3 \beq^{\posToltsC(\calF)} v_{n+1}$ and
$w_3 \beq^{\posToltsC(\calF)} v_i$ for all $i\in [0;n]$. Moreover, by
Rule (CNG), we have $v_{n} \dircnv{\preccurlyeq} v_{n+1}$ which
imples, by Rule (Step), that
$[v_n]_{\lcceq} \trans{\sact} [v_{n+1}]_{\lcceq}$. Note that, by
construction of $\posToltsC(\calF)$ we also have
$\invpeval{\calF}(w_2) = \invpeval{\calF}(v_0) = \ldots =
\invpeval{\calF}(v_n)$ and so $[v_i]= [w_2]_{\lcceq}$ for all
$i\in [0;n]$.  Thus,
$[w_2]_{\lcceq} = [v_n]_{\lcceq} \trans{\sact} [v_{n+1}]_{\lcceq}$.
Furthermore, $B_A([w'_3]_{\lcceq}, [v_{n+1}]_{\lcceq})$ holds, since
$w'_3 \beq^{\posToltsC(\calF)} v_{n+1}$ (see above) and, recalling
that $[w'_3]_{\lcceq} = [w'_1]_{\lcceq}$, we also know that
$B_A([w'_1]_{\lcceq}, [v_{n+1}]_{\lcceq})$.
\end{proof}

From Theorems~\ref{thm:StrongEqLogeq} and~\ref{thm:StrongEqBranching}
we finally obtain our claim: 

\begin{cor}
  \label{cor:LogeqEqBranch}
  Let $\calF = (W,\preccurlyeq,\peval{\calF})$ be a finite poset
  model. For all $w_1,w_2 \in W$ the following holds:
  $w_1 \slcsEeq^{\calF} w_2$ if and only if
  $w_1 \beq^{\posToltsC(\calF)} w_2$.\qed
\end{cor}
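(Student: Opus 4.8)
The plan is to obtain the statement simply by composing the two preceding theorems, which together form a chain of biconditionals passing through the intermediate, more abstract LTS $\posToltsA(\calF)$. First I would invoke Theorem~\ref{thm:StrongEqLogeq}, which gives, for all $w_1,w_2 \in W$, that $w_1 \slcsEeq^{\calF} w_2$ holds if and only if ${[w_1]_{\lcceq}} \seq^{\posToltsA(\calF)} {[w_2]_{\lcceq}}$; that is, logical equivalence in $\calF$ with respect to \slcsE{} corresponds exactly to strong bisimilarity of the associated $\lcceq$-classes in $\posToltsA(\calF)$. Then I would invoke Theorem~\ref{thm:StrongEqBranching}, which gives ${[w_1]_{\lcceq}} \seq^{\posToltsA(\calF)} {[w_2]_{\lcceq}}$ if and only if $w_1 \beq^{\posToltsC(\calF)} w_2$; that is, strong bisimilarity of the classes in the abstract LTS corresponds exactly to branching bisimilarity of the original states in the concrete LTS $\posToltsC(\calF)$. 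Chaining these two equivalences yields, for all $w_1,w_2 \in W$, that $w_1 \slcsEeq^{\calF} w_2$ if and only if $w_1 \beq^{\posToltsC(\calF)} w_2$, which is precisely the claim.

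Beyond transitivity of ``if and only if'' there is nothing further to do: the intermediate structure $\posToltsA(\calF)$ has already separated the two concerns. The ``logical'' half is carried by Theorem~\ref{thm:StrongEqLogeq}, proved by structural induction on \slcsE{} formulas (with a nested induction on path length for the $\eta$-case), using Proposition~\ref{prop:interchangeableE} to move freely between \plm-paths and \dwn-paths and Lemma~\ref{lem:LccImplSlcsEeq} to exploit that $\lcceq$-related elements are \slcsE-equivalent. The ``operational'' half is carried by Theorem~\ref{thm:StrongEqBranching}, proved by exhibiting the explicit branching bisimulation $B_C$ and strong bisimulation $B_A$ and using Lemma~\ref{lem:StrongLab} to recover agreement on proposition letters from strong bisimilarity of $\lcceq$-classes.

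Since the real content lies entirely in the two theorems already established, I expect no genuine obstacle in the corollary itself. The only point meriting a word of care is the bookkeeping of objects: Theorem~\ref{thm:StrongEqLogeq} speaks of the $\lcceq$-classes $[w_i]_{\lcceq}$ as states of $\posToltsA(\calF)$, while Theorem~\ref{thm:StrongEqBranching} relates precisely those same classes back to the original states $w_i$ of $\posToltsC(\calF)$, so the two statements compose cleanly with no mismatch. This corollary is then what justifies Step~2 of the minimisation procedure, namely that reducing $\posToltsC(\calF)$ modulo branching bisimilarity computes exactly $W / {\slcsEeq^{\calF}}$.
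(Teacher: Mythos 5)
Your proposal is correct and matches the paper exactly: the corollary is obtained by chaining Theorem~\ref{thm:StrongEqLogeq} ($w_1 \slcsEeq^{\calF} w_2$ iff $[w_1]_{\lcceq} \seq^{\posToltsA(\calF)} [w_2]_{\lcceq}$) with Theorem~\ref{thm:StrongEqBranching} ($[w_1]_{\lcceq} \seq^{\posToltsA(\calF)} [w_2]_{\lcceq}$ iff $w_1 \beq^{\posToltsC(\calF)} w_2$), which is precisely what the paper does. Your observation that the intermediate LTS $\posToltsA(\calF)$ separates the logical and operational concerns, and that the two theorems compose without any mismatch of objects, is accurate.
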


\noindent
Now that we have characterised logical equivalence~$\slcsEeq$ for~\slcsE{}
for the elements of a finite poset model~$\calF$ in terms of branching
bisimilarity~$\beq$ for the \lts~$\posToltsC(\calF)$, we can
compute the minimal \lts{} modulo branching bisimilarity with standard
techniques available, such as branching bisimilarity
minimisation provided by the \mcrltwo{} toolset.

\subsection{Building the Minimal Model}

Via the correspondence of \slcsE{} logical equivalence for a poset
model and branching bisimilarity of its encoding, one can obtain the
equivalence classes of~$\slcsEeq$ by identifying the branching
bisimilar states in the~\lts. With the equivalence classes
modulo~$\slcsEeq$ for the poset model available, we can consider the
ensued quotient model. We obtain a Kripke model that is minimal with
respect to~$\slcsEeq$, but which is not necessarily a poset model.

\begin{defi}[$\calF_{\min}$]
  \label{def:MinE} 
  For a finite poset model $\calF = (W,\preccurlyeq,\peval{\calF})$ let
  the Kripke model
  $\calF_{\min} = (W_{\min}, R_{\mkern1mu \min},\peval{\calF_{\min}})$
  have
  \begin{itemize}
  \item set of nodes $W_{\min} = W / {\slcsEeq}$, the equivalence
    classes of~$W$ with respect to~$\slcsEeq$,
  \item accessibility relation
    $R_{\min}\subseteq {W_{\min} \times W_{\min}}$ satisfying
    \smallskip
    \begin{center}
      $R_{\min}([w_1],[w_2])$ if and only if $w'_1 \preccurlyeq
      w'_2$ for some $w'_1 
      \slcsEeq w_1$ and $w'_2 \slcsEeq w_2$
    \end{center}
    \smallskip
    for $w_1, w_2 \in W\!$, and
  \item valuation $\peval{\calF_{\min}} : \ap \to
    \pws{W_{\min}}$ such that
    \smallskip 
    \begin{center}
      $\peval{\calF_{\min}}(p) = \ZET{%
        \, [w] \in W_{\min}
      }{%
        \text{$w' \in \peval{\calF} (p)$ for some $w' \slcsEeq w$} \,
      }$
    \end{center}
    for~$p \in \ap$.  \closedefi
  \end{itemize}
\end{defi}

\noindent
Clearly,
$\calF_{\min}$ is a finite reflexive Kripke model. Reflexivity of the
accessibility relation~$R_{\mkern1mu
  \min}$ is immediate from reflexivity of the ordering~$\preccurlyeq$.
Furthermore, it is minimal with respect to
\slcsE{} by definition of $\slcsEeq$ and
$W/\slcsEeq$. An example of the minimal Kripke model of the
polyhedral model in Figure~\ref{subfig:polyM} is shown in
Figure~\ref{subfig:minM}. The following theorem ensures that the model
defined above is sound and complete with respect to the logic, so that
the minimisation procedure is correct.

\begin{thm}\label{theo:MinE}
  Given a finite poset model $\calF = (W,\preccurlyeq,\peval{\calF})$
  let $\calF_{\min}$ be defined as in Definition~\ref{def:MinE}.
  Then, for each $w \in W$ and \slcsE{} formula $\form$ the following
  holds:
  $ \calF,w \models \form \mbox{ if and only if } \calF_{\min},
  [w]_{\slcsEeq} \models \form.  $
\end{thm}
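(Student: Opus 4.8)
The plan is to prove the biconditional by structural induction on~$\form$. The cases of a proposition letter and of the Boolean connectives are routine: for~$p$ one uses that, by Definition~\ref{def:MinE}, $[w]_{\slcsEeq}\in\peval{\calF_{\min}}(p)$ iff $w'\in\peval{\calF}(p)$ for some $w'\slcsEeq w$, and since $\slcsEeq$-equivalent elements of~$\calF$ satisfy the same proposition letters this holds iff $w\in\peval{\calF}(p)$; the cases $\lneg\form$ and $\form_1\land\form_2$ are immediate from the induction hypothesis (which, being the statement itself, is a biconditional). Everything of substance is in the case $\form=\eta(\form_1,\form_2)$, whose two directions require rather different arguments.

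For the implication $\calF,w\models\eta(\form_1,\form_2)\Rightarrow\calF_{\min},[w]_{\slcsEeq}\models\eta(\form_1,\form_2)$ I would take a \plm-path $\pi:[0;\ell]\to W$ witnessing the left-hand side and project it pointwise, $\hat\pi(i)=[\pi(i)]_{\slcsEeq}$. Every step $\pi(i)\dircnv{\preccurlyeq}\pi(i+1)$ yields $\dircnv{R_{\min}}(\hat\pi(i),\hat\pi(i+1))$ straight from the definition of~$R_{\min}$ (take $\pi(i)$ and $\pi(i+1)$ themselves as the required representatives), and in particular $R_{\min}(\hat\pi(0),\hat\pi(1))$ and $\cnv{R_{\min}}(\hat\pi(\ell-1),\hat\pi(\ell))$; consecutive classes may coincide but $R_{\min}$ is reflexive, so $\hat\pi$ is again a \plm-path, starting at $[w]_{\slcsEeq}$. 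Applying the induction hypothesis to~$\form_1$ at $\hat\pi(i)$ for $i\in[0;\ell)$ and to~$\form_2$ at $\hat\pi(\ell)$ then gives $\calF_{\min},[w]_{\slcsEeq}\models\eta(\form_1,\form_2)$.

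For the converse I would isolate a combinatorial \emph{lifting lemma}: for every \dwn-path $\sigma=([v_0],\dots,[v_\ell])$ of~$\calF_{\min}$ and every $w\slcsEeq v_0$ there is a \dwn-path~$\pi$ of~$\calF$ with $\pi(0)=w$, with final point $\slcsEeq$-equivalent to~$v_\ell$, and with every non-final point $\slcsEeq$-equivalent to some~$v_j$ with $j<\ell$. Granting this, assume $\calF_{\min},[w]_{\slcsEeq}\models\eta(\form_1,\form_2)$; the witnessing \plm-path is in particular such a~$\sigma$, the induction hypothesis turns ``$[v_j]\models\form_1$ for $j<\ell$'' and ``$[v_\ell]\models\form_2$'' into the same facts about the~$v_j$ in~$\calF$, so the lifted~$\pi$ has all non-final points satisfying~$\form_1$ (note $w=\pi(0)$ is non-final and $w\slcsEeq v_0$ with $0<\ell$) and its final point satisfying~$\form_2$; by Proposition~\ref{prop:interchangeableE} this \dwn-path may be replaced by a \plm-path, giving $\calF,w\models\eta(\form_1,\form_2)$.

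The lifting lemma I would prove by induction on~$\ell$. The point is that a single step $\dircnv{R_{\min}}([v_0],[v_1])$ does \emph{not} lift to a single $\preccurlyeq$-step: by Definition~\ref{def:MinE} it is witnessed by representatives $a,b$ with $a\preccurlyeq b$, one of which is $\slcsEeq$-equivalent to~$v_0$ and the other to~$v_1$, but~$a$ need not equal the point~$w$ we have already reached. To bridge this, I would use Theorem~\ref{theo:PMbisEqSLCSEeq}, which identifies $\slcsEeq$ with~$\wplmbis$ on the poset~$\calF$: from $w\slcsEeq v_0$ and the chosen $a,b$ one builds a length-$2$ \dwn-path of the form $w_1\dircnv{\preccurlyeq}u_1\succcurlyeq d_1$ (padding with a reflexive step of~$\preccurlyeq$) and feeds it into clause~(2) of Definition~\ref{def:WPLMBis} for the appropriate pair, obtaining a \plm-path~$\rho$ from~$w$ whose final point is $\slcsEeq$-equivalent to the far end of the step and all of whose other points are $\slcsEeq$-equivalent to~$v_0$ or~$v_1$ --- both of which carry index $<\ell$ once $\ell\ge 2$. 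Concatenating~$\rho$ with the \dwn-path obtained by applying the inductive hypothesis to $([v_1],\dots,[v_\ell])$ and starting point $\rho(\text{last})$ produces the required~$\pi$, and the base case $\ell=1$ is a single use of this gadget. I expect the lifting lemma --- in particular choosing the gadget \dwn-paths so that clause~(2) returns a path coloured only by classes occurring strictly before the end of~$\sigma$ --- to be the main obstacle; once it is in place, the theorem follows.
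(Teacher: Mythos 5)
Your proposal is correct, and the forward direction (pointwise projection of the witnessing \plm-path, noting that $R_{\min}$ is reflexive so repeated classes are harmless) is exactly the paper's argument. The converse, however, is organised quite differently. The paper first normalises the witnessing \plm-path of $\calF_{\min}$ to an \upd-path via Lemma~\ref{lem:pm2ud} and then runs an inner induction on its half-length $k$: at each peak it picks representatives $w_0\preccurlyeq w'_1$ and $w''_1\succcurlyeq w_2$ from the definition of $R_{\min}$, observes that $w''_1$ satisfies $\eta(\form_1,\form_2)$ by an explicitly constructed path, and then uses the bare fact that $\slcsEeq$ is \emph{logical} equivalence to transfer satisfaction of the whole $\eta$-subformula from $w''_1$ to $w'_1$, which regenerates a fresh witnessing \plm-path in $\calF$ that can be prepended with $(w_0,w'_1)$. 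This needs only Lemma~\ref{lem:pm2ud} and Definition~\ref{def:WFPSLCSeq} --- no appeal to weak \plm-bisimilarity. You instead factor the converse through a standalone lifting lemma for \dwn-paths of $\calF_{\min}$, proved by routing $\slcsEeq$ through Theorem~\ref{theo:PMbisEqSLCSEeq} and clause~(2) of Definition~\ref{def:WPLMBis}; the delicate point you flag --- ensuring the gadget paths are coloured only by classes of index $<\ell$ --- does work out, because the last step of a \dwn-path is necessarily a down step, so there the padding forces $w_1=u_1$ and clause~(2) returns a path whose non-final points are all equivalent to $v_{\ell-1}$, while for earlier steps both adjacent indices are already $<\ell$. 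What each approach buys: yours yields a reusable combinatorial lemma and needs only one application of the structural induction hypothesis, but makes the theorem depend on the Hennessy--Milner machinery of Section~\ref{sec:WeakBis}; the paper's is self-contained modulo the elementary path lemmas, at the cost of a nested induction and a less modular argument.
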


\begin{proof}
  We first prove that $\calF,w \models \form$ implies
  $\calF_{\min}, [w]_{\slcsEeq} \models \form$. We proceed by
  induction on the structure of~$\form$ and we show the proof only for
  $\form = \eta(\form_1,\form_2)$ the other cases being
  straightforward. Suppose $\calF,w \models \eta(\form_1,\form_2)$.
  This means there is a \plm-path~$\pi$ of some
  length~$\ell \geqslant 2$ such that $\pi(0) = w$,
  $\calF, \pi(\ell) \models \form_2$, and
  $\calF, \pi(i) \models \form_1$ for all $i \in [0;\ell)$. Now define
  $\pi_{\min}: [0;\ell] \to W_{\min}$ with $\pi_{\min}(i) = [\pi(i)]$
  for all $i \in [0;\ell]$. We show that $\pi_{\min}$ is a \plm-path
  with respect to~$R_{\min}$. We have that
  $R_{\min}(\pi_{\min}(0),\pi_{\min}(1))$ by definition of~$R_{\min}$
  because $\pi(0)\in [\pi(0)] = \pi_{\min}(0)$,
  $\pi(1) \in [\pi(1)] = \pi_{\min}(1)$ and
  $\pi(0) \preccurlyeq \pi(1)$ by assumption. Similarly, we have that
  $\cnv{R_{\min}}(\pi_{\min}(\ell-1),\pi_{\min}(\ell))$ and also that
  $\dircnv{R_{\min}}(\pi_{\min}(i),\pi_{\min}(i+1))$ for all
  $i \in (0;\ell-1)$. Furthermore, since
  $\calF, \pi(\ell)\models \form_2$, by the Induction Hypothesis, we
  have that $\calF_{\min}, \pi_{\min}(\ell)\models
  \form_2$. Similarly, we have that
  $\calF_{\min}, \pi_{\min}(i)\models \form_1$ for all $i\in [0;\ell)$
  since $\calF, \pi(i)\models \form_1$. So
  $\calF_{\min}, [w]_{\slcsEeq} \models \eta(\form_1,\form_2)$.\\[1em]
  Now we prove that $\calF_{\min}, [w]_{\slcsEeq} \models \form$
  implies $\calF,w \models \form$. Also in this case we proceed by
  induction on the structure of~$\form$ and we show the proof only for
  $\form = \eta(\form_1,\form_2)$. Suppose
  $\calF_{\min}, [w]_{\slcsEeq} \models \eta(\form_1,\form_2)$. 
Hence
  there is a \plm-path~$\pi_{\min}$ such that
  $\pi_{\min}(0) = [w]_{\slcsEeq}$,
  $\calF_{\min}, \pi(\ell_{\min}) \models \form_2$, and
  $\calF_{\min}, \pi_{\min}(i) \models \form_1$ for all
  $i \in [0;\ell_{\min})$. Since $R_{\min}$~is reflexive, using
  Lemma~\ref{lem:pm2ud}, we know that there is also an
  \upd-path~$\hat{\pi}_{\min}$ from~$[w]_{\slcsEeq}$ of some
  length~$2k$, for $k \geqslant 1$, with the same starting-/ending
  points and the same intermediate points as~$\pi_{\min}$ and that
  obviously witnesses $\eta(\form_1,\form_2)$ for $[w]_{\slcsEeq}$.
  By induction on~$k$, in the sequel, we show that there is a
  \plm-path~$\pi$ from $w$ witnessing $\eta(\form_1,\form_2)$.\\[1em] 
\noindent
\textbf{Base case:} $k=1$.\\
In this case, we have that $\hat{\pi}_{\min}(0) = [w]_{\slcsEeq}$,
$\calF_{\min}, \hat{\pi}_{\min}(0) \models \form_1$
$\calF_{\min}, \hat{\pi}_{\min}(1) \models \form_1$, and
$\calF_{\min}, \hat{\pi}_{\min}(2) \models \form_2$. Furthermore,
since $\hat{\pi}_{\min}$~is an \upd-path with respect to~$R_{\min}$,
we know that
\[
\hat{\pi}_{\min}(0)= [w]_{\slcsEeq}, 
R_{\min}(\hat{\pi}_{\min}(0),\hat{\pi}_{\min}(1)),
\cnv{R_{\min}}(\hat{\pi}_{\min}(1),\hat{\pi}_{\min}(2))
\] and, by definition of~$R_{\min}$, there are
$w_0 \in \hat{\pi}_{\min}(0) = [w]_{\slcsEeq}$,
$w'_1, w''_1 \in \hat{\pi}_{\min}(1)$, and
$w_2 \in \hat{\pi}_{\min}(2)$ such that $w_0 \preccurlyeq w'_1$ and
$w''_1 \succcurlyeq w_2$.  Moreover, by the Induction Hypothesis with
respect to the structure of formulas, we have that
$\calF, w_0 \models \form_1$, $\calF, w'_1 \models \form_1$,
$\calF, w''_1 \models \form_1$, and $\calF, w_2 \models \form_2$. Note
that $\calF, w''_1 \models \eta(\form_1,\form_2)$, witnessed by the
following \plm-path: $(w''_1,w''_1,w_2)$.  But then we have that also
$\calF, w'_1 \models \eta(\form_1,\form_2)$ holds since
$w'_1 \slcsEeq w''_1$, recalling that
$w'_1, w''_1 \in \hat{\pi}_{\min}(1) \in W/\slcsEeq.$ There is then
a \plm-path $\pi':[0;\ell'] \to W$ from~$w'_1$ of some length~$\ell'$
such that $\calF,\pi'(\ell')\models \form_2$ and
$\calF,\pi'(i)\models \form_1$ for all $i\in [0;\ell')$.  Furthermore,
$w_0 \preccurlyeq w'_1$ by hypothesis and so
$\pi = (w_0,w'_1)\cdot \pi':[0;\ell'+1]\to W$ is a \plm-path from~$w_0$
witnessing $\calF, w_0 \models \eta(\form_1,\form_2)$. Finally,
recalling that $w, w_0 \in \hat{\pi}_{\min}(0) \in W/\slcsEeq$, we
know that $w \slcsEeq w_0$ and so we have proven the assertion
$\calF, w \models \eta(\form_1,\form_2)$.\\[1em]
\noindent
\textbf{Induction step:} $k=n{+}1$ assuming the assertion holds for
$k=n$, for $n>0$.\\ 
Since $k>1$, we know that
$\calF_{\min}, \hat{\pi}_{\min}(1) \models \form_1$ and
$\calF_{\min}, \hat{\pi}_{\min}(2) \models \form_1\land \lneg \form_2.$
Furthermore, 
\[
\hat{\pi}_{\min}(0)= [w]_{\slcsEeq}, 
R_{\min}(\hat{\pi}_{\min}(0),\hat{\pi}_{\min}(1)),
\cnv{R_{\min}}(\hat{\pi}_{\min}(1),\hat{\pi}_{\min}(2))
\]
because $\hat{\pi}_{\min}$ is an \upd-path. By definition of
$R_{\min}$, there are $w_0 \in \hat{\pi}_{\min}(0)=[w]_{\slcsEeq}$,
$w'_1, w''_1 \in \hat{\pi}_{\min}(1)$ and
$w_2 \in \hat{\pi}_{\min}(2)$ such that $w_0 \preccurlyeq w'_1$ and
$w''_1 \succcurlyeq w_2$. By the Induction Hypothesis with respect to
the structure of the formula, we get that
$\calF, w_0 \models \form_1$, $\calF, w'_1 \models \form_1$,
$\calF, w''_1 \models \form_1$, and
$\calF, w_2 \models \form_1\land \lneg \form_2$.  We consider now the
\upd-path~$\hat{\pi}_{\min}\uparrow 2$ from~$\hat{\pi}_{\min}(2)$ of
length~$2n$, noting that it witnesses $\eta(\form_1,\form_2)$, since
so does $\hat{\pi}_{\min}$ and $k>1$. In other words, we have that
$\calF_{\min}, \hat{\pi}_{\min}(2) \models \eta(\form_1,\form_2)$ with
$w_2 \in \hat{\pi}_{\min}(2)$. By the Induction Hypothesis with
respect to $k$, we then have that
$\calF, w_2 \models \eta(\form_1,\form_2).$ So there is a \upd-path
$\pi_2:[0;\ell_2] \to W$ from $w_2$ of some length $\ell_2$ such that
$\calF,\pi_2(\ell_2) \models \form_2$ and
$\calF,\pi_2(i) \models \form_1$ for $i\in [0;\ell_2)$. Note that
$\calF, \pi_2(0)\models \form_1$ as well, since $\pi_2(0)=w_2$ and
$\calF, w_2 \models \form_1\land \lneg \form_2$ (see above).  Let us
consider now the path $\pi''=(w''_1,w''_1,w_2)\cdot \pi_2$. Such a
path is an \upd-path since so is $\pi_2$, and $w''_1 \succcurlyeq w_2$
by hypothesis. Note that \upd-path $\pi''$ witnesses
$\calF, w''_1 \models \eta(\form_1,\form_2)$.  But then we have that
also $\calF, w'_1 \models \eta(\form_1,\form_2)$ holds since
$w'_1 \slcsEeq w''_1$, recalling that
$w'_1, w''_1 \in \hat{\pi}_{\min}(1) \in W/\slcsEeq.$ Thus, we have
that the following holds:
$\calF, w'_1 \models \form_1 \land \eta(\form_1,\form_2)$. There is
then a \plm-path $\pi':[0;\ell']\to W$ from $w'_1$ of some length
$\ell'$ such that $\calF,\pi'(\ell')\models \form_2$ and
$\calF,\pi'(i)\models \form_1$ for all $i\in [0;\ell')$.  Furthermore,
$w_0 \preccurlyeq w'_1$ by hypothesis and so
$\pi= (w_0,w'_1)\cdot \pi':[0;\ell'+1]\to W$ is a \plm-path from $w_0$
witnessing $\calF, w_0 \models \eta(\form_1,\form_2)$.  Finally,
recalling that $w, w_0 \in \hat{\pi}_{\min}(0) \in W /\slcsEeq$, we
know that $w \slcsEeq w_0$ and so we have proven the assertion
$\calF, w \models \eta(\form_1,\form_2)$.
\end{proof}

Finally, the following theorem turns out to be useful for simplifying
the procedure for the effective construction of~$\calF_{\min}$:

\begin{thm}
  \label{theo:d}
  For any poset model $\calF = (W,\preccurlyeq,\peval{\calF})$ and
  $\calF_{\min}$ as of Definition~\ref{def:MinE} and for all
  $\alpha_1, \alpha_2 \in W_{\min}$, it holds that
  $R_{min}(\alpha_1,\alpha_2)$ if and only if
  $\alpha_2 \trans{\dact} \alpha_1$ is a transition of the minimal
  \lts{} obtained from $\posToltsC(\calF)$ via branching bisimilarity.
\end{thm}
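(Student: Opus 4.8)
The plan is to relate the accessibility relation $R_{\min}$ of $\calF_{\min}$ to the $\dact$-transitions of the minimised \lts{} by passing through the two intermediate \lts{s} $\posToltsC(\calF)$ and $\posToltsA(\calF)$, for which we already have the correspondence results (Theorems~\ref{thm:StrongEqLogeq} and~\ref{thm:StrongEqBranching}, and their Corollary~\ref{cor:LogeqEqBranch}). First I would recall that, by Corollary~\ref{cor:LogeqEqBranch}, the states of the minimal \lts{} obtained from $\posToltsC(\calF)$ via branching bisimilarity are precisely the classes $[w]_{\slcsEeq}$, i.e.\ the same set $W_{\min}$ as the nodes of $\calF_{\min}$. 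Thus the statement reduces to showing, for $\alpha_1 = [w_1]_{\slcsEeq}$ and $\alpha_2 = [w_2]_{\slcsEeq}$, that
\[
\bigl(\exists w'_1 \slcsEeq w_1, \; w'_2 \slcsEeq w_2 : w'_1 \preccurlyeq w'_2 \bigr)
\quad\Longleftrightarrow\quad
[w_2]_{\slcsEeq} \trans{\dact} [w_1]_{\slcsEeq} \text{ in the minimal \lts.}
\]

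For the forward direction, suppose $w'_1 \preccurlyeq w'_2$ with $w'_1 \slcsEeq w_1$ and $w'_2 \slcsEeq w_2$. Then $w'_2 \succcurlyeq w'_1$, so by Rule~(DWN) we get $w'_2 \trans{\dact} w'_1$ in $\posToltsC(\calF)$. Since the minimal \lts{} is the quotient of $\posToltsC(\calF)$ modulo branching bisimilarity, and since branching bisimilarity $\beq^{\posToltsC(\calF)}$ coincides with $\slcsEeq^{\calF}$ by Corollary~\ref{cor:LogeqEqBranch}, the transition $w'_2 \trans{\dact} w'_1$ descends to a transition $[w'_2]_{\slcsEeq} \trans{\dact} [w'_1]_{\slcsEeq}$ in the quotient; here I would use that $\dact \neq \tau$, so that in the quotient the transition is genuinely labelled $\dact$ rather than being absorbed as a silent step. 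Since $[w'_2]_{\slcsEeq} = [w_2]_{\slcsEeq}$ and $[w'_1]_{\slcsEeq} = [w_1]_{\slcsEeq}$, this is the desired transition. For the converse, suppose $[w_2]_{\slcsEeq} \trans{\dact} [w_1]_{\slcsEeq}$ is a transition of the minimal \lts. Unfolding the quotient construction, this means there exist $v_2 \in [w_2]_{\slcsEeq}$ and $v_1 \in [w_1]_{\slcsEeq}$ (i.e.\ $v_2 \beq^{\posToltsC(\calF)} w_2$ and $v_1 \beq^{\posToltsC(\calF)} w_1$) with $v_2 \trans{\dact} v_1$ in $\posToltsC(\calF)$; by the only rule producing $\dact$-transitions, namely Rule~(DWN), this forces $v_2 \succcurlyeq v_1$, i.e.\ $v_1 \preccurlyeq v_2$. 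Taking $w'_1 = v_1$ and $w'_2 = v_2$ and again invoking $\beq^{\posToltsC(\calF)} = \slcsEeq^{\calF}$ gives exactly the condition defining $R_{\min}([w_1]_{\slcsEeq},[w_2]_{\slcsEeq})$.

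The one point requiring care — and I expect it to be the main obstacle — is making precise what ``the minimal \lts{} obtained via branching bisimilarity'' is as a labelled transition system, in particular which transitions it carries between the equivalence classes. In general the branching-bisimulation quotient of an \lts{} has, between classes $C \neq C'$, a transition $C \trans{\lambda} C'$ iff there are $s \in C$, $\bar{s} \in C$, $s' \in C'$ with $s \beq \bar{s}$, $\bar{s} \trans{\tau^*} \cdots$, $\bar s \trans{\lambda} s'$ — i.e.\ it is defined up to the usual ``$\tau$-closure'' and removal of inert $\tau$-self-loops. Since here $\dact \ne \tau$ and, moreover, every state of $\posToltsC(\calF)$ has a $\tau$-self-loop and a $\dact$-self-loop (by reflexivity of $\preccurlyeq$ together with Rules~(TAU) and~(DWN)), I would note that $\tau$-closure does not introduce any spurious $\dact$-transitions: if $C \trans{\dact} C'$ in the quotient then there genuinely is a representative $\bar s \in C$ with $\bar s \trans{\dact} s'$, $s' \in C'$, because the $\tau$-steps preceding it stay inside the class $C$ (they are inert). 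This is precisely the observation already used repeatedly in the proof of Theorem~\ref{thm:StrongEqBranching} (Cases B and D there), so I would reuse that reasoning. With this bookkeeping settled, both implications are the short arguments sketched above.
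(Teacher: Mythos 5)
Your proposal is correct and follows essentially the same route as the paper: identify $W_{\min}$ with the state set of $\posToltsC(\calF)/\beq$ via Corollary~\ref{cor:LogeqEqBranch}, and then use that Rule~(DWN) is the only source of $\dact$-transitions together with the standard quotient construction to convert between $w_1 \preccurlyeq w_2$ and $\dact$-transitions in both directions. Your extra paragraph justifying that the branching-bisimulation quotient introduces no spurious $\dact$-transitions (via inert $\tau$-steps staying inside the class) is a careful elaboration of what the paper dismisses as the ``standard construction of the minimal \lts{} modulo an equivalence on its state set'', not a different argument.
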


\begin{proof}
  In the sequel, we let $\posToltsC(\calF)/\beq$ denote the minimal
  \lts{} obtained from $\posToltsC(\calF)$ via branching bisimilarity.
  First of all, by Corollary~\ref{cor:LogeqEqBranch}, $W_{\min}$
  coincides with the quotient of the set of states $W$ of
  $\posToltsC(\calF)$ modulo branching bisimilarity.  Now, suppose that
  $\alpha_2 \trans{\dact} \alpha_1$ is a transition of
  $\posToltsC(\calF)/\beq$.  By standard construction of the
  minimal \lts{} modulo an equivalence on its state set, we know that
  $w_1 \in \alpha_1$ and $w_2 \in \alpha_2$ exist such that
  $w_2 \trans{\dact} w_1$ is a transition of $\posToltsC(\calF)$.  But
  then, by Rule (DWN), we get that $w_1 \preccurlyeq w_2$ and so, by
  definition of $\calF_{\min}$, we finally get $R_{\min}(\alpha_1,\alpha_2)$.
  If, on the other hand, $R_{\min}(\alpha_1,\alpha_2)$ holds, then we know
  that there exist $w_1 \in \alpha_1$ and $w_2 \in \alpha_2$ such that
  $w_1 \preccurlyeq w_2$, by definition of $\calF_{\min}$. But then,
  by Rule (DWN), we get that $w_2 \trans{\dact} w_1$ is a transition
  of $\posToltsC(\calF)$. Again, by standard construction of the
  minimal \lts{} modulo an equivalence on its state set, we know that
  $\alpha_2 \trans{\dact} \alpha_1$ is a transition
  of~$\posToltsC(\calF)/\beq$.
\end{proof}

\begin{rem}\label{rem:NotPoset}
The fact that the minimal model might not be a poset model  does not constitute a problem, at any (i.e. theoretical, implementation, user) level.
More specifically, at the theoretical level, 
Theorem~\ref{theo:MinE} guarantees that \slcsE{} interpreted on a finite poset model
$\calF$ is preserved and reflected by the minimisation result $\calF_{\min}$, despite the finite reflexive Kripke model $\calF_{\min}$  is not necessarily a poset model. 
The above, via Theorem~\ref{theo:calMPresForm},
guarantees that \slcsE{}  is preserved and reflected  by the full chain of translations, from the polyhedral model $\calP$ to the minimal model $\map(\calP)_{\min}$ via finite poset $\map(\calP)$.

In summary, we have:
\begin{equation}\label{correctness}
\calP,x \models \form \quad \mbox{ iff }\quad
\map(\calP),\map(x) \models \form \quad\mbox{ iff }\quad
\map(\calP)_{\min},  [\map(x)]_{\slcsEeq} \models \form.
\end{equation}
Taking the first and the last statements of (\ref{correctness}) above we get the following: 
a point $x$ of a polyhedral model $\calP$,
laying in a cell $\relint{\sigma}$ of $\calP$, satisfies a \slcsE{} formula $\form$ in the polyhedral interpretation
of $\form$ on $\calP$ if and only if 
the node of the Kripke model $\map(\calP)_{\min}$ that (uniquely) represents
the equivalence class $[\map(x)]_{\slcsEeq}$ of $\map(x)=\relint{\sigma}$ modulo $\slcsEeq$ (or, equivalently modulo weak \plm-bisimilarity) satisfies $\form$ in the relational interpretation
of $\form$ on $\map(\calP)_{\min}$.
At the implementation level, an experimental prototype of a variant of \polylogica{} has been developed that is capable to deal with general Kripke models and $\eta$ semantics, as briefly discussed in Section~\ref{sec:toolchain} below.
At  the user level, we observe that the user deals only with the description of the polyhedral model $\calP$ and the input formula $\form$ as input and the (figure showing the) cells satisfying $\form$ as output of model checking.  All the details of the minimisation  procedure are hidden to the user.\closerem
\end{rem}

\section{An Experimental Minimisation Toolchain}
\label{sec:toolchain}

In this section we provide a brief overview of an experimental
toolchain to study the minimisation procedure for polyhedral models
and to illustrate the practical potential of the theory presented in
the previous section. The further development and a thorough analysis
of the toolchain will be the subject of future work.
Figure~\ref{fig:toolchain} illustrates the elements of the toolchain
that, starting from a polyhedral model in~\texttt{json} format,
produces the set of equivalence classes and the minimal Kripke
model. The former may serve as input for the \polyvisualizer{}
tool\footnote{http://ggrilletti2.scienceontheweb.net/polyVisualizer/polyVisualizer\_static\_maze.html}~\cite{Be+22},
a polyhedra visualizer, to inspect the results, whereas the latter can
be used for spatial model checking. 
For that purpose, a variant of \polylogica{} is required, since minimal models may turn out not to  be posets. In particular, they  might not be transitive (see the discussion in Example~\ref{ex:min} and in Section~\ref{sec:EtaMinimisation}).
In addition, the variant has to accomodate for the different semantics of the reachability operators $\gamma$ and $\eta$. An experimental prototype of the tool has been developed and it is publicly available.$^{\ref{ftn:swAvailability}}$ The complexity of the model checking algorithm is linear in the size of the model and the number of sub-formulas to be checked. A fully fledged implementation and efficiency study is left for future work.

The toolchain is also
able to map the results obtained on the minimal Kripke model back to
the original polyhedral model, because of the direct correspondence
between the states of the Kripke model and the equivalence classes.

\begin{figure}[h!]
\centering
\tikzstyle{block} = [rectangle, draw, fill=blue!25,text width=6em, text centered, rounded corners, minimum height=2em, line width=1pt ]
\tikzstyle{line} = [draw, -latex', line width=1pt]
\tikzstyle{mucrlts} = [text=black, fill=green!25, draw=green]
\resizebox{!}{0.95in}{
\begin{tikzpicture}[node distance = 2cm, auto]
    \node [block] (poly) {Poly2Poset};
    \node [block, right of=poly, node distance=3cm] (poset) {Poset2mcrl2};
    \node [block, right of=poset, node distance=3cm,mucrlts] (mcrl) {mcrl2lps};
    \node [block, below of=mcrl, mucrlts] (lps) {lps2lpspp};
    \node [block, right of=mcrl, node distance=3cm,mucrlts] (lts) {lps2lts};
    \node [block, right of=lps, node distance=3cm] (findStates) {findStates\\renameLps};
    \node [block, right of=lts, node distance=3cm,mucrlts] (mini) {ltsMinimise};
    \node [block, below of=mini] (classes) {Classes $+$\\Kripke model};
    \path [line] (poly) -> (poset);
    \path [line] (poset) -> (mcrl);
    \path [line] (mcrl) -> (lps);
    \path [line] (lps) -> (findStates);
    \path [line] (lts) -> (mini);
    \path [line] (mini) -> (classes);
     \path [line] (findStates) -> (classes);
     \path [line] (findStates) -> (lts);
    
\end{tikzpicture}
}
\caption{Toolchain for polyhedral model minimisation. Parts in green
  are command line operations of the \mcrltwo{} toolset. Parts in
  blue are developed in Python in the context of the current
  paper.}\label{fig:toolchain}
  
\end{figure}

The toolchain uses several command line operations provided by the
\mcrltwo{} toolset \cite{Bu+19} (shown in green in
Figure~\ref{fig:toolchain}) and a number of operations developed in the
context of this paper (shown in blue in Figure~\ref{fig:toolchain}). The
prototype aims to demonstrate the feasibility of our approach from a
qualitative perspective, providing support for examples that
illustrate the practical usefulness of the theory. 
The operation \texttt{Poly2Poset} transforms the polyhedral model into
a poset model. The operation \texttt{Poset2mcrl2} encodes the poset
model into a \mcrltwo{} specification of an LTS following the
procedure defined in Definition~\ref{def:LTSetaConc}. The operations
\texttt{mcrl2lps} and \texttt{lps2lts} transform the encoding into a
linearised LTS-representation which is then minimised
(\texttt{ltsMinimise}) via branching bisimulation. The operation
\texttt{lps2lpspp} provides a textual version of the linear process
which is used to obtain the correspondence between internal state
labels of the minimised LTS and the cells of the original polyhedral
model present in the equivalence classes. The latter, in turn, are
essential for the generation of the result files of model checking the
minimised model and form the input to the \polyvisualizer{} (together
with the original polyhedral model and a colour definition
file). Figure~\ref{fig:cube3x3} and Figure~\ref{fig:cube3x3more} in the next
section show an example.\footnote{\label{ftn:swAvailability}The software and examples are available at \url{https://github.com/VoxLogicA-Project/Polyhedra-minimisation}.} Maintaining the relation between internal
state labels of the minimised LTS and the original states of the poset
and polyhedral model is the most tricky part of the toolchain as such
internal state labels are assigned dynamically in the \texttt{lps2lts}
procedure. This aspect is dealt with by the \texttt{findStates} and
\texttt{renameLps}
procedures.

\section{Minimisation at Work}\label{sec:Experiments}

In this section, we show, as a proof of concept,  an example of use of the experimental toolchain presented in Section~\ref{sec:toolchain}.
Figure~\ref{subfig:maze3D} 
shows a simple symmetric
3D maze composed of one white room in the middle, 
26~green rooms, and connecting grey corridors.  
Like in the previous examples, the cells of the white and green rooms satisfy only predicate letter
$\mathbf{white}$ and $\mathbf{green}$, respectively. Those of corridors satisfy only $\mathbf{corridor}$.
In total, the structure consists of 2,619
cells. We have chosen a symmetric structure on purpose. 
This makes it easy to interpret
the various equivalence classes as nodes of the minimal Kripke model of
this structure, shown in Figure~\ref{subfig:posetMaze}.
Note the considerable reduction that is obtained: from 2,619 cells
to just~7 in the minimal model (observe furthermore that, for this example, the minimal model is also a poset model).

Figure~\ref{subfig:minLTSMaze}  
shows the minimal \lts{} with respect
to branching bisimilarity as produced by \mcrl.\footnote{The numbering of
the states is as generated by \mcrltwo. 
} 
The minimal Kripke model with respect to $\slcsEeq$ obtained (see Theorem~\ref{theo:d})  from the \lts{} of 
Figure~\ref{subfig:minLTSMaze} is shown in Figure~\ref{subfig:posetMaze}.
The Kripke model has seven nodes --- of course, in direct correspondence with the seven states of the minimal  \lts.
Node~$\mathtt{C}1$ represents the class of the cells of the white room and is coloured in white in the figure, three nodes ($\mathtt{C}3$, $\mathtt{C}0$, and~$\mathtt{C}5$) correspond to cells of corridors and are coloured in grey, and the other three ($\mathtt{C}4$, $\mathtt{C}2$, and~$\mathtt{C}6$) correspond to  cells of green rooms, and are coloured in green.  
Green node~$\mathtt{C}4$ (visualised on the original polyhedron in
Figure~\ref{subfig:C4}) represents the class of (the cells of) green rooms that are
directly connected to the white room by a corridor. Green node~$\mathtt{C}2$
(visualised in Figure~\ref{subfig:C2}) represents the class of (the cells of) green
rooms situated on the edges of the maze. Green node~$\mathtt{C}6$ (visualised
in Figure~\ref{subfig:C6}) represents the class of green rooms situated
at the corners of the maze.

\begin{figure}
\centering
\subfloat[\label{subfig:maze3D}Maze]{\includegraphics[width=0.25\textwidth]{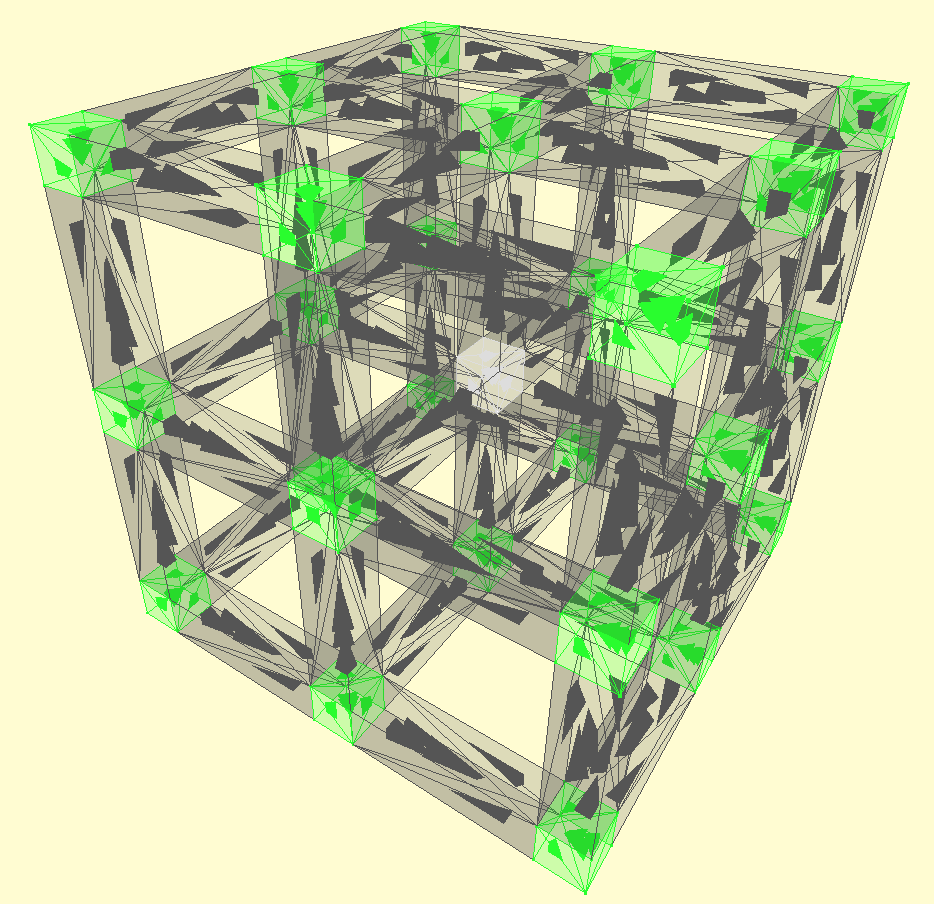}}\quad\quad
\subfloat[\label{subfig:minLTSMaze}Minimal LTS]{\scriptsize
\resizebox{4in}{!}{
\begin{tikzpicture}[%
  scale=0.275,
  every loop/.style={looseness=10, min distance=30mm},
  every state/.style={inner sep=0pt, minimum size=6mm},
  every edge/.style={draw, ->, >=Stealth}]

  \node [state, fill=gray!50] at (0,0) (state0) {{C0}};
  \node [state] at (-13.5,0) (state1) {{C1}};
  \node [state, fill=green] at (4.5,0) (state2) {{C2}};
  \node [state, fill=gray!50] at (-9,0) (state3) {{C3}};
  \node [state, fill=green] at (-4.5,0) (state4) {{C4}};
  \node [state, fill=gray!50] at (9,0) (state5) {{C5}};
  \node [state, fill=green] at (13.5,0) (state6) {{C6}};
  \node [draw=none] (dummy) at (0,-5.5) {} ;

  \draw (state5) edge [loop above] node [above] {\ftncorridor} (state5) ;
  \draw (state5) edge [bend right=60] node [below] {\ftndact} (state6);
  \draw (state5) edge node [below] {\ftncact} (state2);
  \draw (state5) edge [bend left=50] node [above] {\ftncact} (state6);
  \draw (state5) edge [loop below] node [below] {\ftndact} (state5);
  \draw (state5) edge [bend left=60] node [below] {\ftndact} (state2);

  \draw (state0) edge [bend left=60] node [below] {\ftndact} (state4);
  \draw (state0) edge [loop above] node [above] {\ftncorridor} (state0);
  \draw (state0) edge [bend left=50] node [above] {\ftncact} (state2);
  \draw (state0) edge node [below] {\ftncact} (state4);
  \draw (state0) edge [loop below] node [below] {\ftndact} (state0);
  \draw (state0) edge [bend right=60] node [below] {\ftndact} (state2);

  \draw (state3) edge [bend left=60]node [below] {\ftndact} (state1);
  \draw (state3) edge [loop above] node [above] {\ftncorridor} (state3);
  \draw (state3) edge [bend right=60] node [below] {\ftndact} (state4);
  \draw (state3) edge [bend left=50] node [above] {\ftncact} (state4);
  \draw (state3) edge node [below] {\ftncact} (state1);
  \draw (state3) edge [loop below] node [below] {\ftndact} (state3);

  \draw (state6) edge node [below] {\ftncact} (state5);
  \draw (state6) edge [out=-45, in=0, looseness=6] node [below, yshift=-2pt] {\ftndact} (state6);
  \draw (state6) edge [out=+45, in=0, looseness=6] node [above, yshift=+2pt] {\ftngreen} (state6);

  \draw (state2) edge [loop above, min distance=45mm] node [above] {\ftngreen} (state2);
  \draw (state2) edge [loop below] node [below] {\ftndact} (state2);
  \draw (state2) edge node [below] {\ftncact} (state0);
  \draw (state2) edge [bend left=50] node [above] {\ftncact} (state5);

  \draw (state4) edge node [below] {\ftncact} (state3);
  \draw (state4) edge [loop above,  min distance=45mm] node [above] {\ftngreen} (state4);
  \draw (state4) edge [loop below] node [below] {\ftndact} (state4);
  \draw (state4) edge [bend left=50] node [above] {\ftncact} (state0);

  \draw (state1) edge [out=135, in=180, looseness=6] node [above, yshift=+2pt] {\ftnwhite} (state1);
  \draw (state1) edge [out=225, in=180, looseness=6] node [below, yshift=-2pt] {\ftndact} (state1);
  \draw (state1) edge [bend left=50] node[above] {\ftncact} (state3);
\end{tikzpicture}
}
} 
\\
 \subfloat[Min. Kripke model\label{subfig:posetMaze}]{
    \phantom{AA}
    \begin{tikzpicture}[scale=1.0, every node/.style={transform shape}]
    \tikzstyle{kstate}=[rectangle,draw=black,fill=white]
    \tikzset{->-/.style={decoration={
		markings,
		mark=at position #1 with {\arrow{>}}},postaction={decorate}}}
   
    \node[kstate,line width= 1mm, draw=cyan,fill=white!50,label={270:$$}  ] (P0) at (  0,0) {$C1$}; 
    \node[kstate,line width= 1mm, draw=red,fill=green!50,label={270:$$}   ] (P1) at (  1,0) {$C4$}; 
    \node[kstate,line width= 1mm, draw=blue,fill=green!50,label={270:$$}   ] (P2) at (  2,0) {$C2$}; 
    \node[kstate,line width= 1mm, draw=yellow,fill=green!50,label={270:$$}   ] (P3) at (  3,0) {$C6$}; 

    \node[kstate,line width= 1mm, draw=magenta,fill=gray!50,label={90:$$} ] (E0) at (0.5,1) {$C3$}; 
    \node[kstate,line width= 1mm, draw=orange,fill=gray!50,label={90:$$} ] (E1) at (1.5,1) {$C0$};
    \node[kstate,line width= 1mm, draw=black,fill=gray!50,label={90:$$} ] (E2) at (2.5,1) {$C5$}; 

    \draw (P0) edge[->,thick] (E0);
    \draw (P1) edge[->,thick] (E0);
    \draw (P1) edge[->,thick] (E1);

    \draw (P2) edge[->,thick] (E1);
    \draw (P2) edge[->,thick] (E2);
    \draw (P3) edge[->,thick] (E2);
    
    \path (P0) edge[->, loop below,thick] (P0);
    \path (P1) edge[->, loop below,thick] (P1);
    \path (P2) edge[->, loop below,thick] (P2);
    \path (P3) edge[->, loop below,thick] (P3);
    
    \path (E0) edge[->, loop above,thick] (E0);
    \path (E1) edge[->, loop above,thick] (E1);
    \path (E2) edge[->, loop above,thick] (E2);

\end{tikzpicture}  
\phantom{AAAAAAA}     
}\\
\subfloat[\label{subfig:C4}C4]{\includegraphics[width=0.2\textwidth]{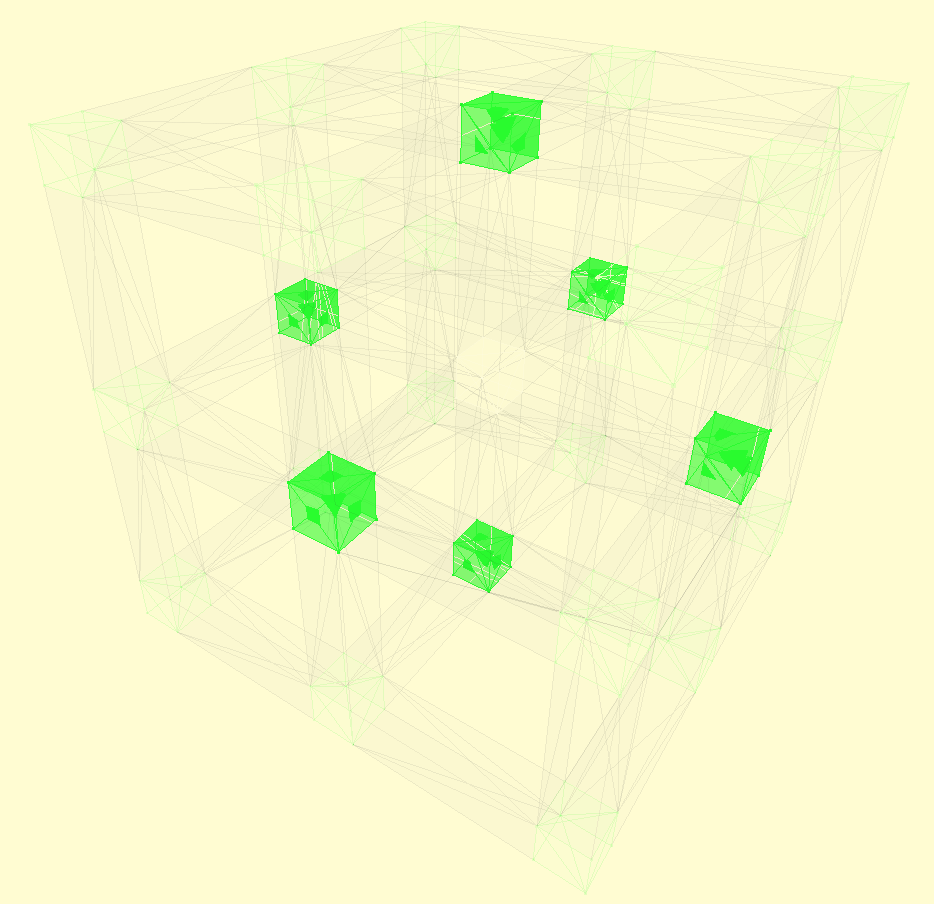}\phantom{AA}}\quad
\subfloat[\label{subfig:C2}C2]{\includegraphics[width=0.2\textwidth]{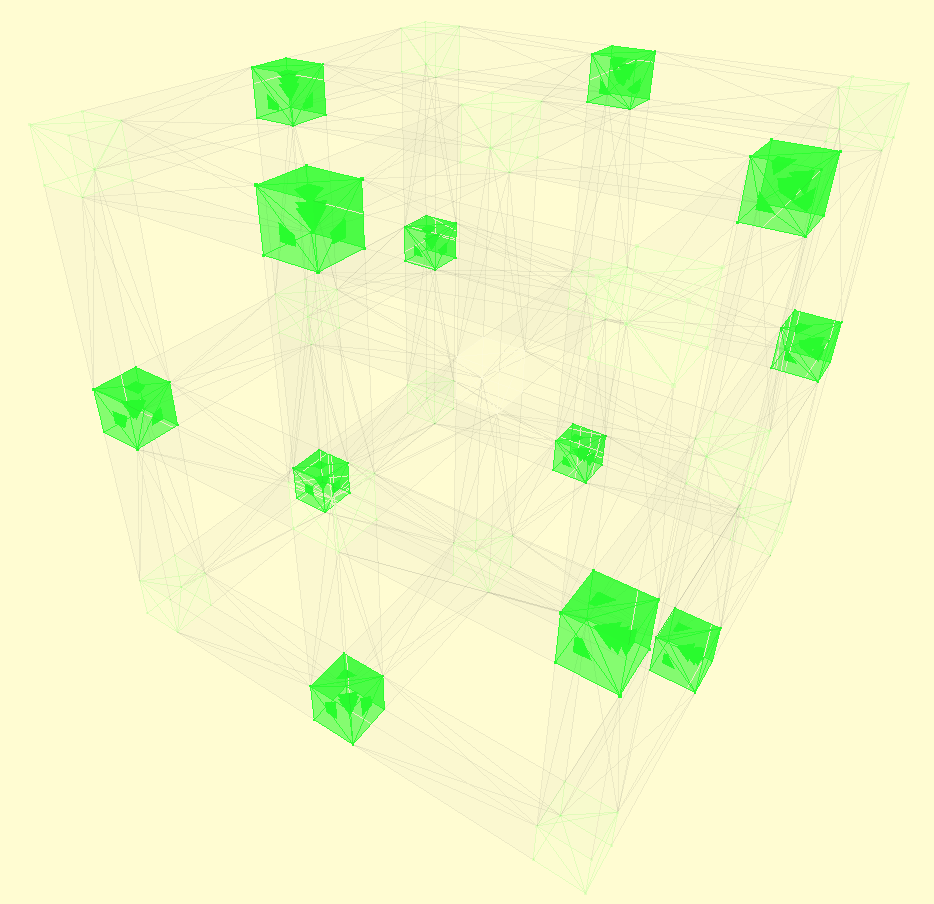}\phantom{AA}}\quad
\subfloat[\label{subfig:C6}C6]{\includegraphics[width=0.2\textwidth]{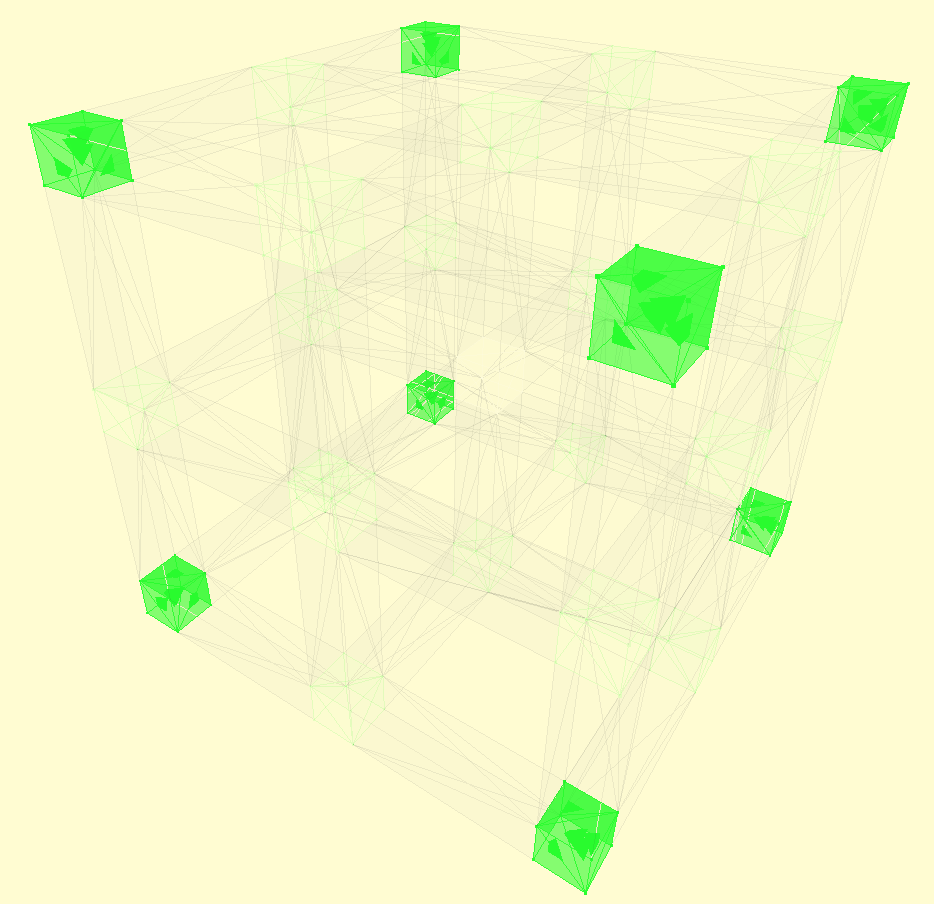}\phantom{AA}}\\
\caption{A maze with 27 rooms: 26 green and one white in the middle.}\label{fig:cube3x3}
\end{figure}

It is not difficult to find \slcsE{} formulas that distinguish the
various green classes. For example, the cells in~$\mathtt{C}4$ satisfy
$\phi_1 = \eta \mkern1mu (\mathbf{green} \lor \eta \mkern1mu
(\mathbf{corridor},\mathbf{white}), \mathbf{white})$, whereas no cell in
$\mathtt{C}2$ or~$\mathtt{C}6$ satisfies~$\phi_1$. 
To distinguish class~$\mathtt{C}2$ from $\mathtt{C}6$, one can observe that cells in~$\mathtt{C}2$ satisfy
$\phi_2 = \eta \mkern1mu (\mathbf{green} \lor \eta \mkern1mu
(\mathbf{corridor},\phi_1), \phi_1)$ whereas those in~$\mathtt{C}6$ do not
satisfy~$\phi_2$. Figure~\ref{fig:cube3x3more} shows the result of
\polylogica{} model checking for the formulas $\phi_1$ (see
Figure~\ref{subfig:phi1}) and~$\phi_2$ (see
Figure~\ref{subfig:phi2}).\footnote{All tests were performed on a
  workstation equipped with an Intel(R) Core(TM) i9-9900K CPU @ 3.60
  GHz (8~cores, 16~threads).}

\begin{figure}
\centering
\subfloat[\label{subfig:3DMaze}]{\includegraphics[width=0.2\textwidth]{maze3x3x3LC_full.png}\phantom{AA}}\quad
\subfloat[\label{subfig:phi1}$\phi_1$]{\includegraphics[width=0.2\textwidth]{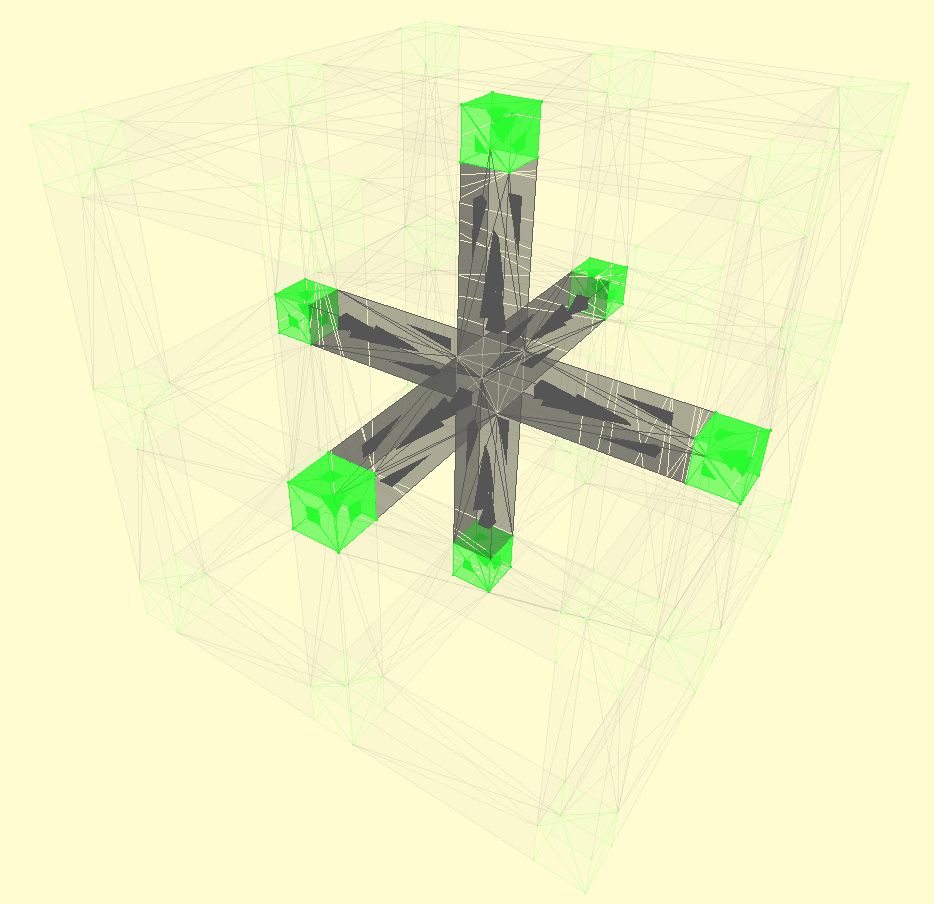}\phantom{AA}}\quad
\subfloat[\label{subfig:phi2}$\phi_2$]{\includegraphics[width=0.2\textwidth]{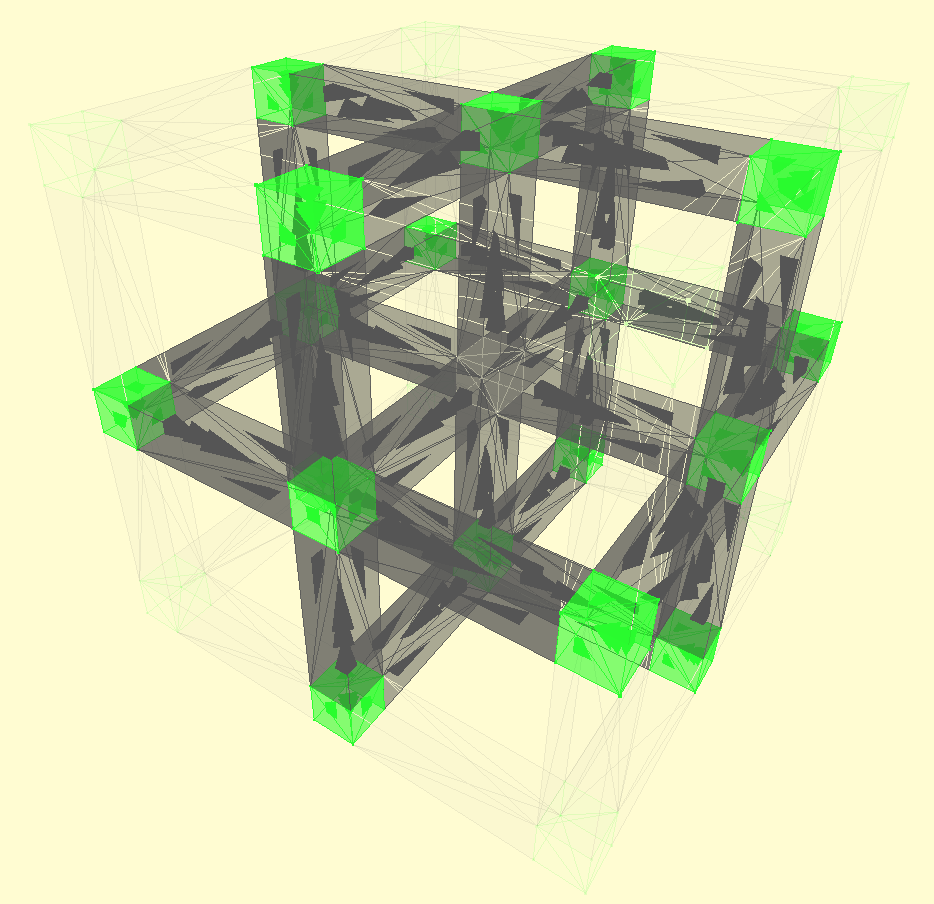}\phantom{AA}}\\
\caption{(\ref{subfig:3DMaze}) The 3D maze. Results of \polylogica{} model checking of the formulas $\phi_1$ (\ref{subfig:phi1})  and $\phi_2$ (\ref{subfig:phi2}) on the minimised model  as they are shown to the user by \polyvisualizer{} --- results are mapped back automatically by the procedure onto the full 3D maze. 
}\label{fig:cube3x3more}
\end{figure}

Table~\ref{tab:toolchain} provides a  detailed overview regarding the time
performance of the various components of the toolchain (see Figure~\ref{fig:toolchain})
 on four models of the maze of different sizes.\footnote{Maze 3x3x3 is shown in Figure~\ref{subfig:maze3D}, Maze 3x5x3 in Figure~\ref{subfig:c3x5x3} (in Appendix~\ref{apx:AdditionalExamples}), and Maze 3x5x4 in Figure~\ref{subfig:c3x5x4}.} 
 In each model all green rooms form the outer
frame of the maze and white rooms are positioned inside the maze. 
The table has one separate column for each  maze. 
The first horizontal block shows the number of cells and vertices for the models, as well as the
number of the equivalence classes.
The names of the components of the toolchain are listed in the first column of the
second horizontal block of the table. In the list two additional activities appear,
namely, loading of the model (\texttt{loadData}) and the production of
the equivalence classes and of the minimal Kripke model (\texttt{createJsonFiles} and
\texttt{createModelFile}, respectively). The remaining columns show the computing time of each component, in seconds.
The third block shows the model checking times for formulas $\phi_1$ and $\phi_2$, in the original as well as the minimal models.

Note the substantial reduction in size (several orders of magnitude) of the
minimised model, where the number of states corresponds to the number
of equivalence classes, compared to the full model (number of
cells). This leads to a similar reduction in model checking time (see
last two lines of Table~\ref{tab:toolchain}).
Clearly, the time for encoding (\texttt{poset2mcrl2}) and minimising (see~\texttt{ltsMinimise}) 
the model is very small, whereas there seems to be a bottleneck of computing time needed for the \mcrl{} procedure~\texttt{lps2lts}. However, the latter step may be avoided by
implementing the encoding directly into the binary \mcrl{} LTS
format. This requires usage of the \mcrl{} \texttt{C++}
\emph{application programming interface}, and is left to future work.

In summary, the considerable reduction of the models and their relative model checking times are very encouraging, also considering that the minimised model, once obtained, can be used for multiple model checking sessions.

\begin{table}
\begin{center}
\caption{Performance for 3D maze example. All times are in seconds.}\label{tab:toolchain}
\npdecimalsign{.}
\nprounddigits{2}
\resizebox{4.5in}{!}{
\begin{tabular}{|l|n{5}{2}|n{5}{2}|n{5}{2}|n{5}{2}|}
\hline \hline \rule{0pt}{11pt} 
&{Maze 3x3x3} & {Maze 3x5x3} & {Maze 3x5x4}& {Maze 5x5x5}\\
\hline\hline \rule{0pt}{11pt}\!\! 
Nr. of classes & \hfill{ 7} &\hfill{ 21}&\hfill {38} &\hfill {21}\\ 
Nr. of cells &\hfill { 2,619} &\hfill { 3,568}& \hfill {6,145}&\hfill{13,375}\\
Nr. of vertices &\hfill { 216}&\hfill { 288}& \hfill{480}&\hfill{1,000}\\
\hline\hline \hline\hline \rule{0pt}{11pt} 
 \!\!{\tt poly2poset} & 0.353486&0.344858&0.430944& 1.0970571\\
{\tt loadData} & 0.002311 &0.004278& 0.005411& 0.016074\\
{\tt poset2mcrl2} & 0.155092&0.299725&0.418008& 0.94669890 \\
{\tt mcrl2lps} & 1.706578&3.507069& 5.424817& 23.722838\\
{\tt lps2lpspp} & 0.241190 &0.410788& 0.574084 & 1.946931\\
{\tt findStates} & 0.171570&0.311257&0.405276& 4.177323 \\
{\tt renamelps} &  0.544962&0.946846& 1.341600&4.467334\\
{\tt lps2lts} & 21.412016 &78.257813&135.222544&794.327176\\
{\tt ltsMinimise} & 0.063677 &0.227126& 0.239059&0.350995\\
{\tt createJsonFiles} & 6.352409 &51.370290& 160.529789&587.993786\\
{\tt createModelFile} & 0.008330& 0.010408& 0.011927&0.026501\\
\hline\hline \rule{0pt}{11pt}
\!\!Model checking original model & 8.76 & 24.9 & 64.5 & 671.3 \\
Model checking minimised model & 0.02 & 0.03 & 0.03 & 0.03\\
\hline\hline
\end{tabular}
}
\end{center}
\end{table}

\section{Conclusions}
\label{sec:ConclusionsFW}

Polyhedral models are widely used in domains that exploit mesh processing such as 3D computer graphics. These models are typically huge, consisting of very many cells. Spatial model checking of such models is an interesting, novel approach to verify properties of such models and to visualise the results in a graphically appealing way. In previous work the polyhedral model checker \polylogica{} was developed for this purpose~\cite{Be+22}.

In~\cite{Be+22} simplicial bisimilarity was proposed for polyhedral models --- i.e. models of continuous space --- while  \plm-bisimilarity, the corresponding equivalence for cell-poset models --- discrete representations of polyhedral models --- was first introduced in~\cite{Ci+23c}.
In order to support large model reductions, in this paper the novel notions of weak  simplicial bisimilarity and  weak  \plm-bisimilarity have been presented, and the correspondence between the two has been studied. 
We have also presented \slcsE{,} a weaker version of the Spatial Logic for Closure Spaces on polyhedral models, and we have shown that simplicial bisimilarity enjoys the Hennessy-Milner property (Theorem~\ref{thm:HMPpoly}). Furthermore, we have shown that the property holds for \plm-bisimilarity on poset models and the interpretation of \slcsE{} on such models (Theorem~\ref{theo:PMbisEqSLCSEeq}).
\slcsE{} can be used in the geometric spatial model checker  \polylogica{} for checking spatial reachability properties of polyhedral models. 
Model checking results can be visualised by projecting them onto the original polyhedral structure, showing in a specific colour all the cells satisfying the property of interest.

In order to reduce model checking time and computing resources, we have proposed an effective procedure that computes the minimal model, modulo logical equivalence with respect to the logic \slcsE{,} of a polyhedral model. Such minimised models are also amenable to model checking with a variant of \polylogica{} dealing with general Kripke models and with the $\eta$ modality. 

The procedure has been formalised and proven correct. A prototype implementation of the procedure has been developed in the form of a toolchain, that also involves operations provided by the \mcrl{} toolset, to study the practical feasibility of the approach and to identify possible bottlenecks. 
We have also shown how the model checking results of the minimal model can be projected back onto the original polyhedral model. This provides a direct 3D visual inspection of the results through the polyhedra visualizer \polyvisualizer.

In future work we aim at a more sophisticated implementation of the procedure, possibly using in a more direct way the minimisation operations provided by \mcrl{} and integrating the various steps in the procedure. 
Such an implementation, would also enable us to experiment applying our methodology and supporting tools to real-world case studies.
On the theoretical side, an interesting issue that is beyond the scope of the present paper, and that 
we would like to address in future work, is the relationship between \slcsE, \slcsG, and~$\Diamond$. 
Finally, we would be interested in extending \slcsE/\slcsG{} with additional operators, for example those concerning notions of distance, and in applying our spatial model checking framework to a larger number of case studies.

\section*{Acknowledgment}
Research partially supported by bilateral project between CNR (Italy)
and SRNSFG (Georgia) ``Model Checking for Polyhedral Logic''
(\#CNR-22-010); European Union -- Next GenerationEU -- National Recovery
and Resilience Plan (NRRP), Investment~1.5 Ecosystems of Innovation,
Project “Tuscany Health Ecosystem” (THE), CUP: B83C22003930001;
European Union -- Next-GenerationEU -- National Recovery and Resilience
Plan (NRRP) – MISSION~4 COMPONENT~2, INVESTMENT N.~1.1, CALL PRIN 2022
D.D.\ 104 02-02-2022 – (Stendhal) CUP N.~B53D23012850006; MUR project
PRIN 2020TL3X8X ``T-LADIES''; CNR project "Formal Methods in Software
Engineering 2.0", CUP B53C24000720005; Shota Rustaveli National
Science Foundation of Georgia grant
\#FR\nobreakdash-22\nobreakdash-6700.

\bibliographystyle{alphaurl}
\bibliography{main}

\appendix

\pagebreak
\section{Detailed Proofs}\label{apx:DetailedProofs}

\subsection{Proof of Lemma~\ref{lem:pm2ud}}\label{apx:prf:lem:pm2ud}$ $\\

\noindent
{\bf Lemma~\ref{lem:pm2ud}.}
{\em 
Given a reflexive Kripke frame $(W,R)$ and  a \plm-path $\pi:[0;\ell]\to W$,
there is a \upd-path $\pi':[0;\ell']\to W$, for some $\ell'$, and a 
total, surjective, monotonic non-decreasing function $f:[0;\ell'] \to [0;\ell]$ such that 
$\pi'(j)=\pi(f(j))$ for all $j\in [0;\ell']$.
}

\begin{proof}
We proceed by induction on the length $\ell$ of \plm-path $\pi$.\\
{\bf Base case:} $\ell =2$.\\
In this case, by definition of \plm-path, we have $R(\pi(0),\pi(1))$ and $\cnv{R}(\pi(1),\pi(2))$, which, by definition of \upd-path, implies that $\pi$ itself is an \upd-path and $f:[0;\ell] \to [0;\ell]$ is just the identity function.\\

\noindent
{\bf Induction step.} We assume the assertion holds for all \plm-paths of length $\ell$ and we prove it for $\ell+1$.
Let $\pi:[0;\ell+1] \to W$ be a \plm-path. 
Then $\cnv{R}(\pi(\ell),\pi(\ell+1))$, since $\pi$ is a $\pm$-path.
We consider the following cases:\\
{\bf Case A:} $\cnv{R}(\pi(\ell-1),\pi(\ell))$ and  $\cnv{R}(\pi(\ell),\pi(\ell+1))$.\\
In this case, consider the prefix $\pi_1 = \pi | [0;\ell]$ of $\pi$, noting that $\pi_1$ is a \plm-path of length $\ell$. By the Induction Hypothesis there is an \upd-path $\pi'_1$ of some length $\ell'_1$
and a total, surjective, monotonic non-decreasing function $g:[0;\ell'_1] \to [0;\ell]$ such that 
$\pi'_1(j)=\pi_1(g(j))=\pi(g(j))$ for all $j\in [0;\ell'_1]$. 
Note that $\pi'_1(\ell'_1) = \pi(\ell)$ so that  the sequentialisation
of $\pi'_1$ with the two-element path  $(\pi(\ell),\pi(\ell+1))$ is well-defined.
Consider  path $\pi'=(\pi'_1 \cdot (\pi(\ell),\pi(\ell+1)))\leftarrow \ell'_1$, of length $\ell'_1+2$
consisting of $\pi'_1$ followed by $\pi(\ell)$ followed in turn by $\pi(\ell+1)$. 
In other words, $\pi' = (\pi'_1(0) \ldots \pi'_1(\ell'_1),\pi(\ell),\pi(\ell+1))$, with $\pi'_1(\ell'_1) = \pi(\ell)$ --- recall that $R$ is reflexive.
It is easy to see that $\pi'$ is an \upd-path and that function $f:[0;\ell'_1+2] \to [0;\ell+1]$, with
$f(j)=g(j)$ for $j\in [0;\ell'_1]$, $f(\ell'_1+1)=\ell$ and $f(\ell'_1+2)=\ell+1$, is 
total, surjective, and monotonic non-decreasing.\\
{\bf Case B:} $R(\pi(\ell-1),\pi(\ell))$ and $\cnv{R}(\pi(\ell),\pi(\ell+1))$.\\
In this case the prefix $\pi | [0;\ell]$ of $\pi$ is {\em not} a \plm-path.
We then consider the path consisting of prefix $\pi|[0;\ell-1]$ where we add a copy
of $\pi(\ell-1)$, i.e. the path $\pi_1=(\pi|[0;\ell-1])\leftarrow (\ell-1)$ --- we can do that because $R$ is reflexive.
Note that $\pi_1$ is a \plm-path and has length $\ell$.
By the Induction Hypothesis there is an \upd-path $\pi'_1$ of some length $\ell'_1$
and a total, surjective, monotonic non-decreasing function $g:[0;\ell'_1] \to [0;\ell]$ such that 
$\pi'_1(j)=\pi_1(g(j))=\pi(g(j))$ for all $j\in [0;\ell'_1]$. 
Consider path $\pi'=\pi'_1 \cdot (\pi(\ell-1),\pi(\ell), \pi(\ell+1))$, of length $\ell'_1 +2$,
that is well defined since $\pi'_1(\ell'_1)=\pi(\ell-1)$ by definition of $\pi_1$.
In other words, $\pi'=(\pi'_1(0),\ldots,\pi'_1(\ell'_1),\pi(\ell),\pi(\ell+1))$, with $\pi'_1(\ell'_1) = \pi(\ell-1)$.
Path $\pi'$ is an \upd-path. In fact $\pi'|[0;\ell'_1]=\pi'_1$ is an \upd-path. Furthermore, 
$\pi'(\ell'_1)=\pi(\ell-1)$, $R(\pi(\ell-1),\pi(\ell))$, $\cnv{R}(\pi(\ell),\pi(\ell+1))$ and $\pi(\ell+1)=\pi'(\ell'_1 +2)$.
Finally, function $f:[0;\ell'_1+2] \to [0;\ell+1]$, with
$f(j)=g(j)$ for $j\in [0;\ell'_1]$, $f(\ell'_1+1)=\ell$ and $f(\ell'_1+2)=\ell+1$, is 
total, surjective, and monotonic non-decreasing.
\end{proof}

\subsection{Proof of Lemma~\ref{lem:d2ud}}\label{apx:prf:lem:d2ud}$ $\\

\noindent
{\bf Lemma~\ref{lem:d2ud}.}
{\em
Given a reflexive Kripke frame  $(W,R)$ and  a \dwn-path $\pi:[0;\ell]\to W$,
there is an \upd-path $\pi':[0;\ell']\to W$, for some $\ell'$, and a 
total, surjective, monotonic non-decreasing function $f:[0;\ell'] \to [0;\ell]$ such that 
$\pi'(j)=\pi(f(j))$ for all $j\in [0;\ell']$.
}

\begin{proof}
The proof is carried out by induction on the length $\ell$ of $\pi$.\\
{\bf Base case.} $\ell=1$.
Suppose $\ell=1$, i.e. $\pi:[0;1] \to W$ with $\cnv{R}(\pi(0),\pi(1))$. Then let 
$\pi':[0;2]\to W$ be such that $\pi'(0)=\pi'(1)=\pi(0)$ and $\pi'(2)=\pi(1)$ --- we can do that since $R$ is reflexive --- and
$f:[0;2] \to [0;1]$ be such that $f(0)=f(1)=0$ and $f(2)=1$.
Clearly $\pi'$ is an \upd-path and $\pi'(j)=\pi(f(j))$ for all $j\in [0;2]$.\\
{\bf Induction step.} We assume the assertion  holds for all \dwn-paths of length $\ell$ and we prove it for $\ell+1$.
Let $\pi:[0;\ell+1] \to W$ a \dwn-path and suppose the assertion  holds for all \dwn-paths of length $\ell$. 
In particular, it holds for $\pi\uparrow 1$, i.e., there is an \upd-path $\pi''$ 
of some length $\ell''$ with $\pi''(0) = \pi(1)$, and 
total, monotonic non-decreasing surjection $g:[0;\ell'']\to W$ such that 
$\pi''(j) = \pi(g(j))$ for all $j\in [0;\ell'']$.
Suppose $R(\pi(0),\pi(1))$ does not hold. Then, since $R$ is reflexive, we let $\pi'= (\pi(0),\pi(0),\pi(1))\cdot \pi''$ and
$f :[0;\ell''+2] \to [0;\ell+1]$ with $f(0)=f(1)=0$ and
$f(j)=g(j-2)$ for all $j\in[2;\ell''+2]$. 
If instead $R(\pi(0),\pi(1))$, then we let $\pi'= (\pi(0),\pi(1),\pi(1))\cdot \pi''$ and
$f :[0;\ell''+2] \to [0;\ell+1]$ with $f(0)=0, f(1)=1$ and
$f(j)=g(j-2)$ for all $j\in[2;\ell''+2]$.
\end{proof}

\subsection{Proof of Lemma~\ref{lem:d2plm}}\label{apx:prf:lem:d2plm}$ $\\

\noindent
{\bf Lemma~\ref{lem:d2plm}.}
{\em 
Given a reflexive Kripke frame  $(W,R)$ and a \dwn-path $\pi:[0;\ell]\to W$,
there is a \plm-path $\pi':[0;\ell'']\to W$, for some $\ell'$, and a 
total, surjective, monotonic, non-decreasing function $f:[0;\ell'] \to [0;\ell]$ with  
$\pi'(j)=\pi(f(j))$ for all $j\in [0;\ell']$.
}

\begin{proof}
The assertion  follows directly from Lemma~\ref{lem:d2ud} since every \upd-path is also a \plm-path.
\end{proof}

\subsection{Proof of Lemma~\ref{lem:etgaCorrectG}}\label{apx:Prf:lem:etgaCorrectG}$ $\\

\noindent
{\bf Lemma~\ref{lem:etgaCorrectG}.}
{\em
Let $\calP=(|K|,\peval{\calP})$ be a polyhedral model, $x \in |K|$ and $\form$ a \slcsE{} formula.
Then $\calP,x \models \form$ iff $\calP,x \models \etga(\form)$.
}

\begin{proof}
By induction on the structure of $\form$. We consider only the case $\eta(\form_1,\form_2)$.
Suppose $\calP,x \models \eta(\form_1,\form_2)$. By definition there is a topological path 
$\pi$ such that $\calP,\pi(1) \models \form_2$ and 
$\calP,\pi(r) \models \form_1$ for all $r \in [0,1)$. By the Induction Hypothesis this is the same to say that $\calP,\pi(1) \models \etga(\form_2)$ and 
$\calP,\pi(r) \models \etga(\form_1)$ for all $r \in [0,1)$, i.e.
$\calP,x \models \etga(\form_1)$, 
$\calP,\pi(1) \models \etga(\form_2)$ and 
$\calP,\pi(r) \models \etga(\form_1)$ for all $r \in (0,1)$. In other words, we have
$\calP,x \models \etga(\form_1) \land \gamma(\etga(\form_1), \etga(\form_2))$ that,
by Definition~\ref{def:etga} on page~\pageref{def:etga} means $\calP,x \models \etga(\eta(\form_1,\form_2))$.

Suppose now $\calP,x \models \etga(\eta(\form_1,\form_2))$, i.e.
$\calP,x \models \etga(\form_1) \land \gamma(\etga(\form_1), \etga(\form_2))$, by Definition~\ref{def:etga} on page~\pageref{def:etga}.
Since $\calP,x \models \gamma(\etga(\form_1), \etga(\form_2))$, there is a path $\pi$
such that $\calP,\pi(1) \models \etga(\form_2)$ and 
$\calP,\pi(r) \models \etga(\form_1)$ for all $r \in (0,1)$. Using the Induction Hypothesis
we know the following holds:
$\calP,x \models  \form_1$, $\calP,\pi(1) \models \form_2$, and
$\calP,\pi(r) \models \form_1$ for all $r \in (0,1)$, i.e.
$\calP,\pi(1) \models \form_2$ and $\calP,\pi(r) \models \form_1$ for all $r \in [0,1)$.
So, we get $\calP,x \models \eta(\form_1,\form_2)$.
\end{proof}

\subsection{Proof concerning the example of Remark~\ref{rem:weakGweakerG}}
\label{apx:prf:rem:weakGweakerG}$ $\\

The assertion  can be proven by induction on the structure of formulas.
The case for proposition letters, negation  and conjunction are straightforward and omitted.

Suppose $\calP_{\ref{fig:AltTriAndPoset}}, A \models \eta(\form_1, \form_2)$. 
Then there is a topological path $\pi_A:[0,1] \to P_{\ref{fig:AltTriAndPoset}}$ from $A$ such that
$\calP_{\ref{fig:AltTriAndPoset}}, \pi_A(1) \models \form_2$ and $\calP_{\ref{fig:AltTriAndPoset}}, \pi_A(r)\models \form_1$ for all 
$r \in [0,1)$. 
Since $\calP_{\ref{fig:AltTriAndPoset}}, A \models \form_1$, by the Induction Hypothesis, we have that $\calP_{\ref{fig:AltTriAndPoset}}, x \models \form_1$ for all $x \in \relint{ABC}$.
For each $x\in\relint{ABC}$, define $\pi_x: [0,1] \to P_{\ref{fig:AltTriAndPoset}}$ as follows, for arbitrary $v \in (0,1)$:
\[
\pi_x(r)=
\left\{
\begin{array}{l}
\frac{r}{v}A + \frac{v-r}{v}x, \mbox{ if } r\in [0,v),\\\\
\pi_A(\frac{r-v}{1-v}), \mbox{ if }r\in [v,1].
\end{array}
\right.
\]
Function $\pi_x$ is continuous. Furthermore, for all $y\in [0,v)$, we have that $\calP_{\ref{fig:AltTriAndPoset}}, \pi_x(y) \models \form_1$,  since $\pi_x(y) \in \relint{ABC}$. Also, for all $y\in [v,1)$ we have that $\calP_{\ref{fig:AltTriAndPoset}}, \pi_x(y) \models \form_1$,  since $\pi_x(y) = \pi_A(\frac{y-v}{1-v})$,
$0\leq \frac{y-v}{1-v}<1$ and for $y \in [0,1)$ we have that $\calP_{\ref{fig:AltTriAndPoset}},\pi_A(y), \models \form_1$.  Thus
$\calP_{\ref{fig:AltTriAndPoset}},\pi_x(r) \models \form_1$ for all $r \in [0,1)$. Finally, $\pi_x(1)=\pi_A(1)$ and 
$\calP_{\ref{fig:AltTriAndPoset}}, \pi_A(1) \models \form_2$ by hypothesis. 
Thus, $\pi_x$ is a topological path that witnesses $\calP_{\ref{fig:AltTriAndPoset}}, x \models \eta(\form_1, \form_2)$.

The proof of the converse is similar, using instead function $\pi_A: [0,1] \to P_{\ref{fig:AltTriAndPoset}}$ defined as follows, for arbitrary $v \in (0,1)$:
\[
\pi_A(r)=
\left\{
\begin{array}{l}
\frac{r}{v}p + \frac{v-r}{v}A, \mbox{ if } r\in [0,v),\\\\
\pi_p(\frac{r-v}{1-v}), \mbox{ if }r\in [v,1].
\end{array}
\right.
\]

\subsection{Proof of Proposition~\ref{prop:interchangeableE}}\label{apx:prf:prop:interchangeableE}$ $\\

\noindent
{\bf Proposition~\ref{prop:interchangeableE}.}
{\em
Given a finite poset model $\calF = (W,\preccurlyeq,\peval{\calF})$,
  $w \in W$, and  \slcsE{} formulas $\form_1$ and $\form_2$, the
  following statements are equivalent:
\begin{enumerate}
\item
There exists a \plm-path $\pi:[0;\ell] \to W$ for some $\ell$ with
$\pi(0)=w$, 
$\calF, \pi(\ell)\models \form_2$ and 
$\calF, \pi(i)\models \form_1$  for all $i\in [0;\ell)$.
\item
There exists a \dwn-path $\pi:[0;\ell'] \to W$ for some $\ell'$ with
$\pi(0)=w$, 
$\calF, \pi(\ell')\models \form_2$ and 
$\calF, \pi(i)\models \form_1$  for all $i\in [0;\ell)$.
\end{enumerate}
}
\begin{proof}
The equivalence of statements (\ref{enu:plmE}) and (\ref{enu:dwnE}) follows directly 
from Lemma~\ref{lem:d2plm} and the fact that \plm-paths are also \dwn-paths.
\end{proof}

\subsection{Proof of Proposition~\ref{prop:chiE}}\label{apx:prf:prop:chiE}$ $\\

\noindent
{\bf Proposition~\ref{prop:chiE}.}
{\em
Given a finite poset model $\calF=(W,\preccurlyeq,\peval{\calF})$, for $w_1,w_2 \in W$, it holds that 
\[
  \calF,w_2 \models \chi(w_1) \mbox{ if and only if } w_1 \slcsEeq w_2.
\]
}
\begin{proof}
Suppose $w_1 \not\slcsEeq w_2$, then we have $\calF,w_2 \not\models \delta_{w_1,w_2}$, and so
  $\calF,w_2 \not\models \bigwedge_{w \in W} \: \delta_{w_1,w}$.
  If, instead, $w_1 \slcsEeq w_2$, then we have: $\delta_{w_1,w_1} \equiv \delta_{w_1,w_2} \equiv \ltrue  $ by definition, since $w_1 \slcsEeq w_1$ and $w_1 \slcsEeq w_2$. Moreover, for any other $w$, we have that, in any case,
  $\calF,w_1\models \delta_{w_1,w}$ holds and since $w_1 \slcsEeq w_2$, also $\calF,w_2\models \delta_{w_1,w}$ holds.
  So, in conclusion, $\calF,w_2\models \bigwedge_{w \in W} \: \delta_{w_1,w}$.
\end{proof}

\subsection{Proof of Lemma~\ref{lem:etgaCorrectE}}\label{apx:prf:lem:etgaCorrectE}$ $\\

\noindent
{\bf Lemma~\ref{lem:etgaCorrectE}.}
{\em
Let $\calF=(W,\preccurlyeq,\peval{\calF})$ be a 
finite poset model, 
$w \in W$ and $\form$ a \slcsE{} formula.
Then $\calF,w \models \form$ iff $\calF,w \models \etga(\form)$.
}

\begin{proof}
Similar to that of Lemma~\ref{lem:etgaCorrectG}, but with reference to the finite poset intepretation of the logic.
  \end{proof}

\subsection{Proof concerning the example of Remark~\ref{rem:weakPMeakerPM}}\label{apx:prf:rem:weakPMeakerPM}$ $\\

We prove the assertion  by induction on the structure of formulas. The case for atomic proposition letters, negation  and conjunction are straightforward and omitted.
Suppose $\calF,\relint{A} \models \eta(\form_1, \form_2)$. Then, there is a \plm-path $\pi$ of 
some length $\ell\geq2$ such that
$\pi(0) = \relint{A}$, $\pi(\ell) \models \form_2$ and $\pi(i)\models \form_1$ for all 
$i \in [0;\ell)$. Since $\calF,\relint{A} \models \form_1$, by the Induction Hypothesis, we have that $\calF,\relint{ABC} \models \form_1$.
Consider then path $\pi'= (\relint{ABC},\relint{ABC},\relint{A})\cdot \pi$. Path $\pi'$ is a \plm-path and it witnesses $\calF, \relint{ABC} \models \eta(\form_1, \form_2)$.

Suppose now $\calF, \relint{ABC} \models \eta(\form_1, \form_2)$ and let $\pi$ be a \plm-path witnessing it.
Then, path $(\relint{A},\relint{ABC},\relint{ABC}) \cdot \pi$ is a \plm-path witnessing $\calF, \relint{A} \models \eta(\form_1, \form_2)$.

\subsection{Proof of Lemma~\ref{lem:VB}}\label{apx:prf:lem:VB}$ $\\

The proof of the lemma uses a similar result, for the $\gamma$ operator, that we 
have already proven in~\cite{Be+22} namely:\\

{\bf Theorem 4.4 of~\cite{Be+22}}. 
{\em Let $\calP=(P,\peval{\calP})$ be a polyhedral model
and $x \in P$. Then, for every formula $\form$ of \slcsG{} we have that:
$\calP,x \models \form$ if and only if $\map(\calP),\map(x) \models \form$.\\
}

\noindent
{\bf Lemma~\ref{lem:VB}.}
{\em 
Given a polyhedral model $\calP=(|K|,\peval{\calP})$,
for all $x\in |K|$ and formulas $\form$ of \slcsE{}
the following holds: $\calP,x \models \form$ if and only if $\map(\calP),\map(x) \models \etga(\form)$.
}

\begin{proof}
The proof is by induction on the structure of $\form$. 
We consider only the case $\eta(\form_1,\form_2)$.
Suppose $\calP,x \models \eta(\form_1,\form_2)$. 
By Lemma~\ref{lem:etgaCorrectG}  we get 
$\calP,x \models \etga(\eta(\form_1,\form_2))$
and then, by Definition~\ref{def:etga}, we have $\calP,x \models \etga(\form_1) \land \gamma(\etga(\form_1),\etga(\form_2))$, 
that is $\calP,x \models \etga(\form_1)$ and  $\calP,x \models\gamma(\etga(\form_1),\etga(\form_2))$. Again by Lemma~\ref{lem:etgaCorrectG} on page~\pageref{lem:etgaCorrectG}, we get also 
$\calP,x \models \form_1$ and so, by the Induction Hypothesis, we have
$\map(\calP),\map(x) \models \etga(\form_1)$.
Furthermore, by Theorem 4.4 of~\cite{Be+22} we also get 
$\map(\calP),\map(x) \models \gamma(\etga(\form_1),\etga(\form_2))$.
Thus we get $\map(\calP),\map(x) \models \etga(\form_1)\land \gamma(\etga(\form_1),\etga(\form_2))$, that is
$\map(\calP),\map(x) \models \etga(\eta(\form_1,\form_2))$.\\
Suppose now $\map(\calP),\map(x) \models \etga(\eta(\form_1,\form_2))$.
This means $\map(\calP),\map(x) \models \etga(\form_1) \land \gamma(\etga(\form_1),\etga(\form_2))$,
that is $\map(\calP),\map(x) \models \etga(\form_1)$  and 
$\map(\calP),\map(x) \models \gamma(\etga(\form_1),\etga(\form_2))$.
By the Induction Hypothesis we get that $\calP,x \models \form_1$.
Furthermore, by Theorem 4.4 of~\cite{Be+22} we also get
$\calP,x \models \gamma(\etga(\form_1),\etga(\form_2))$.
This means that there is topological path $\pi$ such that 
$\calP,\pi(1) \models \etga(\form_2)$ and $\calP,\pi(r) \models \etga(\form_1)$
for all $r \in (0,1)$. 
Using Lemma~\ref{lem:etgaCorrectG}  we also get
$\calP,\pi(1) \models \form_2$ and $\calP,\pi(r) \models \form_1$
for all $r \in (0,1)$ and since also $\calP,x \models \form_1$ (see above), we
get $\calP,\pi(1) \models \form_2$ and $\calP,\pi(r) \models \form_1$
for all $r \in [0,1)$, that is
$\calP,x \models \eta (\form_1,\form_2)$.
  \end{proof}

\subsection{Proof of Theorem~\ref{theo:calMPresForm}}\label{apx:prf:theo:calMPresForm}$ $\\

\noindent
{\bf Theorem~\ref{theo:calMPresForm}.}
{\em 
Given a polyhedral model $\calP=(|K|,\peval{\calP})$,
for all $x\in |K|$ and formulas $\form$ of \slcsE{}
it holds that: $\calP,x \models \form$ if and only if $\map(\calP),\map(x) \models \form$.
}

\begin{proof}
Using Lemma~\ref{lem:VB}, we know that 
$\calP,x \models \form$ if and only if $\map(\calP),\map(x) \models \etga(\form)$.
Moreover, by Lemma~\ref{lem:etgaCorrectE}, we know that 
$\map(\calP),\map(x) \models \etga(\form)$ if and only if
$\map(\calP),\map(x) \models \form$, which brings us to the result.
  \end{proof}

\subsection{Proof of Lemma~\ref{lem:dPExists}}\label{apx:prf:lem:dPExists}$ $\\

\noindent
{\bf Lemma~\ref{lem:dPExists}.}
{\em Given a finite poset model $\calF=(W,\preccurlyeq, \peval{\calF})$ and  
weak \plm-bisimulation $Z \subseteq W \times W$, for all $w_1,w_2$ such that $Z(w_1,w_2)$, the following holds:
for each  \dwn-path $\pi_1:[0;k_1] \to W$ from $w_1$ 
there is a \dwn-path $\pi_2:[0;k_2] \to W$ from $w_2$ 
such that $Z(\pi_1(k_1),\pi_2(k_2))$ and 
for each $j\in [0;k_2)$ there is $i\in [0;k_1)$ such that $Z(\pi_1(i),\pi_2(j))$.
}

\begin{proof}
Let $\pi_1:[0;k_1] \to W$ be a \dwn-path from $w_1$. By Lemma~\ref{lem:d2ud} on page~\pageref{lem:d2ud} we know that
there is an \upd-path $\hat{\pi}_1:[0;2h] \to W$ and total, monotonic non-decreasing surjection
$f:[0;2h] \to [0;k_1]$ such that $\hat{\pi}_1(j)=\pi_1(f(j))$  for all $j\in [0;2h]$. Furthermore, by Lemma~\ref{lem:dPExists4udP} below,
we know that
there is a \dwn-path $\pi_2:[0;k_2] \to W$ from $w_2$ 
such that $Z(\hat{\pi}_1(2h),\pi_2(k_2))$ and 
for each $j\in [0;k_2)$ there is $i\in [0;2h)$ such that $Z(\hat{\pi}_1(i),\pi_2(j))$.
In addition, 
$\hat{\pi}_1(0)=\pi_1(0)=w_1$,  $Z(\pi_1(k_1),\pi_2(k_2))$ since $Z(\hat{\pi}_1(2h),\pi_2(k_2))$ and $\hat{\pi}_1(2h)=\pi_1(k_1)$. Finally, 
for each $j\in [0;k_2)$ there is $i \in [0;k_1)$ such that $Z(\pi_1(i),\pi_2(j))$,
since there is $n\in [0;2h)$ such that $Z(\hat{\pi}_1(n),\pi_2(j))$ and $f(n)=i$ for some $i \in [0;k_1)$.
\end{proof}

\begin{lem}\label{lem:dPExists4udP}
Given a finite poset model $\calF=(W,\preccurlyeq, \peval{\calF})$ and  a
weak \plm-bisimulation $Z \subseteq W \times W$, for all $w_1,w_2$ such that $Z(w_1,w_2)$, the following holds:
for each  \upd-path $\pi_1:[0;2h] \to W$ from $w_1$ 
there is a \dwn-path $\pi_2:[0;k] \to W$ from $w_2$ 
such that $Z(\pi_1(2h),\pi_2(k))$ and 
for each $j\in [0;k)$ there is $i\in [0;2h)$ such that $Z(\pi_1(i),\pi_2(j))$.
\end{lem}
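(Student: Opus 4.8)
The plan is to prove the statement by induction on $h$, the ``half-length'' of the \upd-path $\pi_1:[0;2h]\to W$. The only tool needed beyond routine path manipulation is condition~(2) of Definition~\ref{def:WPLMBis}, which I will invoke on the length-$2$ prefix of $\pi_1$.

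For the base case $h=1$ the \upd-path is $\pi_1=(w_1,u_1,d_1)$ with $w_1\dircnv{\preccurlyeq}u_1\succcurlyeq d_1$, where $u_1=\pi_1(1)$ and $d_1=\pi_1(2)$. Since $Z$ is a weak \plm-bisimulation and $Z(w_1,w_2)$, Definition~\ref{def:WPLMBis} directly hands us a \plm-path $\pi_2:[0;\ell_2]\to W$ from $w_2$ with $Z(d_1,\pi_2(\ell_2))$ and, for every $j\in[0;\ell_2)$, $Z(w_1,\pi_2(j))$ or $Z(u_1,\pi_2(j))$. Every \plm-path is a \dwn-path, so $\pi_2$ already satisfies the conclusion: $Z(\pi_1(2),\pi_2(\ell_2))$ holds, and for each $j$ the required index $i\in[0;2)$ is $0$ (when $Z(w_1,\pi_2(j))$) or $1$ (when $Z(u_1,\pi_2(j))$).

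For the induction step I would write $h=n+1$ with $n\geq1$, split $\pi_1$ into its length-$2$ prefix $(w_1,u_1,d_1)$ (with $d_1=\pi_1(2)$) and the shift $\pi_1\uparrow2:[0;2n]\to W$, which is an \upd-path from $d_1$. Applying Definition~\ref{def:WPLMBis} to the prefix exactly as in the base case yields a \plm-path $\rho:[0;\ell]\to W$ from $w_2$ with $Z(d_1,\rho(\ell))$ and, for all $j\in[0;\ell)$, $Z(w_1,\rho(j))$ or $Z(u_1,\rho(j))$. Because $Z(d_1,\rho(\ell))$ holds, the induction hypothesis applied to $\pi_1\uparrow2$ (with the roles of $w_1,w_2$ played by $d_1,\rho(\ell)$) produces a \dwn-path $\sigma:[0;k']\to W$ from $\rho(\ell)$ with $Z(\pi_1(2n{+}2),\sigma(k'))$ and, for each $j\in[0;k')$, some $i'\in[0;2n)$ with $Z(\pi_1(i'{+}2),\sigma(j))$. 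Then I would set $\pi_2:=\rho\cdot\sigma$, which is well defined since $\rho(\ell)=\sigma(0)$; it is an undirected path from $w_2$ whose last edge is the last edge of $\sigma$, hence a $\cnv{R}$-edge (and $k'\geq1$ since \dwn-paths have length $\geq1$, so no degenerate piece appears), so $\pi_2$ is a \dwn-path. Its endpoint is $\sigma(k')$, giving $Z(\pi_1(2n{+}2),\pi_2(k))$ with $k=\ell+k'$; and for the second condition one checks $j\in[0;\ell)$ using the witnesses $0$ or $1$ coming from $\rho$, and $j\in[\ell;k)$ using the witness $i'+2\in[2;2n{+}2)$ coming from $\sigma$.

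I do not expect a genuine obstacle here: the content is entirely in the first-prefix application of the bisimulation condition plus concatenation, and the main thing to get right is the index bookkeeping — the overlap at index $\ell$ in the sequentialisation, the $+2$ shift when re-indexing $\pi_1\uparrow2$ back into $\pi_1$, and the check that the concatenated path still ends with a downward edge. Note also that reflexivity of $\preccurlyeq$ is not used in this lemma; it enters only upstream in Lemma~\ref{lem:dPExists}, where a \dwn-path is first converted to an \upd-path before this lemma is invoked.
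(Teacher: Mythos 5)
Your proof is correct, and it follows the same overall strategy as the paper's — induction on $h$, a single application of condition~(2) of Definition~\ref{def:WPLMBis} per step, and sequentialisation of the two resulting pieces — but mirrored. You peel off the \emph{first} length-$2$ segment of $\pi_1$, apply the bisimulation clause to the original pair $(w_1,w_2)$, and then invoke the induction hypothesis on $\pi_1{\uparrow}2$ for the derived pair $(d_1,\rho(\ell))$; the paper instead truncates the \emph{last} two steps, applies the induction hypothesis to $(w_1,w_2)$ on the prefix $\pi_1|[0;2h]$, and only afterwards uses the bisimulation clause on the derived pair $(\pi_1(2h),\pi'_2(k'))$. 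The two decompositions are interchangeable precisely because the statement being inducted on is universally quantified over all $Z$-related pairs, so the induction hypothesis may legitimately be applied to a pair other than the one you started from, as your version requires. Your bookkeeping is accurate throughout: the $+2$ re-indexing of $\pi_1{\uparrow}2$ back into $\pi_1$, the witnesses $0$ and $1$ for indices in the $\rho$-part, and the observation that the concatenation inherits the final $\cnv{R}$-edge of $\sigma$ (with $k'\geq 1$) and hence is a \dwn-path. You are also right that reflexivity of $\preccurlyeq$ is not needed here and only enters upstream in Lemma~\ref{lem:dPExists}.
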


\begin{proof}
We prove the assertion  by induction on $h$.\\
{\bf Base case.} $h=1$.\\
If $h=1$, the assertion  follows directly from Definition~\ref{def:WPLMBis} on page~\pageref{def:WPLMBis}
where $w_1= \pi_1(0), u_1=\pi_1(1)$ and $d_1=\pi_1(2)$.\\
{\bf Induction step.} We assume the assertion  holds for \upd-paths of length $2h$ or less and we prove it for \upd-paths of length $2(h+1)$. \\
Suppose $\pi_1$ is a \upd-path of length $2h+2$ and consider \upd-path $\pi'_1=\pi_1|[0;2h]$.
By the Induction Hypothesis, we know that there is a \dwn-path $\pi'_2:[0;k']\to W$ from $w_2$
such that $Z(\pi'_1(2h),\pi'_2(k'))$ and for each $j\in[0;k')$ there is $i\in [0;2h)$ such that
$Z(\pi'_1(i),\pi'_2(j))$. Clearly, this means that $Z(\pi_1(2h),\pi'_2(k'))$ and for each $j\in[0;k')$ there is $i\in [0;2h)$ such that $Z(\pi_1(i),\pi'_2(j))$.
Furthermore, since $Z(\pi_1(2h),\pi'_2(k'))$ and $Z$ is a weak \plm-bisimulation, 
we also know that there is a \dwn-path 
$\pi''_2:[0;k'']\to W$ from $\pi'_2(k')$ such that $Z(\pi_1(2h+2),\pi''_2(k''))$ and for each $j\in[0;k'')$
there is $i\in [2h;2h+2)$ such that $Z(\pi_1(i),\pi'_2(j))$. Let $\pi_2:[0;k'+k'']\to W$ be defined as
$\pi_2= \pi'_2 \cdot \pi''_2$. Clearly $\pi_2$ is a \dwn-path, since so is $\pi''_2$.
Furthermore $Z(\pi_1(2h+2),\pi_2(k'+k''))$ since $Z(\pi_1(2h+2),\pi''_2(k''))$ and $\pi''_2(k'')=\pi_2(k'+k'')$. Finally, it is straightforward to check  for all $j\in [0;k'+k'')$ 
there is $i\in [0;2h+2)$ such that $Z(\pi_1(i),\pi_2(j))$.
\end{proof}

\subsection{Proof of Lemma~\ref{lem:dPtoTP}}\label{apx:prf:lem:dPtoTP}$ $\\

\noindent
{\bf Lemma~\ref{lem:dPtoTP}.}
{\em
Given a polyhedral model $\calP=(|K|,\peval{\calP})$, and associated cell poset model $\map(\calP)=(W,\preccurlyeq,\peval{\map(\calP)})$, for any  
\dwn-path  $\pi:[0;\ell] \to W$, 
there is a topological path $\pi':[0,1] \to |K|$ such that: (i) $\map(\pi'(0))=\pi(0)$, (ii) $\map(\pi'(1))=\pi(\ell)$, and
(iii) for all $r \in (0,1)$ there is $i<\ell$ such that $\map(\pi'(r))=\pi(i)$.
}

\begin{proof}
Since $\pi$ is a \dwn-path, we have that either $\closure_T(\map^{-1}(\pi(k-1))) \sqsubseteq \closure_T(\map^{-1}(\pi(k)))$  
or $\closure_T(\map^{-1}(\pi(k))) \sqsubseteq \closure_T(\map^{-1}(\pi(k-1)))$,  
for each $k\in (0;\ell]$\footnote{We recall here that $\sigma_1 \sqsubseteq \sigma_2$ iff 
$\relint{\sigma_1} \preccurlyeq \relint{\sigma_2}$ and that $\sigma = \closure_T(\relint{\sigma})$.}. 
It follows that there is a continuous map 
$\pi'_k:[\frac{k-1}{\ell},\frac{k}{\ell}]\to |K|$ such that, in the first case, 
$\map(\pi'_k(\frac{k-1}{\ell})) = \pi(k-1)$ and
$\pi'_k((\frac{k-1}{\ell},\frac{k}{\ell}]) \subseteq \closure_T(\map^{-1}(\pi(k)))$, 
while in the second case, 
$\pi'_k([\frac{k-1}{\ell},\frac{k}{\ell})) \subseteq \closure_T(\map^{-1}(\pi(k-1)))$ and 
$\map(\pi'_k(\frac{k}{\ell}))=\pi(k)$.
In fact $\pi'_k$ can be realised as a linear bijection to the line segment connecting the barycenters  
in the corresponding cell, either in $\map^{-1}(\pi(k))$ or in $\map^{-1}(\pi(k-1))$, respectively.

For each $k\in(0;\ell)$, both $\pi'_k(\frac{k}{\ell})$ and $\pi'_{k+1}(\frac{k}{\ell})$ coincide with the barycenter of $\map^{-1}(\pi(k))$, so that defining $\pi'(r)=\pi'_k(r)$ for $r\in[\frac{k-1}{\ell},\frac{k}{\ell}]$ correctly defines a  topological path (actually a piece-wise linear path), satisfying (i) and (ii).
Finally since $\pi$ is a \dwn-path, $\pi(\ell) \preccurlyeq\pi(\ell-1)$, so that 
$\pi'([\frac{\ell-1}{\ell},1))\subseteq\map^{-1}(\pi(\ell-1))$. This implies (iii) above.
  \end{proof}

\subsection{Proof of Lemma~\ref{lem:ExisDwnPath}}\label{apx:prf:lem:ExisDwnPath}$ $\\[0.5em]
\noindent
{\bf Lemma~\ref{lem:ExisDwnPath}.}
{\em
Given a polyhedral model $\calP=(|K|,\peval{\calP})$, and associated cell poset model $\map(\calP)=(W,\preccurlyeq,\peval{\map(\calP)})$, for any topological path $\pi:[0,1] \to |K|$ the following holds:
$\map(\pi([0,1]))$ is a connected subposet of $W$ and there is $k>0$ and
a \dwn-path $\hat\pi: [0;k] \to W$ from $\map(\pi(0))$ to $\map(\pi(1))$
such that for all $i\in [0;k)$ there is $r\in [0,1)$ with $\hat\pi(i)=\map(\pi(r))$.
}

\begin{proof}
Continuity of $\map \circ \pi$ ensures that $\map(\pi([0,1]))$ is a connected subposet of $W$.
Thus there is an undirected path $\hat\pi: [0;k] \to W$ from $\map(\pi(0))$ to $\map(\pi(1))$
of some length $k>0$. In particular, $\hat\pi(k - 1) \succcurlyeq \hat\pi(k)$, 
as shown in the sequel, by contradiction.
Suppose that $\hat\pi(k - 1) \prec \hat\pi(k)$.  This would mean
that there is $\epsilon<1$, with 
$\pi(\epsilon) \in \map(\pi(\epsilon))=\hat\pi(k - 1)$, such that 
$\pi(r') \in \hat\pi(k)= \map(\pi(1))$ for no $r' \in (\epsilon,1)$  --- otherwise 
$\hat\pi(k - 1) = \hat\pi(k)$ would hold. But the fact that no such an $r'$ exists contradicts the fact that  $\pi$ is continuous, since continuity requires that
for each neighbourhood $N_1(\pi(1))$ of $\pi(1)$ there is a neighbourhood $N_2(1) \subseteq [0,1]$ of $1$ such that $\pi(t) \in N_1(\pi(1))$ whenever $t \in N_2(1)$.
We thus conclude that $\hat\pi(k - 1) \succcurlyeq \hat\pi(k)$, and so $\hat\pi_1$
is a \dwn-path. By definition and connectedness of $\map(\pi([0,1]))$ we finally get
that for all $i\in [0;k)$ there is $r\in [0,1)$ with $\hat\pi(i)=\map(\pi(r))$.
  \end{proof}

\subsection{Proof of Lemma~\ref{lem:LccImplSlcsEeq}}\label{apx:prf:lem:LccImplSlcsEeq}$ $\\

\noindent
{\bf Lemma~\ref{lem:LccImplSlcsEeq}.}
{\em
Given a finite poset model $\calF=(W,\preccurlyeq,\peval{\calF})$ and $w_1, w_2 \in W$ the following holds:
if $w_1 \lcceq w_2$, then $w_1 \slcsEeq w_2$.
}

\begin{proof}
By induction on the structure of \slcsE{} formulas. We show only the case for
$\eta(\form_1,\form_2)$ since the others are straightforward.
Suppose $\calF,w_1 \models \eta(\form_1,\form_2)$. Then there is a 
\plm-path $\pi$ from $w_1$ of some length $\ell$ such that
$\calF,\pi(\ell)\models \form_2$ and $\calF,\pi(i)\models \form_1$ for all $i\in [0;\ell)$.
In particular, we have that $\calF,w_1 \models \form_1$. So, by the Induction Hypothesis,
since $w_1 \lcceq w_2$, we get that also $\calF,w_2 \models \form_1$. In addition,
by definition of $\lcceq$, and given that $w_2 \lcceq w_1$,
there is an undirected path $\pi'$ of some length $\ell'$ such that
$\pi'(0)=w_2, \pi(\ell')=w_1$ and $\invpeval{\calF}(\SET{\pi'(i)})=\invpeval{\calF}(\SET{\pi'(j)})$,
for all $i,j\in [0;\ell']$. Note that,  by definition of $\lcceq$, we have that
$\pi'(k) \lcceq w_1$ for all $k\in [0;\ell']$. Thus, again by the Induction Hypothesis, we also get
$\calF,\pi'(k) \models \form_1$ for all $k\in [0;\ell']$. 
Clearly, the sequentialisation $\pi'\cdot\pi$ of $\pi'$ with $\pi$ is a \dwn-path since
$\pi$ is a \plm-path. Furthermore, by Lemma~\ref{lem:d2plm}, there is 
a \plm-path $\pi''$ with the same starting and ending points as $\pi'\cdot\pi$, 
and with the same set of intermediate points, occurring in the same order.
Thus $\pi''$ witnesses $\calF,w_2 \models \eta(\form_1,\form_2)$.
  \end{proof}

\subsection{Proof of Lemma~\ref{lem:StrongLab}}\label{apx:prf:lem:StrongLab}$ $\\

\noindent
{\bf Lemma~\ref{lem:StrongLab}.}
{\em
Consider a finite poset model $\calF=(W,\preccurlyeq,\peval{\calF})$. Then for all $w_1, w_2 \in W$ the following holds: 
if $[w_1]_{\lcceq} \seq^{\posToltsA(\calF)} [w_2]_{\lcceq}$, 
then $\invpeval{\calF}(\SET{w_1}) = \invpeval{\calF}(\SET{w_2})$.
}

\begin{proof}
By Rule (PL), we have
$[w_1]_{\lcceq} \trans{\invpeval{\calF}(\SET{w_1})}[w_1]_{\lcceq}$ and, by hypothesis, we also have $[w_2]_{\lcceq} \trans{\invpeval{\calF}(\SET{w_1})}[w'_2]_{\lcceq}$,
for some $[w'_2]_{\lcceq} \seq [w_1]_{\lcceq}$. But then,
using again Rule (PL), we get $[w'_2]_{\lcceq} = [w_2]_{\lcceq}$ and
$\invpeval{\calF}(\SET{w_1}) = \invpeval{\calF}(\SET{w_2})$.
  \end{proof}

\section{3D Maze Example of Section~\ref{sec:Experiments}}\label{apx:AdditionalExamples}

Below, the spatial logic specification in \imgql{} is shown, that was used for model checking the various maze-variants in Table~\ref{tab:toolchain} in Section~\ref{sec:Experiments} with \polylogica.
\imgql{} is the input language of \polylogica{} in which spatial logic properties of \slcsE{} can be expressed. In the specification below, first the polyhedral model is loaded in {$\mathsf{json}$} format. 
After that, the atomic propositions {$\mathbf{green}$}, {$\mathbf{white}$} and {$\mathbf{corridor}$} are defined. This is followed by a number of properties for the maze that should be self-explanatory. They include the formulas for $\phi_1$ and $\phi_2$ that were introduced in Section~\ref{sec:Experiments}. Finally, the lines starting by {\sf save} are  defining which results to save in a file. Such files contain the name of a property and for each property a list of true/false items, one for each  cell in the polyhedral model and in the order in which these cells are defined in that polyhedral model.\\

{\tiny
\begin{verbatim}
load model = "polyInput_Poset.json"

let green       = ap("G")
let white       = ap("W")
let corridor    = ap("corridor")


let greenOrWhite		= (green | white)

let oneStepToWhite   = eta((green | eta(corridor,white)),white)
let twoStepsToWhite  = eta((green | eta(corridor,oneStepToWhite)), oneStepToWhite) & (!oneStepToWhite)
let threeStepsToWhite = eta((green | eta(corridor,twoStepsToWhite)), twoStepsToWhite) & 
                                             (!twoStepsToWhite) & (!oneStepToWhite)

let phi1 = eta((green | eta(corridor,white)),white)
let phi2 = eta((green | eta(corridor,oneStepToWhite)), oneStepToWhite)

save "green" green
save "white" white
save "corr" corridor
save "phi1" phi1
save "phi2" phi2

\end{verbatim}
}

$ $\\
Figure~\ref{fig:cubesLTS} shows the 3x5x3 maze and its minimised LTS. Note that in the LTS not all transition labels are shown in order to avoid cluttering of the image. However, states corresponding to corridors, green rooms and white rooms, are shown in grey, green and white, respectively. 

\begin{figure}
  \begin{center}
  \subfloat[\label{subfig:c3x5x3}Maze 3x5x3]{\phantom{AAAA}\includegraphics[width=0.4\textwidth]{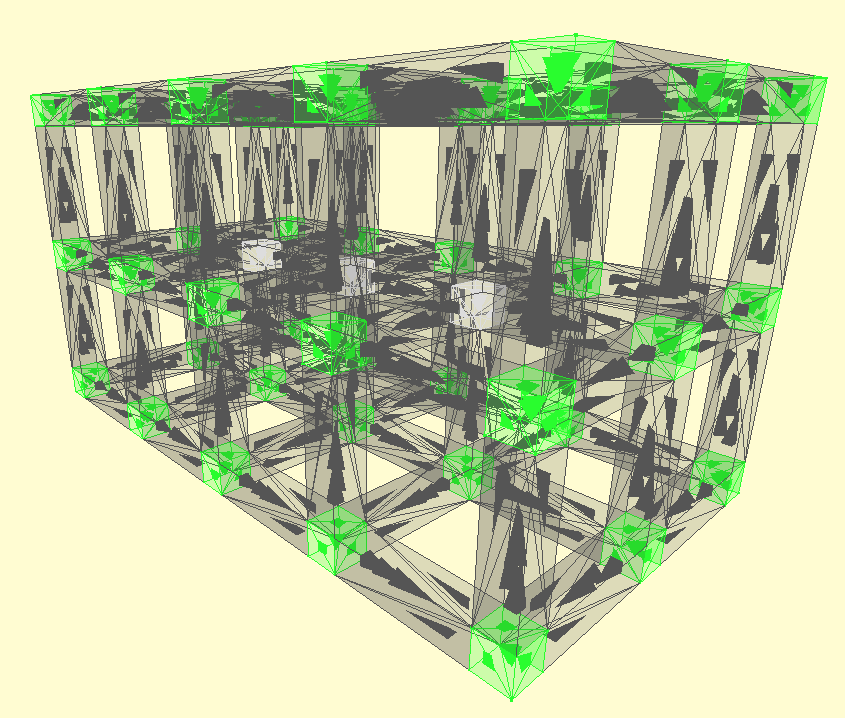}}
   \subfloat[\label{subfig:c3x5x3min}Minimised LTS]{
   \resizebox{2.5in}{!}{

   }
   }
  \end{center}
  \caption{Maze of dimension 3x5x3 (Fig.~\ref{subfig:c3x5x3}) and its respective minimal LTSs (Figs.~\ref{subfig:c3x5x3min}).}\label{fig:cubesLTS}
\end{figure}

\end{document}